\newcommand{\itamar}[1]{} 
\newcommand{\itamarc}[1]{}
\newcommand{\redtext}[1]{\textcolor{black}{#1}}
\algnewcommand{\AND}{\textbf{and}\xspace}
\algnewcommand{\OR}{\textbf{or}\xspace}
\pgfplotsset{compat=1.14}
\newcommand*{\algrule}[1][\algorithmicindent]{\makebox[#1][l]{\hspace*{.5em}\vrule height .75\baselineskip depth .25\baselineskip}}%
\def\ALG@printindent{%
    \ifnum \theALG@nested>0
        \ifx\ALG@text\ALG@x@notext
            \addvspace{-3pt}
        \else
            \unskip
            \ALG@printindent@tempcnta=1
            \loop
                \algrule[\csname ALG@ind@\the\ALG@printindent@tempcnta\endcsname]%
                \advance \ALG@printindent@tempcnta 1
            \ifnum \ALG@printindent@tempcnta<\numexpr\theALG@nested+1\relax
            \repeat
        \fi
    \fi
    }%
\patchcmd{\ALG@doentity}{\noindent\hskip\ALG@tlm}{\ALG@printindent}{}{\errmessage{failed to patch}}
\newtheorem{theorem}{Theorem}
\newtheorem{corollary}[theorem]{Corollary}
\newtheorem{lemma}[theorem]{Lemma}
\newtheorem{proposition}[theorem]{Proposition}
\newcommand{\abs}[1]{\left\vert#1\right\vert}
\newcommand{\set}[1]{\left\{#1\right\}}
\newcommand{\eps}{\varepsilon}
\newcommand{\floor}[1]{\lfloor #1 \rfloor}
\DeclareMathOperator*{\argmin}{arg\,min}
\newcommand{\vspacebelowcaption}{0.27cm}
\newcommand {\marksize}         {2.5 pt}
\newcommand {\figfontsize}      {\normalsize}
\DeclareMathOperator{\mSAM}{SA} 
\newcommand{\SAM}{$\mSAM$}
\DeclareMathOperator{\mSAMWP}{SA} 
\newcommand{\SAMWP}{$\mSAMWP$}
\DeclareMathOperator{\mSAO}{SA^*} 
\newcommand{\SAO}{$\mSAO$}
\DeclareMathOperator{\mSAOWP}{SA^*} 
\newcommand{\SAOWP}{$\mSAOWP$}
\newcommand{\alg}{Alg}
\DeclareMathOperator{\malg}{Alg} 
\newcommand{\opt}{OPT}
\newcommand{\makeroom}{MakeRoom()}
\newcommand{\subopt}{SubOPT}
\DeclareMathOperator{\msubopt}{SubOPT} 
\DeclareMathOperator{\FILL}{fill}
\DeclareMathOperator{\FLUSH}{flush}
\DeclareMathOperator{\W}{CW} 
\newcommand{\Cs}{G} 
\newcommand{\CsK}{G^K} 
\newcommand{\CsU}{G^U} 
\newcommand{\mFILL}{^{(\FILL)}}
\newcommand{\mFLUSH}{^{(\FLUSH)}}
\renewcommand{\W}{^{(W)}}
\renewcommand{\P}{^{(P)}}
\newcommand{\U}{^{(U)}}
\newcommand{\K}{^{(K)}}
\newcommand{\hfull}{Gfull}
\newcommand{\Autp} {A^{\U}_{(t_p)}}
\newcommand{\Au}{A^{\U}} 
\newcommand{\Ak}{A^{\K}} 
\newcommand{\malpha}{\alpha} 
\newcommand{\deltaW}{\delta_W} 
\newcommand{\deltaP}{\delta_V} 
\newcommand{\mW}{\ell_W} 
\newcommand{\mP}{\ell_V} 
\newcommand{\minWork}{W_0} 
\newcommand{\mbest}{(\minWork,V)} 
\newcommand{\mworst}{(W,1)} 
\newcommand{\algc}{\malg_c} 
\newcommand{\SetSs}{N} 
\newcommand{\NumSs}{n} 
\newcommand{\SetSp}{N_x} 
\newcommand{\NumSp}{n_x} 
\newcommand{\mS} {S}   
\newcommand{\ind}{I} 
\DeclareMathOperator{\fpr}{FP} 
\newcommand{\fprj}{\fpr_j}
\newcommand{\Phit}{p{^h}} 
\newcommand{\Phitj}{p_j^h} 
\newcommand{\mr}{\rho} 
\newcommand{\mc} {c}   
\newcommand{\mbeta}{\beta} 
\newcommand{\mL}{D} 
\DeclareMathOperator{\knap}{Knap}
\DeclareMathOperator{\potential}{Pot}
\DeclareMathOperator{\pp}{PP}
\DeclareMathOperator{\dsalg}{DS}
\newcommand{\ppapproxalg}{$\dsalg_{\pp}$}
\newcommand{\fpapproxalg}{$\dsalg_{\knap}$}
\newcommand{\cacheprob}{DSS}
\newcommand{\cpi}{CPI} 
\newcommand{\epi}{EPI} 
\newcommand{\umb}{$\dsalg_{\knap}$} 
\newcommand{\pot}{$\dsalg_{\potential}$} 
\newcommand{\fpo}{FPO} 
\newcommand{\Pone}{q} 
\newcommand{\costhomo}{\tilde{\phi}} 
\newcommand{\mrgline}{\ell} 
\DeclareMathOperator{\PGF}{PGF} 
\DeclareMathOperator{\dist}{dist} 
\DeclareMathOperator{\BW}{BW} 
\DeclareMathOperator{\Dset}{O^{*}} 
\DeclareMathOperator{\Oset}{O^{*}} 
\DeclareMathOperator{\mbase}{2} 
\DeclareMathOperator{\pmr}{PGM}
\newcommand{\pgmalg}{$\pmr$}
\newcommand{\OOset}[2]{O_{#2}^{#1}} 
\newcommand{\piset}[2]{N_{#2}^{#1}} 
\newcommand{\mvec}[2]{V_{#2}^{#1}} 
\newcommand{\logbeta}{r}
\newcommand{\algtop}{APSR}
\title{\textbf {Advanced Algorithms in Heterogeneous and Uncertain Networking Environments \\
\vspace{2.5 cm}
\small{
Thesis submitted in partial fulfillment\\
of the requirements for the degree of\\ 
“DOCTOR OF PHILOSOPHY” } \\
\vspace{2.5 cm}
by \\ 
\vspace{1 cm}
Itamar Cohen} \\ 
\vspace{2 cm}
    {\fontfamily{lmr}\selectfont
        \small{
        \textbf{Submitted to the Senate of Ben-Gurion\\
        University of the Negev}\\ 
        \vspace{0.5 cm}
        \textbf{22.10.2019}\\ \textbf{Beer-Sheva, Israel}}
    }    
}
\author{}
\date{}
\begin{document}

\maketitle

\pagenumbering{gobble}
\clearpage
\tableofcontents
\newpage
\listoffigures
\newpage
\listoftables
\newpage
\clearpage
\pagenumbering{arabic}
\begin{abstract}
Communication networks are used today everywhere and in every scale: starting from small Internet of Things (IoT) networks at home, via campus and enterprise networks, and up to tier-one networks of Internet providers. 
Accordingly, network devices should support a plethora of tasks 
with highly heterogeneous characteristics in terms of processing time, bandwidth energy consumption, deadlines and so on. 
Evaluating these characteristics and the amount of currently available resources for handling them 
requires analyzing all the arriving inputs, gathering information from numerous remote devices, and integrating all this information. Performing all these tasks in real time is very challenging in  
today's networking environments, which are characterized by tight bounds on the latency, and always-increasing data rates. 
Hence, network algorithms should typically make decisions under uncertainty. 

\redtext{This work addresses optimizing performance in heterogeneous and uncertain networking environments. 
We begin by detailing the sources of heterogeneity and uncertainty and show that uncertainty appears in all layers of network design, including the time required to perform a task; the amount of available resources; and the expected gain from successfully completing a task. 
Next, we survey current solutions and show their limitations. Based on these insights we develop general design concepts to tackle heterogeneity and uncertainty, and then use these concepts to design practical algorithms. For each of our algorithms, we provide rigorous mathematical analysis, thus showing worst-case performance guarantees. Finally, we implement and run the suggested algorithms on various input traces, thus obtaining further insights as to our algorithmic design principles.
We exemplify our approach on three concrete networking environments, namely: (i) packet buffers, (ii) network caching, and (iii) 
placement of virtual machines in the cloud.}

\itamar{Consider adding: Our work highlights the complex interplay between multiple optimization criteria, run-time restraints, and budget constraints in uncertain and heterogeneous networking environments.}
\end{abstract}

\newpage
\chapter{Introduction}

A common saying claims that ``it is difficult to make predictions, especially about the future''. Indeed, the last fifty years saw a tremendous research effort focusing on online problems, where future input is unknown, and input is revealed on-the-fly, during execution. Traditionally, this literature assumed that while predicting the future is very difficult, measuring present and past inputs is straight-forward and affordable.

Unfortunately, this assumption is far too optimistic in modern networking environments. Having complete, up-to-date information about the input requires analyzing huge quantities of data, maintaining data structures, and gathering and integrating information from multiple sources and devices. These tasks don't scale fast enough to meet the demands of today's networking environments, which are characterized by high heterogeneity and always-increasing data rates. 
Furthermore, allocating more resources for performing these tasks leaves fewer resources available for processing and transmitting the data. In this aspect, the problem of decision making under uncertainty resembles  Heisenberg's uncertainty principle, which claims that the act of measuring necessarily impacts the measured data. 


\redtext{The rest of this introductory chapter is organized as follows. First, we explain the challenges faced when coming to research uncertain  environments. Next, we explain how our research addresses these challenges. Further on, we discuss the sources of heterogeneity and uncertainty in networking environments. Finally, we overview our contribution, focusing on three concrete networking environments: (i) packet buffers, (ii) network caching, and (iii) placement of virtual machines in the cloud.}
\footnote{Advanced algorithms in heterogeneous and uncertain networking environments where studied also in the M.Sc. work~\cite{NoCs_jrnal}.}
\redtext{\section{The challenge of Researching Uncertain  Environments}\label{sec:uncertain_challenge}}
\redtext{Uncertainty appears in all layers of network design, including the {\em time} required to complete a task, such as processing a data packet, or computing a network function; the amount of currently available {\em resources} (e.g., CPU, and memory); and the expected {\em profit} gained upon successfully completing a task.}

\redtext{Introducing uncertainty to networking research sets new challenges. For instance, it is necessary to formulate new system models which well capture the possibility that some of the system's behaviour is unknown and to design algorithms which operate under uncertainty. However, developing such algorithms is very hard, and it is harder still to provide performance guarantees for such algorithms. For instance, two common approaches in network design are to plan either for the average-case, or for the worst-case. However, in an uncertain environment, the average-case is typically unknown, and hence cannot be accurately evaluated. On the other hand, planning for the worst-case may be a too pessimistic approach, 
because the worst-case may occur only very rarely, if at all. Other common approaches to evaluate suggested algorithms is to use simulations, or a real-world implementation. However, in the face of uncertainty, it is especially important to carefully pick input traces that well reflect a wide range of possible  real-world behaviours. For instance, the widely-used Poisson process for input generation may be insufficient  for imitating the behaviour of highly dynamic and uncertain systems.}

\redtext{In the next section we detail our research methodology for tackling these challenges.}

\section{Research Methodology}\label{sec:rsrch_methods}
We now describe the research methodology used in our research, starting from a system model, through rigorous mathematical analysis, and ending with the evaluation of the proposed solutions.

\subsection{System Model and Optimization Criteria}
The most fundamental step when coming to study a new problem is the development of a mathematical model, which should be both well established by the needs and possibilities of real-world network equipment, and mathematically coherent and consistent. The model should also well define the optimization criteria.

Our focus on heterogeneous and uncertain environments translates to using models with the following characteristics:
\begin{inparaenum}[(i)]
\newline\item \emph{The cost of measuring.} Exploring the inputs in an uncertain environment incurs high overhead in terms of processing power, bandwidth, etc. Hence, the model should associate some cost for each analysis or measurement of the coming traffic. 
\newline\item \emph{Budget constraint, or penalty function.} The main goal of a networking algorithm is commonly to maximize throughput, captured by the total gain from completed \redtext{tasks} per time unit. However, the goal of maximizing throughput typically conflicts with other design goals, such as minimizing latency or communication overhead. This conflict may be modelled by either a \emph{penalty function}, which the algorithm ``pays'' upon consumption of a resource (e.g., a cost paid for every message sent); or by forcing a \emph{budget constraint} on the total resources an algorithm may use.
\newline\item \emph{Unbounded inputs.} By its definition, in an uncertain environment it is hard to predict the future inputs, or even to assume it follows a certain pattern, based on the current input - because the current input is also not fully known. Hence, our models aim at minimizing the assumptions taken with respect to the characteristics of incoming traffic.
\newline\item \redtext{\emph{Multi-dimensional Heterogeneity.} Traditionally, it was common to assume that the inputs for the problem differ from each other in only one aspect (e.g., all tasks have uniform run-time, but some tasks are more valuable). Using such a single-dimensional heterogeneity assumption, it is easy to order, or prioritize, the inputs. However, this assumption is not valid anymore in nowadays networking environments, where servicing an incoming request requires multiple  resources (e.g., CPU, memory and storage), and incurs multi-dimensional costs (e.g., latency, bandwidth, and energy). Our models address the most general cases, of multi-dimensional heterogeneity.}
\end{inparaenum}

\subsection{Competitive Analysis}\label{sec:compeititve_analysis}
We evaluate the performance of online algorithms by means of competitive analysis \cite{Amortized, El-Yaniv}. The main advantage of competitive analysis is that it provides \emph{worst-case guarantees} on the performance of the online algorithm, while making no assumptions with regard to the input data, which is frequently unpredictable in heterogeneous and uncertain environments. 

An algorithm \alg\ is said to be $c$-competitive for a maximization objective if for every finite input sequence $\sigma$, the performance
of {\em any} algorithm for this sequence is at most $c$ times the performance of \alg\ ($c \geq 1$). As a consequence, proving an {\em upper
bound} of $c_u$ on the competitive ratio of a specific online algorithm \alg\ guarantees, that for every finite input
sequence $\sigma$, the performance of \alg\ for this sequence is at least $\frac{1}{c_u}$ times the performance
achievable by an optimal offline algorithm, \opt. Proving a {\em lower bound} of $c_{\ell}$ on the competitive ratio of
online algorithms means that no online algorithm can guarantee to gain in every input sequence $\sigma$ a performance
which is more than $\frac{1}{c_{\ell}}$ times of the performance which \opt\ can obtain for the same input sequence $\sigma$.

\subsection{Approximation Algorithms}\label{sec:approx_algs}
Many fundamental optimization problems are NP-hard, that is, cannot be solved by current known tools in time which is polynomial in the input size. A common approach to tackle this limitation is to relax the requirements of finding an optimal solution and instead focus on finding a solution that is ``good enough'' in reasonable time.
This is the approach taken by \textit{approximation algorithms}~\cite{williamson11design}.

The formal definition used to qualify approximation algorithms is similar to the one used in competitive analysis. We say that an algorithm \alg\ is a $c$-approximation algorithm for a maximization objective if (i) \alg\ runs in  polynomial-time, and (ii) for every finite input sequence $\sigma$, the performance of \alg\ for this sequence is at least $\frac{1}{c}$ times the performance
achievable by an optimal algorithm, \opt.

\redtext{In uncertain environments, providing such performance guarantees for \alg\ is especially hard, because not only that \alg\ doesn't know the inputs a-priory, but also \alg\ may not known the inputs {\em a-posteriori}. For instance, not only that \alg\ does not know which packets will arrive in the next cycle, but \alg\ may not know the characteristics of the packets which already arrived in the previous cycle.}

\subsection{Simulations}
Competitive analysis and approximation algorithms are very useful for showing {\em worst-case guarantees}. However, the performance of
algorithms on real traffic may be far superior to that of the worst case bounds. It is therefore instructive to
conduct a simulation study, which would compare the performance of different algorithms in more realistic scenarios. In
addition, a simulation study can provide further insights as to the algorithmic concepts and shed light on the effect
that various parameters have on the problem, well beyond the insights arising from a worst-case mathematical analysis.
Simulation traces may be generated either synthetically or using real-world data as follows. 

\paragraph*{Synthetic inputs.}
A common method to obtain input sequences for simulations is to generate an input trace \redtext{that} follows some known random distribution. 
In particular, arrival times are frequently modelled as a Poisson process. However, for better imitating an uncertain environment, and in particular bursty traffic, we sometimes use a {\em Modulated Markov Poisson Process} (MMPP)\cite{cite_MMPP}. Using such a stochastic process for traffic generation, the input occasionally (and randomly) switches between two states: HIGH -- which imitates bursty periods, with small expected inter-arrival time;  and LOW - which imitates relaxed periods, with large expected inter-arrival time. Thus, the LOW periods allow the system to drain the load generated during the bursty period. An example of an MMPP used in our work can be found in Section~\ref{sec:simulation_set}.
\paragraph*{Real-world data.}
Obviously, synthetic traces do not accurately reflect the characteristic of real-world input sequences. It is therefore
instructive to generate input sequences using real-world data. This data may include the arrival times of the inputs, the characteristics of incoming requests (e.g., required processing time), and the properties of the underlying system - e.g., the topology and capacities of nodes and links. 

Concrete examples of real-world traces used in this work are Wikibench~\cite{WikiBench}, a trace of real accesses to Wikipedia (see Section~\ref{sec:sim}); and datasets recorded in cloud services by Google and Amazon (see Section~\ref{sec:apsr:PlacementStudy}). 

\section{Sources of Heterogeneity and Uncertainty}\label{sec:srcs_of_htro_n_uncertain}
We now describe the sources of heterogeneity and uncertainty in each of the three networking environments addressed in this work.
Finally, we introduce our work and overview our contribution in three concrete networking environments: (i) packet buffers, (ii) network caching, and (iii) placement of virtual machines in the cloud.
\subsection{Scheduling and Management of Packet Buffers}
Some of the most basic tasks in computer networks involve scheduling and managing packet queues equipped with finite
buffers. Processing data packets in modern networking equipment spans multiple tasks, including various forms of DPI 
(Deep Packet Inspection), 
MPLS~\cite{MPLS}
and VLAN~\cite{VLAN}
tagging, encryption / decryption, compression / decompression, and more. This translates to increased heterogeneity in the required processing time and QoS (Quality of Service) of the arriving packets.

Traditionally, the research studying scheduling and management of packet queues assumed that the various properties of any packet –- e.g., its QoS characteristic, its required processing, its deadline -– are known upon its arrival~\cite{Goldwasser}. 
However, this assumption is in many cases unrealistic.
For instance, when a packet is recursively encapsulated a few times by MPLS, VLAN, or IPSec (secured Internet Protocol), it is hard to determine in advance the total number of processing cycles that such a packet would require~\cite{Folded, Kangaroo}.
Furthermore, the QoS features of a packet are commonly determined by its flow ID, which is in many cases known only after parsing~\cite{Kangaroo}.

In data center network architectures such as PortLand~\cite{PortLand}, ingress switches query a cache for an
application-to-location address resolution. A cache miss, which is unpredictable by nature, results in forwarding of
the packet to the switch software or to a central controller, which performs a few additional processing cycles
before the packet can be transmitted.
Similarly, in the realm of Software Defined Networks, ingress switches query a cache for obtaining rules for a
packet~\cite{DIFANE}, which may also depend on priorities~\cite{ETHANE}. In such a case, a cache miss results in
additional processing until the rules are retrieved and the profit from the packet is known.

In Chapter~\ref{sec:buf} we address the problem of maximizing throughput in packet buffers where the characteristics of some arriving traffic are unknown upon arrival. 

\subsection{Access Strategies in Network Caching}
Network caching is a prominent networking primitive, used for efficiently distributing data in numerous environments, such as Content Delivery Networks (CDNs)~\cite{CDN_OceanStore,CDN_AdaptSize}, 5G in-network caching~\cite{5GNetworks,5GNetworks2}, and wide area networks~\cite{summary_cache}. 

Typically, the provider distributes the data to multiple \emph{data stores}, thus bringing it closer to the end users, who can now obtain the data faster, and with lower bandwidth and energy consumption. The user can therefore look for a requested datum in multiple locations, where accessing each datastore incurs some cost, in terms of latency, bandwidth and energy. If the user fails to find the requested 
datum in any of the datastores he accesses, he pays a high \emph{miss penalty}, e.g. for retrieving the datum from a remote site.

For helping the user make a decision with regard to which datastores to access, each datastore commonly sends a periodic update, or \emph{indicator}, which is a compact summary of the list of items in the datastore. The most known indicator is the Bloom filter and its numerous variants~\cite{Bloom, Survey12, Survey18}, but there exist other indicators, such as the TinyTable~\cite{TinyTable}. 

Due to space and bandwidth constraints, the indicator is not an exact list of cached items, but merely a compact summary of it, and may therefore experience false replies. 
That is, an indicator may indicate that a given item is held by a certain datastore while it is actually not there, and vice versa. As a result, a user who is looking for datum $x$ for which there exist multiple positive indications 
experiences a hard problem of decision making under uncertainty: 
how to minimize the aggregate access cost to datastores, while keeping a low probability of a miss.

In Chapter~\ref{sec:BF} we address the problem of developing access strategies for multiple datastores in an heterogeneous and uncertain environment. 

\subsection{Placement of Virtual Machine in the Cloud}
The cloud computing paradigm provides computing and networking services by running virtual machines (VMs) in the cloud, without relying on local physical  machines~\cite{Middleboxes,EASE}.
One of the key tasks performed by a cloud service provider is the placement of new virtual machines in computing nodes (hosts). Although this is a fundamental task, performing it efficiently is challenging, as different VMs have heterogeneous requirements for computing resources (e.g. CPU, memory and storage). To place a new VM, one should find a host with enough available resources to meet its concrete resource requirements. However, obtaining a fresh status of the available resources in every host in such a heterogeneous and dynamic environment incurs a high communication overhead and complex synchronization mechanisms.

In Chapter~\ref{sec:APSR} we study the problem of placing virtual machines in  large clouds, where maintaining an always-fresh full system's state in impractical, resulting in a highly uncertain environment.

\section{Overview of the Thesis}
This thesis studies fundamental problems in several networking environments, which are characterized by high heterogeneity and uncertainty. 
In what follows we overview our contribution in each of these environments.

\paragraph*{Buffer management and Scheduling.} In Chapter~\ref{sec:buf} we introduce the problem of buffer management and scheduling in an heterogeneous and uncertain environment. \redtext{We propose a new model, in which the characteristics of some of the incoming packets are unknown upon arrival.}
We present lower bounds on the performance of any randomized algorithm for the problem. In our proofs we use a novel technique, in which we bound the expected number of packets in the buffer of an optimal offline algorithm by means of a Markov process;
we believe that this technique may be of independent interest. Next, we describe several algorithmic concepts tailored for the problem, and develop an algorithm that applies these algorithmic concepts and prove upper-bound on the competitive ratio of our algorithm. We further validate and evaluate the performance of our proposed algorithms via an extensive simulation study. Our results highlight the effect the various parameters have on the problem, well beyond
the insights arising from our rigorous mathematical analysis.
\redtext{Our work extends the rich literature about scheduling and buffer management by tackling the challenging
problem 
of  packets whose characteristics are unknown upon arrival, which is the common case in modern networking environments.}
These results also appear in~\cite{MIST, MIST_journal}.

\paragraph*{Access Strategies in Network Caching.}
In Chapter~\ref{sec:BF} we study the problem of developing access strategies for multiple datastores with uncertainty. We formally model this problem in very general and heterogeneous settings with varying access costs and per-datastore hit ratios. Next, we show that previously suggested strategies are too simplistic and implicitly rely on specific assumptions about the workload, or the underlying system.
Thus, in general, an access strategy that works well in one scenario may be inefficient for another.

Further on, we develop and analyze the performance of several practical approximation algorithms that work in polynomial time.
Through an extensive evaluation with varying system parameters, we show that our algorithms are more stable than existing approaches. That is, they outperform or achieve very similar access costs to the best competitor for any tested system configuration.
\redtext{Our results show that access strategies that take into account the level of uncertainty may improve upon the traditional approach for the datastore selection problem.}
Some of these results also appear in~\cite{Accs_Strategies_Infocom, Accs_Strategies_Journal, CAB, FN_aware}.

\paragraph*{Virtual Machine Placement in Virtual Network Functions.}
In Chapter~\ref{sec:APSR} we study the problem of placing new VMs for virtual network functions in an uncertain heterogeneous environment. We  explore the impact of parallelism on the ratio of failed placements (\emph{decline ratio}) in various popular placement algorithms. We show that employing multiple parallel deterministic schedulers may result in a very high communication overhead, and a high decline ratio. We therefore suggest using  independent \emph{random} schedulers, where each scheduler samples but a few hosts, as a solution which potentially allows for high throughput, while keeping low decline ratio, and low communication overhead.

Based on our suggested concept we introduce our proposed algorithm, \emph{Adaptive Partial State Random} (\algtop), that dynamically adjusts the number of parallel schedulers according to the system's utilization, and incorporates randomness into its decision making.
We formally analyze the performance of \algtop\ and provide guarantees as to its communication overhead and expected decline ratio. We also evaluate the performance of \algtop\ using three real-life datasets and show that it enables a high degree of parallelism (e.g., effectively running 20-100 schedulers) in a variety of realistic scenarios. We further show that \algtop\ reduces the communication overhead by over 85\% compared to state of the art algorithms. 
\redtext{These results highlight the promise of our approach, of using multiple independent lightweight random schedulers, as a scalable and highly reactive solution, which provides good performance and low communication overhead in various various configurations and scenarios.}
Part of our results also appear in~\cite{APSR_infocom_poster, APSR_IFIP}.

\newpage
\chapter{Buffer Management and Scheduling}\label{sec:buf}

\section{Problem Overview}
\label{sec:buf:introduction}
Some of the most fundamental tasks in computer networks is the scheduling and management of packet buffers, where the primary goal in such settings is maximizing the throughput of the system. As explained in Section~\ref{sec:srcs_of_htro_n_uncertain}, in modern networks the various properties of a packet -- e.g., its QoS characteristic, its required processing, its deadline -- are typically unknown upon arrival. These characteristics usually become known once a packet undergoes some initial processing, or \emph{parsing}. However, for traffic corresponding to the same flow, it is common for characteristics to be unknown when the first few packets of the flow arrive at a network element, and once these properties are unraveled, they become known for all subsequent packets of this flow.
It therefore follows that only part of the arriving packets has unknown characteristics upon arrival, which become known after parsing.

In this chapter we address such scenarios where the characteristics of some arriving traffic are unknown upon arrival,
and are only revealed when a packet has undergone some initial processing
(parsing), ``causing the mist to clear''.
We model and analyze the performance of algorithms in such settings, and in particular we develop online scheduling and buffer management algorithms for the problem of maximizing the profit obtained from delivered packets, and
provide guarantees on their expected performance using competitive analysis.\footnote{Please refer to the definition of competitive analysis in Section~\ref{sec:compeititve_analysis}.}

We focus on the general case of heterogeneous processing requirements (work) and heterogeneous
profits~\cite{MultiV_MultiW}.
We assume priority queueing, where the exact priorities depend on the specifics of the model studied.
We present both algorithms and lower bounds for the problem of dealing with unknown characteristics in these models.
Furthermore, we highlight some design concepts for settings where algorithms have limited knowledge, which we
believe might apply to additional scenarios as well.

\begin{figure}[t]
\centering
\includegraphics[width=0.8\columnwidth]{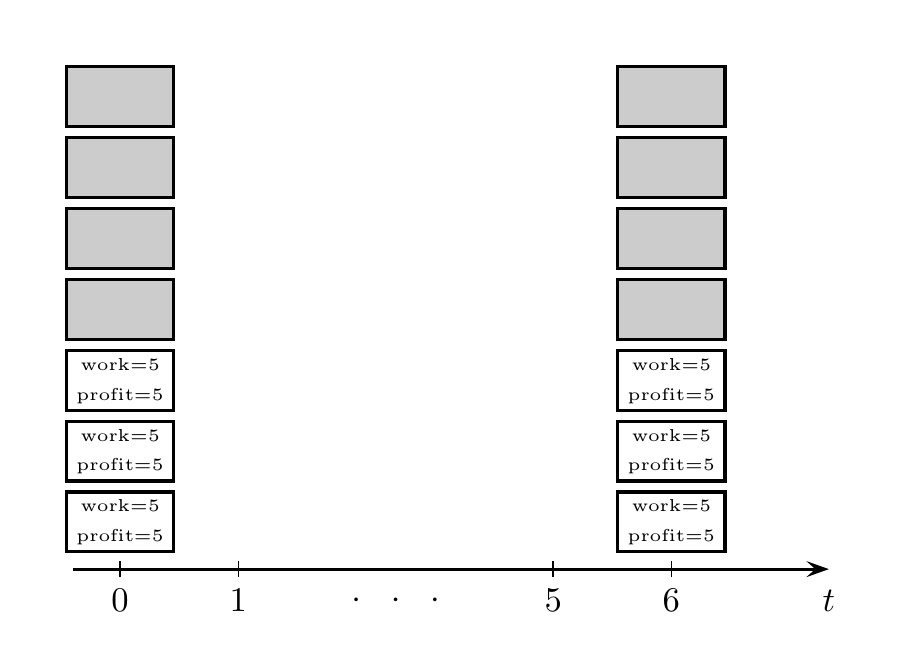}
\caption[An example of an arrival sequence with known and unknown packets]{An illustrative example of an arrival sequence with known and unknown packets}
\label{fig:toy_example}
\end{figure}

As an illustration of the problem, assume we have a 3-slots buffer, equipped with a single processor, and consider
the arrival sequence depicted in Fig.~\ref{fig:toy_example}.
In the first cycle, we have seven unit-size packets arriving, out of which three will provide a profit of 5 upon successful delivery, each requiring 5 processing cycles (work).
The characteristics of these three packets are known immediately upon arrival.
The characteristics of the remaining four packets (marked gray) are \emph{unknown} upon arrival. We therefore dub
such packets $U$-packets (i.e., unknown packets). Each of these four $U$-packets may turn out to be either a "best"
packet, requiring minimal work and having maximal profit; a "worst" packet, requiring maximal work and having
minimal profit; or anything in between. Thus, already at the very beginning of this simple scenario, any buffering
algorithm would encounter an \emph{admission control} dilemma: how many $U$-packets to accept, if any?
This dilemma can be addressed by various approaches including, e.g., allocating some buffer space for $U$-packets,
accepting $U$-packets only when current known packets in the buffer are of poor characteristics, in terms of profit,
or of profit to work ratio, etc. In case that the algorithm accepts $U$-packets, an additional question arises:
which of the $U$-packets to accept into the buffer? Obviously, for any online deterministic algorithm, there exists a
simple adversarial scenario, which would cause it to accept only the "worst" $U$-packets (namely, packets with
maximal work and minimum profit), while an optimal offline algorithm would accept the best packets. This motivates
our decision to focus our attention on randomized algorithms.

We now turn to consider another aspect of handling traffic with some unknown characteristics. Assume the scenario continues with 5 cycles without any arrival, and then a cycle with an identical arrival pattern - namely, three known packets with both work and profit of 5 per packet, and four $U$-packets. This
sheds light on a \emph{scheduling} dilemma: which of the accepted packets should better be processed first? every
scheduling policy impacts the buffer space available in the next burst. For instance, a run-to-completion attitude
would enable finishing the processing of one known packet by the next burst, thus allowing space for accepting a new
packet without preemption. However, one may consider an opposite attitude - namely, parsing as many $U$-packets as
possible, thus "causing the mist to clear",
allowing more educated decisions, once there are new arrivals. In terms
of priority queuing, this means over-prioritizing some
 $U$-packets, and allowing them to be parsed immediately upon arrival.
We further develop appropriate algorithmic concepts based on the insights from this illustrative example in Section~\ref{sec:algorithmic_conecpts}.

\subsection{System Model}\label{sec:buf:model}

We now describe our system model. For ease of reference, all the notations used in this chapter are summarized in Table~\ref{tbl:buf:notations}.
Our system model consists of four main modules, namely, \begin{inparaenum}[(a)]
\item an input queue equipped with a finite buffer,
\item a buffer management module which performs admission control, 
\item a scheduler module which decides which of the pending packets should be processed, and
\item a processing element (PE), which performs the processing of a packet.
\end{inparaenum}

We divide time into discrete cycles,
where each cycle represents a fixed time slot,
 and consists of three steps:
\begin{inparaenum}[(i)]
\item The {\em transmission} step, in which fully-processed packets leave the queue,
\item the {\em arrival} step, in which new packets may arrive, and the buffer management module decides which of them should be retained in the queue, and which of the currently buffered packets should be pushed-out and dropped, and finally
\item the {\em processing} step, in which the scheduler assigns a single packet for processing by the PE, which in turn processes the packet.
\end{inparaenum}

We consider a sequence of unit-size packets arriving at the queue. Upon its arrival, the characteristic of each packet may be \emph{known} - in which case we refer to the packet as a {\em $K$-packet} (i.e., known packet); or \emph{unknown}  - in which case we refer to the packet as a {\em $U$-packet} (i.e., unknown packets). We let $M$ denote the maximum number of $U$-packets that may arrive in any single cycle.
We focus our attention on the case where $M>0$, unless specifically stated otherwise.

Each arriving packet $p$ has some
\begin{inparaenum} [(1)]
\item intrinsic benefit (\emph {profit})
$v(p) \in \set{1,\dots,V}$, and
\item required number of processing cycles (\emph {work}),
$w(p) \in \set{\minWork, \minWork+1, \dots ,W}$. Unless explicitly stated otherwise, we consider the most general case, namely, $\minWork = 1$.
\end{inparaenum}

To simplify the expressions throughout the paper, we assume that both $V$ and $W$ are powers of 2.\footnote{Our results degrade by a mere constant factor otherwise.}
We use the notation {\em $(w,v)$-packet} to denote a packet with work $w$ and profit $v$.
We note that the {\em uniform} case where all packets require the same amount of work, and all packets have the same profit, is trivial, since the simple run-to-completion policy is optimal. We therefore focus our attention on non-uniform traffic.

In our model, similarly to~\cite{Shpiner}, upon processing a $U$-packet for the first time, its properties become known. We therefore refer to such a first processing cycle of a $U$-packet as a {\em parsing cycle}. Non-parsing cycles where the processor is not idle are referred to as {\em work cycles}.

The queue buffer can contain at most $B$ packets.
We assume $B \geq 2$, since the case where $B = 1$ is degenerate. The {\em head-of-line} (HoL) packet at time $t$
(for a given algorithm \alg) is the highest priority packet stored in the buffer just prior to the processing step
of cycle $t$, namely, the packet to be scheduled for processing in the processing step of $t$.
We say the buffer is {\em empty} at cycle $t$ if there are no packets in the buffer after the transmission step of cycle $t$.

We study {\em queue management} algorithms, which are responsible for both the buffer management
and the scheduling of packets for processing. In particular, we focus our attention on algorithms targeted at
maximizing the {\em throughput} of the queue, i.e. the overall profit from all packets successfully transmitted out
of the queue.
The throughput of algorithm \alg\ is denoted by $TP(\malg)$. We use the terms throughput and performance interchangeably.

An algorithm is said to be {\em greedy} if it accepts packets as long as there is available buffer space. We further
focus our attention on {\em work-conserving}
algorithms, i.e., algorithms which never leave the PE idle unnecessarily.
We evaluate the performance of online algorithms using competitive analysis, as detailed in Section~\ref{sec:compeititve_analysis}.

\LTcapwidth=0.95\textwidth
\begin{longtable}{| p{.15\textwidth} | p{.68\textwidth} | p{.06\textwidth} |}
\caption[List of symbols used in Chapter~\ref{sec:buf}]{List of symbols used in this chapter. The rightmost colon details the section where the symbol is used. A blank entry means that the symbol is used throughout the chapter.}
\label{tbl:buf:notations} \\

\hline \multicolumn{1}{| p{.15\textwidth} |}{\textbf{Symbol}} & \multicolumn{1}{| p{.68\textwidth} |}{\textbf{Meaning}} & \multicolumn{1}{| p{.06\textwidth} |}{\textbf{Section}} \\ \hline 
\endfirsthead

\multicolumn{3}{c}%
{{\bfseries \tablename\ \thetable{} -- continued from previous page}} \\
\hline \multicolumn{1}{|c|}{\textbf{Symbol}} & \multicolumn{1}{c|}{\textbf{Meaning}} & \multicolumn{1}{c|}{\textbf{Section}} \\ \hline 
\endhead

\hline \multicolumn{3}{|r|}{{Continued on next page}} \\ \hline
\endfoot

\hline \hline
\endlastfoot
        \hline
    	\hline
    	$K$-packet & Packet whose characteristics are known upon arrival&\tabularnewline
    	\hline
    	$U$-packet & Packet whose characteristics are unknown upon arrival&\tabularnewline
    	\hline
    	$M$ & maximum number of $U$-packets that may arrive in any single cycle &\tabularnewline
    	\hline
    	$w(p)$ & work of packet $p$  &\tabularnewline
    	\hline
    	$v(p)$ & profit of packet $p$  &\tabularnewline
    	\hline
    	$(w,v)$-packet & packet with work $w$ and profit $v$ &\tabularnewline
    	\hline
    	$W_0$ & minimal work of a packet&\tabularnewline
    	\hline
    	$W$ & maximal work of a packet&\tabularnewline
    	\hline
    	$V$ & maximal profit of a packet&\tabularnewline
    	\hline
    	$B$ & \redtext{Size of the buffer (maximal number of packets)}&\tabularnewline
    	\hline
    	$TP(\malg)$ & throughput (performance) of algorithm \alg\ &\tabularnewline
    	\hline
    	$N$ & number of cycles during the fill phase & \ref{sec:lower_bounds}\tabularnewline
    	\hline
    	$r$ & probability that an algorithm parses new packets during a given cycle &\tabularnewline
    	\hline
    	$q_j$ & the state where there are $j$ $\mbest$-packets in \subopt's buffer at the beginning of an iteration & \ref{sec:lower_bounds}\tabularnewline
    	\hline
    	$\malpha_k$ & the probability of having exactly $k$ $\mbest$-packets arriving during one iteration & \ref{sec:lower_bounds}\tabularnewline
    	\hline
    	\rule{0pt}{3ex}    
    	$C_i^{(W)}$ & work-class&\ref{sec:algorithms},~\ref{sec:improved_algorithms}\tabularnewline
    	\hline
    	\rule{0pt}{3ex}    
    	$C_j^{(P)}$ & profit-class&\ref{sec:algorithms},~\ref{sec:improved_algorithms}\tabularnewline
    	\hline
    	\rule{0pt}{4ex}    
    	$C_{(i,j)}$ & combined-class of a packet belonging to work-class $C_i^{(W)}$ and profit-class $C_j^{(P)}$&\ref{sec:algorithms},~\ref{sec:improved_algorithms}\tabularnewline
    	\hline
    	$\delta_W$ &  maximal ratio between the work values of two packets belonging to the same work-class&\ref{sec:algorithms},~\ref{sec:improved_algorithms}\tabularnewline
    	\hline
    	$\delta_V$ &  maximal ratio between the profit values of two packets belonging to the same profit-class&\ref{sec:algorithms},~\ref{sec:improved_algorithms}\tabularnewline
    	\hline
    	$G$ & the selected class&\ref{sec:algorithms},~\ref{sec:improved_algorithms}\tabularnewline
    	\hline
    	$G$-packets & packets which belong to the selected class, $G$&\ref{sec:algorithms},~\ref{sec:improved_algorithms}\tabularnewline
    	\hline
    	$\ell_W$ & number of work-classes&\ref{sec:algorithms},~\ref{sec:improved_algorithms}\tabularnewline
    	\hline
    	$\ell_V$ & number of profit-classes&\ref{sec:algorithms},~\ref{sec:improved_algorithms}\tabularnewline
    	\hline
    	\rule{0pt}{4ex}    
    	$\Au_{(w,v)}(t)$ & number of $U$-packets with work $w$ and profit $v$, which arrive in cycle $t$&\ref{sec:algorithms},~\ref{sec:improved_algorithms}\tabularnewline
    	\hline
    	\rule{0pt}{4ex}    
    	$P\mFILL$ & sets of cycles in which \SAM\ is in the fill phase&\ref{sec:algorithms},~\ref{sec:improved_algorithms}\tabularnewline
    	\hline
    	\rule{0pt}{4ex}    
    	$P\mFLUSH$ & sets of cycles in which \SAM\ is in the flush phase&\ref{sec:algorithms},~\ref{sec:improved_algorithms}\tabularnewline
    	\hline		$S_{\alpha}(t)$ & expected profit of \SAM\ from $\alpha$-packets which arrive during cycle $t$&\ref{sec:algorithms},~\ref{sec:improved_algorithms}\tabularnewline
    	\hline
    	$S_{\alpha}, O_{\alpha}$ & overall expected profit of \SAM, \opt\ from $\alpha$-packets&\ref{sec:algorithms},~\ref{sec:improved_algorithms}\tabularnewline
    	\hline
    	\rule{0pt}{4ex}    
    	$O\mFILL_{\CsU}$ & expected profit of \opt\ from $\CsU$-packets which arrive during a $P\mFILL$ period&\ref{sec:algorithms},~\ref{sec:improved_algorithms}\tabularnewline
    	\hline
    	\rule{0pt}{4ex}    
    	$O\mFLUSH_{\CsU}$ & expected profit of \opt\ from $\CsU$-packets which arrive during a $P\mFLUSH$ period&\ref{sec:algorithms},~\ref{sec:improved_algorithms}\tabularnewline
    	\hline		
        $C^*_{(i,j)}$ & the $(i,j)$-closure class: $$C^*_{(i,j)}=\bigcup_{i'\leq i, j'\geq j} C_{(i',j')}$$
        &\ref{sec:improved_algorithms}\tabularnewline
    	\hline
\end{longtable}

\subsection{Related Work}
Competitive algorithms for scheduling and management of bounded buffers have been extensively studied for the past two decades. The problem was first introduced in the context of differentiated services, where packets have uniform size and processing requirements, but some of the packets have higher priorities,
represented by a higher profit associated with them~\cite{Rosen, OF, Smoothing}. The numerous variants of this problem include models where packets have deadlines or maximum lifetime in the switch~\cite{OF}, environments involving multi-queues~\cite{Albers2005, Azar&Richter2004, Buf_Xbar_Kogan, Buf_Xbar_Kanizo} and cases with packets dependencies~\cite{Dependencies, OF_sizes}, to name but a few.
An extensive survey of these models and their analysis can be found in~\cite{Goldwasser}.

While traditionally it was assumed that packets have heterogeneous profits but uniform work (processing
requirements), some recent work introduced the complementary problem, of uniform profits with heterogeneous
work~\cite{Multipass}. This work presented an optimal algorithm for the fundamental problem, as well as
online algorithms and bounds on the competitive ratio for numerous variants. Subsequent research investigated related
problems with heterogeneous work combined with heterogeneous packet sizes~\cite{Sizes}, or with heterogeneous
profits~\cite{MultiV_MultiW, Azar2015}.
In particular,~\cite{MultiV_MultiW} showed that the competitive ratio of some straight-forward deterministic algorithms for the problem of heterogeneous work combined with heterogeneous profits is linear in either the maximal work $W$, or in the maximal profit $V$, even when the characteristics of all packets are known upon arrival. These results motivate our focus on randomized algorithms.

While most of the literature above assumed that all the characteristics of packets are known upon arrival, this
assumption was put in question recently~\cite{Shpiner} by noting that it is often invalid.
However, the main problem addressed in~\cite{Shpiner} revolved around developing schemes for transmitting packets of the same flow in-order, while our work focuses on maximizing throughput with limited buffering resources, and designing both buffer management and scheduling policies targeted at this objective.

Maybe closest to our work are the recent studies considering serving in the dark~\cite{Dark,Dark2}, which investigate an extreme case where the online algorithm learns the profit from a packet only after transmitting it. These studies
consider highly oblivious algorithms, whereas our model and our proposed algorithms dwell in a middle-ground between
the well studied models with complete information, and these recent oblivious settings. Our work further considers
traffic with variable processing requirements, whereas~\cite{Dark,Dark2} focus on settings where all packets require
only a single processing cycle, and they differ only by their profit.

The problem of optimal buffering of packets with variable work is closely related to the problem of job scheduling
in a multi-threaded processor, which was extensively studied. A comprehensive survey of online algorithms for this
problem can be found in~\cite{Job}. This body of work, however, differs significantly from our currently studied
model.
The major differences are that packet buffering has to deal with limited buffering capabilities, and is targeted at
maximizing throughput. Processor job scheduling, however, usually has no strict buffering limitations, and is mostly
concerned with minimizing the response time.

\subsection{Our Contribution}
We introduce the problem of buffering and scheduling which aims to maximize throughput where the characteristics of
some of the packets are unknown upon arrival. We focus our attention on traffic where every packet has some required
processing cycles, and some profit associated with successfully transmitting it.

In Section~\ref{sec:lower_bounds} we present lower bounds on the performance of any randomized algorithm for the
problem. Specifically, we show that no algorithm can have a competitive ratio better than $\Omega(\min\set{WV,M})$, even against an adversary which can accommodate merely 2 packets in its buffer, where $W$ and $V$ denote the maximum work and profit of a packet, respectively, and $M$ represents the maximum number of unknown packets which may arrive in any single cycle.
We also prove stronger lower bounds for the general settings using a novel technique, in which we bound the expected number of packets in the buffer of an optimal offline algorithm by means of a Markov process.

In Section~\ref{sec:algorithmic_conecpts} we describe several algorithmic concepts tailored for dealing with unknown characteristics in such systems.
We follow by presenting an algorithm that applies our suggested algorithmic concepts in Section~\ref{sec:algorithms}.
For the most general case, we prove our algorithm has a competitive ratio of $O(M \log V \log W)$. We further show how to improve this bound in several important special cases.

In Sections~\ref{sec:improved_algorithms}-\ref{sec:Practical_Implementation}
we present some modifications and heuristics applicable to our algorithm that, while leaving the worst-case guarantees intact, are designed to improve performance compared to the baseline algorithmic design.
The modified algorithm can cope with cases where neither the maximal amount of work and profit, nor the maximum number of unknown packets per cycle, are known in advance.

We further validate and evaluate the performance of our proposed algorithms in Section~\ref{sec:simulations} via an
extensive simulation study. Our results highlight the effect the various parameters have on the problem, well beyond
the insights arising from our rigorous mathematical analysis.

We conclude in Section~\ref{sec:buf:conclusions} with a discussion of our results, and also highlight several
interesting open questions.

\section{Lower Bounds}
\label{sec:lower_bounds}
In this section, we present lower bounds on the competitive ratio of any randomized algorithm for our problem.

These lower bounds serve two main objectives:
\begin{inparaenum}[(i)]
\item They represent the best competitive ratio which one can hope to achieve; and
\item the hard scenarios used in the proofs of these lower bounds highlight the challenges which any competitive online algorithm would have to tackle.
\end{inparaenum}

\subsection{Highly-restricted Adversaries}\label{sec:lower_bounds_restricted}
In this section, we prove lower bounds on the competitive ratio of any online algorithm for our problem, compared to a highly-restricted adversary which uses a buffer which can only store a single packet. This restriction on the amount of buffer space available for the adversary enables us to better highlight the scaling laws of the problem, depending on the various parameters.

\begin{theorem} \label{Theorem_WP_rand}
If $V \geq 1$, $M \geq 1$ and the work of each packet is $w(p) \in \set{\minWork, \minWork+1, \dots, W}$ where $W \geq 2$,
then the competitive ratio of any randomized algorithm  for non-uniform traffic is at least
$$
\frac{V (W-1)}{2\minWork} \left[1 - \left(1- \frac{1} {V(W-1)+1-\minWork}\right)^{M\minWork} \right],
$$

even against an optimal offline algorithm which has a buffer which can only store a single packet.
\end{theorem}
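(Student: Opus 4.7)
The plan is to invoke Yao's principle: I will exhibit a randomized input distribution on which, for every deterministic online algorithm, the ratio of the expected profit of an optimal offline algorithm (restricted to a single-slot buffer) to the expected profit of the online algorithm is at least the claimed bound. This transfers directly to randomized online algorithms. The hard distribution is as follows. Fix a horizon of $T$ cycles (to be taken large). In every cycle, exactly $M$ $U$-packets arrive, each independently being a ``best'' $(\minWork,V)$-packet with probability $p := 1/(V(W-1)+1-\minWork)$, and a ``worst'' $(W,1)$-packet otherwise. The value $p$ is not arbitrary: it is precisely the value at which the two natural online parsing strategies (``process only packets revealed as best'' and ``process every parsed packet'') achieve identical per-cycle throughput rates, as will emerge from the $\malg$ bound below.

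For the $\mopt$ lower bound I partition the horizon into disjoint blocks of $2\minWork$ cycles. In each block, with probability at least $1-(1-p)^{M\minWork}$ a best packet arrives during the first $\minWork$ cycles of the block; whenever this happens, the offline algorithm---even with its single-slot buffer---can accept that packet and run it to completion comfortably within the same $2\minWork$-cycle block, collecting profit $V$. Summing over the $T/(2\minWork)$ blocks gives
$$E[\mopt] \;\geq\; \frac{V\,T}{2\minWork}\bigl(1-(1-p)^{M\minWork}\bigr).$$

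For the $\malg$ upper bound, I summarize any deterministic strategy by two conditional probabilities $\alpha,\beta\in[0,1]$---the probability of finishing (versus discarding after parsing) a packet that turned out to be best, respectively worst. The expected profit per parsed packet is $\alpha p V + \beta(1-p)$, and the expected number of cycles consumed (for parsing plus any subsequent processing) is $1 + \alpha p(\minWork-1) + \beta(1-p)(W-1)$. A linear-fractional maximization of the ratio over $(\alpha,\beta)\in[0,1]^2$, at the specified $p$, shows the ratio is maximized at exactly $1/(W-1)$, with equality attained at both $(1,0)$ and $(1,1)$; indeed this coincidence is precisely the criterion defining $p$. Since the total number of consumed cycles is at most $T$, a per-packet amortization---apply the rate bound to each parsed packet's $(E[\text{profit}],E[\text{cycles}])$ and sum---yields $E[\malg]\leq T/(W-1)$.

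Dividing the two bounds produces the claimed ratio. The main obstacle will be justifying the $\malg$ upper bound for arbitrary, possibly history-dependent, adaptive strategies rather than only the ``stationary'' $(\alpha,\beta)$-strategies used above: the i.i.d.\ arrival structure must be exploited so that the per-packet LP bound still applies when the algorithm varies $\alpha_i,\beta_i$ across parsed packets (the key point being that the packet type is independent of ALG's history at the moment of parsing, so the per-packet rate $E[\text{profit}_i]/E[\text{cycles}_i]\leq 1/(W-1)$ holds regardless of how $\alpha_i,\beta_i$ were chosen). A secondary care is cleanly handling boundary effects in the $\mopt$ block decomposition---those are exactly what force the factor of $2$ in the denominator of the stated bound.
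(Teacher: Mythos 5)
Your proposal is correct and follows essentially the same route as the paper's proof: Yao's principle applied to the same best/worst $U$-packet distribution with $p^*=\left[V(W-1)+1-\minWork\right]^{-1}$, the same $2\minWork$-cycle block strategy for the single-slot offline algorithm (the paper's $\msubopt$), and the same observation that $p^*$ equalizes the per-cycle yield of ``parse and keep only best packets'' versus ``complete everything,'' pinning the online rate at $1/(W-1)$. The only difference is packaging: you amortize the online bound per parsed packet via a linear-fractional maximization over completion probabilities $(\alpha,\beta)$, whereas the paper aggregates over the whole fill phase via the fraction $r$ of parsing cycles and an additive $BV$ slack for buffered packets -- both reduce to the same computation, and your flagged concern about adaptive strategies is handled exactly by the independence argument you sketch.
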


\begin{proof}
\redtext
{First note that since traffic is non uniform, we are guaranteed to have either $V>0$ and / or $W>\minWork$, and therefore $V(W-1) + 1 - \minWork \neq 0$.}

\redtext{We prove the theorem using Yao's principal~\cite{yao77probabilistic}. In a nutshell, Yao's principal claims that the expected performance of any randomized algorithm is at most the expected performance of any {\em deterministic} algorithm for a worst-case probability distribution of the input. 
Hence, we define a carefully crafted distribution over arrival sequences, and show a lower bound on the ratio between the expected performance of an optimal clairvoyant algorithm for the problem, and the expected performance of any  deterministic algorithm for the problem.
}

We will show that the claim is true even if the optimal offline algorithm uses a buffer that can hold only a single packet.
We define the following collection of arrival sequences, where each arrival sequence has two phases: a {\em Fill phase}, and a {\em Flush phase}.
The Fill phase consists of \emph{iterations} as follows. Each iteration begins with $\minWork$ cycles without arrivals; and continues with $\minWork$ cycles with $M$ $U$-packets arriving per cycle, where each packet is a $\mbest$-packet with probability $p$, and a $\mworst$-packet with probability $(1-p)$, for some constant $p$ to be determined later. The total number of cycles during the fill phase is $N$, where $N$ is a large integer, so we have $\frac{N}{2 \minWork}$ iterations.
Once the fill phase ends, it is followed by the Flush phase, which consists of $BW$ cycles without arrivals.
We note that due to the random choices of packets being either $\mbest$-packets or $\mworst$-packets, the above structure induces a distribution over a collection of possible arrival sequences.

To simplify our analysis, we define the \subopt\ policy, which works as follows. Within the fill phase, during each iteration, \subopt\ accepts at most one $\mbest$-packet which has arrived during the iteration, if such a packet exists. This packet is the one considered {\em picked} by \subopt\ in that iteration.
Starting from the second iteration, during the first $\minWork$ cycles of each iteration, \subopt\ processes the packet it picked during the previous iteration (if such a packet exists), and transmits it.
During the flush phase, \subopt\ processes and finally transmits the packet it picked during the last iteration. It should be noted that \subopt\ is neither greedy nor work conserving.
Moreover, the expected throughput of \subopt\ serves as a lower bound on the expected optimal throughput
possible.

\begin{table}
        \centering
    \begin{tabular}{|l|l|l|l|p{2.8 cm}|p{0.9 cm}|}%
        \hline
        Iteration & Cycle &Arrivals	&Operation	& Buffer &Total Gain \tabularnewline
        \hline
        \hline
        1 &	1 & -	&-	&-	&0 \tabularnewline
        & 2 &	-	&  	& -	&0 \tabularnewline
        & \dots &	-	& - &$\mbest$&	0 \tabularnewline
        &	$W_0$ &	- &	-	&$\mbest$&	0 \tabularnewline 
        &	$W_0$ + 1&	$\mbest$\tikzmark{pick1_begin}&	-	&$\mbest$\tikzmark{pick1_end}&	0 \tabularnewline 
        &	\dots &	$(W,1)$&	-	&$\mbest$&	0 \tabularnewline 
        &	2$W_0$	& $(W,1)$&	-	&$\mbest$\tikzmark{end_buf1}&	0 \tabularnewline  \hline

        2  &	2$W_0$+1&	- &	Process &$\mbest$\tikzmark{proc1_begin}&	0 \tabularnewline 
        &	\dots &	- &	Process	&$\mbest$&	0 \tabularnewline 
        &	3$W_0$ &	- &	Process	&$\mbest$\tikzmark{proc1_end}&	0 \tabularnewline 
        &	3$W_0$+1 &	$(W,1)$&	Transmit\tikzmark{xmt1} &	- &	$V$ \tabularnewline 
        &	\dots &	$\mbest$ \tikzmark{pick2_begin} &		&$\mbest$\tikzmark{pick2_end}&	$V$ \tabularnewline 
        &	\dots &	$\mbest$ &	-	&$\mbest$&	$V$ \tabularnewline 
        &	4$W_0$ &	$(W,1)$ &	-	&$\mbest$\tikzmark{end_buf2}&	$V$ \tabularnewline  \hline

        3 &	$4W_0+1$ &	- &	Process& $\mbest$\tikzmark{proc2_begin}&	$V$ \tabularnewline 
        &	\dots&	- &	Process &	$\mbest$ 	& $V$ \tabularnewline 
        &	$5W_0$ &	- &	Process &	$\mbest$\tikzmark{proc2_end}& $V$ \tabularnewline 
        &	$5W_0+1$ &	$(W,1)$ &	Transmit\tikzmark{xmt2} &	-	& $2  V$ \tabularnewline 
        &	\dots &	$(W,1)$ &	-	& -	& $2  V$ \tabularnewline 
        &	6$W_0$ &	$(W,1)$ &	- &	-	& $2  V$ \tabularnewline  \hline
        
        4 &	6$W_0$+1 &	- &	-	& -	& $2  V$ \tabularnewline 
        &	\dots &	- &	-	&	- & $2  V$ \tabularnewline 
        &	$7W_0$ &	- &	- &	 - & $2V$ \tabularnewline 
        &	$7W_0+1$ &	$(W,1)$ &	- &	-	& $2V$ \tabularnewline 
        &	\dots &	$(W,1)$ &	-	& -	& $2  V$ \tabularnewline 
        &	8$W_0$ & $\mbest$\tikzmark{pick3_begin}&		 & $\mbest$\tikzmark{pick3_end}&	$2  V$ \tabularnewline  \hline

\end{tabular}
    
    \newcommand{\yshift}{0.12 cm}
    \newcommand{\xshiftleft}{0.03 cm}
    \newcommand{\xshiftright}{-1.5 cm}
    \newcommand{\arrowlinewidth}{1.5 pt}
    \newcommand{\textabovearrow}{-0.18 cm}
    \newcommand{\bracesxshit}{0 cm}
    \newcommand{\bracesyshithi}{0.3 cm}
    \newcommand{\bracesyshitlo}{-0.1 cm}
    \newcommand{\bracestotxtx}{1.92 cm}
    \newcommand{\amplitude}{7 pt}
    
    \begin{tikzpicture}[
        overlay, remember picture, shorten >=-3pt,
            blockblue/.style={
              rectangle,
              draw=blue,
              thick,
              align=center,
              text width=1.15 cm,
              minimum height = 0.5 cm,
            },
            blockgreen/.style={
              rectangle,
              draw=green,
              thick,
              align=center,
              text width=1.15 cm,
              minimum height = 0.5 cm,
            },
            blockycyan/.style={
              rectangle,
              draw=cyan,
              thick,
              align=center,
              text width=1.15 cm,
              minimum height = 0.5 cm,
            },
        ]
    
        \path[->] ([shift={(\xshiftleft,  \yshift)}]pic cs:pick1_begin) edge [] node [midway,above=\textabovearrow] 
        {\footnotesize{pick a packet}} ([shift={(\xshiftright,  \yshift)}]pic cs:pick1_end);
        
        \path[->] ([shift={(\xshiftleft,  \yshift)}]pic cs:pick2_begin) edge [] node [midway,above=\textabovearrow] 
        {\footnotesize{pick a packet}} ([shift={(\xshiftright,  \yshift)}]pic cs:pick2_end);

        \path[->] ([shift={(\xshiftleft,  \yshift)}]pic cs:pick3_begin) edge [] node [midway,above=\textabovearrow] 
        {\footnotesize{pick a packet}} ([shift={(\xshiftright,  \yshift)}]pic cs:pick3_end);
        
        \path[->] ([shift={(\xshiftleft,  \yshift)}]pic cs:pick1_begin) edge [blue, style=thick] node [midway,above=\textabovearrow] 
        {} ([shift={(-1.7 cm,  0.17 cm)}]pic cs:xmt1);

        \path[->] ([shift={(\xshiftleft-0.05 cm,  \yshift)}]pic cs:pick2_begin) edge [green, style=thick] node [midway,above=\textabovearrow] 
        {} ([shift={(-1.7 cm,  0.17 cm)}]pic cs:xmt2);

        \draw  [decorate,decoration={brace, amplitude=\amplitude}]
        ([shift={(\bracesxshit,  \bracesyshithi)}]pic cs:pick1_end) -- ([shift={(0.1 cm,\bracesyshitlo)}]pic cs:end_buf1) node 
        [black, right, midway, anchor=east, xshift=\bracestotxtx, text width=1.5cm] {\footnotesize buffer picked packet};

        \draw  [decorate,decoration={brace, amplitude=\amplitude}]
        ([shift={(\bracesxshit,  \bracesyshithi)}]pic cs:proc1_begin) -- ([shift={(\bracesxshit,\bracesyshitlo)}]pic cs:proc1_end) node 
        [black, right, midway, anchor=east, xshift=\bracestotxtx, text width=1.5cm] {\footnotesize process picked packet};

        \draw  [decorate,decoration={brace, amplitude=\amplitude}]
        ([shift={(\bracesxshit,  \bracesyshithi)}]pic cs:pick2_end) -- ([shift={(-0 cm,\bracesyshitlo)}]pic cs:end_buf2) node 
        [black, right, midway, anchor=east, xshift=\bracestotxtx, text width=1.5cm] {\footnotesize buffer picked packet};

        \draw  [decorate,decoration={brace, amplitude=\amplitude}]
        ([shift={(\bracesxshit,  \bracesyshithi)}]pic cs:proc2_begin) -- ([shift={(0 cm,\bracesyshitlo)}]pic cs:proc2_end) node 
        [black, right, midway, anchor=east, xshift=\bracestotxtx, text width=1.5cm] {\footnotesize process picked packet};
        
        \draw ([shift={(-0.7 cm,  0.1)}]pic cs:pick1_begin) node[blockblue] () {};
        \draw ([shift={(-0.8 cm,  0.1)}]pic cs:pick2_begin) node[blockgreen] () {};
        \draw ([shift={(-0.67 cm,  0.14)}]pic cs:pick3_begin) node[blockycyan] () {};

    \end{tikzpicture}
    
    \caption[Running example of \subopt.]{Running example of \subopt\ during the Fill phase.}
    \label{tbl:subopt}
\end{table}

\redtext{Table~\ref{tbl:subopt} illustrates a running example of \subopt. For each cycle, the table details the arriving packets, the action taken by \subopt, its  buffer's content, and its gain so far. For simplicity, we set the maximal number of $U$-packets arriving per cycle to $M=1$. Recall that each arriving packet is either a $(W,1)$-packet or a $\mbest$-packet, and that \subopt\ has a single-slot buffer. We now turn to explain the scenario depicted in Table~\ref{tbl:subopt} iteration by iteration.} 

\redtext{\textbf{iteration 1.} \subopt\ picks a single $\mbest$-packet that arrives in the cycle $\minWork+1$. 
In this iteration, \subopt\ does not process or transmit any packet. Hence, \subopt\ is not work-conserving.} 

\redtext{\textbf{iteration 2.} In the first half of the iteration (cycles $2 \minWork$ through $3\minWork$), \subopt\ processes the packet it picked in iteration 1 and finally transmits it in cycle $3\minWork+1$, thus gaining a profit of $V$. By doing so, \subopt\ empties its single-slot buffer, which is now available for storing a new packet. In one of the cycles in the second half of iteration 2 \subopt\ picks again a $\mbest$-packet. However, \subopt\ does not start processing this packet yet, as \subopt\ never processes packets during the second half of each iteration.} 

\redtext{\textbf{iteration 3.} In the first half of the iteration (cycles $2 \minWork$ through $3\minWork$), \subopt\ processes the packet it picked in iteration 2 and finally transmits it in cycle $5\minWork+1$, thus thus increasing its total gain by $V$. In the second half of the iteration (cycles $5 \minWork + 1$ through $6\minWork$), no $\mbest$-packets arrive. Hence, \subopt\ does not pick any packet. Note that by doing so, \subopt\ is not greedy.}

\redtext{\textbf{iteration 4.} \subopt\ begins the iteration with an empty buffer. However, until cycle $8\minWork-1$ none of the arriving packets is a $\mbest$-packet, and therefore \subopt\ does not pick any of the arriving packets. Finally, in cycle $8\minWork$, a $\mbest$-packet arrives, and \subopt\ picks it into its buffer.}

We have $\frac{N}{2\minWork}$ iterations, and the probability that \subopt\ successfully picks a $\mbest$-packet during an iteration is exactly the probability of there being a $\mbest$-packet arriving during that iteration, which is $1 - (1-p)^{M\minWork}$.
The throughput of \subopt, which we recall is denoted by $TP(\msubopt)$, therefore satisfies
\begin{equation} \label{TP_Opt_rand_WP}
TP(\msubopt) \geq \frac{NV}{2\minWork} \left[1 - (1-p)^{M\minWork}\right]
\end{equation}

We now turn to consider the expected performance of any deterministic algorithm \alg\ for the problem.
We first assume that \alg\ begins the flush phase with a buffer
full of $\mbest$-packets,
all of them unparsed. This provides \alg\ with a profit of $BV$ during the flush phase, while
still having $N$ processing cycles during the fill phase for processing additional packets. This
profit is clearly an upper bound on the maximum possible throughput attainable by \alg\ from packets transmitted during the flush phase, regardless of when they were processed.
For evaluating the gain of \alg\ during the fill phase, it therefore suffices to consider only packets which \alg\ fully processes during this phase.

Consider now the profit of \alg\ from packets transmitted during the fill phase.
Recall that we assume that \alg\ is work-conserving. We assume that \alg\ is also greedy, that is, \alg\ never discards a packet when its buffer is not full; being greedy cannot decrease \alg's performance. \alg\ has packets to process during the entire fill phase, except for the first $\minWork$ cycles (where there are no arrivals yet), namely, for $N' = N - \minWork$ cycles.
Furthermore, since \alg\ is assumed to always accept packets when the buffer is not full, and is work conserving, there exists some $0 < r \leq 1$
such that the number of parsing, and work, cycles performed by \alg\ are $N'r$, and $N'(1-r)$, respectively.

Consider a case where \alg\ reveals a $\mbest$-packet $q$. Then, processing $q$ and finally transmitting it would surely
not decrease the throughput of \alg\ when contrasted with the alternative of dropping $q$.
Thus, the best deterministic algorithm \alg\ would work at least $\minWork-1$ work cycles per each parsing cycle, in which
a $\mbest$-packet is parsed
(recall that we are merely interested in packets, which \alg\ fully processes and transmits during the fill phase).
Therefore, the total number of work cycles contributing to the transmission of such packets is at least $\minWork-1$ times larger than the expected number of parsing
cycles, in which a $\mbest$-packet is revealed: $N'(1-r) \geq N'rp(\minWork-1)$.

If the total number of work cycles during the fill phase exceeds the number of cycles that are necessary for
transmitting all the parsed $\mbest$-packets, \alg\ may work also on $\mworst$-packets. Namely, if $N'(1-r) > N'rp(\minWork-1)$,
then \alg\ may work on $\mworst$-packets for
$N'(1-r) - N'rp(\minWork-1)$ cycles, transmitting at most one $\mworst$-packet once in $W-1$ such cycles.

Combining the above reasoning we conclude that the overall throughput of \alg\ satisfies
\begin{equation}\label{TP_alg_lower_b_WP}
\begin{split}
TP(\malg) & \leq N'rpV + \frac {N'(1-r) - N'rp(\minWork-1)} {W-1} + BV \\
& = (N - \minWork) \left[Vrp + \frac {(1-r) - rp(\minWork-1)} {W-1}\right] + BV
\end{split}
\end{equation}

Considering the ratio between the lower bound on the expected performance of \subopt\ (as captured by Eq.~\ref{TP_Opt_rand_WP}) and the upper bound on the expected performance of \alg\ (as captured by Eq.~\ref{TP_alg_lower_b_WP}) and letting $N \to \infty$, we conclude that no algorithm can have a competitive ratio better than
$$
\frac{V(W-1)}{2\minWork} \cdot \frac{1 - (1-p)^{M\minWork}}{Vrp(W-1)+ 1-r -rp(\minWork-1)}
$$
By choosing $p^* = \left[ V(W-1)+1-\minWork \right]^{-1}$, the result follows.
\end{proof}

We now aim to relate the lower bound established in Theorem~\ref{Theorem_WP_rand} to a simpler and more intuitive function of $M, V$ and $W$. We do so by means of two propositions, which relate the bound to either $\Omega(M)$ or $\Omega(VW)$ for different ranges of $M$. In the propositions we use our notation $p^* = \left[ V(W-1)+1-\minWork \right]^{-1}$ from the proof of Theorem~\ref{Theorem_WP_rand}.

Using this notation, note that Theorem~\ref{Theorem_WP_rand} shows that the competitive ratio is at least
$$
\frac{V(W-1)}{2\minWork} \left[1 - \left(1- p^* \right)^{M\minWork} \right].
$$

In the proofs of both propositions we will repeatedly use the following simple inequality, which holds for any $\minWork \geq 1$:\emph{•}
\begin{equation} \label{simple_inequality_p_star}
\frac{1}{V(W-1)} = \frac{1}{\frac{1}{p^*} + \minWork - 1} \leq p^*.
\end{equation}

The following proposition shows that if $M$ is relatively small, then the lower bound established in Theorem~\ref{Theorem_WP_rand} is $\Omega (M)$.

\begin{proposition} \label{prop:M_over_2_VW}
If $V \geq 1, \minWork \geq 1, W \geq 2$ and $1 \leq M \leq \frac{V(W-1)}{\minWork}$, then
\begin{equation}
\frac{V(W-1)}{2\minWork} \left[1 - \left(1- p^* \right)^{M\minWork} \right] \geq \frac{M}{4} \notag
\end{equation}
\end{proposition}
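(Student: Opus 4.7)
The plan is to prove the claim via a straightforward chain of inequalities that reduces the expression to a comparison involving $1 - e^{-y}$ on the interval $y \in [0,1]$. Writing $n = M\minWork$ and viewing the bracketed expression as a function of $p^*$, the natural first move is to replace $p^*$ by the smaller quantity $1/(V(W-1))$, using the inequality (\ref{simple_inequality_p_star}) that is explicitly advertised just before the proposition. Since $x \mapsto 1 - (1-x)^n$ is increasing in $x$ on $[0,1]$, this substitution only decreases the left-hand side, so it suffices to prove
\[
\frac{V(W-1)}{2\minWork}\left[1 - \left(1 - \tfrac{1}{V(W-1)}\right)^{M\minWork}\right] \;\geq\; \frac{M}{4}.
\]

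The second step is to exploit the standard exponential bound $1 - x \leq e^{-x}$. Applying it with $x = 1/(V(W-1))$ and raising to the $n$-th power gives $(1-x)^n \leq e^{-nx}$, hence $1 - (1-x)^n \geq 1 - e^{-nx}$ where $nx = M\minWork/(V(W-1))$. The hypothesis $M \leq V(W-1)/\minWork$ is used here precisely to ensure that $nx \in [0,1]$, which will be needed in the next step.

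The main technical step is the elementary analytic inequality $1 - e^{-y} \geq y/2$ for $y \in [0,1]$. I would verify this by setting $g(y) = 1 - e^{-y} - y/2$, noting $g(0) = 0$, computing $g'(y) = e^{-y} - 1/2$, and observing that $g$ increases on $[0,\ln 2]$ and decreases on $[\ln 2, 1]$ while $g(1) = 1/2 - 1/e > 0$; so $g \geq 0$ throughout $[0,1]$. This is the only place in the argument where one actually has to do any work, and it is quite routine.

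Applying this with $y = M\minWork/(V(W-1)) \leq 1$ yields $1 - e^{-y} \geq M\minWork/(2V(W-1))$. Multiplying by $V(W-1)/(2\minWork)$ gives exactly $M/4$, finishing the proof. The only potential obstacle is the definitional concern that $p^*$ be well defined and that $1/(V(W-1)) \leq 1$; both follow from $V \geq 1$, $W \geq 2$, and the non-uniformity stipulation already used in Theorem~\ref{Theorem_WP_rand}.
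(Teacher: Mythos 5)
Your proof is correct, and it takes a genuinely different route from the one in the paper. The paper proves the intermediate bound $1-(1-p^*)^{n} \geq \frac{n}{2V(W-1)}$ (for $1 \leq n \leq V(W-1)$) by induction on $n$, using the inequality $\frac{1}{V(W-1)} \leq p^*$ at both the base case and the induction step; it then sets $n = M\minWork$ and multiplies through. You reach the same intermediate bound analytically: first replacing $p^*$ by the smaller quantity $\frac{1}{V(W-1)}$ (legitimate, since $x \mapsto 1-(1-x)^n$ is increasing on $[0,1]$ and the same inequality $\frac{1}{V(W-1)} \leq p^*$ applies), then chaining $1-(1-x)^n \geq 1-e^{-nx}$ with the elementary fact $1-e^{-y} \geq y/2$ on $[0,1]$, where the hypothesis $M \leq \frac{V(W-1)}{\minWork}$ is exactly what keeps $y = \frac{M\minWork}{V(W-1)}$ in $[0,1]$. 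Your calculus verification of $1-e^{-y}\geq y/2$ (checking $g(0)=0$, the sign change of $g'$ at $\ln 2$, and $g(1)=\tfrac{1}{2}-\tfrac{1}{e}>0$) is sound. The trade-off is mostly stylistic: the paper's induction is self-contained and avoids any appeal to the exponential function, while your argument is shorter, isolates the hypothesis on $M$ more transparently as the condition $y\leq 1$, and reuses standard bounds that generalize immediately (e.g., it makes clear that the constant $\tfrac{1}{4}$ could be sharpened toward $\tfrac{1}{2}(1-e^{-1})$ of the trivial bound on the relevant range). Both proofs lean on the same single structural fact about $p^*$, namely Eq.~\ref{simple_inequality_p_star}.
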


\begin{proof}
We show by induction on $n$ that for any $1 \leq n \leq V(W-1)$
\begin{equation}
\label{eq:induction_claim}
(1-p^*)^n \leq 1 - \frac{n}{2V(W-1)}.
\end{equation}
By setting $n=M \cdot \minWork$, which is at most $V(W-1)$ by our assumption on $M$, and applying some algebraic manipulation, the result follows.

For $n=1$, Eq.~\ref{eq:induction_claim} reduces to requiring that $\frac{1}{2V(W-1)} \leq p^*$, which holds true due to Eq.~\ref{simple_inequality_p_star}.
For the induction step, by the
induction hypothesis on $n$ we have
$$
\left(1-p^*\right)^{n+1}
\leq
\left(1 - p^*\right) \left[  1 - \frac{n}{2 V(W-1)} \right].
$$
It therefore suffices to prove that
$$
\left(1 - p^*\right) \left[  1 - \frac{n}{2 V(W-1)} \right]
\leq
1 - \frac{n+1}{2 V(W-1)},
$$
which is equivalent to requiring that
$$
\frac {1} {2 V(W-1)}
\leq
p^* \left[ 1 - \frac{n}{2 V(W-1)} \right].
$$
By Eq.~\ref{simple_inequality_p_star} we have
$\frac{1}{2 V(W-1)} \leq \frac{p^*}{2}$, which implies that it suffices to show that
$$
\frac{p^*}{2}
\leq
p^* \left[ 1 - \frac{n}{2 V(W-1)} \right]
$$
which is satisfied for every $n \leq V(W-1)$.
\end{proof}

The following proposition shows that if $M$ is relatively large, then the lower bound established in Theorem~\ref{Theorem_WP_rand} is $\Omega \left(\frac{VW}{\minWork}\right)$.

\begin{proposition}\label{prop:omegaVW}
If $V \geq 1, \minWork \geq 1, W \geq 2$ and $M > \frac{V(W-1)}{\minWork}$, then
\begin{equation}
\frac{V(W-1)}{2\minWork} \left[1 - \left(1- p^* \right)^{M\minWork} \right]
>
\frac{e-1}{4e} \cdot \frac{VW}{\minWork} \notag
\end{equation}
\end{proposition}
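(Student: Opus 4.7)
The plan is to show that when $M$ is large enough, the quantity $(1-p^*)^{M\minWork}$ becomes strictly less than $1/e$, so that the bracketed factor $1-(1-p^*)^{M\minWork}$ is bounded below by a constant $(e-1)/e$. The rest is elementary algebra to relate $V(W-1)$ to $VW$ using $W\geq 2$.

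More concretely, first I would set $n = M\minWork$, which by the hypothesis satisfies $n > V(W-1)$. Since $1-p^* \in (0,1)$, monotonicity of the exponential in its exponent gives
\begin{equation*}
(1-p^*)^{n} < (1-p^*)^{V(W-1)}.
\end{equation*}
Next, I would invoke Eq.~\ref{simple_inequality_p_star}, namely $p^* \geq \frac{1}{V(W-1)}$, to deduce
\begin{equation*}
(1-p^*)^{V(W-1)} \leq \left(1-\frac{1}{V(W-1)}\right)^{V(W-1)} \leq \frac{1}{e},
\end{equation*}
where the last inequality uses the standard fact that $(1-1/x)^{x}$ increases to $1/e$ for $x\geq 1$ (here $x = V(W-1) \geq 1$ because $V\geq 1$ and $W\geq 2$). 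Combining the two displays yields $1-(1-p^*)^{M\minWork} > (e-1)/e$.

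Finally, I would plug this lower bound into the left-hand side of the proposition to get
\begin{equation*}
\frac{V(W-1)}{2\minWork}\left[1-(1-p^*)^{M\minWork}\right] > \frac{V(W-1)}{2\minWork} \cdot \frac{e-1}{e},
\end{equation*}
and then use $W\geq 2$, which gives $W-1 \geq W/2$, to conclude that
\begin{equation*}
\frac{V(W-1)}{2\minWork} \cdot \frac{e-1}{e} \geq \frac{VW}{4\minWork} \cdot \frac{e-1}{e} = \frac{e-1}{4e}\cdot\frac{VW}{\minWork},
\end{equation*}
which is exactly the desired bound. There is no serious obstacle here; the only care needed is to make sure the inequality $p^*\geq 1/(V(W-1))$ (already established in Eq.~\ref{simple_inequality_p_star}) is applied correctly, and that the factor-of-two loss from passing from $W-1$ to $W/2$ is absorbed cleanly, which is precisely where the constant $\tfrac{e-1}{4e}$ (rather than $\tfrac{e-1}{2e}$) in the statement comes from.
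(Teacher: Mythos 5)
Your proof is correct and follows essentially the same route as the paper's: both arguments use the hypothesis $M\minWork > V(W-1) \geq 1/p^*$ together with Eq.~\ref{simple_inequality_p_star} to force $(1-p^*)^{M\minWork} < e^{-1}$, and then conclude via $W-1 \geq W/2$. The only cosmetic difference is that the paper writes $M\minWork = a/p^*$ with $a>1$ and bounds $\bigl[(1-p^*)^{1/p^*}\bigr]^a \leq e^{-a} < e^{-1}$, whereas you compare against $\bigl(1-\tfrac{1}{V(W-1)}\bigr)^{V(W-1)} \leq e^{-1}$; these are interchangeable.
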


\begin{proof}
By our assumption on $M$, and using Eq.~\ref{simple_inequality_p_star}, we have $M \cdot \minWork > V(W-1) \geq \frac{1}{p^*}$.
It follows that
$M \cdot \minWork = a \frac{1}{p^*}$ for some $a >1$, which in turn implies that
$$
\left(1 - p^* \right)^{M \cdot \minWork} = \left[ \left(1 - p^*\right)^{\frac{1}{p^*}}\right]^a \leq e^{-a} <  e^{-1}.
$$
It follows that
\begin{align*}
\frac{V(W-1)}{2\minWork} \left[1 - \left(1- p^* \right)^{M \cdot \minWork} \right]
&\geq \frac{V(W-1)}{2\minWork}  \left(1 - \frac{1}{e}\right) \\ \notag
&= \frac{VW}{2\minWork} \cdot \frac{W-1}{W} \left(1 - \frac{1}{e}\right) \\ \notag
&\geq \frac{e-1}{4e} \frac{VW}{\minWork} \notag
\end{align*}
\end{proof}

Assigning $\minWork=1$ in Theorem \ref{Theorem_WP_rand} and Propositions \ref{prop:M_over_2_VW} and \ref{prop:omegaVW} implies the following corollary:
\begin{corollary}\label{cor:WP_ran}
The competitive ratio of any randomized algorithm is $\Omega(\min\set{VW,M})$.
\end{corollary}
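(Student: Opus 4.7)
The plan is to observe that the corollary is an immediate consequence of substituting $\minWork = 1$ into Theorem~\ref{Theorem_WP_rand} and then applying the two preceding propositions to handle the two regimes of $M$. First I would note that with $\minWork = 1$, the lower bound of Theorem~\ref{Theorem_WP_rand} becomes $\frac{V(W-1)}{2}\bigl[1 - (1 - p^*)^M\bigr]$ with $p^* = [V(W-1)]^{-1}$, and both propositions apply since their hypotheses $\minWork \geq 1$ are trivially met.

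Next I would split into two cases according to the comparison in the propositions. If $1 \leq M \leq V(W-1)$, then Proposition~\ref{prop:M_over_2_VW} yields a lower bound of $M/4$. Since $M \leq V(W-1) \leq VW$, we have $\min\{VW, M\} = M$, so $M/4 = \Omega(\min\{VW, M\})$. If instead $M > V(W-1)$, then Proposition~\ref{prop:omegaVW} yields a lower bound of $\frac{e-1}{4e} \cdot VW$. I would then argue that $\frac{e-1}{4e} \cdot VW = \Omega(\min\{VW, M\})$ by considering the two sub-cases: when $VW \leq M$ this is immediate, and when $VW > M$ the bound $\frac{e-1}{4e} \cdot VW$ dominates $\frac{e-1}{4e} \cdot M$, which is $\Omega(M) = \Omega(\min\{VW, M\})$.

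The boundary case $W = 1$ (which would force uniform work and fall outside the scope of the theorem since non-uniform traffic requires $W > \minWork$) need not be handled, since the theorem statement already restricts to $W \geq 2$. There is no genuine obstacle here; the entire argument is a case split combining the two propositions, and the only mild subtlety is verifying that in each regime the bound matches $\min\{VW, M\}$ up to a constant factor.
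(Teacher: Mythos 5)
Your proposal is correct and follows exactly the paper's route: the paper derives Corollary~\ref{cor:WP_ran} precisely by setting $\minWork=1$ in Theorem~\ref{Theorem_WP_rand} and Propositions~\ref{prop:M_over_2_VW} and~\ref{prop:omegaVW}, with the two propositions covering the regimes $M \leq V(W-1)$ and $M > V(W-1)$ just as you describe. Your case analysis simply spells out the details the paper leaves implicit, and both branches check out.
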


In the special case of uniform-profits, we are essentially interested in maximizing the overall {\em number} of packets successfully transmitted. Therefore we may assign $V=1$ in Corollary \ref{cor:WP_ran}, implying the following corollary:

\begin{corollary}\label{cor:W_rand}
In the case of uniform-profits, the competitive ratio of any randomized algorithm is $\Omega(\min\set{W,M})$.
\end{corollary}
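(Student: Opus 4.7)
The plan is to derive Corollary~\ref{cor:W_rand} as a direct specialization of Corollary~\ref{cor:WP_ran}, by observing that the uniform-profits case corresponds to setting $V=1$ in the general model. Concretely, in the uniform-profits setting all arriving packets share a common profit which, by normalization, can be taken to equal $1$; maximizing throughput is then equivalent to maximizing the number of packets successfully transmitted. Under this normalization the parameter $V$ in the general model (the maximum profit of any packet) reduces to $V=1$, so applying Corollary~\ref{cor:WP_ran} immediately yields a lower bound of $\Omega(\min\{1\cdot W,M\}) = \Omega(\min\{W,M\})$.

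The only thing to verify is that the hard instance from Theorem~\ref{Theorem_WP_rand} (instantiated with $\minWork=1$ and $V=1$) is indeed a valid uniform-profits instance. Inspecting that construction, every $U$-packet is either a $(W_0,V)$-packet or a $(W,1)$-packet; after substituting $V=1$ both types carry profit $1$, so the entire adversarial sequence lies in the uniform-profits regime. Hence the sequence witnessing the lower bound in Corollary~\ref{cor:WP_ran} also serves as a valid witness in the uniform-profits setting.

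I expect no substantive obstacle here, since both Propositions~\ref{prop:M_over_2_VW} and~\ref{prop:omegaVW}, on which Corollary~\ref{cor:WP_ran} rests, were stated for all $V\geq 1$ and remain valid at $V=1$. The only minor care required is to note that the hypothesis $W\geq 2$ of Theorem~\ref{Theorem_WP_rand} is still needed so that the traffic is non-uniform in the work dimension (recall from Section~\ref{sec:buf:model} that the fully-uniform case is trivial), and that the case $M=0$ is excluded by the problem definition, so the $\min\{W,M\}$ expression is well defined throughout. With these observations the corollary follows in a single line.
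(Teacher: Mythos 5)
Your proposal is correct and follows essentially the same route as the paper, which obtains the corollary by assigning $V=1$ in Corollary~\ref{cor:WP_ran} after observing that uniform profits reduce throughput maximization to maximizing the number of transmitted packets. Your additional check that the adversarial instance of Theorem~\ref{Theorem_WP_rand} remains a valid uniform-profit (but work-non-uniform) sequence at $V=1$ is a reasonable sanity check that the paper leaves implicit.
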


In the special case of uniform-work, we can assign $\minWork = W$ in Propositions \ref{prop:M_over_2_VW} and \ref{prop:omegaVW}, implying the following corollary:
\begin{corollary}\label{cor:P_rand}
In the case of uniform-work, the competitive ratio of any randomized algorithm is $\Omega(\min\set{V,M})$.
\end{corollary}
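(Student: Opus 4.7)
The plan is to derive this bound by specializing Propositions \ref{prop:M_over_2_VW} and \ref{prop:omegaVW} to the uniform-work setting, where all packets share a common work value. In our notation this corresponds to $\minWork = W$, so both propositions apply as long as $W \geq 2$; the degenerate case $W = 1$ is not of interest since every packet then takes a single processing cycle and scheduling is trivial.

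After substituting $\minWork = W$, the transition point $V(W-1)/\minWork$ between the two propositions becomes $V(W-1)/W$, which lies in the interval $[V/2,\,V]$ for every $W \geq 2$. I would then carry out a straightforward case analysis on $M$ against this threshold.

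In the first case, $1 \leq M \leq V(W-1)/W$, Proposition \ref{prop:M_over_2_VW} immediately yields a lower bound of $M/4$ on the competitive ratio. Since in this regime $M \leq V(W-1)/W \leq V$, we have $\min\set{V,M} = M$, and hence the bound is $\Omega(\min\set{V,M})$. In the complementary case, $M > V(W-1)/W$, Proposition \ref{prop:omegaVW} gives a lower bound of $\frac{e-1}{4e} \cdot \frac{VW}{\minWork}$, which upon the substitution $\minWork = W$ collapses to $\frac{e-1}{4e}\,V = \Omega(V)$. Since $\min\set{V,M} \leq V$, this also matches $\Omega(\min\set{V,M})$.

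I do not foresee any real obstacle here: the corollary is essentially a substitution combined with an elementary case split on $M$. The only thing to verify carefully is that the hypotheses of both propositions (namely $\minWork \geq 1$, $W \geq 2$, and the threshold condition on $M$) remain consistent after setting $\minWork = W$, which they do whenever $W \geq 2$.
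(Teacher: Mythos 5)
Your proposal is correct and follows exactly the paper's route: the paper obtains Corollary~\ref{cor:P_rand} precisely by substituting $\minWork = W$ into Propositions~\ref{prop:M_over_2_VW} and~\ref{prop:omegaVW}, and your explicit case split on $M$ against the threshold $V(W-1)/W$ simply spells out what that substitution implies. Your remark that $W\geq 2$ is needed for the propositions' hypotheses is also consistent with the paper's (implicit) assumption of non-uniform traffic.
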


\subsection{Non-restricted Adversaries}\label{sec:Markov}
In Section~\ref{sec:lower_bounds_restricted} we assumed that the optimal algorithm has a buffer capacity of storing only one packet. This assumption significantly simplified the proofs there.
In this section we relax this assumption, and show a stronger bound for the general, and more natural case, where the size of the buffer available to the optimal algorithm is identical to the size that available to the online algorithm. We use again Yao's method~\cite{yao77probabilistic}, which we used in the proof of Theorem~\ref{Theorem_WP_rand}. Furthermore, we use the same scenario and algorithm \subopt, defined in Section~\ref{sec:lower_bounds_restricted}.
However, as we now allow \subopt\ to store multiple packets in its buffer, \subopt\
can increase its expected throughput by buffering $\mbest$-packets whenever the number of arriving $\mbest$-packets in a single iteration is larger than one, and processing them in iterations where no $\mbest$-packets arrive.
We now evaluate the performance in such settings.

Denote by $q_j$ the state where there are $j$ $\mbest$-packets in the buffer of \subopt\ at the beginning of an iteration. Note, that when $j > 0$, the count represented by $q_j$ also includes the packet, which is to be transmitted during the iteration. Namely, \subopt\ successfully transmits a packet in every iteration, unless its buffer's state is $q_0$. For ease of reference, we provide a summary of this additional notation in the middle section of Table~\ref{tbl:buf:notations}.

We now turn to describe the transition matrix. Denote the probability of having exactly $k$ $\mbest$-packets arriving during one iteration
 by $\malpha_k$. In each iteration we have $M \cdot \minWork$ arriving packets ($M$ packets per cycle, times $\minWork$ cycles per iteration) which are i.i.d. where each packet is a $\mbest$-packet with probability $p$.
Therefore
$\malpha_k = \binom{M \cdot \minWork}{k}p^k (1-p)^{M \cdot \minWork - k}$ when $0 \leq k \leq M \minWork$ and $\malpha_k = 0$ otherwise.

Then, the transition matrix is

\[
\Pi =
\left \{
  \begin{tabular}{ccccccc}
  $ \malpha_0$           & $\malpha_1$ & $\malpha_2$ & \dots       & $\malpha_{B-1}$  & $1 - \sum_{j=0}^{B-1} \malpha_j$ \\
  $ \malpha_0$           & $\malpha_1$ & $\malpha_2$ & \dots       & $\malpha_{B-1}$  & $1 - \sum_{j=0}^{B-1} \malpha_j$ \\
  0                     & $\malpha_0$ & $\malpha_1$ & \dots       & $\malpha_{B-2}$  & $1 - \sum_{j=0}^{B-2} \malpha_j$ \\
  \dots                 & \dots      & \dots      & \dots       & \dots           & \dots \\
  0                     & \dots      & 0          &$\malpha_0$   & $\malpha_1$      & $1 - \sum_{j=0}^{1} \malpha_j$\\
  0                     & 0          & \dots      & 0           & $\malpha_0$      & $1 - \malpha_0$\\
  \end{tabular}
\right \}
\]
where $\Pi_{ij}$ is the probability of transition from state $i$ to state $j$ for each $0 \leq i, j \leq B$.
$\Pi$ is irreducible, because it is possible to get from any buffer state to any other buffer state by some arrival sequence. $\Pi$ is also aperiodic, because its diagonal is non-zero, which represents the fact that if the buffer contains $i$ packets at the beginning of a certain iteration, there exists a positive probability that it would contain $i$ packets also at the beginning of the next iteration.
Furthermore, as $\Pi$ is finite, irreducible and aperiodic, it is also
ergodic, namely, there exists a steady state.
For a long enough input sequence, we can neglect the transient "warm-up" period, and assume that the expected number of iterations where \subopt\ gains nothing during phase 1 is $\frac{N}{2 \minWork} \cdot p_0$, where $p_0$ is the probability that \subopt\ is in state $q_0$. In the rest of the iterations in phase 1 \subopt\ gains $V$ per iteration.
Therefore, the expected throughput of \subopt\ satisfies
\begin{equation} \label{W_rand_SO_B_gt_1}
TP(\msubopt) \geq \frac{N}{2 \minWork} \cdot V(1 - p_0)
\end{equation}

The expected throughput of \alg\  remains the same as in Eq.~\ref{TP_alg_lower_b_WP}.
In order to obtain the competitive ratio for the fully heterogenous case, we divide Eq.~\ref{W_rand_SO_B_gt_1} by Eq.~ \ref{TP_alg_lower_b_WP} and assign again $\minWork=1$ and $p^* = \frac{1}{V(W-1)}$. Then, when $N \to \infty$ the competitive ratio is $c \geq \frac{V}{2}(W-1)(1 - p_0)$.

We find $p_0$ by solving the balance equations defining the steady state of the system, i.e., finding the eigenvector of the transition matrix $\Pi$.
Fig.~\ref{fig:tight_lower_bnds} depicts the lower bounds as a function of $M$ when $V = W = 10$ for various buffer sizes.
Recall that the probability of a certain packet to be a $\mbest$-packet is $p_* = \frac{1}{V(W-1)} = \frac{1}{90}$. Therefore
only when $M$ is large enough, the expected number of $\mbest$-packets per iteration is sufficient for allowing \subopt\ to really take advantage of its buffer for increasing its performance, resulting in a stronger lower bound on the competitive ratio.

\begin{figure}
\centering
\begin{tikzpicture}
\begin{axis}[
    xlabel = {M},
    ylabel = {Competitive ratio lower bound},
    xmin=0, xmax=97,
    ymin=0, ymax=46,
    xtick={0,20,40,60,80,100},
    ytick={0,5, 10, 15, 20,25, 30, 35, 40, 45},
    legend pos=north west,
    ymajorgrids=true,
    grid style=dashed,
    legend style = {font=\figfontsize},
    label style={font=\figfontsize},
    every axis plot/.append style={very thick},
    tick label style={font=\figfontsize},
]

\addplot[color=blue,mark=square,] coordinates {
(1, 0.50000)(6, 2.91789)(11, 5.20441)(16, 7.36669)(21, 9.41148)(26, 11.34517)(31, 13.17379)(36, 14.90306)(41, 16.53836)(46, 18.08481)(51, 19.54724)(56, 20.93021)(61, 22.23803)(66, 23.47479)(71, 24.64436)(76, 25.75037)(81, 26.79630)(86, 27.78539)(91, 28.72074)(96, 29.60526)};
  \addlegendentry{B = 1}

\addplot[color=yellow,mark=o] coordinates {
(1, 0.50000)(6, 2.99863)(11, 5.48822)(16, 7.95838)(21, 10.39756)(26, 12.79346)(31, 15.13346)(36, 17.40519)(41, 19.59693)(46, 21.69811)(51, 23.69968)(56, 25.59435)(61, 27.37676)(66, 29.04357)(71, 30.59338)(76, 32.02659)(81, 33.34524)(86, 34.55276)(91, 35.65367)(96, 36.65337)};
  \addlegendentry{B = 2}

\addplot[color=red,mark=triangle*] coordinates {
    (1, 0.50000)(6, 3.00000)(11, 5.49999)(16, 7.99988)(21, 10.49932)(26, 12.99732)(31, 15.49172)(36, 17.97839)(41, 20.45024)(46, 22.89618)(51, 25.30028)(56, 27.64139)(61, 29.89366)(66, 32.02817)(71, 34.01569)(76, 35.83005)(81, 37.45160)(86, 38.86960)(91, 40.08321)(96, 41.10089)};
  \addlegendentry{B = 4}

\addplot[color=cyan,mark=x,]    coordinates {
    (1, 0.50000)(6, 3.00000)(11, 5.50000)(16, 8.00000)(21, 10.50000)(26, 13.00000)(31, 15.50000)(36, 17.99997)(41, 20.49984)(46, 22.99928)(51, 25.49719)(56, 27.99033)(61, 30.47028)(66, 32.91774)(71, 35.29416)(76, 37.53424)(81, 39.54817)(86, 41.24314)(91, 42.55981)(96, 43.49918)};
  \addlegendentry{B = 8}

\addplot[color=black,mark=triangle,]coordinates {
    (1, 0.50000)(6, 3.00000)(11, 5.50000)(16, 8.00000)(21, 10.50000)(26, 13.00000)(31, 15.50000)(36, 18.00000)(41, 20.50000)(46, 23.00000)(51, 25.50000)(56, 27.99999)(61, 30.49992)(66, 32.99933)(71, 35.49503)(76, 37.96951)(81, 40.34821)(86, 42.41841)(91, 43.86681)(96, 44.60602)};
  \addlegendentry{B = 16}
\end{axis}
\end{tikzpicture}
\caption[Lower bounds on the competitive ratio of every randomized algorithm]{Lower bounds on the competitive ratio of every randomized algorithm when $W = V = 10$ where the number of unknown packets arriving in a time slot varies, for different values of $B$. }
\label{fig:tight_lower_bnds}
\end{figure}
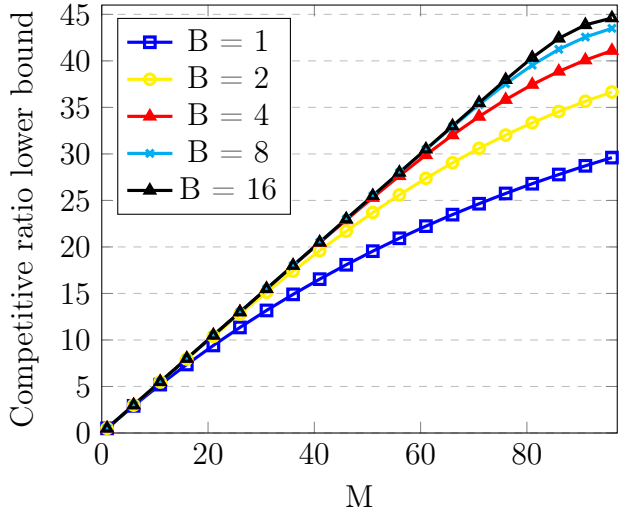

In the next section we use the insight obtained from the analysis in the current section to identify several algorithmic concepts appropriate for the problem of buffering with limited knowledge.

\section{Algorithmic Concepts}
\label{sec:algorithmic_conecpts}
In this section we describe the algorithmic concepts underlying our proposed algorithms for
dealing with scenarios of limited knowledge.

\paragraph*{Random selection}
For obtaining a good competitive ratio we would like to avoid a scenario where \opt\ successfully transmits a bulk of ``good'' packets, which are originally unknown, while having the online algorithm discard all these packets. This translates to assuring each arriving $U$-packet has some minimal probability of being accepted and parsed.

\paragraph*{Speculatively Admit}
Competitive algorithms must ensure they retain throughput from both $K$-packets and $U$-packets.
Furthermore, once a $U$-packet is accepted, there is a high motivation to reveal its
characteristics as soon as possible, thus making educated decisions in the next cycles.

We therefore propose to {\em speculatively} over-prioritize unknown packets over known packets in certain cycles. We refer to
the act of over-prioritizing an unknown packet $p$ in some cycle $t$ as {\em admitting} $p$. Respectively, we refer to such a cycle $t$ as an {\em admittance cycle}, and to such a packet $p$ as an {\em
admitted packet}.

\paragraph*{Classify and randomly select}
Intuitively, as unknown packet characteristics are drawn from a wider range of values, the task of maximizing throughput becomes harder, especially when compared to the optimal throughput possible. To deal with this diversity, we apply a Classify and Randomly Select scheme~\cite{RCnS}.

This approach is based on the following notion: Assume we have an algorithm $\algc$ which is guaranteed to be $c$-competitive if traffic is sufficiently uniform, i.e., for cases where traffic characteristics are within some well-defined range of values. Given some arbitrary input sequence, which might be highly heterogeneous, we virtually partition the sequence of arriving packets into $N>1$ disjoint sub-sequences, which we refer to as {\em classes}, such that each class is sufficiently uniform, i.e., for any specific class $1 \leq i \leq N$ the characteristics of packets corresponding to class $i$ are within some well-defined range of values (as prescribed by $\algc$). The scheme then dictates selecting one of the classes {\em uniformly at random}, and applying $\algc$ to this class, while ignoring all packets corresponding to other classes. One then shows that the overall competitive ratio of this randomized approach is $O(N\cdot c)$-competitive for the overall input sequence.

\paragraph*{Alternate between fill $\&$ flush}
This paradigm is especially crucial in cases of limited information. The main motivation for this approach is that whenever a ``good'' buffer state is identified, the algorithm should focus all its efforts on monetizing the current state, maybe even at the cost of dropping packets indistinctly.
In terms of buffer management and scheduling, this translates to defining some periods, in which the algorithm processes and transmits all the packets in its buffer, even at the cost of discarding all the arrivals. If these flush periods are short enough, the algorithm gains the high throughput from flushing its buffer, yet without compromising too much throughput due to having packets discarded during the flush.

\section{Competitive Algorithms}
\label{sec:algorithms}
In this section we present a basic competitive online algorithm for the problem of buffering and scheduling with limited knowledge. We first provide a high-level description of our algorithm 
and then turn to specify its details 
and analyze its performance. To ease understanding, we also provide a running example of our algorithm (Section~\ref{sec:buf:run_example}).

For simplicity of analysis and algorithm presentation, we assume that the set of possible values of $W$ and $V$ -- the work and profit per packet, respectively -- are known to the algorithm in advance.
In Sections~\ref{sec:improved_algorithms} and~\ref{sec:Practical_Implementation} we show how to remove this assumption without harming the performance of our algorithm, and present several improved variants of this algorithm.
We further note that neither of our proposed solutions \redtext{requires} knowing the value of $M$ - the maximum number of unknown packets arriving in a single cycle - in advance.

\subsection{High-level Description of Proposed Algorithm}
\label{sec:algorithms_high_level_description}

Our algorithm is designed according to the algorithmic concepts presented in Section~\ref{sec:algorithmic_conecpts} as follows.

\paragraph*{Randomly select and speculatively admit}
In every cycle $t$ during which a $U$-packet arrives, the algorithm picks
$t$ as an admittance cycle with some probability $r$
(to be determined in the sequel). In every cycle chosen
as an
admittance cycle, the algorithm picks exactly one of the $U$-packets arriving at $t$ to serve as the {\em admitted} packet.  This $U$-packet is chosen uniformly at random out of all $U$-packets arriving at $t$.
At the end of the arrival step, the algorithm schedules the admitted $U$-packet (if one exists) for processing,
hence {\em parsing} the packet. We note that if no such $U$-packet exists, or if $t$ is not an admittance cycle, then the algorithm may only accept known arriving packets, and would eventually schedule the top-priority packet residing in the Head-of-Line (HOL) for processing. The exact notion of priority will be detailed later.

\paragraph*{Classify and randomly select}
We implicitly partition the possible types of arriving packets into classes $C_1, C_2, \ldots C_m$; the criteria for partitioning and the exact value of $m$ will be specified later. Our algorithm picks a single {\em selected} class, uniformly at random from the $m$ classes.
Our goal is to provide {\em guarantees} on the performance of our proposed algorithm for packets belonging to the selected class, which is henceforth denoted $\Cs$.
Packets which belong to the selected class are referred to as {\em $\Cs$-packets}.
Following our previously introduced notation, known (unknown) packets that belong to the selected class, i.e.,
$\Cs$-packets for which their attributes are known (unknown), are denoted as {\em $\CsK$-packets} ({\em $\CsU$-packets}).

Focusing solely on packets belonging to $\Cs$ may seem like a questionable choice, especially if there are few packets arriving which belong to this class, or if the characteristics of packets belonging to this class are poor (e.g., they have low profit and require much work). However, this naive description is meant only to simplify the analysis.
In Section~\ref{sec:improved_algorithms} we show how to remedy
this naive approach in order to deal with these apparent shortcomings, while keeping the analytic guarantees intact.

\paragraph*{Alternate between fill $\&$ flush}
Our algorithm will be alternating between two states: the {\em fill} state, and the {\em flush} state.
We define an algorithm to be {\em \hfull} if its buffer is filled with known $\Cs$-packets.
Once becoming \hfull, our algorithm switches to the flush state, during which it discards all arriving packets and
continuously processes queued packets. Once the buffer empties, the algorithm returns to the fill phase.
Again, in Section~\ref{sec:improved_algorithms} we show how to improve upon this naive simplified approach.

\subsection{A General Classify and Randomly Select Mechanism}\label{sec:RC&S}
We now turn to explain the fundamentals of the classifying mechanism of our algorithm.

For each packet $p$ we assign a \emph {work-class} $C\W_i$, and denote the set of potential characteristic values within class $C\W_i$ by $X\W_i$.
 Let $\deltaW$ denote the maximal ratio between the work values of two packets, which belong to the same work-class. Similarly, for each packet $p$ we assign a \emph {profit-class} $C\P_i$, and denote the set of potential characteristic values within class $C\P_i$ by $X\P_i$. Let $\deltaP$ denote the maximal ratio between the profits of two packets, which belong to the same profit-class. Throughout our analysis, we will use $\deltaP$ and $\deltaW$ which are both constants.

Denote by $\mW$ and $\mP$ the number of work-classes and profit-classes, respectively.
We say a packet $p$ is of {\em combined-class} $C_{(i,j)}$  if it is of work-class $C\W_i$ and of profit-class $C\P_j$.
Note that in terms of work, the class to which a packet $p$ belongs is defined statically by the total work of
$p$, and does not depend upon its remaining processing cycles, which may change over time.

Upon initialization, the algorithm selects a class by picking $i^* \in \set{1,\ldots,\mW}$ and $j^* \in
\set{1,\ldots,\mP}$, each chosen uniformly at random. Then, the selected combined-class is $\Cs =
C_{(i^*,j^*)}$.

We will later define several ways to partition the packets into classes, each tailored and optimized for some specific scenarios of possible work and profit values.
For ease of reference, we provide a summary of this additional notation in the bottom section of Table~\ref{tbl:buf:notations}.

\subsection{The \SAM\ Algorithm}
\label{sec:SAM}
We now describe the details of our algorithm, Speculatively Admit (\SAM), depicted in Algorithm~\ref{alg:SAM}.
The pseudo-code in Algorithm~\ref{alg:SAM} uses the procedures \emph{UpdatePhase()}, \emph{SortBuf()},
and \emph{\makeroom}, whose pseudo-code appears in Algorithms~\ref{alg:UpdatePhase},~\ref{alg:SortBuf} and~\ref{alg:MakeRoom}, respectively. The procedure \makeroom\ is destined to assure a free space for a high-priority arriving packet, even at the cost of pushing-out and dropping a lower-priority packet from the tail of the buffer, if the buffer is full.

\newcommand\AlgPhase[1]{%
\hspace*{\dimexpr-\algorithmicindent-2pt\relax}%
\vspace*{-.5\baselineskip}\Statex\hspace*{-\algorithmicindent}{\em #1}%
\vspace*{-.9\baselineskip}\Statex\hspace*{\dimexpr-\algorithmicindent-2pt\relax}%
}

\begin{algorithm}[t!]
\caption {UpdatePhase()}\label{alg:UpdatePhase}
\begin{algorithmic}[1]
\If {buffer is empty}
    \State {\em phase} = fill
\ElsIf {buffer is \hfull}
    \State {\em phase} = flush
\EndIf
\Comment {if buffer is neither empty  nor \hfull, {\em phase} is unchanged.}
\end{algorithmic}
\end{algorithm}

\begin{algorithm}[t!]
\caption {SortBuf()}\label{alg:SortBuf}
\begin{algorithmic}[1]
\State sort queued packets as follows: admitted packet first; $\CsK$-packets next; rest of the packets last; break ties by FIFO
\end{algorithmic}
\end{algorithm}

\begin{algorithm}[t!]
\caption {MakeRoom()}\label{alg:MakeRoom}
\begin{algorithmic}[1]
\If {the buffer is full}
	\State SortBuf()
	\State drop a packet from the tail
\EndIf
\end{algorithmic}
\end{algorithm}

\begin{algorithm}[t!]
\caption {\SAM: at every time slot $t$ after transmission} \label{alg:SAM}
\begin{algorithmic}[1]
\Statex
\AlgPhase{Arrival Step:}
\State {\em phase} = UpdatePhase()
\label{alg:SAM:line:UpdatePhaseAtArrivalBegin}
\State {\em admittance} = true w.p. $r$
\label{alg:SAM:line:DecideAdmittance}
\While {{\em phase} $==$ fill \AND exists arriving packet $p$} \label{alg:SAM:line:while_begin}
    \If {$p$ is a $\CsK$-packet}\label{alg:SAM:if_is_Gk}
	   \If {there are $B-1$ $\CsK$-packets in the buffer}\label{alg:SAM:if_B_minus_1}
	       \State drop admitted packet if exists        \label{alg:SAM:drop_ap}
      \EndIf    \label{alg:SAM:if_B_minus_1_endif}
	   \State \makeroom\label{alg:SAM:MakeRoom_by_Gk}
      \State accept $p$ \label{alg:SAM:accept_Gk}
    \ElsIf {$p$ is unknown AND \emph{admittance}}\label{alg:SAM:if_p_is_U}
        \If {$\Autp=1$}\label{alg:SAM:if_reservoir}
    	   \State \makeroom \label{alg:SAM:MakeRoom_by_ap}
            \State mark $p$ as admitted
            \State accept $p$ \label{alg:SAM:accept_ap}
        \Else
    	   \State w.p. $1/\Autp$, swap the admitted packet with $p$.
            \label{alg:SAM:line:reservoir}
        \EndIf\label{alg:SAM:if_reservoir_endif}
    \EndIf
    \If {buffer is not full} \label{alg:SAM:line:if_full}
        \State accept $p$ \label{alg:SAM:line:greedily_accept}
    \EndIf
	\State {\em phase} = UpdatePhase()
	\label{alg:SAM:line:UpdatePhase_in_arrival}
	\State SortBuf()
    \label{alg:SAM:line:sort_queue_in_arrival}
\EndWhile \label{alg:SAM:line:while_end}
\Statex
\AlgPhase{Processing Step:}
\State process HoL-packet
\label{alg:SAM:line:processHoL}
\State {\em phase} = UpdatePhase()
\label{alg:SAM:line:UpdatePhase_after_processing}
\State SortBuf()
\label{alg:SAM:line:sort_queue_in_processing}
\end{algorithmic}
\end{algorithm}

Once in the arrival step, algorithm \SAM\ updates its phase (line \ref{alg:SAM:line:UpdatePhaseAtArrivalBegin}).
In each cycle, the algorithm tosses a coin with some probability $r$, to be determined later, to decide whether this is an \emph{admittance cycle}, namely, a cycle in which the algorithm may admit an unknown packet (line~\ref{alg:SAM:line:DecideAdmittance}).
If the phase is flush, the algorithm skips the while loop (lines
\ref{alg:SAM:line:while_begin}-\ref{alg:SAM:line:while_end}), thus discarding all arriving packets.

If the phase is fill, which in particular implies that the buffer is not \hfull, the algorithm accepts every arriving $\CsK$-packet (lines~\ref{alg:SAM:if_is_Gk}-\ref{alg:SAM:accept_Gk}).
For assuring a free slot for the arriving $\CsK$-packet, the algorithm calls \makeroom\ (line~\ref{alg:SAM:MakeRoom_by_Gk}) before accepting the packet (line~\ref{alg:SAM:accept_Gk}).
The if-clause in lines~\ref{alg:SAM:if_B_minus_1}-\ref{alg:SAM:if_B_minus_1_endif} handles the special case where there are already $B-1$ $\CsK$-packets in the buffer; in this special case, after accepting the arriving $\CsK$-packet, the buffer will become \hfull, and therefore it should stop admitting packets.

If the phase is fill and this is an \emph{admittance} cycle (line~\ref{alg:SAM:if_p_is_U}), the algorithm admits a single $U$-packet arriving in this cycle, if such a packet exists.
In lines~\ref{alg:SAM:if_reservoir},\ref{alg:SAM:line:reservoir}, $\Autp$ denotes the number of $U$-packets which arrive in cycle $t$ by the arrival of packet $p$, including $p$ itself. Lines~\ref{alg:SAM:if_reservoir}-\ref{alg:SAM:if_reservoir_endif}
essentially perform a reservoir sampling~\cite{Reservoir}, which imply that the admitted $U$-packet is chosen uniformly at random out of all $U$-packets arriving in this cycle.

Finally, if the buffer is not full, the algorithm greedily accepts every arriving packet (lines \ref{alg:SAM:line:if_full}-\ref{alg:SAM:line:greedily_accept}).

While in the processing step, the algorithm simply processes the top-priority packet in the buffer (line~\ref{alg:SAM:line:processHoL}). Finally, the algorithm updates its phase and sorts the queued packets each time it either accepts or processes a packet (lines
\ref{alg:SAM:line:UpdatePhase_in_arrival}-\ref{alg:SAM:line:sort_queue_in_arrival} and
\ref{alg:SAM:line:UpdatePhase_after_processing}-\ref{alg:SAM:line:sort_queue_in_processing}). Note that the marking of a packet as an ``admitted packet'' is \emph{cycle-based}, namely, once an admitted packet is processed, it is not considered ``admitted'' anymore.
To better understand \SAM, please refer to Appendix~\ref{sec:buf:run_example}, showing a running example of the algorithm.

\subsection{Performance Analysis}
\label{sec:SAMWP}

We now turn to show an upper bound on the performance of our algorithm (for $W, V > 1$), captured by the following theorem. 
\begin{theorem}\label{thm:SAMWP}
\SAMWP\ is $O 
\left(\left[
\frac{M}{r} + \deltaW \cdot \deltaP
\right] \cdot \mW \cdot \mP \right)$
-competitive.
\end{theorem}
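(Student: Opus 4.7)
The bound has three multiplicative components, each tied to one of the algorithmic concepts deployed in \SAM: a factor $\mW\cdot \mP$ from the Classify and Randomly Select scheme, a factor $M/r$ from the random admittance of $U$-packets, and a factor $\deltaW\cdot \deltaP$ from the alternation between fill and flush. My plan is to peel these off one at a time, in this order. First, by the Classify and Randomly Select mechanism of Section~\ref{sec:RC&S}, the selected combined-class $\Cs$ is chosen uniformly out of $\mW\cdot \mP$ classes, and OPT's total throughput decomposes as a sum over classes, so $\mathbb{E}[O_{\Cs}] = \mopt/(\mW\cdot \mP)$, where $O_{\Cs}$ is OPT's throughput from $\Cs$-packets. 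It therefore suffices to prove, conditional on any fixed choice of $\Cs$, that $O_{\Cs} = O(M/r+\deltaW\cdot \deltaP)\cdot S_{\Cs}$, where $S_{\Cs}$ is \SAM's (expected) throughput from $\Cs$-packets.

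Second, I would partition the timeline according to \SAM's phase into its alternating fill periods $P\mFILL$ and flush periods $P\mFLUSH$, and split OPT's gain from $\Cs$-packets accordingly: $O_{\Cs}=O\mFILL_{\Cs}+O\mFLUSH_{\Cs}$, classifying each $\Cs$-packet by the phase in which it arrives. Handling the flush part is the easier bookkeeping: each flush phase begins with a buffer full of $B$ known $\Cs$-packets, all of which \SAM\ processes to completion and transmits, yielding at least $B v_{\min}$ profit per flush, where $v_{\min}$ is the minimum profit in the profit-class of $\Cs$. Since the flush lasts at most $B\cdot w_{\max}\le B\cdot \deltaW\cdot w_{\min}$ cycles, during such a period OPT can transmit at most $B\cdot \deltaW$ packets of class $\Cs$ (one per $w_{\min}$ cycles), each earning at most $\deltaP\cdot v_{\min}$. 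Summing over flush periods yields $O\mFLUSH_{\Cs}\le \deltaW\cdot \deltaP\cdot S_{\Cs}$.

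Third, I would bound $O\mFILL_{\Cs}=O\mFILL_{\CsK}+O\mFILL_{\CsU}$. For the known part, during any fill phase the buffer is by definition not \hfull, so the branch at lines~\ref{alg:SAM:if_is_Gk}--\ref{alg:SAM:accept_Gk} admits every arriving $\CsK$-packet, and the SortBuf priority places $\CsK$-packets ahead of every non-admitted packet; hence $O\mFILL_{\CsK}$ can be charged to the $\CsK$-work that \SAM\ performs during fill, losing only the $\deltaP\cdot v_{\min}/v_{\min}$ ratio within the profit-class, yielding $O\mFILL_{\CsK}=O(\deltaP\cdot \deltaW)\cdot S_{\Cs}$. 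For the unknown part, \SAM\ picks each cycle to be an admittance cycle with probability $r$ (line~\ref{alg:SAM:line:DecideAdmittance}) and, inside such a cycle, uses reservoir sampling (line~\ref{alg:SAM:line:reservoir}) so that every arriving $U$-packet is admitted with probability exactly $r/\Autp \ge r/M$. Consequently, each $\CsU$-packet arriving during a fill phase is admitted with probability at least $r/M$; once admitted, the priority rule of SortBuf keeps it at the head of the line until it is parsed and, after parsing, it enters the $\CsK$ pool and is processed and transmitted just as any other $\CsK$-packet. Therefore $S_{\CsU}^{\FILL}\ge (r/M)\cdot O\mFILL_{\CsU}$ up to constant loss, giving $O\mFILL_{\CsU}=O(M/r)\cdot S_{\Cs}$.

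The main obstacle I anticipate is the last charging argument for $\CsU$-packets: one has to verify that an admitted $\CsU$-packet is not displaced by \makeroom\ before being parsed, nor lost if a later fill cycle admits yet another $U$-packet (notice that lines~\ref{alg:SAM:if_reservoir}--\ref{alg:SAM:if_reservoir_endif} may overwrite an earlier admitted packet within the same cycle, and lines~\ref{alg:SAM:if_B_minus_1}--\ref{alg:SAM:if_B_minus_1_endif} may drop it entirely when the buffer is about to become \hfull). This needs a careful case analysis tracking, for each admitted $\CsU$-packet, whether it is (a) parsed in the same cycle (line~\ref{alg:SAM:line:processHoL} picks the admitted packet as HoL) and hence survives as a $\CsK$-packet, or (b) dropped in favor of an incoming $\CsK$-packet that itself contributes to $S_{\CsK}$, the latter case being absorbed into the $\CsK$ accounting. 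A secondary subtlety is that the fill-phase buffer may also hold ``rest'' packets that consume processing time; these only appear via admitted $U$-packets that turned out to be non-$\Cs$, and their aggregate processing is bounded by the expected number of admittance cycles, which is already paid for by the $r/M$ factor. Combining the flush and fill bounds produces the claimed $O\bigl((M/r+\deltaW\cdot \deltaP)\cdot \mW\cdot \mP\bigr)$ competitive ratio.
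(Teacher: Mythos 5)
Your proposal is correct and follows essentially the same route as the paper's proof: the $\mW\cdot\mP$ factor from uniform class selection, the $r/M$ factor from admittance plus reservoir sampling combined with the fact that $\CsK$-packets accepted during fill are never dropped (the paper's Proposition~\ref{prop:transmits_all_G_K} and Lemma~\ref{Cs_K_WP}), and the $(\deltaW+1)\deltaP$ charging argument for packets arriving while the buffer is \hfull. The obstacle you flag — an admitted $U$-packet being displaced before parsing — is exactly what the paper resolves by parsing the admitted packet in its arrival cycle and by noting that the only drop of an admitted packet (lines~\ref{alg:SAM:if_B_minus_1}--\ref{alg:SAM:if_B_minus_1_endif}) is triggered by the buffer becoming \hfull\ of $\CsK$-packets, which is absorbed into the flush-phase accounting as you suggest.
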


We now define additional notation which we will use for proving Theorem~\ref{thm:SAMWP}.
For ease of reference, this additional notation appears in the bottom section of Table~\ref{tbl:buf:notations}.

For every cycle $t$ and packet type $\alpha$, we denote by $A^{\alpha}(t)$ the number of $\alpha$-packets that arrive in cycle $t$. For instance, $\Ak(t)$ ($\Au(t)$) denotes the number of $K$-packets ($U$-packets) which arrive in cycle $t$. This notation can be combined with the work and profit values of packets. For instance, $\Au_{(w,v)}(t)$ denotes the number of $U$-packets with work $w$ and profit $v$, which arrive in cycle $t$.

Our proofs involve a careful analysis of the expected profit of our algorithms from packets which arrive when it is either in the fill or the flush phase. Therefore, we now turn to define the exact notion of cycles belonging to either phase.
We say that an algorithm is in the flush phase in a specific cycle $t$ if it is in the flush state at the end of the arrival step of cycle $t$. If it's not in the flush phase in cycle $t$, then we say it is in the fill phase in cycle $t$. Denote by $P\mFILL$ and $P\mFLUSH$ the sets of cycles in which our algorithm is in the fill and flush phases, respectively.

For every packet type $\alpha$, we denote by $S_{\alpha}(t)$ the expected profit of the algorithm from
$\alpha$-packets which {\em arrive} in cycle $t$, and by
$S_{\alpha}=\sum_{t}S_{\alpha}(t)$ the overall
expected profit of \alg\ from $\alpha$-packets. We denote by $O_{\alpha}$ the expected profit of some optimal solution, \opt, from $\alpha$-packets. Again, this notation can be combined with previous notations. For instance, $O_{\CsU}$ denotes the overall expected profit of \opt\ from $\CsU$-packets. Furthermore, $O\mFILL_{\CsU}$ denotes the expected profit of \opt\ from $\CsU$-packets which arrive during $P\mFILL$.

The proof Theorem~\ref{thm:SAMWP} follows from a series of propositions.
Initially, we aim to prove that \SAM\ successfully transmits every $\CsK$-packet which arrives during the fill phase, by showing that it never drops such a packet once it is accepted to the buffer.

\begin{proposition}
\label{prop:transmits_all_G_K}
\SAM\ successfully transmits every $\CsK$-packet which arrives during the fill phase.
\end{proposition}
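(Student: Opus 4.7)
The plan is to split the argument into three parts: (a) every $\CsK$-packet arriving during the fill phase is accepted into the buffer, (b) once buffered, a $\CsK$-packet is never evicted, and (c) every $\CsK$-packet in the buffer is eventually processed and transmitted.

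For part (a), I would trace lines~\ref{alg:SAM:if_is_Gk}--\ref{alg:SAM:accept_Gk} of Algorithm~\ref{alg:SAM}. When an arriving $\CsK$-packet $p$ meets a buffer already containing $B-1$ $\CsK$-packets, the if-clause in lines~\ref{alg:SAM:if_B_minus_1}--\ref{alg:SAM:if_B_minus_1_endif} discards the admitted packet if it exists, so the subsequent \makeroom\ call either finds the buffer non-full or evicts (after SortBuf) the unique non-$\CsK$ non-admitted packet from the tail. When fewer than $B-1$ $\CsK$-packets are present, \makeroom\ can likewise only remove a lower-priority packet. A small case analysis, relying on the observation that during the fill phase the buffer cannot hold $B$ $\CsK$-packets (otherwise it would be \hfull\ and the phase would have switched to flush), shows that in every configuration line~\ref{alg:SAM:accept_Gk} succeeds.

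For part (b), the only eviction mechanism in \SAM\ is the tail-drop performed by \makeroom\ after SortBuf. Since SortBuf's priority order places $\CsK$-packets above every other category except the single admitted packet, a $\CsK$-packet can sit at the tail only when the buffer is saturated with $\CsK$-packets together with at most one admitted packet. The case of $B$ $\CsK$-packets makes the buffer \hfull\ and thereby switches the phase to flush, in which \makeroom\ is never invoked; the remaining case of $B-1$ $\CsK$-packets plus an admitted packet is precisely the configuration that triggers line~\ref{alg:SAM:drop_ap}, which pre-empts \makeroom\ by first dropping the admitted packet. Hence no buffered $\CsK$-packet is ever tail-dropped.

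For part (c), once the buffer becomes \hfull\ the phase stays flush until the buffer empties; no arrivals are admitted during flush, so each buffered $\CsK$-packet is processed to completion in FIFO order without preemption. Should the buffer never become \hfull, I appeal to the finiteness of the input: after the last $U$-packet arrival no further admitted packet can preempt the head of line, and the remaining $\CsK$-packets, shielded by (b) and enjoying the highest non-admitted priority in SortBuf, are processed to completion. The main delicacy lies in the corner case in (b), in which an admitted packet ``protects'' a $\CsK$-packet from occupying the tail slot; line~\ref{alg:SAM:drop_ap} is engineered precisely to avert this, and the remainder of the proof is routine bookkeeping over SortBuf's priority order and the phase transitions in \emph{UpdatePhase()}.
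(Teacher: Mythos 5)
Your decomposition into accept / never-evict / eventually-transmit follows the same route as the paper's proof, which establishes acceptance at line~\ref{alg:SAM:accept_Gk} and then checks every drop site to show none can remove a $\CsK$-packet; your part (c) is extra care the paper omits as immediate from work conservation and finiteness of the input.

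There is, however, one call site your part (b) does not cover. You argue that a full buffer consisting of $B-1$ $\CsK$-packets plus one admitted packet ``is precisely the configuration that triggers line~\ref{alg:SAM:drop_ap},'' i.e., that it can only be met by an arriving $\CsK$-packet, in which case the admitted packet is dropped before \makeroom\ runs. But \makeroom\ is also invoked at line~\ref{alg:SAM:MakeRoom_by_ap}, when the first $U$-packet of the cycle is about to be admitted, and your argument as written does not rule out that this happens while the buffer is full of $B-1$ $\CsK$-packets plus an admitted packet --- in which case the tail packet after SortBuf() would be a $\CsK$-packet. The reason this cannot occur is the guard $\Autp = 1$ on line~\ref{alg:SAM:if_reservoir}: the arriving packet is the \emph{first} $U$-packet of the cycle and is not yet admitted, and since the ``admitted'' marking is cycle-based (an admitted packet is scheduled at the HoL and parsed in its cycle of arrival, losing the marking afterwards), no admitted packet can already reside in the buffer at that moment. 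Hence if the buffer is full it must contain a non-admitted, non-$\CsK$ packet, which is what gets tail-dropped. The paper states this explicitly; your proof needs this one additional observation to close the case analysis. The rest of the argument is sound.
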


\begin{proof}

We first note, that any $\CsK$-packet arriving during the fill phase (depicted by the while loop in lines~\ref{alg:SAM:line:while_begin}-\ref{alg:SAM:line:while_end}) is accepted (line~\ref{alg:SAM:accept_Gk}).

Next, we show that \SAM\ never drops a $\CsK$-packet which resides in its buffer.
We consider all cases where \SAM\ drops a packet from its buffer, and prove that it cannot be a $\CsK$-packet.

In line~\ref{alg:SAM:drop_ap}, \SAM\ drops an admitted packet, namely, a picked $U$-packet, and not a $\CsK$-packet.

In line~\ref{alg:SAM:MakeRoom_by_Gk}, \SAM\ performs the \makeroom\ procedure, which may result in dropping the last packet in the buffer. However, as this line dwells within the while loop of lines~\ref{alg:SAM:line:while_begin}-\ref{alg:SAM:line:while_end}, we know that the phase is fill, and therefore there are at most $B-1$ $\CsK$-packets in the buffer. Furthermore, if there are exactly $B-1$ $\CsK$-packets in the buffer, the if-clause in lines~\ref{alg:SAM:if_B_minus_1}-\ref{alg:SAM:if_B_minus_1_endif} assures that there is no admitted packet in the buffer. Hence, if the buffer is full, it contains at least one low-priority packet -- namely, a packet which is not admitted and not a $\CsK$-packet. After sorting the buffer, this low-priority, non-$\CsK$ packet, will be located in the tail of the queue and dropped.

\SAM\ may perform the \makeroom\ procedure also in line~\ref{alg:SAM:MakeRoom_by_ap}, if $\Autp = 1$. In this case, the arriving packet $p$ is the first $U$-packet arriving in this cycle -- and it is not admitted yet. As a result, there is no admitted packet in the buffer. Furthermore, as this line is executed during the fill phase (the while loop of lines~\ref{alg:SAM:line:while_begin}-\ref{alg:SAM:line:while_end}), there are at most $B-1$ $\CsK$-packets in the buffer. Hence, if the buffer is full, it contains at least one low-priority, non-$\CsK$-packet, which is the packet dropped.
\end{proof}

The following lemma shows that the overall number of $\Cs$-packets transmitted by \SAMWP\ is at least a significant fraction of the number of $\Cs$-packets accepted by an optimal policy during a fill phase.

\begin{lemma}\label{Cs_K_WP}
$S_{\Cs} \geq \frac{r}{M} O\mFILL_{\Cs}$.
\end{lemma}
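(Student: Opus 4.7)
The plan is to split the right-hand side as $O\mFILL_{\Cs} = O\mFILL_{\CsK} + O\mFILL_{\CsU}$, and to bound \SAMWP's gain separately for known and unknown packets in the selected class that arrive during the fill phase. The key quantity will be the probability that an arriving $\CsU$-packet ultimately contributes its full profit to \SAMWP.

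First, I would handle $\CsK$-packets. For every cycle $t \in P\mFILL$, the while-loop of lines~\ref{alg:SAM:line:while_begin}--\ref{alg:SAM:line:while_end} is executed, and inspecting lines~\ref{alg:SAM:if_is_Gk}--\ref{alg:SAM:accept_Gk} shows that every arriving $\CsK$-packet is admitted into the buffer (the preceding \makeroom\ and the if-clause of lines~\ref{alg:SAM:if_B_minus_1}--\ref{alg:SAM:if_B_minus_1_endif} always succeed in finding a slot, because at most $B-1$ $\CsK$-packets can be in the buffer while the phase is still fill). Combining this with Proposition~\ref{prop:transmits_all_G_K}, every such arrival is eventually transmitted, giving
\[
S_{\CsK} \;\geq\; O\mFILL_{\CsK}.
\]

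Next, I would analyze $\CsU$-packets arriving during $P\mFILL$. Fix such a packet $p$ arriving at cycle $t$. The line~\ref{alg:SAM:line:DecideAdmittance} coin flip makes $t$ an admittance cycle with probability $r$; conditioned on this, lines~\ref{alg:SAM:if_reservoir}--\ref{alg:SAM:line:reservoir} perform reservoir sampling over the $\Au(t) \leq M$ arriving $U$-packets, so $p$ is admitted with probability exactly $1/\Au(t) \geq 1/M$. Hence $p$ is admitted with probability at least $r/M$. If $p$ is admitted, \SAMWP\ parses it in the processing step (it is the top-priority packet after \textit{SortBuf()}), at which point it is revealed to be a $\Cs$-packet and is reclassified as a $\CsK$-packet in the buffer. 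By Proposition~\ref{prop:transmits_all_G_K} it will not be dropped, and since the buffer is finite and the priority order guarantees that every $\CsK$-packet is eventually served (either as the phase returns to processing higher-priority admitted packets or during a subsequent flush), $p$ is eventually transmitted. Consequently the expected contribution of $p$ to $S_{\CsU}$ is at least $(r/M)\,v(p)$, and summing over all $\CsU$-packets arriving in $P\mFILL$ gives
\[
S_{\CsU} \;\geq\; \frac{r}{M}\, O\mFILL_{\CsU}.
\]

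Finally, I would combine both bounds. Using $r/M \leq 1$, which gives $O\mFILL_{\CsK} \geq (r/M)\, O\mFILL_{\CsK}$, we conclude
\[
S_{\Cs} \;\geq\; S_{\CsK} + S_{\CsU} \;\geq\; \frac{r}{M}\bigl(O\mFILL_{\CsK} + O\mFILL_{\CsU}\bigr) \;=\; \frac{r}{M}\, O\mFILL_{\Cs},
\]
as required. The main obstacle is the $\CsU$ case: one must be careful to verify that, after admission and parsing, the packet is indeed protected from being pushed out in any subsequent arrival step (which is where Proposition~\ref{prop:transmits_all_G_K} does the heavy lifting), and that the reservoir-sampling probability is truly $\geq 1/M$ uniformly over cycles regardless of the order in which $U$-packets are scanned.
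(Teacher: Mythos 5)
Your proposal is correct and follows essentially the same route as the paper: decompose $S_{\Cs}=S_{\CsK}+S_{\CsU}$, get $S_{\CsK}\geq O\mFILL_{\CsK}$ from Proposition~\ref{prop:transmits_all_G_K}, and bound the unknown part via the admittance probability $r$ combined with the $1/\Au(t)\geq 1/M$ reservoir-sampling guarantee (you phrase this per arriving packet, the paper per cycle, but the two are identical by linearity of expectation). The final combination using $r/M\leq 1$ matches Eq.~\ref{Eq:S_G} exactly.
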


\begin{proof}
Let $t$ denote a cycle in the fill phase, in which $U$-packets arrive. Then, with probability $r$ \SAMWP\ admits one $U$-packet, denoted $p$. As the algorithm implements reservoir sampling~\cite{Reservoir}, $p$ is picked uniformly at random out of at most $M$ unknown arrivals, and therefore the probability that $p \in \CsU$ is at least $A_{\CsU}(t)/M$.
As $p$ is parsed in the cycle of arrival, in the subsequent cycle it is known.
By Proposition~\ref{prop:transmits_all_G_K}, if $p$ is a $\CsK$-packet, then \SAMWP\ will eventually transmit $p$.
Recalling that $X\W_{i^*}$ and $X\P_{j^*}$ denote the ranges of the work and profit values within the selected work and profit class $C_{(i^*,j^*)}$ (see Section~\ref{sec:RC&S_concrete}), we conclude that
\begin{equation} \label{Eq:S_G_U_t}
S_{\CsU} (t) \geq \frac{r}{M} \sum_{ w \in X\W_{i^*}, v \in X\P_{j^*} }
[v \cdot A_{(w,v)}\U(t)].
\end{equation}

Summing Eq.~\ref{Eq:S_G_U_t} over all the cycles within the fill phase,

\begin{equation}\label{Eq:S_G_U}
S_{\CsU} \geq \frac{r}{M} \sum_{t \in P\mFILL} \sum_{ w \in X\W_{i^*}, v \in X\P_{j^*} }
[v \cdot A_{(w,v)}\U(t)]
\geq \frac{r}{M} O\mFILL_{\CsU}.
\end{equation}

In addition, by Proposition~\ref{prop:transmits_all_G_K}, $S_{\CsK} \geq O\mFILL_{\CsK}$. Therefore
\begin{equation}\label{Eq:S_G}
S_{\Cs} = S_{\CsK} + S_{\CsU}
\geq \frac{r}{M} (O\mFILL_{\CsK} + O\mFILL_{\CsU}) = \frac{r}{M} O\mFILL_{\Cs}.
\end{equation}
\end{proof}

We are now in a position to prove Theorem \ref{thm:SAMWP}.
\begin{proof}[Proof of Theorem \ref{thm:SAMWP}]

Every class $C_{(i,j)}$ is the selected class with probability $\frac{1}{\mW \cdot \mP}$. Using Lemma \ref{Cs_K_WP}
we therefore have for all
$i \in \set {1,2, \dots, \mW}$ and $j \in \set {1,2, \dots, \mP}$,
$S_{(i,j)} \geq \frac{r}{M \cdot \mW \cdot \mP} O\mFILL_{(i,j)}$.

Summing over all the classes, we obtain that the expected performance of our algorithm satisfies
\begin{equation} \label{eq:case0WV}
\sum_{i=1}^{\mW}
\sum_{j=1}^{\mP} S_{(i,j)} \geq
\frac{r}{M \cdot \mW \cdot \mP} \sum_{i=1}^{\mW}
\sum_{j=1}^{\mP}
O\mFILL_{(i,j)}.
\end{equation}

If \SAMWP\ is never \hfull\ during an arrival sequence, then $O_{(i,j)} = O\mFILL_{(i,j)}$ and therefore, by Eq.~\ref{eq:case0WV} the ratio between the performance of \opt\ and the expected throughput of \SAMWP\ is at most $\frac{M}{r} \cdot \mW \cdot \mP$, as required.

Assume next that \SAMWP\ becomes \hfull\ during an input sequence.
In such a case we compare the overall throughput due to packets {\em transmitted} by \SAMWP\ until the first cycle in which its buffer is empty again, and the profit obtained by \opt\ due to packets {\em accepted} by \opt\ during the same interval. We note that our analysis would also apply to subsequent such intervals, namely, until the subsequent cycle in which \SAMWP\ is empty again.

We note that in case \SAMWP\ becomes \hfull, \SAMWP\ holds in its buffer exactly $B$ $\Cs$-packets, and all these packets are transmitted by the time \SAMWP\ is empty again.
By the definition of $\deltaW$ in Section~\ref{sec:RC&S}, the maximal work which \SAMWP\ dedicates to any of these packets is at most $\deltaW$ times higher than the \emph {minimal} work which \opt\ dedicates to any $\Cs$-packet.
As a result, during the flush phase, in which \SAM\ handles $B$ $\Cs$-packets, \opt\ can handle at most
$\deltaW B + B$ $\Cs$-packets. Furthermore, by the definition of $\deltaP$ in Section~\ref{sec:RC&S}, the maximal profit of \opt\ from any $\Cs$-packet is at most $\deltaP$ higher than the \emph{minimal} profit of \SAMWP\ from any $\Cs$-packet. Combining the above reasoning implies that

\begin{equation}\label{Eq:delta_pi}
\frac{O\mFLUSH_{\Cs} }{S_{\Cs}} \leq \frac{\deltaW B + B}{B} \cdot \deltaP = (\deltaW + 1)\deltaP.
\end{equation}

As every class $C_{(i,j)}$ is the selected class w.p. $\frac{1}{\mW \cdot \mP}$, we have
$$\forall
i \in \set {1 \dots \mW},
j \in \set {1 \dots \mP},
S_{(i,j)} \geq
\frac{1}{(\deltaW + 1)\deltaP \cdot \mW \cdot \mP} O\mFLUSH_{(i,j)}.$$

Summing over all the classes we obtain
\begin{equation}\label{eq:case1WV}
\sum_{i=1}^{\mW}
\sum_{j=1}^{\mP}
S_{(i,j)} \geq
\frac{1}{(\deltaW + 1)\deltaP \cdot \mW \cdot \mP}
\sum_{i=1}^{\mW}
\sum_{j=1}^{\mP}
O\mFLUSH_{(i,j)}.
\end{equation}

Combining Equations~\ref{eq:case0WV} and~\ref{eq:case1WV} implies that the competitive ratio of \SAMWP\ is at most
\begin{equation}\label{Eq:case01WV}
\frac {
\sum_{i=1}^{\mW}
\sum_{j=1}^{\mP}
\left[ O\mFILL_{(i,j)} + O\mFLUSH_{(i,j)} \right]}
{
\sum_{i=1}^{\mW}
\sum_{j=1}^{\mP}
S_{(i,j)}}
\leq \left[ \frac{M}{r} + (\deltaW + 1)\deltaP \right] \cdot \mW \cdot \mP,
\end{equation}
which completes the proof.
\end{proof}

Theorem~\ref{thm:SAMWP} shows an inverse linear dependency of the competitive ratio on the
probability of choosing a cycle as an admittance cycle $r$. Thus, the best competitive ratio is attained for $r=1$, i.e., every cycle where $U$-packets arrive should be an admittance cycle.
In practical scenarios, however, one might want to be more conservative in choosing admittance cycles. E.g., one might choose $r<1$ so as to allow non-parsing cycles even when $U$-packets arrive, thus speeding up the processing of $\CsK$-packets. If one indeed chooses $r=1$, randomization should be maintained only for choosing the specific $U$-packet to be admitted, and the choice of the selected class.
We further explore the effect of the choice of parameter $r$ in Section~\ref{sec:simulations}.

In the special cases of homogeneous work values (homogeneous profit values), we assign $\deltaW = \mW = 1$ ($\deltaP = \mP = 1$, resp.) in the upper bound implied by Theorem \ref{thm:SAMWP}, and obtain the following corollary:
\begin{corollary}\begin{inparaenum}[(a)]
\hfill \break
\item In the special case of homogeneous work values, \SAMWP\ is $O\left(\left(\frac{M}{r} + \deltaP \right) \cdot \mP\right)$-competitive.
\\\item In the special case of homogeneous profit values, \SAMWP\ is $O\left( \left(\frac{M}{r} + \deltaW \right) \cdot \mW\right)$-competitive.
\end{inparaenum}
\end{corollary}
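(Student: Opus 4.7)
The plan is to derive both parts of the corollary directly by specializing Theorem~\ref{thm:SAMWP}, since the algorithm \SAMWP\ and the classification machinery of Section~\ref{sec:RC&S} degenerate naturally when one of the two axes of heterogeneity collapses.

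For part (a), homogeneous work values mean that every arriving packet has the same work $w$. Consequently, the work-classification step in Section~\ref{sec:RC&S} can be implemented with a single work-class $C_1\W$ (so $\mW = 1$), and that class contains only packets whose work values are identical, so the worst-case ratio between work values within a class satisfies $\deltaW = 1$. The choice of selected work-class $i^*$ is then trivial, and the rest of the algorithm's logic (admittance, reservoir sampling, fill/flush transitions) is unchanged. Plugging $\mW = 1$ and $\deltaW = 1$ into the bound of Theorem~\ref{thm:SAMWP} yields
\[
O\!\left(\left[\tfrac{M}{r} + \deltaW \cdot \deltaP\right] \cdot \mW \cdot \mP\right) = O\!\left(\left[\tfrac{M}{r} + \deltaP\right] \cdot \mP\right),
\]
which is the claimed bound.

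Part (b) is symmetric: if all packets share the same profit, a single profit-class with $\mP = 1$ and $\deltaP = 1$ suffices, and Theorem~\ref{thm:SAMWP} immediately specializes to the asserted $O\!\left(\left[\tfrac{M}{r} + \deltaW\right] \cdot \mW\right)$ bound. There is essentially no obstacle in the argument; the only point worth verifying is the sanity check that the analysis of Lemma~\ref{Cs_K_WP} and the flush-phase bound (Eq.~\ref{Eq:delta_pi}) both remain valid when one dimension is trivial, which they do because the definitions of $\deltaW$ and $\deltaP$ (resp.\ $\mW$ and $\mP$) used in those arguments still give $1$ in the degenerate case. Thus both parts follow as direct corollaries of Theorem~\ref{thm:SAMWP} without any new argument.
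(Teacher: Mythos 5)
Your proposal is correct and matches the paper's own argument exactly: the paper likewise obtains both parts by substituting $\deltaW = \mW = 1$ (resp.\ $\deltaP = \mP = 1$) into the bound of Theorem~\ref{thm:SAMWP}. Your added sanity check that Lemma~\ref{Cs_K_WP} and the flush-phase bound remain valid in the degenerate case is a reasonable extra observation but not needed beyond what the paper states.
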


In the special case where all packets are known upon arrival, we obtain the following upper bound on the competitive ratio of \SAMWP:
\begin{corollary}
When $M=0$, \SAMWP\ is $O(\deltaW \cdot \deltaP \cdot \mW \cdot \mP)$-competitive.
\end{corollary}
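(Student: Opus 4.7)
The plan is to observe that the case $M=0$ is essentially the $K$-packets-only regime, so the analysis of Theorem~\ref{thm:SAMWP} goes through with the admittance/reservoir-sampling component becoming vacuous. Concretely, when $M=0$ no $U$-packets ever arrive, so for every cycle $t$ we have $\Au(t)=0$, which implies $S_{\CsU}=O_{\CsU}=0$ and hence $O\mFILL_{\Cs}=O\mFILL_{\CsK}$ and $S_{\Cs}=S_{\CsK}$. Thus the whole $\frac{r}{M}O\mFILL_{\CsU}$ term of Lemma~\ref{Cs_K_WP} vanishes (it has $A_{\CsU}(t)=0$ in every summand and is formally ill-defined only because of a $0/0$), and the useful fill-phase content of that lemma reduces to $S_{\Cs}=S_{\CsK}\geq O\mFILL_{\CsK}=O\mFILL_{\Cs}$, which is immediate from Proposition~\ref{prop:transmits_all_G_K}.

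Given this, my plan is to mirror the proof of Theorem~\ref{thm:SAMWP} in two stages. First, for the fill phase, I will use the bound $S_{\Cs}\geq O\mFILL_{\Cs}$ (with no $M/r$ factor), multiply by the $\frac{1}{\mW\cdot\mP}$ loss from the uniformly random choice of selected combined-class, and sum over the $\mW\cdot\mP$ classes to get an analogue of Eq.~\ref{eq:case0WV} with leading constant $\mW\cdot\mP$ rather than $\frac{M}{r}\cdot\mW\cdot\mP$. Second, for the flush phase, the argument is completely unchanged: once \SAMWP\ becomes \hfull\ it holds $B$ $\Cs$-packets, and by the definitions of $\deltaW,\deltaP$ from Section~\ref{sec:RC&S} the ratio of \opt's profit to \SAMWP's profit on the corresponding interval is at most $(\deltaW+1)\deltaP$, giving the analogue of Eq.~\ref{eq:case1WV}. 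Combining the two bounds as in Eq.~\ref{Eq:case01WV} yields
\[
\frac{\sum_{i,j}\bigl[O\mFILL_{(i,j)}+O\mFLUSH_{(i,j)}\bigr]}{\sum_{i,j}S_{(i,j)}}
\;\leq\;\bigl[1+(\deltaW+1)\deltaP\bigr]\cdot\mW\cdot\mP \;=\;O(\deltaW\cdot\deltaP\cdot\mW\cdot\mP),
\]
which is the claimed competitive ratio.

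The only mildly delicate point, and the one I would write most carefully, is justifying that Lemma~\ref{Cs_K_WP} can be applied (or rather, specialized) when $M=0$, since its statement contains a $\frac{r}{M}$ factor. I would handle this by noting explicitly that with $M=0$ the proof of Lemma~\ref{Cs_K_WP} degenerates: Eq.~\ref{Eq:S_G_U_t} is $0\geq 0$ trivially (the arrival counts are zero), Eq.~\ref{Eq:S_G_U} gives $S_{\CsU}\geq 0\geq O\mFILL_{\CsU}=0$, and Eq.~\ref{Eq:S_G} simply reduces to $S_{\Cs}=S_{\CsK}\geq O\mFILL_{\CsK}=O\mFILL_{\Cs}$ via Proposition~\ref{prop:transmits_all_G_K}. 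No other step of the Theorem~\ref{thm:SAMWP} proof touches $M$, so the rest is verbatim. Everything else is just substitution into Eq.~\ref{Eq:case01WV}, so no real obstacle remains beyond this bookkeeping.
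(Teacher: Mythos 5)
Your proposal is correct and follows essentially the same route as the paper's own proof: replace Lemma~\ref{Cs_K_WP} by the direct bound $S_{\Cs}=S_{\CsK}\geq O\mFILL_{\CsK}=O\mFILL_{\Cs}$ from Proposition~\ref{prop:transmits_all_G_K}, keep the flush-phase bound of Eq.~\ref{eq:case1WV} unchanged, and combine to get $\left[1+(\deltaW+1)\deltaP\right]\cdot\mW\cdot\mP$. Your explicit treatment of why the $\frac{r}{M}$ factor degenerates harmlessly when $M=0$ is a small bookkeeping detail the paper leaves implicit, but the argument is the same.
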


\begin{proof}
We follow the proof of Theorem~\ref{thm:SAMWP}, and carefully check the required changes.

When all packets are known, Proposition~\ref{prop:transmits_all_G_K} remains essentially intact.
Furthermore, we have $S_{\Cs} = S_{\CsK} \geq O\mFILL_{\CsK} = O\mFILL_{\Cs}$, which replaces Lemma~\ref{Cs_K_WP}. Accordingly, Eq.~\ref{eq:case0WV} is modified to
\begin{equation}\label{eq:case0WV_M0}
\sum_{i=1}^{\mW}
\sum_{j=1}^{\mP} S_{(i,j)} \geq
\frac{1}{\mW \cdot \mP} \sum_{i=1}^{\mW}
\sum_{j=1}^{\mP}
O\mFILL_{(i,j)}.
\end{equation}

Eq.~\ref{eq:case1WV} remains intact, as in deriving it we use the classify and randomly select scheme, independently of $M$. Combining Equations~\ref{eq:case0WV_M0} and~\ref{eq:case1WV} implies that when all packets are known, the competitive ratio of \SAMWP\ is at most
\begin{equation}\label{Eq:case01WV_M0}
\frac {
\sum_{i=1}^{\mW}
\sum_{j=1}^{\mP}
\left[O\mFILL_{(i,j)} + O\mFLUSH_{(i,j)}\right]}
{
\sum_{i=1}^{\mW}
\sum_{j=1}^{\mP}
S_{(i,j)}}
\leq \left[1 + (\deltaW + 1)\deltaP \right] \cdot \mW \cdot \mP,
\end{equation}
which completes the proof.
\end{proof}

\subsection{Concrete Classification Mechanisms}\label{sec:RC&S_concrete}
We now show various classify and randomly select mechanisms, which are tailored and optimized for different scenarios, depending on the profit and work values.

\paragraph{A linear classification} When a characteristic consists of a small set of potential values, we let each class include a single value of this characteristic. As a result, the competitive ratio of the algorithm is linearly depended upon the number of distinct potential values of the respective characteristic. For instance, when the set of potential work values
is small, we let each potential work value define a class. As a result, the competitive ratio of \SAMWP, implied by Theorem~\ref{thm:SAMWP}, is linearly depended upon the number of distinct work values, captured by the parameter $\mW$. Note that in this case we have $X\W_i = \set{w_i}$, implying that $\deltaW$, the max-to-min ratio of values within $X\W_i$, is 1.

\paragraph{A logarithmic classification} When the set of potential values of a characteristic is large, letting each value define a unique class results in a poor competitive ratio. Therefore, in such cases we use a logarithmic-scaled class partitioning as follows. We say that a packet $p$ is of a certain class (either work- or profit-) $i$ if its corresponding value is in the
interval
\begin{equation} \label{Eq:log_classes}
  X_i =\begin{cases}
              [1,2] & i=1 \\
              [2^{i-1}+1, 2^i] & i > 1.
            \end{cases}
\end{equation}

In particular, using the above partition packets into classes, we obtain that $\deltaP = \deltaW = 2$, $\mP = \log_2V$ and $\mW = \log_2W$. Using Theorem~\ref{thm:SAMWP}, we obtain the following corollary:

\begin{corollary}
\SAMWP\ is $O\left(\frac{M}{r} \log_2 W \log_2 V\right)$-competitive.
\end{corollary}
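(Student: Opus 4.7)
The plan is to obtain this corollary as a direct instantiation of Theorem~\ref{thm:SAMWP} using the logarithmic partition defined in Eq.~\ref{Eq:log_classes}. Since Theorem~\ref{thm:SAMWP} gives a competitive ratio of
\[
O\!\left(\left[\frac{M}{r} + \deltaW\cdot\deltaP\right]\cdot \mW\cdot \mP\right),
\]
all that is required is to compute the four quantities $\deltaW$, $\deltaP$, $\mW$, $\mP$ induced by the logarithmic scheme and substitute them in.

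First, I would verify that $\deltaW = \deltaP = 2$. Fix a class $X_i$ defined by Eq.~\ref{Eq:log_classes}. For $i=1$ the class is $[1,2]$, so the max-to-min ratio is $2/1 = 2$. For $i>1$ the class is $[2^{i-1}+1, 2^i]$, and the max-to-min ratio is $\tfrac{2^i}{2^{i-1}+1} < 2$. Since the same argument applies to both the work characteristic and the profit characteristic, we obtain $\deltaW \leq 2$ and $\deltaP \leq 2$.

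Next I would count the number of classes. Because the paper assumes that $W$ and $V$ are powers of $2$, the intervals $[1,2], [3,4], [5,8], \ldots, [2^{k-1}+1, 2^k]$ exactly cover the possible work values $\{1, \ldots, W\}$ when $k = \log_2 W$, and similarly for the profit values. Hence $\mW = \log_2 W$ and $\mP = \log_2 V$.

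Finally, I would substitute these values into the bound from Theorem~\ref{thm:SAMWP} to obtain
\[
O\!\left(\left[\frac{M}{r} + 4\right]\cdot \log_2 W \cdot \log_2 V\right) = O\!\left(\frac{M}{r}\log_2 W \log_2 V\right),
\]
where the additive constant $\deltaW\cdot\deltaP = 4$ is absorbed into the asymptotic notation (using $M \geq 1$ and $r \leq 1$, so that $M/r \geq 1$, which suffices to hide the constant factor). There is no real obstacle here; the only point requiring care is the elementary verification of the max-to-min ratio in the first class $[1,2]$, whose boundary makes it the tight case that forces $\delta = 2$ rather than any smaller constant.
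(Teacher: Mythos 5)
Your proposal is correct and follows exactly the paper's route: the corollary is obtained by instantiating Theorem~\ref{thm:SAMWP} with the logarithmic partition of Eq.~\ref{Eq:log_classes}, which yields $\deltaW = \deltaP = 2$, $\mW = \log_2 W$, and $\mP = \log_2 V$. Your additional verification of the max-to-min ratios and the class counts, and the observation that $M/r \geq 1$ absorbs the additive constant, are exactly the (implicit) details the paper relies on.
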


We note that if we know the number of distinct values for each characteristic and the values of $W$ and $V$, we can choose the appropriate classification scheme and have $\mW$ to be the minimum between $\log_2 W$, and the number of distinct work values; and have $\mP$ to be the minimum between $\log_2 V$, and the number of distinct profit values. Moreover, in any of our classification schemes, $\deltaW, \deltaP \leq 2$.

\section{Improved Algorithms}
\label{sec:improved_algorithms}
Algorithm \SAM\ selects a single class uniformly at random so that the characteristics of
packets on which it focuses, namely, $\Cs$-packets, differ by at most a constant factor. This gives the sense of ``uniformity'' of traffic within the class being targeted, which in turn reduces the variability of characteristics of packets on which the algorithm focuses.
However, in practice there are various cases where the strict decisions made by \SAM\ can be relaxed without harming its competitive performance guarantees. In practice, such relaxations actually allow obtaining a throughput far superior to
that of \SAM.
In what follows we describe such modifications, which we incorporate into our improved algorithm, \SAO, and prove that all our performance guarantees for \SAM\ still hold for \SAO.

\paragraph*{Class closure}
Recall the partitioning of packets into classes, described in Section~\ref{sec:RC&S}, namely,
 $\set{C_{(i,j)} | i=1,\ldots,\mW, j=1,\ldots,\mP}$. We let the {\em $(i,j)$-closure class} be defined as
 $C^*_{(i,j)}=\bigcup_{i'\leq i, j'\geq j} C_{(i',j')}$.

This definition means that the work of any packet in $C^*_{(i,j)}$
is within a ratio of at most $\deltaW$ of the work of any packet in $C_{(i,j)}$, and similarly for the profit of any packet in $C^*_{(i,j)}$.
Formally, for any packets $p \in C_{(i,j)}$ and  $p^* \in C^*_{(i,j)}$, $w(p^*) \leq \deltaW \cdot w(p)$ and $ v(p^*) \geq \frac{v(p)}{\deltaP}$.

We let \SAOWP\ denote the algorithm where the selected class $\Cs$ is chosen to be $C^*_{(i,j)}$, for some values of
$i,j$ chosen uniformly at random from the appropriate sets.
A simple substitution argument shows that thus picking $C^*_{(i,j)}$ by \SAOWP, instead of selecting $C_{(i,j)}$ as done
in \SAMWP, leaves the analysis detailed in Section~\ref{sec:SAMWP} intact.

\paragraph*{Fill during flush (pipelining)}
Algorithm \SAM\ was defined such that no arriving packets are ever accepted during the flush phase. This enables the partitioning of time into disjoint intervals (determined by \SAM's buffer being empty 
\textcolor{red}{at} the end of such an interval), and applying the comparison of the performance of \opt, on the one hand, and \SAM, on the other hand, independently for each interval. In practice, however, allowing the acceptance of packets during a flush phase
cannot harm the analysis, nor the actual performance, if this is done prudently:
packets which arrive during the flush phase are accepted according to the same priority suggested by the algorithm's behavior in the fill phase.
Furthermore, the algorithm stores in the buffer packets which arrive during the flush phase, but never schedules them for processing before it successfully transmits all $B$ packets that were stored in the buffer when it turned \hfull.

\paragraph*{Improved scheduling}
\SAM\ sorts the queued packets in $\CsK$-first order.
For simplicity of presentation, we assumed in Section~\ref{sec:algorithms} that within the set of $\CsK$-packets, as well as within the set of non-$\CsK$-packets, packets are  internally ordered by FIFO.
However, one may consider other approaches as well to performing such scheduling for each of these sets (while maintaining $\CsK$-first order between the sets). We consider specifically the following methods:
\begin{inparaenum}[(i)]
\item FIFO,
\item $W$-then-$V$, which orders packets by a non-decreasing order of remaining work, and breaks ties by non-increasing order of profit, and
\item non-increasing order of packet {\em effectiveness}, where the effectiveness of a packet is defined as its
    profit-to-work ratio.
\end{inparaenum}

We emphasize that the packet scheduled for processing during an admittance cycle remains a $U$-packet, which is selected uniformly at random from the arriving $U$-packets at this cycle.
All the \emph{non}-admitted $U$-packets, however, are located at the tail of the queue, thus representing the fact that their priority is lower than that of every known packet.
By applying different scheduling regimes, we obtain different flavors of \SAO.

The following Theorem shows that the performance of all flavors of \SAO\ is at least as good as the performance of \SAM.
\begin{theorem}\label{thm:SAOWP}
\SAOWP\ is
$
O
\left(
\left[
\frac{M}{r} + \deltaW \cdot \deltaP
\right]
\cdot
\mW \cdot \mP
\right)$
-competitive.
\end{theorem}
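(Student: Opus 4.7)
The plan is to show that each of the three modifications distinguishing \SAOWP\ from \SAMWP\ leaves intact (or only improves) every inequality used in the proof of Theorem~\ref{thm:SAMWP}; so the same competitive ratio follows verbatim. Concretely, the proof of Theorem~\ref{thm:SAMWP} relied on three structural facts: (i) Proposition~\ref{prop:transmits_all_G_K}, i.e.\ every $\CsK$-packet accepted during the fill phase is eventually transmitted; (ii) Lemma~\ref{Cs_K_WP}, i.e.\ $S_{\Cs}\ge \frac{r}{M}\,O\mFILL_{\Cs}$; and (iii) the ``flush accounting'' bound Eq.~\ref{Eq:delta_pi}, namely $O\mFLUSH_{\Cs}/S_{\Cs}\le (\deltaW+1)\deltaP$. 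I plan to verify each fact for \SAOWP.

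\textbf{Class closure.} I would first argue that replacing $\Cs=C_{(i^*,j^*)}$ with $\Cs=C^*_{(i^*,j^*)}$ is a pure substitution: nowhere in the proof of Theorem~\ref{thm:SAMWP} is the fact that $\Cs$ is a ``minimal'' class used; only the two quantitative properties
\[
\forall p,p'\in\Cs:\quad w(p)\le \deltaW\cdot w(p'),\qquad v(p)\ge \frac{v(p')}{\deltaP}
\]
are invoked (inside Eq.~\ref{Eq:delta_pi}). Since the definition of $(i,j)$-closure guarantees exactly these inequalities with the same constants $\deltaW,\deltaP$, Lemma~\ref{Cs_K_WP} and Eq.~\ref{Eq:delta_pi} carry over unchanged, and the uniform $1/(\mW\mP)$ probability of choosing any fixed $(i,j)$ is preserved.

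\textbf{Fill during flush (pipelining).} Here I would partition time into maximal intervals between two consecutive cycles in which the buffer of \SAOWP\ is empty, exactly as in the proof of Theorem~\ref{thm:SAMWP}, and compare \opt's profit from packets arriving in each interval to \SAOWP's transmissions. The key observation is that the extra packets accepted during a flush interval are stored but not scheduled until the $B$ ``committed'' $\Cs$-packets have been transmitted; hence the $B$ $\Cs$-packet flush that underlies Eq.~\ref{Eq:delta_pi} proceeds exactly as before, so Eq.~\ref{eq:case1WV} still holds. Any profit that these ``pipelined'' packets contribute is an additive bonus to $S_\Cs$, which can only improve the ratio. Therefore the upper bound on $O\mFLUSH_{(i,j)}/S_{(i,j)}$ is unchanged.

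\textbf{Improved scheduling.} For the FIFO/$W$-then-$V$/effectiveness variants I would observe that the only priority used in the proof is the inter-group ordering ``admitted $U$-packet first, then $\CsK$, then the rest''; the internal FIFO order inside each group was assumed only for concreteness. Re-examining the case analysis in Proposition~\ref{prop:transmits_all_G_K} shows every dropped packet is still chosen from the non-$\CsK$ tail of the buffer (since \makeroom\ sorts by the inter-group priority), so no $\CsK$-packet accepted in the fill phase is ever pushed out, regardless of how the $\CsK$ block is internally reordered. Consequently Proposition~\ref{prop:transmits_all_G_K} and Lemma~\ref{Cs_K_WP} remain valid. Combining the three observations, Eq.~\ref{Eq:case01WV} holds verbatim for \SAOWP, yielding the claimed competitive ratio.

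The main obstacle I anticipate is the pipelining step: one must make sure that accepting packets during the flush phase does not cause a $\CsK$-packet committed during the fill phase to be preempted, nor that the ``intervals'' used for amortization remain well-defined when arrivals and processing can overlap. I would address this by making the definition of an interval depend on the buffer becoming \emph{empty} (rather than on phase transitions), so that every packet accounted for in \opt\ on an interval also lies in exactly one \SAOWP-interval, and by explicitly splitting the buffer-state invariant into ``committed packets'' (the $B$ $\Cs$-packets triggering the flush) and ``post-flush arrivals,'' processed only after the committed packets are drained.
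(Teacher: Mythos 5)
Your overall plan --- verify that each of the three modifications preserves Proposition~\ref{prop:transmits_all_G_K}, Lemma~\ref{Cs_K_WP}, and the flush bound of Eq.~\ref{Eq:delta_pi} --- is exactly the structure of the paper's proof, and your treatment of the class closure and of the alternative scheduling rules matches the paper's: the closure retains the two quantitative properties with the same constants $\deltaW,\deltaP$, and the analysis only ever uses the $\CsK$-first inter-group order, so both steps go through by substitution. The one step where your argument is not sound as written is fill-during-flush. You assert that packets accepted during a flush phase are ``an additive bonus to $S_{\Cs}$, which can only improve the ratio,'' so that the per-interval accounting is unchanged. That misidentifies their role: because these packets remain in the buffer when the flush ends, they may be among the $B$ $\Cs$-packets that make \SAOWP\ \hfull\ in the \emph{next} fill phase. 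Consequently the profit \opt\ collects during fill phase $i$ and flush phase $i$ cannot in general be charged only to packets \SAOWP\ accepted during fill phase $i$ --- there may be too few of them --- and the charge must be allowed to land also on packets accepted during the preceding flush phase. The paper makes this precise via an explicit mapping of \opt's profit in $P^*_{\FILL_i}\cup P^*_{\FLUSH_i}$ onto $S^{* (P^*_{\FLUSH_{i-1}})}+S^{* (P^*_{\FILL_i})}$ (Eq.~\ref{eq:P_new}), together with the key observation that the packets accepted during flush phase $i-1$ carry no charge from interval $i-1$, so that summing over $i$ charges no unit of \SAOWP's profit more than $\left[\frac{M}{r}+(\deltaW+1)\deltaP\right]\mW\mP$ times. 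Your closing remark about separating ``committed'' from ``post-flush'' packets is the right instinct, but the charging scheme itself is the missing piece of the argument.

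A second, smaller defect: you propose to amortize over maximal intervals between consecutive cycles in which \SAOWP's buffer is empty, but under pipelining the buffer need never empty, so such intervals may not exist (or may degenerate to a single interval spanning the whole input, in which case the per-interval flush bound no longer isolates a single \hfull\ event). The paper instead indexes the argument by the alternating fill and flush phases themselves, which are always well defined, and notes explicitly that the phases of \SAMWP\ and \SAOWP\ need not be synchronized.
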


\begin{proof}

We first consider the effect of uniformly at random selecting a class closure, instead of selecting a specific class.
First, note that the proof of Lemma \ref{Cs_K_WP} also directly applies to \SAOWP, implying that $S^*_{\Cs^*} \geq \frac{r}{M} O\mFILL_{\Cs}$.
Furthermore, the arguments used in the proof of Theorem~\ref{thm:SAMWP} also apply to \SAOWP, and in particular \SAOWP\ satisfies Eq.~\ref{Eq:case01WV}, where we substitute in the denominator $S_{(i,j)}$ by $S^*_{(i,j)}$.

Consider next the effect of performing fill during flush.
In \SAOWP\ we accept packets also during the flush phase, but we never process any of these packets before all packets contributing to the algorithm being \hfull\ are transmitted, i.e., they are never processed before the flush phase is complete.
We enumerate the fill phases and the subsequent flush phases as follows: $P_{\FILL_1}, P_{\FLUSH_1}, P_{\FILL_2}, P_{\FLUSH_2}, \dots, P_{\FILL_n}, P_{\FLUSH_n}$, where $n \geq 1$. It should be noted that each such phase corresponds to a series of disjoint time intervals defined by the first cycle of the sequence of phases.
We further denote the $P_{\FLUSH_0}$ phase as an empty set of cycles, and in  case that the sequence ends by a fill phase, we also let $P_{\FLUSH_n}$ denote an empty set of cycles.
Similarly, we further define $P^*_{\FILL_i}, P^*_{\FLUSH_i}$, for the appropriate values of $i$, to denote the fill and flush phases corresponding to \SAOWP.

Denote the profit accrued by \SAMWP\ and \opt\ from packets which arrive during the $i^{th}$ fill phase by $S^{(P_{\FILL_i})}$ and $O^{(P_{\FILL_i})}$ respectively. Similarly, denote the profit of \SAMWP\ and \opt\ obtained from packets which arrive during the $i^{th}$ flush phase by $S^{(P_{\FLUSH_i})}$ and $O^{(P_{\FLUSH_i})}$, respectively.
Similarly, we let $S^{* (P^*_{\FILL_i})}$ and $S^{* (P^*_{\FLUSH_i})}$ indicate the profit of \SAOWP\ obtained from packets which arrive during its $i^{th}$ fill and flush phase, respectively.

Using this notation, we recall that, by the analysis of \SAMWP\ presented in Theorem~\ref{thm:SAMWP}
\begin{equation} \label{eq:P_old}
O^{(P_{\FILL_i})} + O^{(P_{\FLUSH_i})} \leq \left[ \frac{M}{r} + (\deltaW + 1) \deltaP \right]
\mW  \cdot \mP \cdot S^{(P_{\FILL_i})}
\end{equation}
for every $i=1,\ldots,n$.

This induces an implicit mapping $\phi$ of
the units of profit obtained from
$\Cs$-packets accepted by \opt\ during $P_{\FILL_i} \cup P_{\FLUSH_i}$ to the units of profit obtained from $\Cs$-packets accepted by \SAMWP\ during $P_{\FILL_i}$ (either known, or unknown that were parsed), such that every unit of profit obtained by \SAMWP\ has at most $\left[\frac{M}{r} + (\deltaW + 1) \deltaP \right] \mW  \cdot \mP$ units of profit mapped to it.

A key observation is noting that the image of mapping $\phi$ is essentially the profit attained from the set of $\Cs$-packets contributing to the algorithm being \hfull\ at the end of the corresponding fill phase.

As \SAOWP\ may accept packets during flush, in the beginning of the subsequent fill phase the buffer of \SAOWP\ may not be empty. In particular, there could be $\Cs$-packets accepted during the recent flush phase that are stored in the buffer.
However, none of these packets have any \opt\ packets mapped to them.
It follows that these packets can contribute to \SAOWP\ becoming \hfull\ in the new fill phase, and any profit implicitly mapped to the profit of these packets by $\phi$ would correspond to packets arriving during the new fill phase, or its subsequent
flush phase.
The implicit mapping is depicted in Fig.~\ref{fig:fill_during_flush}, along with the difference between the mapping arising from the behavior of \SAMWP\ (visualized above the time axis), and the mapping arising from the behavior of \SAOWP (visualized below the time axis). Note that the fill and flush phases of both algorithms need not be synchronized, since \SAOWP\ can potentially become \hfull\ ``faster'' than \SAMWP.

\begin{figure*}[t]
\centering
\includegraphics[width=1.0\textwidth]{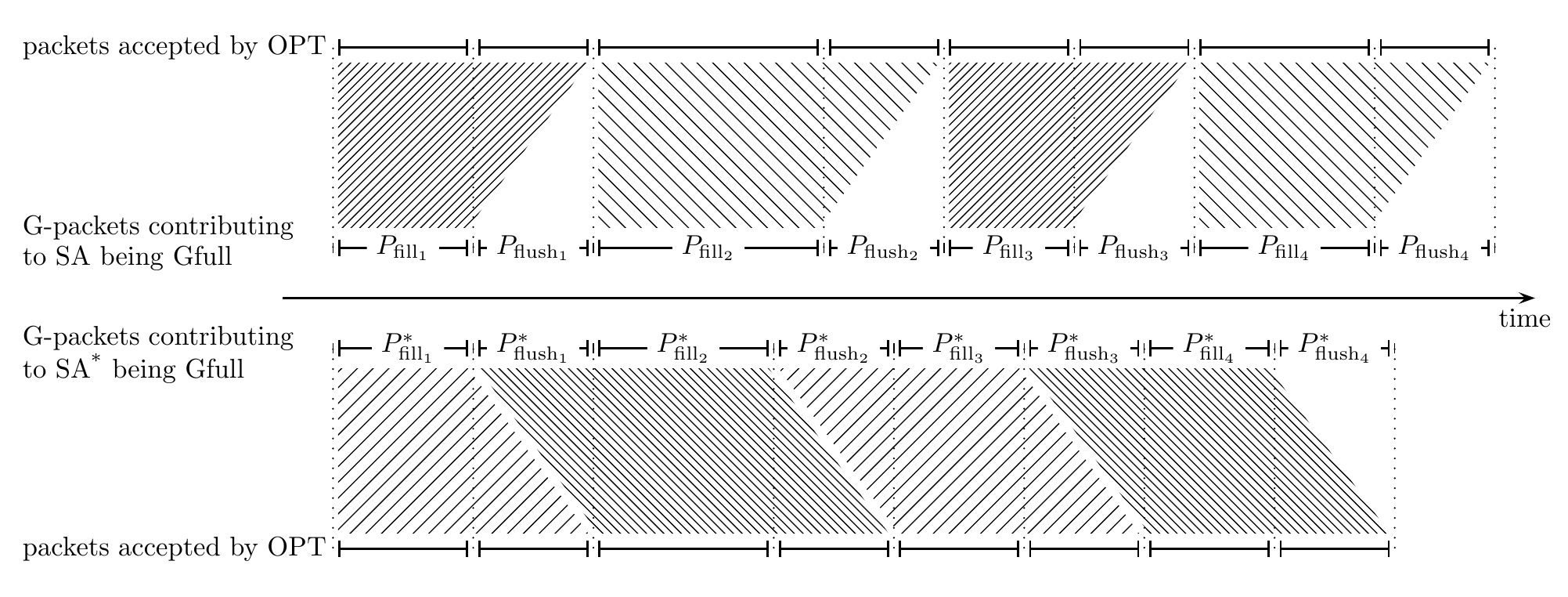}
\caption[Visualization of the mappings induced by the analysis of \SAMWP\ and \SAOWP]{Visualization of the mappings induced by the analysis of \SAMWP\ and \SAOWP, for the first 4 fill and flush phases. The fill and flush phases of \SAMWP\ are denoted $P_{\FILL_i}$ and $P_{\FLUSH_i}$, respectively, whereas the fill and flush phases of \SAOWP\ are denoted $P^*_{\FILL_i}$ and $P^*_{\FLUSH_i}$, respectively.
The top part shows the mapping of the profit corresponding to packets accepted by \opt\ along time, to the profit corresponding to $\Cs$-packets accepted by \SAMWP\ during the fill phase (since \SAMWP\ does not accept any packets during the flush phase).
The bottom part shows the induced mapping of the profit obtained by packets accepted by \opt\ along time to the profit of $\Cs$-packets accepted by \SAOWP\ during both the preceding flush phase, and the current fill phase.}
\label{fig:fill_during_flush}
\end{figure*}

It follows that Eq.~\ref{eq:P_old} now translates to
\begin{equation} \label{eq:P_new}
\begin{array}{l}
O^{(P^*_{\FILL_i})} + O^{(P^*_{\FLUSH_i})} \leq \\
\quad \quad \quad \left[ \frac{M}{r} + (\deltaW + 1) \deltaP \right] \mW  \cdot \mP \cdot \left[S^{* (P^* _{\FLUSH_{i-1}})} + S^{* (P^*_{\FILL_i})}\right]
\end{array}
\end{equation}
for every $i=1,\ldots,n$.
Summing over all $i=1,\ldots,n$, we obtain that the competitive ratio guarantee for \SAOWP\ is the same as that for \SAMWP.

Lastly, the analysis of \SAMWP\ does not assume any specific scheduling rule to be applied, as long as the $\Cs\K$-first order rule is maintained.
Thus, our competitive ratio guarantee is independent of the specific ordering within the set of $\CsK$-packets, as well as within the set of non-$\CsK$-packets.
\end{proof}

We study the performance of the various flavors of \SAO\ in Section~\ref{sec:simulations}.

\section{Practical Implementation}\label{sec:Practical_Implementation}
While presenting our basic algorithm in Section~\ref{sec:algorithms}, we assumed for simplicity that the values of $W$
and $V$ -- the maximal work and profit per packet, respectively -- are known to the algorithm in advance.
We now show how to relax these assumptions without harming the performance of our algorithms.

We refer to an algorithm implementation that does
not know these values in advance as a {\em values-oblivious} algorithm, and to an algorithm implementation that knows the values of $W$ and $V$ in advance as a {\em values-aware} algorithm. We will show that a values-oblivious algorithm can obtain a performance which is no worse than that of a values-aware algorithm, even if the values-aware algorithm knows not only $W$ and $V$, but also the concrete classes in which packets will arrive.

Our implementation of a values-oblivious algorithm is based on an application of reservoir sampling \cite{Reservoir} on
classes revealed during packet arrivals, as we will detail shortly.
A new class is revealed either due to the arrival of a $K$-packet $p$, or due to a $U$-packet $q$ being parsed,
corresponding to a class previously unknown to the algorithm. We call such an event an {\em uncovering of a new
class}.

The values-oblivious algorithm implementation performs the following alongside all decisions made by the values-aware algorithm:
Before the arrival sequence begins we initiate a counter $N$ of known classes to be $N=0$.
Upon the uncovering of a new class at $t$ the algorithm increments $N$ by one (to reflect the updated number of
known classes), and replaces the previously selected class with the new class with probability $1/N$.

As the above procedure essentially performs a reservoir sampling on the collection of classes known to the algorithm,
it essentially implements the selection of a class uniformly at random among all {\em a posteriori} known
classes~\cite{Reservoir}.

It therefore follows that the distribution of the packets corresponding to the \redtext{eventually} selected class (after the sequence ends) handled by the values-oblivious algorithm is identical to the distribution of the packets handled by the values-aware algorithm. Therefore the expected performance of the values-oblivious algorithm is lower bounded by the expected performance of the values-aware algorithm.
We note that the implementation of the values-oblivious algorithm can be applied to any of the variants described in our previous sections.

\section{Simulation Study}
\label{sec:simulations}
In this section we present the results of our simulation study intended to validate our theoretical results, and
provide further insight into our algorithmic design.
Our choice of distributions for the parameters of the traffic characteristic enables us to evaluate our algorithms' performance in a wide range of settings. These choices, as we show in the sequel, are also motivated by the properties of real-world traffic.

\subsection{Simulation Settings}\label{sec:simulation_set}
We simulate a single queue in a gateway router which handles a bursty arrival sequence of packets with high work
requirements (corresponding, e.g., to IPSec packets, requiring AES encryption/decryption) as well as packets with
low work requirements (such as simple IP packets requiring merely IPv4-trie processing).
Arriving packets also have arbitrary profits, modeling various QoS levels.

Our traffic is generated by a Markov modulated Poisson process (MMPP)
with two states, LOW and HIGH, such that the burst during the HIGH state
generates an average of 10 packets per cycle, while the LOW state generates an average of only $0.5$ packet per
cycle. The average duration of LOW-state periods is a factor $W$ longer than the average duration of HIGH-state
periods. This is targeted at allowing some traffic arriving during the HIGH-state to be drained during the LOW-state.

In our simulations, we do not deterministically bound the maximum number, $M$, of $U$-packets arriving in a cycle, but rather control the expected intensity of $U$-packets by letting each arriving packet be a $U$-packet with some probability $\alpha \in [0,1]$. We thus obtain that the expected number of $U$-packets per cycle during the HIGH state is $10 \alpha$.

In real-life scenarios, the maximum work, $W$, required by a packet, is highly implementation-depended. It depends on the specific hardware, processing elements, and software modules.
However, several works that investigated the required work on typical tasks~\cite{ramaswamy09, salehi09, salehi12} indicate that $W$ is two orders of magnitude larger than the work required for doing an IPv4-trie search or classification of a packet. We refer to IPv4-trie search or classification of a packet as the baseline unit of work, captured by our notion of ``parsing''. We therefore set the maximum work required by a packet to $W=256$ throughout this section.
As the potential set of characteristics is large, we use a logarithmic classification scheme (recall Section~\ref{sec:RC&S_concrete}).

Determining the maximum profit, $V$, associated with a packet, is a challenging task. This value depends both on implementation details, as well as on proprietary commercial and business considerations. In order to have a diverse set of values, which model distinct QoS requirements, we set the maximum profit associated with a packet to $V=16$ throughout this section.

The values $W=256$ and $V = 16$ imply a total of $8 \cdot 4 = 32$ potential classes for the algorithm to select from, respectively.
The value of each characteristic for each packet is drawn from an approximation of a Pareto-distribution as follows. First, we randomly generate numbers, following a Pareto-distribution. Next, numbers are rounded, to get integer values. Finally, for disallowing values above the maximum (256 for work values and 16 for profit values), all the cases where the randomly generated values were above the maximum were truncated, namely, treated as if the generated value was exactly the maximal value.
The averages and standard deviations of the values obtained after this generation process are 17.97 and 22.22 for packet work, and 3.66 and 3.20 for packet profit.
The schematic probability distribution function of the characteristics values is depicted in Fig.~\ref{fig:PDF_of_values}.
Note the spike at its maximum, due to the truncation described above.
Unless stated otherwise, we assume that $B=10$, $r = 1$ and each arriving packet is a $U$-packet with probability $\alpha = 0.3$. We thus obtain that the expected number of $U$-packets arriving during the HIGH state is $0.3 \cdot 10 = 3$ per cycle.

\begin{figure}[t]
\centering
\begin{tikzpicture}
	\begin{axis}[
		ybar,
		bar width=2ex,		
		ticks=none,
		xlabel=Value,
        legend style = {font=\figfontsize},
        label style={font=\figfontsize},
        tick label style={font=\figfontsize},
		xticklabels=empty,
		ylabel=PDF,
		yticklabels=empty]
	\addplot[fill=gray] plot coordinates {
		(1,0.2136)
		(2,0.2779)
		(3,0.1659)
		(4,0.1053)
		(5,0.0635)
		(6,0.0440)
		(7,0.0303)
		(8,0.0219)
		(9,0.0173)
		(10,0.0113)
		(11,0.0080)
		(12,0.0063)
		(13,0.0054)
		(14,0.0047)
		(15,0.0030)
		(16,0.0214)
	};
	\end{axis}
\end{tikzpicture}	
\caption{Probability distribution function of the characteristics values}
\label{fig:PDF_of_values}
\end{figure}
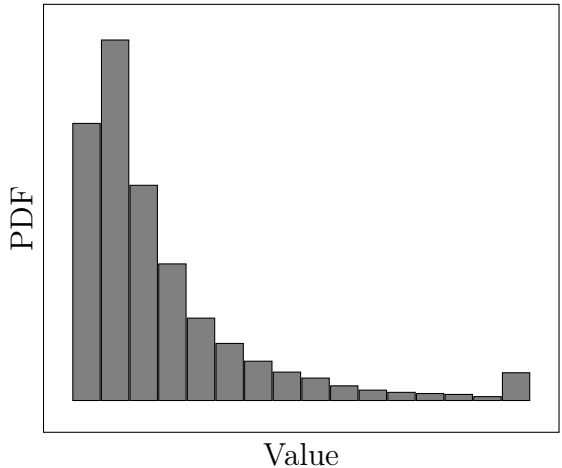

As a benchmark which serves as an upper bound on the optimal performance possible, we consider a relaxation of the offline problem as a knapsack problem. Arriving packets are viewed as items, each with its size (corresponding to the packet's work) and value (corresponding to the packet's profit). The allocated knapsack size equals the number of time slots during which packets arrive. The goal is to choose a highest-value subset of items that fits within the given knapsack size.
This is indeed a relaxation of the problem of maximizing throughput during the arrival sequence in the offline setting, since the knapsack problem is not restricted by any finite buffer size during the arrival sequence, nor by the arrival time of packets (e.g., it may ``pack'' packets even before they arrive).

We employ the classic 2-approximation greedy algorithm for solving the knapsack problem~\cite{williamson11design}, and
use its performance as an approximate upper bound on the performance of \opt.
For considering the additional profit which \opt\ may gain from packets which reside in its buffer at the end of the arrival sequence, we simply allow the offline approximation an additional throughput of $BV$ for free, which is an upper bound on the benefit it may achieve after the arrival sequence ends.

We compare the performance of studied algorithms by evaluating their {\em performance ratio}, which is the ratio between the algorithm's performance and that of our approximate upper bound on the performance of \opt.

We compare the performance of the following algorithms:
\begin{enumerate}
\item {\em FIFO}: A simple greedy non-preemptive FIFO discipline that simply accepts packets and processes each
    packet until completion, regardless of its required work or value.
\item {\em \SAM}: Algorithm \SAM, described in Section \ref{sec:algorithms}.
\item {\em \SAO\ FIFO}: Algorithm \SAO\ where priority ties are broken by FIFO order.
\item {\em \SAO\ $W$-Then-$V$}: Algorithm \SAO\ where priority ties are broken in non-decreasing order of remaining
    work, and further ties are broken in non-increasing order of profit. This variant is denoted by \SAO $W-V$ in Figures~\ref{fig:g_W}-\ref{fig:r}.
\item {\em \SAO\ EFFECT}: Algorithm \SAO\ where priority ties are broken in non-increasing order of their
    profit-to-work ratio.
\end{enumerate}

We recall that all the flavors of \SAO\ listed above maintain a $\CsK$-first order, and differ only in the internal
ordering \emph{within} each set (namely, within the set of $\CsK$-packets, as well as within the set of {\em
non}-$\CsK$-packets).

All flavors of \SAO\ described above employ the class-closure and the fill-during-flush modifications defined in
Section~\ref{sec:improved_algorithms}.
For each choice of parameters, we show the average of running 100 independently-generated traces of 10K packets each. In all our simulations the standard deviation was below 0.035.

\subsection{Simulation Results}
Figures~\ref{fig:g_W}-\ref{fig:r} show the results of our simulation study.

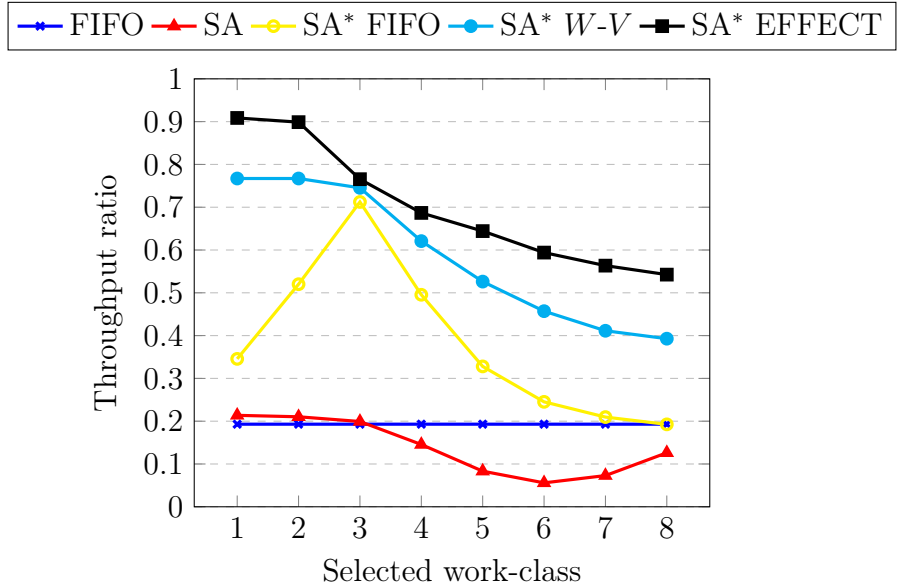
\begin{figure}[t]
\centering
\begin{tikzpicture}
	\begin{axis}[
		legend style={at={(0.5,1.18)},anchor=north,legend columns=-1,font=\figfontsize},
	    label style={font=\figfontsize},
	    tick label style={font=\figfontsize},
		ylabel=Throughput ratio,
		xlabel=Selected work-class,
		ytick={0,0.1,0.2,0.3,0.4,0.5,0.6,0.7,0.8,0.9,1},
		ymin=0,
		ymax=1,
	    ymajorgrids=true,
	    grid style=dashed,
        every axis plot/.append style={very thick},
		xtick=data]
	\addplot[color=blue,mark=x] coordinates {
		(1,0.1930)(2,0.1930)(3,0.1930)(4,0.1930)(5,0.1930)(6,0.1930)(7,0.1930)(8,0.1930)
	};
	\addlegendentry{FIFO}
	\addplot[color=red,mark=triangle*] coordinates {
		(1,0.2137)(2,0.2103)(3,0.1993)(4,0.1455)(5,0.0832)(6,0.0559)(7,0.0729)(8,0.1262)
	};
	\addlegendentry{SA}
	\addplot[color=yellow,mark=o] coordinates {
		(1,0.3456)(2,0.5201)(3,0.7123)(4,0.4954)(5,0.3280)(6,0.2452)(7,0.2096)(8,0.1926)
	};
	\addlegendentry{SA$^*$ FIFO}
	\addplot[color=cyan,mark=*] coordinates {
		(1,0.7671)(2,0.7671)(3,0.7453)(4,0.6208)(5,0.5262)(6,0.4571)(7,0.4113)(8,0.3928)
	};
	\addlegendentry{SA$^*$ $W$-$V$}
	\addplot[color=black,mark=square*] coordinates {
		(1,0.9086)(2,0.8988)(3,0.7655)(4,0.6867)(5,0.6442)(6,0.5940)(7,0.5635)(8,0.5425)
	};
	\addlegendentry{SA$^*$ EFFECT}
	\end{axis}
\end{tikzpicture}	
\caption{Effect of chosen work-class $i^*$}
\label{fig:g_W}
\end{figure}

First we note that \SAM\ exhibits a very low performance ratio, similar to that of a simple FIFO (which disregards packets parameters altogether). This is due to the fact that \SAM\ focuses only on a specific class, which consists of a relatively small part of the input, and it thus
spends processing cycles on packets that would not be eventually transmitted.

For the variants of \SAO\ we consider, in all simulations the best scheduling policy is by non-increasing
effectiveness, followed by employing the $W$-then-$V$ approach. FIFO scheduling, in spite of it being simple and
attractive, comes in last in all scenarios.
This behavior is explained by the fact that both former scheduling policies in \SAO\ clear the buffer more
effectively once it is \hfull. The latter FIFO scheduling approach clears the buffer in an oblivious manner, and
therefore doesn't free up space for new arrivals fast enough.
We now turn to discuss each of the scenarios considered in our study.

\subsection{The Effect of Selected Class}
Our first set of results sheds light on the effect of the class selected by an algorithm on its performance.
Fig.~\ref{fig:g_W} shows the results where the selected profit-class $j^*$ is 1, which makes \SAO\ allow all profits, and the choice of work-class $i^*$ varies.
The most interesting phenomena are exhibited by \SAO\ FIFO. Its performance is very poor if the work-class may
contain packets requiring very little work. This is due to the fact that only a small fraction of the traffic
requires this little work, and the algorithm scarcely arrives at being \hfull. As a consequence, the algorithm
handles many low-priority packets, which are handled in FIFO order, giving rise to far-from-optimal decisions. The
algorithm steadily improves up to some point, and then its performance deteriorates fast as it assigns high-priority
to packets with increasingly higher processing requirements. In this case, the algorithm becomes \hfull\ too
frequently, and allocates many processing cycles to low-effectiveness packets.
The maximum performance is achieved for $i^*=3$,
which implies that the algorithm flushes whenever its buffer is filled up with packets whose work is at most
$2^{i^*}=8$. This value suffices to allow the algorithm to prioritize a rather large portion of the arrivals
(recalling the Pareto distribution governing packet work-values), while ensuring the processing toll of
high-priority packet is not too large.
This strikes a (somewhat static) balance between the amount of work required by a packet, and its expected potential
profit.

The other variants of \SAO\ exhibit a gradually decreasing performance, due to
their higher readiness to compromise over the required work of packets they deem as high-priority traffic.
\SAM\ shows a similar performance deterioration, for a similar reason, when the selected work-class $i^*$ is increased from 1 up to 6. However, when increasing $i^*$ above 6, \SAM's performance increases again. This improvement is explained by the fact that, due to the Pareto-distribution of the work values, the number of packets that belong to each work-class rapidly diminishes when switching to work-class indices closest to the maximum of 8;
recall that \SAM\ over-prioritizes only packets which belong to a single randomly selected class, i.e., \SAM\ does not employ the class closure optimization (described in Section~\ref{sec:improved_algorithms}).
In such a case, \SAM\ is coerced into processing also packets which do not belong to the selected class -- namely, packets with {\em lower} work -- which somewhat compensates for the poor choice of the work-class.
We verified this explanation by additional simulations (not shown here), in which the work-class of packets was chosen from the uniform distribution. In such a case, where there is an abundance of packets from every possible work-class, the performance of \SAM\ consistently degrades with the increase of $i^*$, which implies a poorer choice of work-class.

\begin{figure}[t]
\centering
\begin{tikzpicture}
	\begin{axis}[
		legend style={at={(0.5,1.18)},anchor=north,legend columns=-1,font=\figfontsize},
	    label style={font=\figfontsize},
	    tick label style={font=\figfontsize},
		ylabel=Throughput ratio,
		xlabel=Selected profit-class,
		ytick={0,0.1,0.2,0.3,0.4,0.5,0.6,0.7,0.8,0.9,1},
		ymin=0,
		ymax=1,
	    ymajorgrids=true,
	    grid style=dashed,
	    every axis plot/.append style={very thick},
		xtick=data]
	\addplot[color=blue,mark=x] coordinates {
	    (1,0.1919)(2,0.1919)(3,0.1919)(4,0.1919)
	};
	\addlegendentry{FIFO}
	\addplot[color=red,mark=triangle*] coordinates {
		(1,0.1255)(2,0.1554)(3,0.1686)(4,0.1803)
	};
	\addlegendentry{SA}
	\addplot[color=yellow,mark=o] coordinates {
		(1,0.1922)(2,0.2960)(3,0.4248)(4,0.4997)
	};
	\addlegendentry{SA$^*$ FIFO}
	\addplot[color=cyan,mark=*] coordinates {
		(1,0.3910)(2,0.4486)(3,0.5201)(4,0.6358)
	};
	\addlegendentry{SA$^*$ $W$-$V$}
	\addplot[color=black,mark=square*] coordinates {
		(1,0.5427)(2,0.5137)(3,0.5372)(4,0.6557)
	};
	\addlegendentry{SA$^*$ EFFECT}
	\end{axis}
\end{tikzpicture}
\caption{Effect of chosen profit-class $j^*$}
\label{fig:g_V}
\vspace{\vspacebelowcaption}
\end{figure}
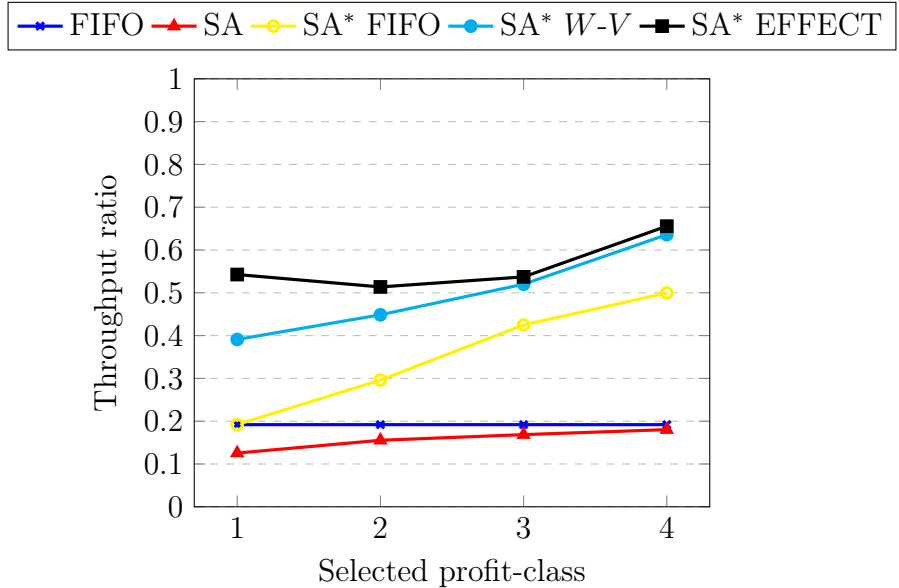

Similar phenomena are exhibited in Fig.~\ref{fig:g_V}, where we consider the effect of the profit-class $j^*$
selected by an algorithm on its performance. In this set of simulations all work-values were allowed (i.e., the
selected work-class is 8).
In this scenario the performance
of all algorithms improves as the selected profit-class index increases, and the
algorithms are able to better restrict their focus on high profit packets as the packets receiving high-priority.
We note the fact that \SAO\ FIFO and regular FIFO have a matching performance in the case the selected profit-class
is 1, since in this case \SAO\ FIFO is identical to plain FIFO (since it simply indiscriminately accepts and processes all incoming packets in FIFO order).

In subsequent results described hereafter, we fix both the work-class and the profit-class to be 3, which represents
a mid-range class for both the profit and the work.

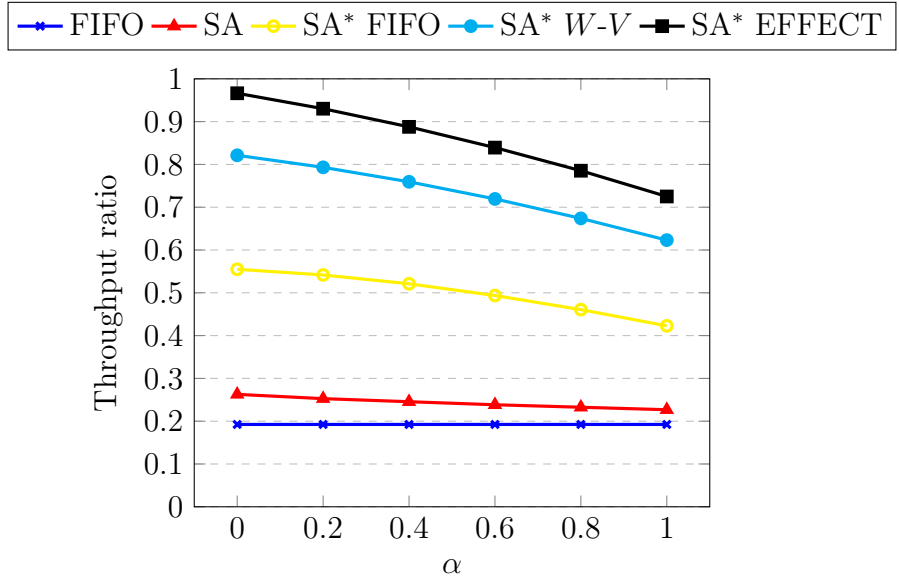
\begin{figure}[h!]
\centering
\begin{tikzpicture}
	\begin{axis}[
		legend style={at={(0.5,1.18)},anchor=north,legend columns=-1,font=\figfontsize},
	    label style={font=\figfontsize},
	    tick label style={font=\figfontsize},
		ylabel=Throughput ratio,
		xlabel=$\alpha$,
		ytick={0,0.1,0.2,0.3,0.4,0.5,0.6,0.7,0.8,0.9,1},
		ymin=0,
		ymax=1,
	    ymajorgrids=true,
	    grid style=dashed,
	    every axis plot/.append style={very thick},
		xtick=data]
	\addplot[color=blue,mark=x] coordinates {
	    (0,0.1924)(0.2,0.1924)(0.4,0.1924)(0.6,0.1924)(0.8,0.1924)(1,0.1924)
	};
	\addlegendentry{FIFO}
	\addplot[color=red,mark=triangle*] coordinates {
		(0,0.2627)(0.2,0.2527)(0.4,0.2455)(0.6,0.2384)(0.8,0.2326)(1,0.2268)
	};
	\addlegendentry{SA}
	\addplot[color=yellow,mark=o] coordinates {
		(0,0.5550)(0.2,0.5417)(0.4,0.5211)(0.6,0.4937)(0.8,0.4607)(1,0.4228)
	};
	\addlegendentry{SA$^*$ FIFO}
	\addplot[color=cyan,mark=*] coordinates {
		(0,0.8213)(0.2,0.7932)(0.4,0.7594)(0.6,0.7194)(0.8,0.6739)(1,0.6231)
	};
	\addlegendentry{SA$^*$ $W$-$V$}
	\addplot[color=black,mark=square*] coordinates {
		(0,0.9663)(0.2,0.9302)(0.4,0.8878)(0.6,0.8393)(0.8,0.7853)(1,0.7251)
	};
	\addlegendentry{SA$^*$ EFFECT}
	\end{axis}
\end{tikzpicture}	
\caption{Effect of expected number of $U$-packets during the HIGH state}
\label{fig:M}
\end{figure}

\begin{figure}[h!]
\centering
\begin{tikzpicture}
	\begin{axis}[
		legend style={at={(0.5,1.18)},anchor=north,legend columns=-1,font=\figfontsize},
	    label style={font=\figfontsize},
	    tick label style={font=\figfontsize},
		ylabel=Throughput ratio, 
		xlabel=$r$,
		ytick={0,0.1,0.2,0.3,0.4,0.5,0.6,0.7,0.8,0.9,1},
		ymin=0,
		ymax=1,
	    ymajorgrids=true,
	    grid style=dashed,
	    every axis plot/.append style={very thick},
		xtick=data]
	\addplot[color=blue,mark=x] coordinates {
	    (0,0.1933)(0.2,0.1935)(0.4,0.1944)(0.6,0.1920)(0.8,0.1930)(1,0.1931)
	};
	\addlegendentry{FIFO}
	\addplot[color=red,mark=triangle*] coordinates {
		(0,0.1933)(0.2,0.1974)(0.4,0.2078)(0.6,0.2144)(0.8,0.2202)(1,0.2289)
	};
	\addlegendentry{SA}
	\addplot[color=yellow,mark=o] coordinates {
		(0,0.1933)(0.2,0.2251)(0.4,0.2738)(0.6,0.3242)(0.8,0.3746)(1,0.4260)
	};
	\addlegendentry{SA$^*$ FIFO}
	\addplot[color=cyan,mark=*] coordinates {
    	(0,0.1933)(0.2,0.2582)(0.4,0.4106)(0.6,0.5089)(0.8,0.5781)(1,0.6268)
	};
	\addlegendentry{SA$^*$ $W$-$V$}
	\addplot[color=black,mark=square*] coordinates {
		(0,0.1933)(0.2,0.2642)(0.4,0.4467)(0.6,0.5727)(0.8,0.6653)(1,0.7286)
	};
	\addlegendentry{SA$^*$ EFFECT}
	\end{axis}
\end{tikzpicture}	
\caption{Effect of admittance probability of $U$-packets $r$}
\label{fig:r}
\end{figure}
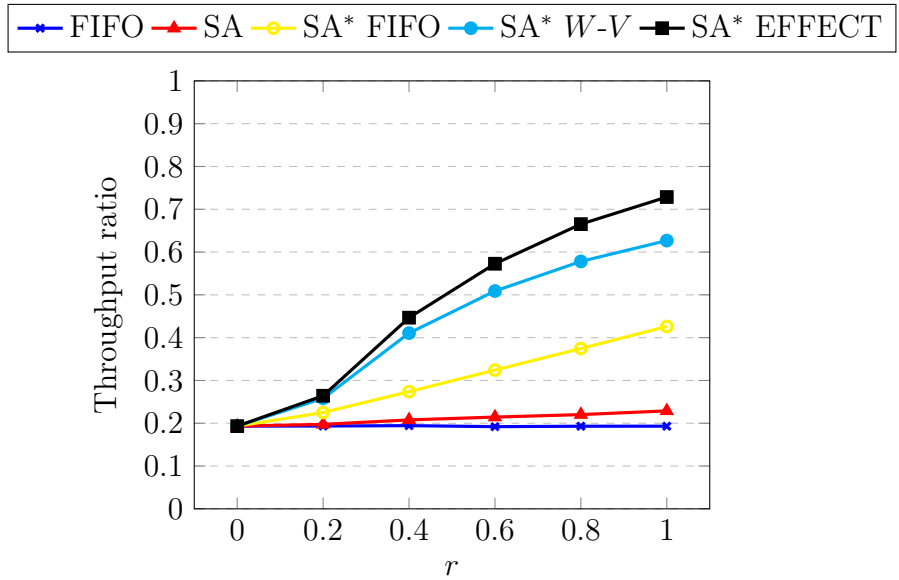

\subsection{The Effect of Missing Information}
Fig.~\ref{fig:M} illustrates the performance ratio of our algorithms as a function of the expected number of $U$-packets arriving during the HIGH state, where we vary the value of $\alpha$ from 0 to 1. This provides further insight as to the performance of each algorithm as a function of the intensity of unknown packets. We recall that for our choice of parameters, the values of $\alpha$ translate to having the expected number of unknown packets per cycle during the HIGH state vary from 0 to 10.
As one could expect, the performance ratio of \SAM\ and of all versions of \SAO\ degrades as the amount of uncertainty increases.

Finally, we study the intensity of exploring unknown packets, as depicted by the choice of parameter $r$ which
determines whether a cycle is an admittance cycle or not.
The results depicted in Fig.~\ref{fig:r} consider the case of high uncertainty, where $\alpha=1$, that is, all arriving packets are unknown.

Observe first the special case where $r=0$, which represents an extreme case, in which, although all arriving packets are unknown, our algorithms do not explore any new packets, and actually degenerate to a simple FIFO, and therefore exhibit identical performance.
Increasing the admittance probability $r$, however, yields a steady increase in performance, albeit with diminishing returns.
Similar results were obtained also when some of the packets are known, but with smaller marginal benefits. These results coincide with our analytic results, which further validate our algorithmic approach.

\section{Discussion}\label{sec:buf:conclusions}
This chapter introduces the problem of managing buffers where traffic has unknown characteristics, namely required processing
and profits.
We show lower bounds on the competitive ratio of any online algorithm for the problem.
We define several algorithmic concepts targeted at such settings, and develop several algorithms that
follow our suggested prescription.
Our theoretical analysis shows that the competitive ratio of our algorithms is not far from the best competitive ratio any online algorithm can achieve.
We validate the performance of our algorithms via simulation which further serves to elucidate our design criteria.
Our work can be viewed as a first step in developing fine-grained algorithms handling scenarios of limited knowledge
in networking environments for highly heterogeneous traffic.

Our work gives rise to a multitude of open questions, including:
\begin{inparaenum}[(i)]
\item closing the gap between our lower and upper bound for the problem,
\item applying our proposed approaches to other limited knowledge networking environments, and
\item devising additional algorithmic paradigms for handling limited knowledge in heterogeneous settings.
\end{inparaenum}

\newpage
\chapter{Access Strategies in Network Caching}\label{sec:BF}

\section{Problem Overview}\label{sec:BF:intro}
Having access to multiple network connected \emph{data stores} is common in modern network settings such as 5G in-network caching~\cite{5GNetworks,5GNetworks2}, content delivery networks (CDN)~\cite{CDN_OceanStore,CDN_AdaptSize}, information centric networking~\cite{ICN,ICN2}, wide-area networks~\cite{SummaryCache}, as well as in any multi data center Internet company.
Data stores can be cache enabled network devices, memory layers within a server, virtual machines, physical hosts, remote data centers or any combination of the above examples.
In such settings, each data store acts as a network cache by holding a potentially overlapping fraction of the entire data that may be accessed by applications and services hosted in the network.

\begin{figure}[h!]
	\centering
	\subfloat[\label{fig:bf_toy_example}
	The client is looking for item $x$ and needs to select which data stores to access. Data stores provide an indication ($I(x)$) if they store $x$. The grayed content ($C(x)$) indicates if they actually store $x$. Notice the false positive in Data store 1. The client pays the cost for the selected data stores, and in case of a failure to find $x$ in any of the accessed data stores, it incurs a miss penalty.
	]{
		\includegraphics[width=\columnwidth]{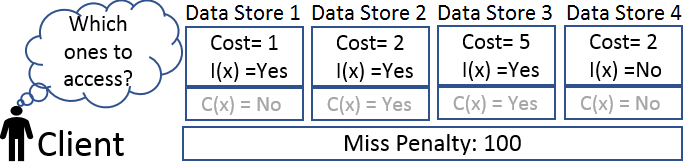}
	}
	\\
	\subfloat[\label{fig:example}An example of the average access cost of different strategies (lower is better), when varying the cache hit ratio. The number of data stores here is $20$, the access cost to each of them is $1$ while the miss penalty is $100$ and the false positive ratio is 0.02.]{
		\resizebox{\columnwidth}{0.45\columnwidth}{
\begin{tikzpicture}
    [spy using outlines={rectangle, connect spies}]
	\begin{axis}[
		width=\columnwidth,
		height=5cm,
		legend style={at={(0.5,1.23)},anchor=north,legend columns=-1,font=\figfontsize},
	    label style={font=\figfontsize},
	    tick label style={font=\figfontsize},
		ylabel=Expected Access Cost,
		ylabel near ticks,
		xlabel= Per Data Store Hit Ratio,
		xlabel near ticks,
		xtick={0,0.2,0.4,0.6,0.8,1},
		ymin=0,
		ymax=110,
        xmin = 0,
        xmax =1,
	    ymajorgrids=true,
	    grid style=dashed,
		]
	\addplot[color=red,mark=triangle*] coordinates {
		(0.00000, 100.00000)(0.05000, 36.49011)(0.10000, 13.03609)(0.15000, 4.83719)(0.20000, 2.14139)(0.25000, 1.31395)(0.30000, 1.07899)(0.35000, 1.01794)(0.40000, 1.00362)(0.45000, 1.00064)(0.50000, 1.00009)(0.55000, 1.00001)(0.60000, 1.00000)(0.65000, 1.00000)(0.70000, 1.00000)(0.75000, 1.00000)(0.80000, 1.00000)(0.85000, 1.00000)(0.90000, 1.00000)(0.95000, 1.00000)(1.00000, 1.00000)
	};
	\addlegendentry{Perfect Indicators}	
\addplot[color=purple,mark=o] coordinates {
			(0.00000, 100.00000)(0.05000, 37.22083)(0.10000, 14.26893)(0.15000, 6.39020)(0.20000, 3.54726)(0.25000, 2.59345)(0.30000, 2.26679)(0.35000, 2.14348)(0.40000, 2.08785)(0.45000, 2.05745)(0.50000, 2.03852)(0.55000, 2.02593)(0.60000, 2.01731)(0.65000, 2.01135)(0.70000, 1.84986)(0.75000, 1.66225)(0.80000, 1.49751)(0.85000, 1.35170)(0.90000, 1.22173)(0.95000, 1.10515)(1.00000, 1.00000)
		};
\addlegendentry{FPO}

\addplot[color=cyan,mark=x,]    coordinates {
	(0.00000, 100.40000)(0.05000, 37.22859)(0.10000, 14.51767)(0.15000, 7.21595)(0.20000, 5.47292)(0.25000, 5.61712)(0.30000, 6.35979)(0.35000, 7.27812)(0.40000, 8.24366)(0.45000, 9.22064)(0.50000, 10.20010)(0.55000, 11.18001)(0.60000, 12.16000)(0.65000, 13.14000)(0.70000, 14.12000)(0.75000, 15.10000)(0.80000, 16.08000)(0.85000, 17.06000)(0.90000, 18.04000)(0.95000, 19.02000)(1.00000, 20.00000)
};
\addlegendentry{EPI}

\addplot[color=black,mark=triangle,]coordinates {
	(0.00000, 100.33239)(0.05000, 45.63952)(0.10000, 23.05151)(0.15000, 13.47797)(0.20000, 9.11240)(0.25000, 6.85799)(0.30000, 5.50896)(0.35000, 4.59281)(0.40000, 3.91497)(0.45000, 3.38653)(0.50000, 2.96085)(0.55000, 2.61003)(0.60000, 2.31579)(0.65000, 2.06545)(0.70000, 1.84986)(0.75000, 1.66225)(0.80000, 1.49751)(0.85000, 1.35170)(0.90000, 1.22173)(0.95000, 1.10515)(1.00000, 1.00000)
};
\addlegendentry{CPI}

\addplot[color=blue,mark=square,] coordinates {
(0.00000, 100.00000)(0.05000, 55.84859)(0.10000, 32.15767)(0.15000, 23.31134)(0.20000, 18.39805)(0.25000, 15.16764)(0.30000, 12.82475)(0.35000, 11.07119)(0.40000, 9.67962)(0.45000, 8.52244)(0.50000, 7.56250)(0.55000, 6.83038)(0.60000, 6.02400)(0.65000, 5.50063)(0.70000, 4.81000)(0.75000, 4.39063)(0.80000, 3.80000)(0.85000, 3.33750)(0.90000, 3.00000)(0.95000, 2.25000)(1.00000, 1.00000)
};
\addlegendentry{No Indicators}

\coordinate (spypoint) at (axis cs:0.4,9.5);
\coordinate (magnifyglass) at (axis cs:0.65,77);
	\end{axis}

  \spy [green, width=6.2cm, height=2.2cm, magnification=1.7] on (spypoint)
             in node [fill=white] at (magnifyglass);

\end{tikzpicture}
}
	}
	\caption{Motivation for the access strategy problem.}
\end{figure}

Accessing a data store incurs a certain cost in terms of latency, bandwidth, and energy~\cite{Joint_opt}. Hence, smart
utilization of data stores may reduce the operational costs of such systems and improve their users'
experience. Naturally, knowing which item is stored in each data store at any given moment is a key enabler
for efficient utilization, but maintaining such knowledge may not be feasible. Instead, it is more practical
to occasionally exchange space efficient \emph{indicators} for the content of the data
stores~\cite{SummaryCache}. Bloom filters~\cite{Bloom} are a common implementation for such indicators, but
many other space-efficient approximate membership representations can also be
used~\cite{Survey04,Survey12,TinySet,TinyTable,BloomParadox,CDN_OceanStore,AccessEfficientBF}.

The shortcoming of relying on such indicators is that they may exhibit {\em false positives}, meaning that they may indicate that a given item is held by a certain data store while it is actually not there.
Indeed, the work of~\cite{BloomParadox} formally showed that naively relying on indicators for accessing even a single data store may do more harm than good.
In this work, we are interested in the general case of accessing multiple data stores. The difference is that we require an access strategy that selects a~{\em subset} of the data stores to access per request.
Existing strategies for this problem include:
\begin{inparaenum}[(i)]
	\item the {\em Cheapest Positive Indication ($\cpi$)}~\cite{Digest,Survey12}
	strategy that accesses the cheapest data store

	with a positive indication for the requested item, and
	\item the {\em Every Positive Indication ($\epi$)}~\cite{SummaryCache} strategy that accesses every data store with a positive indication.
\end{inparaenum}
The access is considered successful if the item is stored in one of the accessed data stores, and incurs no further cost. Otherwise, we pay a {\em miss penalty} for retrieving the requested item, e.g., due to the need to fetch it from an external remote site.

In the example of Figure~\ref{fig:bf_toy_example}, \cpi{} accesses only data store 1, which is the cheapest with a positive indication (captured by $I(x) = \mathrm{Yes}$), and incurs a cost of 1 for this.
However, since $x$ is not in data store 1 (captured by $C(x)=\mathrm{No}$), this indication is a false positive, and an additional miss penalty of 100 is incurred for the request, for a total cost of 101 imposed on CPI.
Alternatively, the \epi\ policy accesses every data store with a positive indication (data stores 1, 2, and~3).
This implies an access cost of $1+2+5=8$.
In this case, no additional miss penalty is incurred, since item $x$ is indeed available in one of the accessed data stores, e.g., in data store 2.
One can also consider an ideal strategy equipped with a {\em perfect indicator} with no false positives. Such an ideal strategy would require a cost of merely 2 incurred for accessing data store 2 alone.

Figure~\ref{fig:example} provides a numerical example motivating this work (see~Section~\ref{sec:homo} for the exact settings).
The figure illustrates the expected access cost for varying strategies with a false positive ratio of $\fpr=0.02$. The strategies are compared to the performance of two baseline scenarios.
The No Indicators (blue) line illustrates the best that can be obtained without indicators (which can be viewed as using indicators with $\fpr =1$, or equivalently, using indicators that always return 'Yes').
In contrast, the Perfect Indicators (red) line corresponds to having no false positives $(\fpr =0)$ in any of the indicators.

The area between the plots describing the performance of the two baseline scenarios (blue and red) exhibits the potential gains of employing indicator based access policies.
Specifically, we observe that \epi\ is near optimal when the per data store hit ratio is low but becomes highly inefficient when it is high.
In fact, even the No Indicators approach outperforms \epi\ once the hit ratio is above a certain threshold (in our plot, this occurs at a hit-ratio of around 0.45).
In contrast, \cpi\ is near optimal when the hit ratio is very high but performs poorly when it is low.
Between these two extremes, there is a gap where both strategies are inefficient, as highlighted in the magnified area of Figure~\ref{fig:example}.
Our proposed strategies, described in Sections~\ref{sec:homo}-\ref{sec:pgm}, aim at providing near-optimal performance, independent of the actual hit ratio.
In particular, the performance of our false-positive-aware optimal policy, \fpo, depicted by the pink line, comes extremely close to the Perfect Indicators (red) line despite relying on indicators whose $\fpr=0.02$.

\subsection{Related work}\label{sec:related_work}
\paragraph*{Approximate Set Membership}
Approximate set membership is about encoding a set of items, such as the content of a data store, in a space efficient manner.
Intuitively, an accurate representation requires storing all identifiers which may be prohibitively expensive.
Alternatively, space can be conserved by allowing a small number of false positives.
Bloom filters~\cite{Bloom} offer space-efficient encoding but do not support the removal of items. Other works~\cite{TinySet,TinyTable,SummaryCache,AccessEfficientBF,Survey18} improve on them in various aspects, such as support for removals~\cite{TinyTable,dleftCBF,RankedIndexBF}, a more efficient access pattern~\cite{TinySet,AccessEfficientBF}, and lower transmission overheads~\cite{CompressedBF}.

\paragraph*{Applicability Examples}
Bloom filter variants are extensively used in multiple domains~\cite{Survey04, Survey12}. Most notable is their use in front of a cache or a slow memory hierarchy. Such usage leverages that Bloom filters do not exhibit false negatives. Thus, there is no need to access the data store on a negative indication.

The work of~\cite{SummaryCache} suggests an architecture for distributed caching on wide area networks. In this solution, caches share an approximation of their content.
Clients use this information to only contact the caches with positive indications (\epi).
A similar architecture is also considered in~\cite{Survey12, Digest}.
There, clients access the cheapest cache with a positive indication (\cpi).
However, the impact of the access strategy and its optimization in the face of false positive replies is overlooked in previous works.

\paragraph*{Access Strategies and Replica Selection}
The work of~\cite{Joint_opt} studied access strategies to datastores in a commercial content delivery network. 
Access strategies to datastores have been extensively studied also in the context of data grid systems. 
In such systems, the problem of selecting which datastore to access is commonly referred to as the \emph{replica selection} problem. 
A comprehensive survey of replica selection algorithms can be found in~\cite{Replica_survey_14}. 
However, all these works do not use indicators, but instead assume the existence of an exact and always-fresh list of locations of every stored datum. Maintaining such a repository incurs high overhead in terms of bandwidth consumption and synchronization mechanisms.

The work of~\cite{BloomParadox} considers the special case of a single data store, equipped with a Standard
Bloom Filter~\cite{Bloom} or a Counting Bloom Filter~\cite{CBF}. They identify cases where following a
positive indication may increase the overall cost. Thus, they suggest that in those cases the data store
should be ignored, regardless of its indicator value. We, on the other hand, address the more general
problem, which involves any number of data stores, equipped with any kind of indicators.

\subsection{System Model and Preliminaries}\label{sec:model}

\LTcapwidth=0.95\textwidth
\begin{longtable}{| p{.15\textwidth} | p{.68\textwidth} | p{.06\textwidth} |}
\caption[List of symbols used in Chapter~\ref{sec:BF}]{List of Symbols. The top part corresponds to our system model (Section~\ref{sec:model}, the middle part corresponds to the \pot\ and \umb\ algorithms (Sections~\ref{sec:pot}-\ref{sec:knapsack_algorithm}), and the bottom part corresponds to the \pgmalg\ algorithm (Section~\ref{sec:pgm}).}
\label{tbl:bf:notations} \\

\hline \multicolumn{1}{| p{.15\textwidth} |}{\textbf{Symbol}} & \multicolumn{1}{| p{.6\textwidth} |}{\textbf{Meaning}} & \multicolumn{1}{| p{.1\textwidth} |}{\textbf{Section}} \\ \hline 
\endfirsthead

\multicolumn{3}{c}%
{{\bfseries \tablename\ \thetable{} -- continued from previous page}} \\
\hline \multicolumn{1}{|c|}{\textbf{Symbol}} & \multicolumn{1}{c|}{\textbf{Meaning}} & \multicolumn{1}{c|}{\textbf{Section}} \\ \hline 
\endhead

\hline \multicolumn{3}{|r|}{{Continued on next page}} \\ \hline
\endfoot

\hline \hline
\endlastfoot
        \hline
    	\hline
		\ \ $\SetSs$ & \textcolor{red}{Set of all} data stores & \tabularnewline
		\hline
		\ \ $\NumSs$ &Number of data stores, $\NumSs = |\SetSs|$& \tabularnewline
		\hline
		\ \ $\SetSp$ &Data stores with positive indications for requested datum $x$& \tabularnewline
		\hline
		\ \ $\NumSp$ & Number of positive indications for requested datum $x$ ($|\SetSp|$)\normalsize
		& \tabularnewline
		\hline
		\ \ $\mS_j$ & The set of data items in data store $j$\normalsize
		& \tabularnewline
		\hline
		\rule{0pt}{3ex}    
		\ $\Phitj$ & Hit ratio of data store $j$& \tabularnewline
		\hline
		\ $\ind_j(x)$ & Indication of data store $j$ for datum $x$ & \tabularnewline
		\hline
		\ $q_j$ & Probability of positive indication by $\ind_j$: $\Pr (\ind_j(x) = 1)$ & \tabularnewline
		\hline
		\rule{0pt}{3ex}    
		$\fprj$ & False positive ratio for $\ind_j$: $\fprj = \Pr(\ind_j (x) = 1 | x \notin \mS_j)$ & \tabularnewline
		\hline
		\ \ $\mr_j$ & Misindication ratio for a data store $j$ & \tabularnewline
		\hline
		\ \ $\mr_{\mL}$ & Misindication ratio for a set of data stores $\mL$ & \tabularnewline
		\hline
		\ \ $\mc_i$ & Access cost for data store $i$ & \tabularnewline
		\hline
		\ \ $\mc_{\mL}$ & Total access cost (sum of costs of all data stores in set $\mL$) & \tabularnewline
		\hline
		\ \ $\phi$  & Cost function: $\phi(D) = \sum_{i \in D} \mc_i + \mbeta \prod_{i \in D} \mr_i$ & \tabularnewline
		\hline
		\ \ $\mbeta$  & Miss penalty & \tabularnewline
		\hline
		\hline
        \rule{0pt}{3ex}
		$H_k$  & Access cost for the $k$ highest data stores in $N_x$  &\ref{sec:pot},\ref{sec:knapsack_algorithm} \tabularnewline
		\hline
        \rule{0pt}{3ex}
		$L_k$  & Access cost for the $k$ lowest data stores in $N_x$  &\ref{sec:pot},\ref{sec:knapsack_algorithm}\tabularnewline
		\hline
		\rule{0pt}{3ex}
		$P(k^*)$ & Potential function: $P(k^*) = L_{k^*} + \mbeta \prod\nolimits_{j = 1}^{k^*} \mr_{\ell_j}$&\ref{sec:pot},\ref{sec:knapsack_algorithm}\tabularnewline
		\hline
		\rule{0pt}{3ex}
		$M$  & $M=\min\set{\sum_{j \in \SetSp} c_j, \beta}$   &\ref{sec:pot},\ref{sec:knapsack_algorithm}\tabularnewline
		\hline
    	\rule{0pt}{3ex}
    	$\Oset$ & Set of datastores used by \opt &\ref{sec:pot},\ref{sec:knapsack_algorithm},\ref{sec:pgm}\tabularnewline
		\hline
        \hline
        \ \ $\logbeta$ & $\logbeta = \log \mbeta$ &\ref{sec:pgm}\tabularnewline
        \hline
		\rule{0pt}{3ex}    
        $\piset{\mrgline}{j}$ & Partition of $\SetSp$ in level $\mrgline$ &\ref{sec:pgm}\tabularnewline
		\hline
		\rule{0pt}{3ex}    
		$\OOset{\mrgline}{j}$ & Subset of datastores which \opt\ selects out of $\piset{\mrgline}{j}$
		&\ref{sec:pgm}\tabularnewline
		\hline
		\rule{0pt}{3ex}    
        $\mvec{\mrgline}{j}$ & Candidate sub-solutions which \pgmalg\ considers out of $\piset{\mrgline}{\ell}$&\ref{sec:pgm}\tabularnewline
		\hline
\end{longtable}

This section formally defines our system model and notations.
For ease of reference, our notation is summarized in Table~\ref{tbl:bf:notations}.
We consider a set $\SetSs$ of $\NumSs$ {\em data stores}, containing possibly overlapping subsets of
items. We denote by $\mS_j$ the set of items stored at data store $j$.
Given a sequence of requests for items
$\sigma$ (with possible repetitions), the {\em hit ratio} of a data store $j$ is the fraction of requests in $\sigma$ that were available in data store $j$ (when requested).
Our work assumes that past hit ratio is a good indication for the near
future~\cite{EinzigerEFM18,EinzigerFM17}. We denote by $\Phitj$ the hit ratio of data store $j$, i.e., the
probability that the next accessed item $x$ is stored in $\mS_j$.

Each data store $j$ maintains an \emph{indicator} $\ind_j$, which approximates $\mS_j$; given an item $x$, $\ind_j(x) = 1$ indicates that $x$ is likely to be in $\mS_j$  while $\ind_j(x) = 0$ indicates that it is surely not in $\mS_j$.
These are referred to as a {\em positive indication} and a {\em negative indication}, respectively.
Our model assumes indicators that may exhibit only one-sided errors, i.e., they never err when providing a negative indication\footnote{This means having no false negatives, i.e., $\Pr(\ind_j (x) = 0 | x \in \mS_j) = 0$.}.
In practice, most implementations satisfy this assumption~\cite{TinySet,TinyTable,SummaryCache,AccessEfficientBF}.
The {\em false positive ratio} of $\ind_j$ is defined by $\fprj = \Pr(\ind_j (x) = 1 | x \notin \mS_j)$.

Given an item $x$ within sequence $\sigma$, a query for $x$ triggers a {\em data access} which consists of
selecting a subset of the data stores $\mL$ and accessing this subset in parallel. The data access is considered
successful, or a {\em hit}, if the item $x$ is found in at least one of the data stores being accessed and is
considered unsuccessful, or a {\em miss}, otherwise. Since by our assumption all indicators might have
a one-sided error, we focus our attention only on subsets of data stores which all provide a positive
indication. Given such a subset of the data stores $D$ all providing a positive indication, we denote by
$\mr_{\mL}$ the misindication ratio of $\mL$, i.e., the probability that an item is not available in any of
the data stores in $\mL$, in spite of their positive indications. Note, that if $\mL = \emptyset$, then
$\mr_{\mL} = 1$. We make no assumptions on the sharing policy among the data stores. Yet, in the
analysis sections we assume that the misindication ratios are mutually independent, that is, $\mr_{\mL} =
\prod_{j \in \mL} \mr_j$. Under this assumption our analysis provides a baseline for understanding the
performance of such systems.

Each data store has some predefined {\em access cost}, $\mc_j$, which is incurred whenever data store $j$ is
being accessed. These access costs induce the overall cost for accessing a set $\mL$ of data stores, defined
by $\mc_{\mL} = \sum_{j \in \mL} \mc_j$. We assume without loss of generality that $\min_j c_j = 1$. In case
the data access results in a miss, it incurs a {\em miss penalty} of $\mbeta$, for some $\mbeta \geq 1$. For
a subset of data stores $\mL$, which all provide a positive indication, we define its (expected) {\em miss
cost} by $\beta \cdot \mr_{\mL}$.

For any query item $x$, let $\SetSp \subseteq \SetSs$ denote the subset of data stores with a positive indication, i.e., $\SetSp = \set{j \in \SetSs | I_j(x)=1}$, and denote the size of this set by $\NumSp = |\SetSp|$.
The expected {\em cost} of accessing any $\mL \subseteq \SetSp$ is defined to be the sum of its access cost and its expected miss cost, i.e.,
\begin{equation}
\phi(D)=\mc_{\mL} + \beta \cdot \mr_{\mL}.
\end{equation}
When misindication ratios are mutually independent we have
\begin{equation}
\label{eq:min-independent}
\phi(D)=\mc_{\mL} + \beta \cdot \mr_{\mL} = \sum\nolimits_{j \in \mL} \mc_j + \beta \prod\nolimits_{j \in \mL} \mr_j.
\end{equation}

The {\em Data Store Selection (\cacheprob) problem} is to find a subset of data stores $\mL \subseteq \SetSp$ that minimizes the expected cost $\phi(\mL)$. 

We denote by $\Pone_j$ the probability that indicator $j$ positively replies to a query for an item $x$.
This happens when either $x \in S_j$; or $x \notin S_j$, and a false positive occurs. Therefore,
\begin{equation}\label{Eq:\ind_eq_1}
\Pone_j = \Pr(\ind_j(x) = 1) = \Phitj + (1 - \Phitj) \fprj .
\end{equation}
Using Bayes' theorem and Eq.~\ref{Eq:\ind_eq_1}, the misindication ratio $\rho_j$ is
\begin{align}\label{Eq:mrj1}
\mr_j
&\equiv \Pr (x \notin S_j | \ind_j (x)= 1) \notag \\
&= \fprj (1 - \Phitj) / [\Phitj + (1 - \Phitj) \fprj].
\end{align}

To simplify expressions throughout this chapter, we omit the base of the logarithms; we always use logarithms of base 2.

\subsection{Our Contribution}
As mentioned, despite the popularity of indicators, the problem of efficiently working with indicators and of forming a successful access strategy has remained unexplored.
In Section ~\ref{sec:homo} we analyse the case of a fully homogeneous settings. Our analysis shows that even in this highly-simplified settings previously suggested strategies are too simplistic and implicitly rely on specific assumptions about the workload, or the underlying system. Thus, in general, an access strategy that works well in one scenario may be inefficient for another.

In Sections~\ref{sec:pot}-\ref{sec:pgm} we propose and \textcolor{red}{analyze} several polynomial-time approximation algorithms for the fully heterogeneous case.\footnote{Please recall that we detail about approximation algorithms in Section~\ref{sec:approx_algs}.}
We further validate and evaluate our proposed algorithms via an extensive evaluation in Section~\ref{sec:sim}. Our evaluation is based on real data 
with varying system parameters. Our results show that our algorithms are more stable than existing approaches.
That is, they outperform or achieve very similar access costs to the best competitor for any tested system configuration.
We conclude in Section~\ref{sec:discussion} with a discussion of our results.

\section{The Fully Homogeneous Case}\label{sec:homo}

To gain some insight about the challenges in developing an access strategy, we start with a simplified
fully-homogeneous case. In this setting, the cost of accessing each data store is the same ($c=1$). The per
data store hit ratios and false positive ratios are uniform, i.e., for each $j$, $\Phitj = \Phit$ and $\fprj
= \fpr$, for some constants $\Phit, \fpr \in [0,1]$. Consequently, the per data store misindication ratios,
captured by Eq.~\ref{Eq:mrj1}, are also uniform, i.e., for each $j$, $\mr_j = \mr$ for some constant $\mr \in
[0,1]$. Recall that our objective is to pick a subset of data stores with positive indications, $D \subseteq
\SetSp$, so as to minimize the overall expected cost of a query, $\phi(D)=\sum_{j \in \mL} c_j + \beta
\prod_{j \in \mL} \mr_j$. In the fully-homogeneous case considered here, the expected cost reduces to
$\phi(D)=|D| + \beta \mr^{|D|}$, which merely depends on the {\em size} of the chosen set $D$ of data stores
to be accessed. The task of choosing which subset of data stores to access is reduced to deciding on the
number $0 \leq k \leq \NumSp$ of data stores one should access. For any such potential number $k$, we denote
the expected cost of accessing $k$ data stores by
\begin{equation}\label{eq:total_cost_homo}
\costhomo (k) = k + \beta \mr^k,
\end{equation}
and focus our attention on studying the cost $\costhomo(\cdot)$ incurred by different data store selection schemes.

The size of the selected subset is clearly upper-bounded by the number of positive indications, $\NumSp$. So
we start by calculating the distribution of $\NumSp$. Ideally, one can interpret each positive indication as
a result of an independent Bernoulli trial with success probability $\Pone$. By Eq.~\ref{Eq:\ind_eq_1},
$\Pone=\Phit + (1-\Phit)\fpr$. Hence, $\NumSp$ is binomially distributed such that
\begin{equation}\label{eq:dist_of_n_x}
\Pr (\NumSp =k) = \binom{\NumSs}{k} \Pone^k (1- \Pone)^{\NumSs - k}.
\end{equation}

Using equations~\ref{eq:total_cost_homo} and~\ref{eq:dist_of_n_x} we now derive the expected costs of several selection schemes, where we let $D_X$ denote the set of data stores selected by selection scheme $X$.  

The \epi\ policy accesses all the data stores with positive indications, and therefore its expected overall cost is
\begin{align}\label{eq:homo_cost_of_epi}
\phi(D_{\epi})
& = \sum_{k = 0}^{\NumSs} \left[ \Pr (\NumSp =k) \cdot \costhomo(k) \right] \notag \\
& = \sum_{k = 0}^{\NumSs} \left[ \Pr (\NumSp =k) \cdot k \right] + \beta \cdot \sum_{k = 0}^{\NumSs} \left[ \Pr (\NumSp =k) \cdot \mr^k \right] \notag \\
& = E[\NumSp] + \mbeta \cdot \PGF_{\NumSp}(\mr)  \\
& = n \cdot \Pone + \mbeta \left( 1 - \Pone + \Pone \cdot \mr \right)^{\NumSs}, \notag
\end{align}
where $\PGF_X(t)$ denotes the probability generating function for random variable $X$ at point $t$.

\cpi\ accesses either a single data store with a positive indication, if one exists, or no data store if there are no positive indications.
The expected overall cost of \cpi\ is therefore
\begin{align}\label{eq:homo_cost_of_cpi}
\begin{split}
\phi(D_{\cpi}) & =  \Pr (\NumSp = 0) \cdot \costhomo (0) +
\Pr (\NumSp > 0) \cdot \costhomo (1)  \\
& = (1 - \Pone)^{\NumSs} \mbeta +
\left[1 - (1 - \Pone)^{\NumSs}\right] (1 + \mbeta \mr).
\end{split}
\end{align}

We now turn to analyze the false-positive-aware optimal policy, \fpo, which minimizes the expected overall
cost, given the false positive ratio, $\fpr$. In the fully homogeneous case, this translates to finding
$\argmin_k \costhomo (k)$. Consider $\costhomo (y)$ defined in Eq.~\ref{eq:total_cost_homo} as a function
defined over the reals. This function is convex since its second derivatives is non-negative, and it obtains
its minimum at $y^* = -\ln (-\mbeta \ln (\mr)) / \ln (\mr)$ for $0 < \mr < 1$. In practice, the number of
data stores accessed must be an integer between $0$ and $\NumSp$. The optimal number $m^*(k)$ of data stores
to access given that there are $k$ positive indications satisfies $m^*(k) \in \set{0, k, \lfloor y^* \rfloor,
\lceil y^* \rceil}$, where $\lfloor y^* \rfloor$ and $\lceil y^* \rceil$ should be considered only if $y^*\in
[0, k]$. Hence, The expected overall cost of \fpo\ is
\begin{equation}\label{eq:homo_cost_of_opt} \phi(D_{\fpo}) = \sum_{k = 0}^{\NumSs} \left[ \binom{\NumSs}{k}
\Pone^k (1- \Pone)^{\NumSs - k} m^*(k) \right].
\end{equation}

Having studied the overall cost of the above policies, we may revisit Figure~\ref{fig:example}. The expected
costs of each of the policies are presented as a function of $\Phit$, using
Equations~\ref{eq:homo_cost_of_epi}-\ref{eq:homo_cost_of_opt}. In particular, in the special case where $\fpr
= 0$, the expected overall costs of \cpi, \fpo\ and the perfect indicators benchmark are identical. This fits
our intuition that when there are no false indications, the optimal policy is to access a single data store
among those with positive indications if such a data store exists. At the other extreme, we have the case
where $\fpr = 1$, in which we always have $\NumSp = \NumSs$, i.e., all the indicators are positive. This
extreme case renders the indicators useless and is thus equivalent to not having indicators at all. In
particular, note that depending on the values of $\NumSs$ and $\mbeta$, \epi\ might end up being worse than
not having any indicators at all.

In this section we addressed the fully homogeneous case, in which minimizing our objective function
$\phi(D)$ was made tractable due to the uniformity of the settings. However, many systems are
heterogeneous, making the minimization of $\phi(D)$ a much more challenging task.
In the following sections we describe several approximation algorithm for solving the \cacheprob\ problem in fully heterogeneous
settings and provide a rigorous analysis of their performance. In particular, we also study trade-offs
between the time complexity and the performance guarantees of our proposed solutions.

\section{A Potential-based Algorithm}\label{sec:pot}
In this section we describe our first approximation algorithm for solving the \cacheprob\ problem in fully heterogeneous
settings and provide a rigorous analysis of their performance. 

Recall that our goal is to select a subset $D \subseteq N_x$ of data stores with positive indications
minimizing the expected cost
$$\phi(D) = c_D + \beta \mr_D = \sum\nolimits_{i \in \mL} \mc_j + \beta \prod\nolimits_{j \in D} \mr_j,$$
as defined in Eq.~\ref{eq:min-independent}. This can be viewed as a combined bi-criteria optimization
problem, of minimizing two objectives simultaneously:
\begin{inparaenum}[(i)]
	\item $c_D$, which is monotone {\em non-decreasing} as we pick more data stores to include in $D$, and
	\item $\mr_D$, which is monotone {\em non-increasing} as we pick more data stores to include in $D$,
\end{inparaenum}
where the latter objective is ``regularized'' by $\beta$.

In the special case where the non-decreasing orderings of data stores by access costs and by misindication ratios are the same, a simple substitution argument shows that a greedy approach will yield an optimal solution $D$ which consists of a prefix of this ordering.

In what follows we generalize the above observation and suggest an algorithm for the general case based on
the special case described above. We denote by $L_k$ and $H_k$ the sum of the $k$ smallest access costs of
data stores in $\SetSp$ and the $k$ largest access costs of data stores in $\SetSp$, respectively. Our
algorithm, \pot, described in Algorithm~\ref{alg:potential_algorithm}, considers the data stores ordered in
non-decreasing order of miss-ratio, $\ell_1, \ldots, \ell_{\NumSp}$, such that $\mr_{\ell_j} \leq
\mr_{\ell_{j+1}}$ for all $j=1,\ldots,\NumSp-1$. The algorithm iterates over all prefixes of indices in this
order, and picks a subset of data stores corresponding to a prefix which minimizes the {\em potential
function} $P(k) = L_k + \beta \prod\nolimits_{j=1}^k \mr_{\ell_j}$.

\begin{algorithm}[t!]
	\caption{\pot($\SetSp$,$c$,$\mr$,$\beta$)}
	\label{alg:potential_algorithm}
	\begin{algorithmic}[1]
		\State $\ell_1, \ldots, \ell_{\NumSp} \gets \SetSp$ in non-decreasing order of $\mr_j$
		\For{$k=1,\ldots,\NumSp$} \label{alg:pot:for_start}
		\State $D_k \gets \set{\ell_1,\ldots,\ell_k}$
		\EndFor \label{alg:potential_alg:for_end}
		\State \Return $D = \arg\min_k \set{ P(k) = L_k + \beta \prod\nolimits_{j=1}^k \mr_{\ell_j}}$
	\end{algorithmic}
\end{algorithm}

We now turn to analyze the performance of our proposed algorithm \pot. In particular, we show the following
theorem:
\begin{theorem}
	\label{thm:papp}
Let $\Dset$ be an optimal set of data stores for the \cacheprob\ problem, and let $D$ be
the solution found by \pot. Then $\phi(D) \leq \frac{H_{\abs{D}}}{L_{\abs{D}}} \phi(\Dset)$.
\end{theorem}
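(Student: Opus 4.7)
My plan is to prove the theorem in three short steps, viewing $P(k)$ as a lower bound proxy for the true cost restricted to sets of size $k$, and $H_k$ as the matching upper bound proxy.

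First, I would show that the minimum of the potential never exceeds the optimal cost. Let $k^* = |\Dset|$. Since the algorithm sorts data stores so that $\mr_{\ell_1} \leq \mr_{\ell_2} \leq \cdots \leq \mr_{\ell_{\NumSp}}$ and since all misindication ratios lie in $[0,1]$, any set $\Dset$ of $k^*$ positively-indicating data stores satisfies $\prod_{j \in \Dset} \mr_j \geq \prod_{j=1}^{k^*} \mr_{\ell_j}$. Also, by the definition of $L_{k^*}$ as the sum of the $k^*$ smallest access costs in $\SetSp$, we have $\sum_{j \in \Dset} c_j \geq L_{k^*}$. Adding the two inequalities (the second multiplied by $\beta$) gives
\[
\phi(\Dset) \;\geq\; L_{k^*} + \beta \prod_{j=1}^{k^*} \mr_{\ell_j} \;=\; P(k^*).
\]
Since \pot\ minimizes $P$ over all prefix sizes, $P(|D|) \leq P(k^*) \leq \phi(\Dset)$.

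Second, I would upper bound the actual cost of the prefix $D$ returned by \pot. Because $D \subseteq \SetSp$ has exactly $|D|$ elements, its total access cost is at most $H_{|D|}$, and its misindication product equals $\prod_{j=1}^{|D|} \mr_{\ell_j}$ by construction. Hence
\[
\phi(D) \;\leq\; H_{|D|} + \beta \prod_{j=1}^{|D|} \mr_{\ell_j}.
\]

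Finally, I combine both bounds by observing the elementary inequality: if $a \geq b > 0$ and $c \geq 0$, then $\frac{a+c}{b+c} \leq \frac{a}{b}$. Setting $a = H_{|D|}$, $b = L_{|D|}$, and $c = \beta \prod_{j=1}^{|D|} \mr_{\ell_j}$ (where $H_{|D|} \geq L_{|D|}$ trivially), we obtain
\[
\frac{\phi(D)}{P(|D|)} \;\leq\; \frac{H_{|D|} + \beta \prod_{j=1}^{|D|} \mr_{\ell_j}}{L_{|D|} + \beta \prod_{j=1}^{|D|} \mr_{\ell_j}} \;\leq\; \frac{H_{|D|}}{L_{|D|}},
\]
which together with $P(|D|) \leq \phi(\Dset)$ yields the claimed bound. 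The only mildly subtle step is the first one --- identifying that \emph{both} components of $\phi$ can be individually lower-bounded at the optimum's cardinality by the corresponding components of $P(k^*)$ --- but this is immediate once one notes that sorting by $\mr_j$ and taking a $k^*$-prefix simultaneously minimizes the misindication product among all $k^*$-subsets of $\SetSp$, while $L_{k^*}$ is defined to independently minimize the access-cost sum.
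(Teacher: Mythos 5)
Your proof is correct and follows essentially the same route as the paper's: both arguments establish $\phi(D) \leq \frac{H_{|D|}}{L_{|D|}} P(|D|)$ (you via the elementary $\frac{a+c}{b+c} \leq \frac{a}{b}$ inequality, the paper by factoring out $\frac{H_k}{L_k}$ directly), show $P(k^*) \leq \phi(\Dset)$ using exactly your two component-wise lower bounds, and chain them through the minimality of $P$ at the returned prefix. The only difference is the order of presentation, which is immaterial.
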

\begin{proof}
	Let $k=\abs{D}$. We therefore have
	\begin{align}
	\label{eq:potential1}
	\phi(D)
	&= \sum\nolimits_{j = 1}^k c_{\ell_j} + \mbeta \prod\nolimits_{j = 1}^k \mr_{\ell_j}
	\leq H_k + \mbeta \prod\nolimits_{j = 1}^k \mr_{\ell_j} \notag \\
	&\leq \frac{H_k}{L_k} \left( L_k + \mbeta \prod\nolimits_{j = 1}^k \mr_{\ell_j} \right)
	= \frac{H_k}{L_k} P(k),
	\end{align}
	where the penultimate inequality follows from the definitions of $L_k$ and $H_k$, and the last equality follows from the definition of the potential function $P(k)$.
	Let $k^*=\abs{\Dset}$. Since data stores are ordered in non-decreasing order of misindication ratio, it follows that $\prod\nolimits_{j=1}^{k^*} \mr_{\ell_j} \leq \prod\nolimits_{j \in \Dset} \mr_j$, and by the definition of $L_{k^*}$ as the sum of the $k^*$ smallest access costs of data stores in $\SetSp$, it follows that
	\begin{align}
	\label{eq:potential2}
	P(k^*)
	&= L_{k^*} + \mbeta \prod\nolimits_{j = 1}^{k^*} \mr_{\ell_j} \\
	&\leq \sum\nolimits_{j \in \Dset} c_j + \mbeta \prod\nolimits_{j \in \Dset} \mr_j
	= \phi(\Dset). \notag
	\end{align}
	Since $D$ is chosen to be the set of data stores that minimizes $P(k)$, where $k$ is the length of the prefix $\SetSp$ considered in non-decreasing order of miss-ratio, we have $P(k) \leq P(k^*)$. Combining this with Eqs.~\ref{eq:potential1} and~\ref{eq:potential2}, the result follows.
\end{proof}

Since for every $k$ we have $\frac{H_k}{L_k} \leq \max_j\set{c_j}$ and the running time of \pot\ is
dominated by the time required to sort the data stores, we obtain the following corollary:
\begin{corollary}
	\label{cor:potential_c_max}
	\pot\ is a $(\max_j\set{c_j})$-approximation algorithm, running in time $O(\NumSp \log \NumSp)$.
\end{corollary}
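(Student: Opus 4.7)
The plan is to derive the corollary as an essentially immediate consequence of Theorem~\ref{thm:papp}, combined with two straightforward bounds on $H_k/L_k$ and on the running time of \pot. I would first address the approximation ratio. By Theorem~\ref{thm:papp}, it suffices to show that for every $k \in \set{1,\ldots,\NumSp}$ we have $H_k/L_k \leq \max_j c_j$. By the definition of $H_k$ as the sum of the $k$ largest access costs of data stores in $\SetSp$, we immediately get $H_k \leq k \cdot \max_{j \in \SetSp} c_j \leq k \cdot \max_j c_j$. For the denominator, I would invoke the normalization assumption made in Section~\ref{sec:model}, namely that $\min_j c_j = 1$, which yields $L_k \geq k$. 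Combining the two bounds gives $H_k/L_k \leq \max_j c_j$, and Theorem~\ref{thm:papp} then yields $\phi(D) \leq \max_j c_j \cdot \phi(\Dset)$ for the optimal offline solution $\Dset$, establishing the claimed approximation factor.

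Next I would handle the running-time bound. The only non-trivial cost in Algorithm~\ref{alg:potential_algorithm} is the initial sorting of the $\NumSp$ data stores in $\SetSp$ by their misindication ratios $\mr_j$, which takes $O(\NumSp \log \NumSp)$ time using any comparison-based sort. The for-loop of lines~\ref{alg:pot:for_start}--\ref{alg:potential_alg:for_end} iterates $\NumSp$ times, and each iteration can be executed in $O(1)$ amortized time by maintaining, across iterations, a running sum for $L_k$ and a running product for $\prod_{j=1}^{k}\mr_{\ell_j}$, updating them incrementally as $k$ grows and tracking the minimizing prefix. Thus the total time is dominated by the sort, giving $O(\NumSp \log \NumSp)$ overall.

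The main (mild) obstacle is being careful about the definition of $L_k$. In the statement preceding Theorem~\ref{thm:papp}, $L_k$ is the sum of the $k$ \emph{smallest} access costs in $\SetSp$, not the sum of the costs of the first $k$ data stores in the misindication-ratio ordering used by the algorithm. I would emphasize that this distinction does not affect the argument here: for the approximation-ratio bound we only use the inequality $L_k \geq k$ (which holds because each individual cost is at least $1$), and for the running time we only need to compute the potential $P(k) = L_k + \beta \prod_{j=1}^{k}\mr_{\ell_j}$, whose first summand depends solely on the sorted cost multiset and is easily computed in $O(\NumSp \log \NumSp)$ total time alongside the main sort. Putting the two bounds together yields the corollary.
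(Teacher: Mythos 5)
Your proposal is correct and follows the same route as the paper, which derives the corollary directly from Theorem~\ref{thm:papp} via the bound $H_k/L_k \leq \max_j\set{c_j}$ (using the normalization $\min_j c_j = 1$ so that $L_k \geq k$ and $H_k \leq k\cdot\max_j c_j$) and the observation that sorting dominates the running time. Your explicit handling of the distinction between $L_k$ as the sum of the $k$ smallest costs versus the costs of the algorithm's chosen prefix is a useful clarification, but it does not change the argument.
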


In particular, Corollary~\ref{cor:potential_c_max} implies that for the case where all accesses costs are
equal, \pot\ yields an optimal solution to the \cacheprob\ problem.

\section{A Knapsack-based Algorithmic Framework}
\label{sec:knapsack_algorithm}

In this section we develop an alternative algorithm for the \cacheprob\ problem and provide guarantees on
its performance. We begin by recalling that the main difficulty in solving the \cacheprob\ problem stems from
the fact that our objective function is composed of a {\em linear} component (the access cost) and a {\em
multiplicative} component (the miss cost). The algorithmic framework we propose in the sequel is based on
carefully linearizing the multiplicative component, and defining a collection of {\em knapsack problems} for
which their solution space contains a good approximate solution to the \cacheprob\ problem.

We associate each data store $j$ with its {\em log-hit weight}, defined by $w_j=-\log(\mr_j)$. We therefore
have for every subset of data stores $D \subseteq N$, $ - \log (\mr_D) = \sum\nolimits_{j \in D} w_j$.
Therefore, any set of data stores has a minimal miss cost if and only if it has a maximal log-hit weight. In
what follows we define a collection of Knapsack problems, where the Knapsack problem is defined as follows:
Given a budget $B$, and collection of items $U$, such that each item $j \in U$ has some profit $\pi_j$ and
cost $\gamma_j$, the goal is to find a subset of items $S \subseteq U$ such that $\sum\nolimits_{j \in S}
\gamma_j \leq B$ and $\sum\nolimits_{j \in S} \pi_j$ is maximized. We refer to such an instance as the
$(B,U,\pi,\gamma)$-Knapsack problem, and denote by $A_{\knap}(B,U,\pi,\gamma)$ the set of items produced as
output by an algorithm $A_{\knap}$ for the Knapsack problem. The Knapsack problem is known to be NP-hard, but
it can be solved exactly by dynamic programming in pseudo-polynomial time, and can be approximated to within
a $(1+\epsilon)$ factor in polynomial time by an FPTAS~\cite{williamson11design}.

We now turn to define our collection of knapsack problems, to be used by our algorithm for solving the
\cacheprob\ problem. We recall that given a query $x$, $\SetSp \subseteq \SetSs$ denotes the subset of data
stores for which their indicator is positive.
In the following we let $M=\min\set{\sum\nolimits_{j \in \SetSp} c_j, \beta}$. Clearly, $M$ is an upper bound
on the access cost of any optimal solution for the \cacheprob\ problem. For any $B \in \set{0,1,\ldots,M}$,
consider the $(B,\SetSp,w,c)$-Knapsack problem, i.e., the Knapsack problem with budget $B$ over a collection
of items $\SetSp$, such that each item $j \in \SetSp$ has profit $w_j$ (the log-hit weight of data store $j$)
and cost $c_j$ (the access cost of data store $j$).

Our algorithm named \ppapproxalg, formally defined in
Algorithm~\ref{alg:pseudo_poly_knapsack_based_algorithm}, makes use of a $(1+\epsilon)$-approximation
algorithm $A_{\knap}$ for the knapsack problem, for some $\epsilon\geq 0$. The complexity and performance
guarantee depends upon the value of $\epsilon$. \ppapproxalg\ essentially iterates over all possible values
for the access cost, and solves the associated Knapsack problem using the algorithm $A_{\knap}$ as a
subroutine for each such value. \ppapproxalg{} then selects the subset of data stores $\mL \subseteq \SetSp$
which minimizes $\phi(\mL)$ over all Knapsack solutions calculated by $A_{\knap}$ in all iterations.

\begin{algorithm}[t!]
	\caption{\ppapproxalg($\SetSp$,$c$,$\mr$,$\beta$, $(1+\epsilon)$-approximation algorithm $A_{\knap}$ for Knapsack)}
	\label{alg:pseudo_poly_knapsack_based_algorithm}
	\begin{algorithmic}[1]
		\State $w_j \gets -\log(\mr_j)$ for all $j \in \SetSp$
		\For{$B \in \set{0,1,\ldots,\min\set{\sum\nolimits_{j \in \SetSp} c_j, \beta }}$}
		 \label{alg:ppapprox:for_start}\Comment This algorithm assumes $c_j$ is an integer
		\State $D_B \gets A_{\knap}(B,\SetSp,w,c)$
		\EndFor \label{alg:ppapprox:for_end}
		\State \Return $D = \arg\min_B \set{\phi(D_B)}$
	\end{algorithmic}
\end{algorithm}

We first show that if $A_{\knap}$ finds an {\em optimal} solution to the Knapsack problem in each iteration,
then our algorithm finds an {\em optimal} solution to the \cacheprob\ problem. In terms of running time,
since the best exact algorithm for the Knapsack problem over $n$ items with budget $B$ runs in
pseudo-polynomial time of $O(nB)$~\cite{williamson11design}, our algorithm also runs in pseudo-polynomial
time. These properties are formalized in the following theorem:

\begin{theorem}
	\label{thm:pp_optimal_knapsack_alg_performance}
	When using the pseudo-polynomial algorithm $A_{\knap}$ which finds an optimal solution to the Knapsack problem over $n$ items with budget $B$ in time $O(nB)$, \ppapproxalg\ is a pseudo-polynomial algorithm that finds an optimal solution to the \cacheprob\ problem in time $O(\NumSp M^2)$.
\end{theorem}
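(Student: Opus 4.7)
The plan is to prove optimality by exhibiting a specific iteration of the loop in which the Knapsack subroutine is guaranteed to produce a subset whose DSS cost is at most that of an optimal solution, and then to bound the running time by summing the costs of all iterations.

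First, I would argue that an optimal solution $\Oset$ of the \cacheprob\ problem satisfies $c_{\Oset} \leq M$. Clearly $c_{\Oset} \leq \sum_{j \in N_x} c_j$, since $\Oset \subseteq N_x$. Moreover, if $c_{\Oset} > \beta$, then $\phi(\emptyset) = \beta < c_{\Oset} \leq \phi(\Oset)$, contradicting optimality. Since the pseudocode assumes integer access costs, $c_{\Oset}$ is itself an integer in $\set{0, 1, \ldots, M}$, so the loop on line~\ref{alg:ppapprox:for_start} includes the iteration $B^* = c_{\Oset}$.

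Second, I would exploit the key linearization: by the definition $w_j = -\log(\mr_j)$ and independence of misindications, for any $D \subseteq N_x$,
\[
-\log \mr_D = -\log \prod_{j \in D} \mr_j = \sum_{j \in D} w_j.
\]
Thus minimizing $\mr_D$ subject to a cap on $c_D$ is equivalent to maximizing $\sum_{j \in D} w_j$ subject to the same cap, i.e., exactly the $(B^*, \SetSp, w, c)$-Knapsack instance. Since $\Oset$ is feasible for this instance (its cost equals $B^*$), and $A_{\knap}$ returns an exact optimum $D_{B^*}$, we obtain $\sum_{j \in D_{B^*}} w_j \geq \sum_{j \in \Oset} w_j$, which rearranges to $\mr_{D_{B^*}} \leq \mr_{\Oset}$, while $c_{D_{B^*}} \leq B^* = c_{\Oset}$. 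Combining these two inequalities yields
\[
\phi(D_{B^*}) = c_{D_{B^*}} + \beta \mr_{D_{B^*}} \leq c_{\Oset} + \beta \mr_{\Oset} = \phi(\Oset).
\]
Since \ppapproxalg\ returns the minimizer of $\phi$ over all $D_B$, its output $D$ satisfies $\phi(D) \leq \phi(D_{B^*}) \leq \phi(\Oset)$, establishing optimality.

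For the running time, the preprocessing of weights is $O(\NumSp)$. The loop executes $M+1 = O(M)$ times, and each invocation of the exact Knapsack solver on $\NumSp$ items with budget at most $M$ takes $O(\NumSp \cdot M)$ time, after which computing $\phi(D_B)$ takes an additional $O(\NumSp)$. Summing over all iterations yields $O(\NumSp M^2)$, as claimed. The main (mild) obstacle is conceptual rather than technical: isolating the right iteration $B^* = c_{\Oset}$ and justifying that the logarithmic transform converts the multiplicative miss-cost component into a linear Knapsack profit without any loss; once this equivalence is in hand, both correctness and complexity follow directly.
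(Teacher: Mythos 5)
Your proof is correct and follows essentially the same route as the paper: it isolates the iteration $B^* = c_{\Oset}$, uses the $w_j = -\log(\mr_j)$ linearization to show the exact Knapsack solution at that budget minimizes $\mr_D$ subject to $c_D \leq B^*$ (hence dominates $\Oset$ in both components of $\phi$), and bounds the running time as $M$ iterations of an $O(\NumSp M)$ solver. Your explicit justification that $c_{\Oset} \leq \beta$ (via comparison with $\phi(\emptyset) = \beta$) is a detail the paper leaves implicit in the phrase ``by optimality,'' but it is the same argument.
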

\begin{proof}
We first show that \ppapproxalg, defined in Algorithm~\ref{alg:pseudo_poly_knapsack_based_algorithm}, finds
an optimal solution to the \cacheprob\ problem.
	Consider an optimal solution $\Dset \subseteq \SetSp$ for the \cacheprob\ problem, and let $B^*=c_{\Dset}$.
Since by optimality $B^* \leq M$, we are guaranteed that \ppapproxalg\ considers $B=B^*$ in one of the
iterations of the for-loop in lines~\ref{alg:ppapprox:for_start}-\ref{alg:ppapprox:for_end}. Let $D_B$ denote
the solution of the knapsack problem being solved in that iteration, where the knapsack budget is $B$.
	Since algorithm $A_{\knap}$ finds an optimal solution for the knapsack problem in this iteration
	$$
	D_B = \arg\max_{D \subseteq \SetSp | c_D \leq B} \set{\sum\nolimits_{j \in D} w_j}.
	$$
	By the definition of $w_j$ and the monotony of the $\log$ function, such a $D_B$ also satisfies
	\begin{equation}
	\label{eq:rho_minimizer_solution}
	D_B = \arg\min_{D \subseteq \SetSp | c_D \leq B} \set{\mr_D}.
	\end{equation}
Assume by contradiction that $D_B$ is not optimal for the \cacheprob\ problem, i.e., that
$\phi(D_B) = c_{D_B} + \beta \mr_{D_B} > c_{\Dset} + \beta \mr_{\Dset} = \phi(\Dset)$. Since $c_{\Dset} = B^* = B
\geq c_{D_B}$, it must follow that $\mr_{D_B} > \mr_{\Dset}$, for $c_{\Dset} \leq B$, which contradicts
Eq.~\ref{eq:rho_minimizer_solution}.
	
	\paragraph*{Running time} \ppapproxalg\ performs $M$ iterations, where in each iteration it solves a knapsack problem using an algorithm that runs in $O(\NumSp M)$ time.
	It follows that the running time of \ppapproxalg\, in this case, is $O(\NumSp M^2)$, as required.
\end{proof}

In many cases, the value of $M=\min\set{\sum\nolimits_{j \in \SetSp} c_j, \beta }$ is polynomially bounded by
$\NumSp$. The following is an immediate corollary of Theorem~\ref{thm:pp_optimal_knapsack_alg_performance} in
such cases:

\begin{corollary}
If $M=\min\set{\sum\nolimits_{j \in \SetSp} c_j, \beta }$ is polynomially bounded by $\NumSp$, then
\ppapproxalg\ solves the \cacheprob\ problem in polynomial time.
\end{corollary}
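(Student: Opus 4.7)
The plan is to obtain this corollary as an almost immediate consequence of Theorem~\ref{thm:pp_optimal_knapsack_alg_performance}. That theorem already establishes both correctness (\ppapproxalg\ returns an optimal solution to the \cacheprob\ problem) and a running-time bound of $O(\NumSp \cdot M^2)$, when \ppapproxalg\ is instantiated with the standard pseudo-polynomial exact Knapsack algorithm running in time $O(nB)$. Correctness is independent of the magnitude of $M$, so for the corollary there is nothing left to verify on that front.

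For the running time, I would argue as follows. By hypothesis, there exists a polynomial $p(\cdot)$ such that $M \leq p(\NumSp)$. Substituting this into the bound from Theorem~\ref{thm:pp_optimal_knapsack_alg_performance} yields
\begin{equation}
O(\NumSp \cdot M^2) \;\leq\; O\bigl(\NumSp \cdot p(\NumSp)^2\bigr),
\end{equation}
which is polynomial in $\NumSp$, and hence polynomial in the input size of the \cacheprob\ instance (which is at least $\NumSp$). This is the entire content of the corollary.

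There is essentially no obstacle: the only small point worth stating explicitly is that $\NumSp$ is itself a valid measure of input size for the selection problem (since the input must list the $\NumSp$ positive-indication datastores together with their access costs and misindication ratios), so ``polynomial in $\NumSp$'' coincides with ``polynomial in the input size''. I would therefore keep the proof to a couple of sentences: invoke Theorem~\ref{thm:pp_optimal_knapsack_alg_performance} for both the optimality of the returned solution and the $O(\NumSp \cdot M^2)$ time bound, then plug the assumed polynomial bound $M \leq p(\NumSp)$ into this expression to conclude polynomial running time.
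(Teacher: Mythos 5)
Your proposal is correct and matches the paper's intent exactly: the paper states this result as an immediate corollary of Theorem~\ref{thm:pp_optimal_knapsack_alg_performance} with no further argument, and your derivation — invoking that theorem for optimality and the $O(\NumSp M^2)$ time bound, then substituting $M \leq p(\NumSp)$ — is precisely the substitution the paper leaves implicit. Your added remark that $\NumSp$ is a valid measure of the input size is a harmless and reasonable clarification.
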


We now turn to study the tradeoff between the running time of \ppapproxalg\ and its performance guarantee,
when using a polynomial time approximation algorithm for Knapsack instead of the pseudo-polynomial time exact
algorithm. We first show in Theorem~\ref{thm:pp_approx_knapsack_alg_performance} how the approximation
guarantee of an algorithm for Knapsack translates to an approximation guarantee for the \cacheprob\ problem,
while still in pseudo-polynomial time.

\begin{theorem}
	\label{thm:pp_approx_knapsack_alg_performance} If there exists some constant $\delta$ such that $\mr_j
\leq \delta$ for all $j \in \SetSp$ and algorithm $A_{\knap}$ is a $(1+\epsilon)$-polynomial time
approximation algorithm for Knapsack running in time $O(f(\NumSp,\epsilon))$, then
\ppapproxalg\ is a
pseudo-polynomial algorithm that finds an $O(\mbeta^{\frac{\epsilon}{1+\epsilon}})$-approximate solution for
the \cacheprob\ problem in time $O(f(\NumSp,\epsilon)\cdot M)$.
\end{theorem}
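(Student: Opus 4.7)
The plan is to mirror the analysis of Theorem~\ref{thm:pp_optimal_knapsack_alg_performance}, but to carefully track how the $(1+\epsilon)$-slack in the knapsack subroutine propagates to the cost function $\phi$. Let $\Dset$ be an optimal solution for the \cacheprob\ problem and set $B^{*} = c_{\Dset}$. Since $B^{*} \leq M$, \ppapproxalg\ executes an iteration with $B = B^{*}$. In that iteration, $\Dset$ itself is a feasible knapsack solution whose profit $\sum_{j\in\Dset} w_j = -\log(\mr_{\Dset})$ lower-bounds the true knapsack optimum $W^{*}$. By the $(1+\epsilon)$-approximation guarantee of $A_{\knap}$, the returned set $D_B$ satisfies
\[
-\log(\mr_{D_B}) \;=\; \sum_{j\in D_B} w_j \;\geq\; \frac{W^{*}}{1+\epsilon} \;\geq\; \frac{-\log(\mr_{\Dset})}{1+\epsilon},
\]
which is equivalent to $\mr_{D_B} \leq \mr_{\Dset}^{1/(1+\epsilon)}$.

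I would then decompose $\phi(D_B) = c_{D_B} + \mbeta\mr_{D_B}$ and bound each term against $\phi(\Dset)$. The access-cost part is immediate: $c_{D_B}\leq B^{*}=c_{\Dset}\leq \phi(\Dset)$. For the miss-cost part, the key algebraic step is the geometric-mean split
\[
\mbeta\mr_{D_B} \;\leq\; \mbeta\,\mr_{\Dset}^{1/(1+\epsilon)} \;=\; \mbeta^{\epsilon/(1+\epsilon)} \cdot \left(\mbeta\mr_{\Dset}\right)^{1/(1+\epsilon)} \;\leq\; \mbeta^{\epsilon/(1+\epsilon)} \cdot \phi(\Dset)^{1/(1+\epsilon)}.
\]
Because $\phi(\Dset)\geq 1$ (if $\Dset\neq\emptyset$ then $c_{\Dset}\geq \min_j c_j = 1$; otherwise $\phi(\Dset)=\mbeta\geq 1$), the exponent $1/(1+\epsilon)<1$ gives $\phi(\Dset)^{1/(1+\epsilon)} \leq \phi(\Dset)$. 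Combining yields $\phi(D_B) \leq \bigl(1+\mbeta^{\epsilon/(1+\epsilon)}\bigr)\phi(\Dset) = O\bigl(\mbeta^{\epsilon/(1+\epsilon)}\bigr)\phi(\Dset)$, and since \ppapproxalg\ returns the best candidate across all iterations, its output is no worse.

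The running-time claim is routine: the outer loop runs $M+1$ times, each iteration invokes $A_{\knap}$ at cost $O(f(\NumSp,\epsilon))$, and the final comparison is linear in $\NumSp$, giving $O(M\cdot f(\NumSp,\epsilon))$. The role of the hypothesis $\mr_j \leq \delta<1$ is to ensure every knapsack profit $w_j = -\log(\mr_j) \geq -\log\delta > 0$ is bounded away from zero, so the instance is nondegenerate and the black-box $(1+\epsilon)$-approximation guarantee is meaningful (in particular, $D_B$ cannot be a spurious empty set returned because $W^{*}=0$).

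The main obstacle I anticipate is justifying that the multiplicative distortion $\mr_{\Dset}^{1/(1+\epsilon)}$ on the misindication ratio translates cleanly into the advertised factor $O(\mbeta^{\epsilon/(1+\epsilon)})$ on the overall cost. The geometric interpolation identity above is the linchpin, and it relies on the slightly delicate observation that $\phi(\Dset)^{1/(1+\epsilon)}\leq \phi(\Dset)$; if this normalization were absent one would instead have to case-split on whether $c_{\Dset}$ or $\mbeta\mr_{\Dset}$ dominates $\phi(\Dset)$, leading to the same bound but through a noticeably less elegant argument.
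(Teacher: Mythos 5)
Your proof is correct, and it takes a genuinely different route from the paper's. Both arguments start identically: in the iteration with $B=B^*=c_{\Dset}$, the $(1+\epsilon)$-guarantee on the log-hit weights yields $\mr_{D_B}\leq \mr_{\Dset}^{1/(1+\epsilon)}$, and the access cost is trivially bounded by $B^*\leq\phi(\Dset)$. They diverge in how the exponent slack is converted into a factor of $O(\mbeta^{\epsilon/(1+\epsilon)})$. The paper discretizes: it picks an integer $\ell$ with $2^{-(\ell+1)}\leq\mr_{\Dset}\leq 2^{-\ell}$, invokes the hypothesis $\mr_j\leq\delta$ to argue $\ell=O(\log\mbeta)$ (else $\opt$ would gain nothing from its last accesses), and then bounds the \emph{ratio} $\mr_{D_B}/\mr_{\Dset}\leq 2^{1+\epsilon\ell/(1+\epsilon)}=O(\mbeta^{\epsilon/(1+\epsilon)})$ term by term. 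You instead stay continuous, factor $\mbeta\,\mr_{\Dset}^{1/(1+\epsilon)}=\mbeta^{\epsilon/(1+\epsilon)}(\mbeta\,\mr_{\Dset})^{1/(1+\epsilon)}$, and absorb $(\mbeta\,\mr_{\Dset})^{1/(1+\epsilon)}\leq\phi(\Dset)^{1/(1+\epsilon)}\leq\phi(\Dset)$ using the normalization $\min_j c_j=1$ (so $\phi(\Dset)\geq 1$). Your route is shorter, avoids the paper's somewhat informal ``$\opt$ would not benefit from accessing more data stores'' step, and—notably—never actually uses the hypothesis $\mr_j\leq\delta$, so it establishes a slightly stronger statement; your closing remark that $\delta$ is needed for nondegeneracy of the knapsack instance is not quite right either (the weights $w_j=-\log\mr_j$ are nonnegative whenever $\mr_j\leq 1$, and a $(1+\epsilon)$-approximation is meaningful even with zero-profit items). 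The one thing the paper's version buys is an explicit multiplicative bound on $\mr_{D_B}$ versus $\mr_{\Dset}$, but since the downstream use in Theorem~\ref{thm:fp_approx_knapsack_alg_performance} only needs the final $O(\mbeta^{\epsilon/(1+\epsilon)})$ bound on $\phi$, nothing is lost by your argument.
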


\begin{proof}
First, note that by its definition, the running time of \ppapproxalg\ is as required since it makes $M$
iterations, and in every iteration solves an instance of Knapsack in time $O(f(\NumSp,\epsilon))$. It remains
to bound the approximation ratio of \ppapproxalg.

Consider an optimal solution $\Dset \subseteq \SetSp$ to the \cacheprob\ problem, and let $B^*=c_{\Dset}$ and
$\ell$ be an integer such that
	\begin{equation}\label{eq:ell_upper_lower_bound}
	2^{-(\ell+1)} \leq \mr_{\Dset} \leq 2^{-\ell}.
	\end{equation}
	By our assumption there exists some constant $\delta$ such that for all $j \in \SetSp$ we have $\mr_j
\leq \delta$. We are therefore guaranteed to have $\ell = O(\log \beta)$, since for $\ell > \log_{1/\delta}
\beta$ we have $\mr_{\Dset} \beta < 1$, in which case the optimal solution would not benefit from accessing
more data stores than it currently does.
	By the definition of the log-hit weight, we therefore have $\ell \leq \sum\nolimits_{j \in \Dset} w_j \leq
\ell + 1$.

Consider the iteration of \ppapproxalg\ where $B=B^*$, and let $D_B$ denote the solution obtained by
algorithm $A_{\knap}$ for solving the Knapsack problem in this iteration.
	Since $A_{\knap}$ is a $(1+\epsilon)$-approximation algorithm we are guaranteed to have
	$
	\sum\nolimits_{j \in D_B} w_j
	\geq \frac{1}{1+\epsilon} \sum\nolimits_{j \in \Dset} w_j
	$
	since $\Dset$ is an optimal solution with an access cost of $B^*$, and therefore maximizes the objective function in the Knapsack problem being solved in this iteration.
	It follows that
	\begin{align}
	\prod\nolimits_{j \in D_B} \mr_j
	&\leq \prod\nolimits_{j \in \Dset} \mr_j^{\frac{1}{1+\epsilon}}
	\leq 2^{\frac{-\ell}{1+\epsilon}} \\
	&= 2^{-\ell + \frac{\epsilon \ell}{1+\epsilon}}
	= 2^{-(\ell+1) + (1 + \frac{\epsilon \ell}{1+\epsilon})} \notag \\
	&\leq 2^{1 + \frac{\epsilon \ell}{1+\epsilon}} \prod\nolimits_{j \in \Dset} \mr_j
	\leq O \left( \beta^{\frac{\epsilon}{1+\epsilon}}\right ) \prod\nolimits_{j \in \Dset} \mr_j, \notag
	\end{align}
	where the first inequality follows from our Knapsack approximation guarantee, the following two inequalities follow from Eq.~\ref{eq:ell_upper_lower_bound}, and the last inequality follows from the fact that $\ell = O(\log \mbeta)$.
	For $B=B^*$ we are guaranteed to have $\sum\nolimits_{j \in D_B} c_j \leq B^*$.
	Hence,
	\begin{align}
	\begin{split}
	\phi(D_B)
	&= \sum\nolimits_{j \in D_B} c_j + \beta \prod\nolimits_{j \in D_B} \mr_j \\
	&\leq B^* + O \left( \beta^{\frac{\epsilon}{1+\epsilon}}\right ) \left( \beta \prod\nolimits_{j \in \Dset} \mr_j \right) \\
	&= \sum\nolimits_{j \in \Dset} c_j + O \left( \beta^{\frac{\epsilon}{1+\epsilon}}\right ) \left( \beta \prod\nolimits_{j \in \Dset} \mr_j \right) \\
	&\leq O \left( \beta^{\frac{\epsilon}{1+\epsilon}}\right ) \left( \sum\nolimits_{j \in \Dset} c_j + \beta \prod\nolimits_{j \in \Dset} \mr_j \right) \\
	&= O \left( \beta^{\frac{\epsilon}{1+\epsilon}}\right ) \phi(\Dset)
	\end{split}
	\end{align}
	which completes the proof.
\end{proof}

In what follows, we present a polynomial-time approximation algorithm, \fpapproxalg\, for the problem,
formally defined in Algorithm~\ref{alg:fully_poly_knapsack_based_algorithm}. The algorithm is based on
\ppapproxalg\ but avoids the need to iterate over all possible budgets. In particular, \fpapproxalg\ does not
make use of a general $(1+\epsilon)$-approximation algorithm for solving the Knapsack problem. Instead, \fpapproxalg\ incorporates within its design the specifics of a 2-approximation algorithm for the Knapsack problem, the details of which are presented and discussed in the proof of
Theorem~\ref{thm:fp_approx_knapsack_alg_performance}.

\begin{algorithm}[t!]
	\caption{\fpapproxalg($\SetSp$,$c$,$\mr$,$\beta$)}
	\label{alg:fully_poly_knapsack_based_algorithm}
	\begin{algorithmic}[1]
		\State $w_j \gets -\log(\mr_j)$ for all $j \in \SetSp$
		\For{$u \in \set{c_j | j \in \SetSp}$} \label{alg:prune_start}
		\State $\SetSp^u \gets \set{j \in \SetSp | c_j \leq u}$, let $\NumSp^u=\abs{\SetSp^u}$ \label{alg:prune_mid}
		\State $k_1,\ldots,k_{\NumSp^u} \gets \SetSp^u$ in non-increasing order of $w_j/c_j$
		\For{all $1 \leq t \leq \NumSp^u$} \label{alg:fpapprox:for_start}
		\State $D_t^u \gets \set{k_1,\ldots,k_t}$
		\State $\tilde{D}_t^u \gets \set{k_t}$
		\EndFor \label{alg:fpapprox:for_end}
		\EndFor \label{alg:prune_end}
		\State \Return $D = \arg\min_{D \in \set{D_t^u}\cup\set{\tilde{D}_t^u}\cup\set{\emptyset}} \set{\phi(D)}$
	\end{algorithmic}
\end{algorithm}

\begin{theorem}
	\label{thm:fp_approx_knapsack_alg_performance}
If there exists some constant $\delta$ such that $\mr_j
\leq \delta$ for all $j \in \SetSp$, then Algorithm \fpapproxalg\ is a polynomial
$O(\sqrt{\mbeta})$-approximation algorithm running in time $O(\NumSp^2 \log \NumSp)$.
\end{theorem}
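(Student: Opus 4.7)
The plan is to view \fpapproxalg\ as implicitly executing the classical greedy $2$-approximation for Knapsack on a carefully chosen sub-instance, and then to lift this guarantee to an $O(\sqrt{\mbeta})$-approximation for \cacheprob\ via the same log-hit-weight-to-miss-cost translation used in Theorem~\ref{thm:pp_approx_knapsack_alg_performance} with $\epsilon = 1$. The key structural difference from \ppapproxalg\ is that the right budget is discovered implicitly by sweeping per-item-cost thresholds $u$, so only $\NumSp$ outer iterations are required instead of one per integer budget.

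First I would fix an optimal solution $\Dset$, set $u^* = \max_{j \in \Dset} c_j$ and $B^* = c_{\Dset}$, and focus on the iteration of the outer loop (lines~\ref{alg:prune_start}--\ref{alg:prune_end}) with $u = u^*$; this iteration is considered since $u^* \in \set{c_j | j \in \SetSp}$, and $\Dset \subseteq \SetSp^{u^*}$. At this iteration \fpapproxalg\ sorts $\SetSp^{u^*}$ by non-increasing $w_j/c_j$ and enumerates all prefixes $D_t^{u^*}$ together with the singletons $\tilde{D}_t^{u^*} = \set{k_t}$. I then apply the textbook Knapsack greedy analysis to the instance with items $\SetSp^{u^*}$, profits $w_j$, costs $c_j$, and budget $B^*$: letting $t^*$ be the largest prefix length with $\sum_{i \leq t^*} c_{k_i} \leq B^*$, either $D_{t^*}^{u^*}$ collects at least half of $\sum_{j \in \Dset} w_j$, or the single item $k_{t^*+1}$ does. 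In the prefix case $c_{D_{t^*}^{u^*}} \leq B^* = c_{\Dset}$ by construction; in the singleton case $c_{k_{t^*+1}} \leq u^* \leq c_{\Dset}$ since $k_{t^*+1} \in \SetSp^{u^*}$. Hence some candidate $D$ enumerated by the algorithm satisfies both $c_D \leq c_{\Dset}$ and $\sum_{j \in D} w_j \geq \frac{1}{2} \sum_{j \in \Dset} w_j$.

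Next I translate the log-weight guarantee into a miss-cost bound, mirroring Theorem~\ref{thm:pp_approx_knapsack_alg_performance}. Choosing the integer $\ell$ with $2^{-(\ell+1)} \leq \mr_{\Dset} \leq 2^{-\ell}$, the hypothesis $\mr_j \leq \delta$ for a constant $\delta < 1$ together with the optimality of $\Dset$ forces $\ell = O(\log \mbeta)$. Since $\sum_{j \in \Dset} w_j \geq \ell$, we get $\sum_{j \in D} w_j \geq \ell/2$, hence $\prod_{j \in D} \mr_j \leq 2^{-\ell/2} \leq 2^{\ell/2 + 1} \prod_{j \in \Dset} \mr_j = O(\sqrt{\mbeta}) \prod_{j \in \Dset} \mr_j$. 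Combining with $c_D \leq c_{\Dset}$ yields $\phi(D) \leq c_{\Dset} + O(\sqrt{\mbeta}) \cdot \mbeta \prod_{j \in \Dset} \mr_j \leq O(\sqrt{\mbeta}) \phi(\Dset)$, and the same bound transfers to the algorithm's output since \fpapproxalg\ returns the best candidate over all iterations.

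The running time is immediate from inspection: the outer loop has $\NumSp$ iterations, each sorting $\SetSp^u$ in $O(\NumSp \log \NumSp)$ and making two linear passes over the sorted sequence, for a total of $O(\NumSp^2 \log \NumSp)$. I expect the main difficulty to be justifying that sweeping over per-item cost thresholds $u$ and pairing each greedy prefix with the matching singleton $\set{k_t}$ is enough to emulate the classical greedy-or-biggest-item $2$-approximation for Knapsack at the (a priori unknown) correct budget $B^*$; once this structural point is clear, the rest is a routine substitution of $\epsilon = 1$ into the analysis used in Theorem~\ref{thm:pp_approx_knapsack_alg_performance}.
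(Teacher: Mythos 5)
Your proposal is correct and follows essentially the same route as the paper's proof: sweep the per-item cost threshold $u$ to emulate the pruning step of the greedy $2$-approximation for Knapsack at the unknown optimal budget $B^*$, observe that the enumerated prefixes $D_t^u$ and singletons $\tilde{D}_t^u$ contain the two greedy candidates, and then substitute $\epsilon=1$ into the analysis of Theorem~\ref{thm:pp_approx_knapsack_alg_performance}; the running-time accounting is also identical. The only (harmless) deviation is that you anchor the relevant iteration at $u^*=\max_{j\in\Dset}c_j$ rather than at the $u$ for which $\SetSp^u=\set{j\mid c_j\leq B^*}$ as the paper does, and you spell out more explicitly than the paper the ``prefix-or-singleton gets half the weight'' step together with the cost bounds $c_{D_{t^*}^{u^*}}\leq B^*$ and $c_{k_{t^*+1}}\leq u^*\leq c_{\Dset}$.
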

\begin{proof}

The algorithm is based on the 2-approximation algorithm for Knapsack~\cite{williamson11design}, which works
as follows: given budget $B$, prune all elements with a cost greater than $B$. Order all elements in
non-increasing order of their profitability, captured by their profit-to-cost ratio. Greedily add elements to
the solution, starting from the most profitable one, as long as their overall cost does not exceed the given
budget. Once adding an element causes a violation of the budget constraint, pick the best out of
two candidate solutions: the set of elements accumulated which satisfy the budget constraint, and the first
element that caused the violation of the constraint.\footnote{Most common implementations consider the
element with maximum profit instead of the first element causing the violation of the budget constraint.
However, such an amended choice has no effect on the analysis of the algorithm's performance.}

The remainder of the proof draws its intuition from the proof of
Theorem~\ref{thm:pp_approx_knapsack_alg_performance}, combined with the properties of the 2-approximation
algorithm for Knapsack.
	
Given some budget constraint $B$ on the access cost of a solution, consider the 2-approximation algorithm
for knapsack when given $B$ as its budget constraint.

The algorithm first prunes all elements with cost greater than the budget. In particular, there exists some
element $j$ such that $c_j$ is the maximal cost of an element not violating the budget. \fpapproxalg\
simulates the same pruning by iterating over all potential values for this maximal cost, and maintaining
only the data stores with cost not exceeding this maximal cost
(lines~\ref{alg:prune_start}-\ref{alg:prune_mid}). It follows that there is a $u \in \set{c_j | j \in
\SetSp}$ for which
	\begin{equation}
	\label{eq:good_N_x_u_exists}
	\SetSp^u = \set{j \in \SetSp | c_j \leq B}.
	\end{equation}
	
Now that the knapsack approximation algorithm only considers items with cost not violating the budget $B$, it
orders the items in non-increasing order of $w_j/c_j$, and scans the items in this order, starting from the
most profitable, until reaching the first item in this order, $k_{t_B}$, such that $\sum\nolimits_{j=1}^{t_B}
c_{k_j} \leq B$, but $\sum\nolimits_{j=1}^{t_B + 1} c_{k_j} > B$. The algorithm then picks the best between
two possible candidate solutions: the set $\set{1,\ldots,k_{t_B}}$, and the set $\set{k_{t_B + 1}}$.

Our algorithm iterates over {\em all} potential candidates of this form, namely, all sets of data stores
$\set{1,\ldots,k_t}$, and all sets of data stores $\set{k_t}$. 	Consider an optimal solution $\Dset \subseteq
\SetSp$ to the \cacheprob\ problem, and denote by $B^*$ the access cost contributing to the overall cost of
$\Dset$. Consider the iteration of \fpapproxalg\ where $\SetSp^u = \set{j \in \SetSp | c_j \leq B^*}$ (as shown
in the argument leading to Eq.~\ref{eq:good_N_x_u_exists} such a cost $u$ necessarily exists). 	

Consider the items in $\SetSp^u$ ordered in non-increasing order of $w_j/c_j$, and let $t_{B^*}$ be the first
item in the order for which $\sum\nolimits_{j=1}^{t_{B^*}} c_{k_j} \leq B^*$, but
$\sum\nolimits_{j=1}^{t_{B^*} + 1} c_{k_j} > B^*$. The algorithm will choose either
$\set{1,\ldots,k_{t_{B^*}}}$, which is candidate $D_t^u$ in the iteration where $t=t_{B^*}$ of
lines~\ref{alg:fpapprox:for_start}-\ref{alg:ppapprox:for_end}; or it will choose $\set{t_{B^*} + 1}$, which
is candidate $\tilde{D}_t^u$ in the iteration where $t=t_{B^*}+1$ of
lines~\ref{alg:fpapprox:for_start}-\ref{alg:fpapprox:for_end}. By the proof of
Theorem~\ref{thm:pp_approx_knapsack_alg_performance}, the best of these two candidate solutions is an
$O(\mbeta^{\frac{\epsilon}{1+\epsilon}})=O(\sqrt{\beta})$ approximate solution for the \cacheprob\ problem,
since we are using a 2-approximation algorithm for knapsack, implying $\epsilon=1$. 	

Since \fpapproxalg\ picks the candidate solution with the minimal overall cost, the solution returned by the
algorithm is itself an $O(\sqrt{\mbeta})$-approximate solution for the \cacheprob\ problem. The running time
of the algorithm is dominated by the outer for-loop in lines~\ref{alg:prune_start}-\ref{alg:prune_end} which
has $\NumSp$ iterations, where in each iteration we order all elements in $\SetSp^u$, which takes $O(\NumSp
\log \NumSp)$ time.
	Hence, the overall running time of the algorithm is $O(\NumSp^2 \log \NumSp)$, which completes the proof.
\end{proof}

\section{A Partition-and-Merge Algorithmic Framework}\label{sec:pgm}
In this section we develop an alternative algorithm for the \cacheprob\ problem and provide guarantees on its performance. 
We first provide a high-level description of the algorithm, and then turn to a detailed description and analysis of its approximation ratio and run-time.
\subsection{High-level Description of the Algorithm}
As its name indicates, our algorithm, Partition, Generate and Merge (\pgmalg), is built upon three fundamental operations:
\begin{inparaenum}[(i)]
\item {\em Partition} the set of datastores with positive indications $\SetSp$ into disjoint fractions, based on a logarithmic scaling of the access costs.
\item {\em Generate} from each of the fractions candidate sets of datastores with minimal miss ratio.
\item {\em Merge} the candidate sets iteratively, until obtaining a full solution for the \cacheprob\ problem.
\end{inparaenum} 

Figure~\ref{fig:pgm_run_example:init} depicts the partition and generate stage of \pgmalg.
In the {\em partition} stage (Fig.~\ref{fig:pgm_run_example:init}), \pgmalg\ partitions 
$\SetSp$ into $\log \mbeta$ disjoint sets: $\piset{0}{0}, \piset{0}{1}, \dots, \piset{0}{\logbeta-1}$ (we denote $\logbeta = \log (\mbeta)$), where partition $j$ contains all the datastores with positive indications whose access costs fall in the range $[2^j, 2^{j+1})$. Each concrete subset of the set of datastores in some partition is a part of a full solution for the \cacheprob\ problem. In other words, one can compose a full solution for the \cacheprob\ problem by taking the union of $\log \mbeta$ sub-solutions, where each sub-solution is taken exclusively from one of the disjoint sets $\piset{0}{0}, \dots, \piset{0}{\logbeta-1}$.

In the \emph{generate} stage \pgmalg\ sorts the datastores within each partition in a non-decreasing order of the miss ratio, and considers all possible prefixes as its initial candidate sub-solutions. 
We will later show that 
this implementation of the partition and generate stages guarantees that for each partition, \pgmalg\ considers a sub-solution with at most twice the access cost, and at most the same miss ratio, of those obtained by the respective sub-solution of \opt.

Fig.~\ref{fig:pgm_run_example:mrg_no_zoom} depicts a high-level overview of the \emph{merge} stage. In this stage, \pgmalg\ iteratively merges candidate sub-solutions by means of a binary tree, whose leaves are the initial candidate sub-solutions produced by the partition and generate stages. In the final merge step the algorithm obtains a list of full solutions for the \cacheprob\ problem, from which it selects the one minimizing our objective function $\phi(*)$. 

In what follows we present some preliminaries, and then use them to describe \pgmalg\ in details.

\subsection{Preliminaries}\label{sec:pgm_preliminaries}
Denote $\logbeta = \log \mbeta$.
We iteratively partition $\SetSp$ to subsets as follows. All partitions are based on the access cost.
Initially, in level $\mrgline = 0$, we use a simple logarithmic-scaled partitioning, namely $\piset{0}{j} = \set{d \in \SetSp | 2^{j} \leq \mc_d < 2^{j+1}}$, where $j = 0, \dots, \logbeta-1$.
In levels $\mrgline = 1, \dots, \log \logbeta$, each higher-level subset is the union of two adjacent subsets in the lower level, namely $\piset{\mrgline}{j} = \piset{\mrgline-1}{2j} \cup \piset{\mrgline-1}{2j+1} $, where $j = 0, \dots, \frac{\logbeta}{2^{\mrgline}}-1$.

For each $\mrgline = 0, \dots, \log \logbeta$ and $j = 0, \dots, \frac{\logbeta}{2^{\mrgline}}-1$, let $\OOset{\mrgline}{j}$ denote the subset of data stores which an optimal solution $\Oset$ selects out of $\piset{\mrgline}{j}$. That is, $\OOset{\mrgline}{j} = \Oset \cap \piset{\mrgline}{j}$. By the definition of $\piset{\mrgline}{j}$, it follows that $\OOset{\mrgline}{j} = \OOset{\mrgline-1}{2j} \cup \OOset{\mrgline-1}{2j+1}$ for $\mrgline=1,\ldots,\log \logbeta$.

\pgmalg\ organizes candidate subsets of data stores as follows.
For each level $\mrgline = 0, \dots, \log \logbeta$ and $j = 0, \dots, \frac{\logbeta}{2^{\mrgline}}-1$,
$\mvec{\mrgline}{j}$ is a collection of candidate subsets of $\piset{\mrgline}{j}$. 

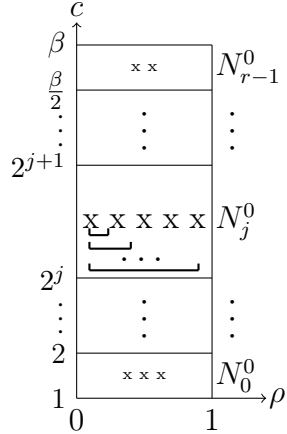
\begin{figure}[h!]
    \centering

    \newcommand {\yone}         {0.2}  
    \newcommand {\ysmall}       {0.6}  
    \newcommand {\ylarge}       {1.5}  
    \newcommand {\ythreedots}   {1.0} 
    
    \newcommand {\yonetwo}      {\yone +         0.5*\ysmall}
    \newcommand {\ytwo}         {\yone +             \ysmall} 
    \newcommand {\ytwoj}        {\ytwo +     0.5*\ythreedots}
    \newcommand {\yj}           {\ytwo +         \ythreedots}
    \newcommand {\yjjpp}        {\yj +           0.5*\ylarge}
    \newcommand {\yjpp}         {\yj +               \ylarge} 
    \newcommand {\yjpphalfbeta} {\yjpp +     0.5*\ythreedots} 
    \newcommand {\yhalfbeta}    {\yjpp +         \ythreedots}
    \newcommand {\yhalfbetabeta}{\yhalfbeta +    0.5*\ysmall}
    \newcommand {\ybeta}        {\yhalfbeta +        \ysmall}

    \newcommand {\xzero}        {0.2}                   
    \newcommand {\xwidth}       {1.8}                   
    \newcommand {\xzeroone}     {\xzero + 0.5*\xwidth}  
    \newcommand {\xone}         {\xzero +     \xwidth}  
    
        \begin{tikzpicture}
    
        \draw (\xzero, \yone) -- (\xzero, \ybeta);
        \draw (\xone,  \yone) --  (\xone, \ybeta);
        \draw [->] (\xzero, \ybeta) -- (\xzero, \ybeta + 0.5*\ysmall); 
        \draw (\xzero, \ybeta + 0.8*\ysmall) node {$\mc$};
        
        \draw (\xzero, \yone)       node[left] {\small{1}} node[below] {0}  -- (\xone, \yone);
        \draw (\xzero, \ytwo)       node[left] {\small{2}}                  -- (\xone, \ytwo);
        \draw (\xzero, \yj)         node[left] {\small{$2^j$}}              -- (\xone, \yj);
        \draw (\xzero, \yjpp)       node[left] {\small{$2^{j+1}$}}          -- (\xone, \yjpp);
        \draw (\xzero, \yhalfbeta)  node[left] {\small{$\frac{\mbeta}{2}$}} -- (\xone, \yhalfbeta);
        \draw (\xzero,  \ybeta)     node[left] {\small{$\mbeta$}}           -- (\xone, \ybeta);
       \draw [->] (\xone, \yone) -- (\xone + 0.4*\xwidth, \yone);
        \draw (\xone + 0.48*\xwidth, \yone) node {$\mr$};
        
        \draw (\xone, \yone)              node[below] {1};
        
        \draw (\xone-0.1, \yonetwo)        node[right] {$\piset{0}{0}$};
        \draw (\xone-0.1, \yjjpp)          node[right] {$\piset{0}{j}$};
        \draw (\xone-0.1, \yhalfbetabeta)  node[right] {$\piset{0}{r-1}$};
        \draw (\xone+0.1, \ytwoj)          node[right] {\large{\rotatebox{90}{$\dots$}}};
        \draw (\xone+0.1, \yjpphalfbeta)   node[right] {\large{\rotatebox{90}{$\dots$}}};
        \draw (\xzero-0.05, \ytwoj)        node[left]  {\rotatebox{90}{$\dots$}};
        \draw (\xzero-0.05, \yjpphalfbeta) node[left]  {\rotatebox{90}{$\dots$}};

        \draw (\xzeroone, \yonetwo)         node {\tiny{x x x}}; 
        \draw (\xzeroone, \ytwoj)           node {\large{\rotatebox{90}{$\dots$}}};
        \draw (\xzeroone, \yjjpp)           node {x x x x x};
        \draw (\xzeroone, \yjpphalfbeta)    node {\large{\rotatebox{90}{$\dots$}}};
        \draw (\xzeroone, \yhalfbetabeta)   node {\tiny{x x}};
        
        \newcommand {\ygap}        {0.18};
        \newcommand {\yselect}     {0.09};
        \newcommand {\xprefixleft} {\xzero + 0.17}
        \newcommand {\linew}       {0.8}

        \draw [line width=\linew] (\xprefixleft, \yjjpp -\ygap)        -- (\xprefixleft + 0.25, \yjjpp - \ygap);
        \draw [line width=\linew] (\xprefixleft, \yjjpp -\ygap)        -- (\xprefixleft,        \yjjpp -\ygap + \yselect);
        \draw [line width=\linew] (\xprefixleft + 0.25, \yjjpp -\ygap) -- (\xprefixleft + 0.25, \yjjpp -\ygap + \yselect);

        \draw [line width=\linew] (\xprefixleft,        \yjjpp -2*\ygap) -- (\xprefixleft + 0.55, \yjjpp - 2*\ygap);
        \draw [line width=\linew] (\xprefixleft,        \yjjpp -2*\ygap) -- (\xprefixleft,        \yjjpp - 2*\ygap + \yselect);
        \draw [line width=\linew] (\xprefixleft + 0.55, \yjjpp -2*\ygap) -- (\xprefixleft + 0.55, \yjjpp - 2*\ygap + \yselect);

        \draw [line width=\linew] (\xprefixleft,        \yjjpp -3.6*\ygap) -- (\xprefixleft + 1.45, \yjjpp - 3.6*\ygap);
        \draw [line width=\linew] (\xprefixleft,        \yjjpp -3.6*\ygap) -- (\xprefixleft,        \yjjpp - 3.6*\ygap + \yselect);
        \draw [line width=\linew] (\xprefixleft + 1.45, \yjjpp -3.6*\ygap) -- (\xprefixleft + 1.45, \yjjpp - 3.6*\ygap + \yselect);
        
        \draw (\xzeroone, \yjjpp -2.8*\ygap) node {\large{$\dots$}};
        
        \end{tikzpicture}
	\caption[\pgmalg's partition and generate steps.]{\pgmalg's partition and generate steps. X represents a datastore.}\label{fig:pgm_run_example:init}
\end{figure}

As a datastore with zero hit ratio is useless, \pgmalg\ assumes that for each $j=1, \dots \NumSp$, $\Phitj > 0$. Further, by Eq.~\ref{Eq:\ind_eq_1} $\mr_j < 1$.

\subsection{The \pgmalg\ Algorithm}\label{sec:pgm_details}

\begin{algorithm}[t!]
\caption{$\pmr \left(\SetSp,c,\mr,\beta\right)$}
\label{alg:pgm}
\begin{algorithmic}[1]
\Statex \AlgPhase{Partition and Generate sub-solutions}

\For {$j = 0, 1, \dots, \logbeta - 1$} \label{alg:pgm:partition_for_start}
    \Comment{$\logbeta = \log \mbeta$}
    \State $\piset{0}{j} \gets \set{j \in \SetSp | 2^j \leq \mc_j < 2^{j+1}}$\label{alg:pgm:partition}
    \State sort $\piset{0}{j}$ in non-decreasing order of miss ratio\label{alg:pgm:order} 
    \State $\mvec{0}{j} \gets \{D| D \textrm{\ is a prefix of \ } \piset{0}{j}, 0 \leq |D| \leq \NumSp \}$ \label{alg:pgm:genV0j}
\EndFor
\label{alg:pgm:partition_for_end}

\Statex \AlgPhase{Merge sub-solutions}

\For {$\mrgline  = 1, \dots, \log \logbeta$} \label{alg:pgm:for_line_begin}
    \Comment{level $\mrgline$}
    \For { $j = 0, \dots, \frac{\logbeta}{2^{\mrgline}}-1$ } \label{alg:pgm:for_line_for_j_begin}
        \Comment{node in level $\mrgline$}
        \State $\mvec{\mrgline}{j} \gets \{\emptyset\}$  \label{alg:pgm:init_to_emptyset_in_merge}
     	\For {$t = 1, 2, \dots, \logbeta$} \label{alg:pgm:for_t_merge_begin}
     	            \State $X_t \gets \arg\min \{\mr_D | D = A \cup B, \dots A \in \mvec{\mrgline-1}{2j}, B \in \mvec{\mrgline-1}{2j+1}, \mc_D \in [2^{t-1}, 2^t)\}$ 
     	            \label{alg:pgm:merge_find_argmin}
            \State $\mvec{\mrgline}{j} \gets \mvec{\mrgline}{j} \cup \set{X_t}$\label{alg:pgm:merge_insert_to_vec}
            \label{alg:pgm:merge}
        \EndFor \label{alg:pgm:for_t_merge_end}
    \EndFor \label{alg:pgm:for_line_for_j_end}
\EndFor \Statex \AlgPhase{Pick full solution}\label{alg:pgm:for_line_end}

\State return $\tilde{D}  = \arg\min_{X \in \mvec {\log \logbeta}{0}} \set{\phi(X)}$\label{alg:pgm:last_line}
\end{algorithmic}
\end{algorithm}

We now describe the details of \pgmalg, formally defined in Algorithm~\ref{alg:pgm}.
In lines~\ref{alg:pgm:partition_for_start}-\ref{alg:pgm:partition_for_end} \pgmalg\ partitions the data stores and generates candidate sub-solutions as follows. First, the algorithm partitions the set of data stores with positive indications $\SetSp$ into $\logbeta$ disjoint subsets, $\piset{0}{0}, \piset{0}{1}, \dots, \piset{0}{\logbeta-1}$ based on the access costs, using a logarithmic scale (line~\ref{alg:pgm:partition}). 
Then
the algorithm sorts the elements in each of the partitions by a non-decreasing order of miss ratios (line~\ref{alg:pgm:order}).
Next, the algorithm generates initial candidate sub-solutions by considering all possible prefixes of each partition (line~\ref{alg:pgm:genV0j}).
Recall that the partition and generate stages are illustrated in Fig.~\ref{fig:pgm_run_example:init}.

The second part of the algorithm (lines~\ref{alg:pgm:for_line_begin}-\ref{alg:pgm:for_line_end}) iteratively merges pairs of adjacent candidate solutions into a single candidate sub-solution. The merge process uses a binary tree where the leaves are the initial sets of candidate solutions $\mvec{0}{0}, \dots \mvec{0}{\logbeta-1}$, and the root is a set of full solutions for the \cacheprob\ problem $\mvec{\log \logbeta}{0}$, as depicted in Fig.~\ref{fig:pgm_run_example:mrg_no_zoom}. 

In particular, in each run of lines~\ref{alg:pgm:init_to_emptyset_in_merge}-\ref{alg:pgm:merge_insert_to_vec}
\pgmalg\ merges two sets of candidate sub-solutions, $\mvec{\mrgline-1}{2j}$ and $\mvec{\mrgline-1}{2j+1}$,
into a single set $\mvec{\mrgline}{j}$ 
as follows. First, \pgmalg\ initializes $\mvec{\mrgline}{j}$ to include only the empty set (line~\ref{alg:pgm:init_to_emptyset_in_merge}). 
Then (lines~\ref{alg:pgm:for_t_merge_begin}-\ref{alg:pgm:merge_insert_to_vec}), the algorithm considers all the edges in the full bipartite graph whose vertices in each side are the candidate sub-solutions in each of the two merged sets. Thus each edge in the bipartite graph represents a union of two sub-solutions. For each $t = 1, \dots, \logbeta$, \pgmalg\ selects the edge that minimizes the miss ratio among all the edges with access cost in the range $\left[2^{t-1}, 2^t\right)$. 

After merging all the sub-solutions into a list of candidate solutions, the algorithm finds and returns the best candidate full solution (line~\ref{alg:pgm:last_line}). 

\input{Fig_PGM_run_merge_Thesis.tex}

Fig.~\ref{fig:pgm_run_example:mrg} depicts the merge stages of \pgmalg. In particular, Fig.~\ref{fig:pgm_run_example:mrg_no_zoom} shows the binary merge tree. Observe that each node contains at most one sub-solution per each log-scale range of the access costs. Also note that each node contains the empty set, represented by an empty circle. For instance, consider the node $\mvec{1}{0}$, which is shown zoomed-in. 
The node $\mvec{1}{0}$ contains a single sub-solution in the access-costs' range $[0,1)$ -- namely, the empty set, whose access cost and miss ratio are 0 and 1; a single sub-solution in the range $[1,2)$; and a single sub-solution in the range $[2,4)$. The node $\mvec{1}{0}$  contains no candidate sub-solution in the range $[4, 8)$.

Fig.~\ref{fig:pgm_run_example:mrg_zoom} exemplifies a merge of two sets of candidate sub-solutions into a single set.
The leftmost part of the figure shows the two sets of sub-solutions which \pgmalg\ merges, $\mvec{\mrgline-1}{2j}$ and $\mvec{\mrgline-1}{2j+1}$. The algorithm considers all the edges in the bipartite graphs whose vertices are the candidate sub-solutions.

The middle part of the figure shows the selection of merged sub-solutions for $t=1,2,3$.
When $t=1$, \pgmalg\ considers all the unions of sub-solutions s.t. the access cost of the union is within the range $[1, 2)$. In our case, this translates to a single candidate union - the union of the empty set (represented by the empty circle) and the set of datastores with a total cost of 1. As this is the only candidate in this range, \pgmalg\ inserts it into the set of merged solutions, $\mvec{\mrgline}{j}$, in the rightmost part of the figure. Note that the access cost of the union is the sum of the access costs of the sub-solutions it unifies, while the miss ratio of the union is the multiplication of their miss ratios.

When $t=2$, \pgmalg\ considers all the unions of sub-solutions with access cost within the range $[2, 4)$. In our case, this translates to two candidate unions: one with miss ratio $1 \cdot 0.7 = 0.7$, and another 
with miss ratio $1 \cdot 0.6 = 0.6$. \pgmalg\ selects the latter union, which minimizes the miss ratio, and inserts it into the merged set of sub-solutions.

When $t=3$, \pgmalg\ considers all the unions of the sub-solutions with access cost within the range $[4, 7)$. This translates to two candidate unions: one with miss ratio $0.9 \cdot 0.7 = 0.63$, and another 
with miss ratio $0.6 \cdot 0.7 = 0.42$. \pgmalg\ selects the second option, which minimizes the miss ratio, and inserts it into the merged set of sub-solutions. Note that \pgmalg\ selects the second option so as to minimize the miss ratio, although its aggregate access cost is 5, which is higher than the aggregate access cost of the first option. This exemplifies how \pgmalg\ prioritizes the minimization of the miss ratio, which has a multiplicative impact on the objective function $\phi$, over the access cost, which has only an additive impact on $\phi$.

Below, we analyse the performance and run time of \pgmalg.

\subsection{Performance and Run Time Analysis}

Our performance analysis involves a careful comparison of the access cost, and miss ratio of candidate sub-solutions considered by \pgmalg\ out of 
every partition $\piset{\mrgline}{j}$ with the access cost, and miss ratio of a respective optimal sub-solution.

The following proposition shows that in each level $\ell$, $\set{\piset{\mrgline}{j}}_j$ is a partition of $\SetSp$ into disjoint sets, based on the access costs of the datastores.  

\begin{proposition}\label{prop:Nlj}
In each level $\mrgline$,
$\set{\piset{\mrgline}{j}}_j$
is a partition of $\SetSp$, satisfying
\begin{equation}\label{eq:Nlj}
\piset{\mrgline}{j} = \set{d \in \SetSp | 2^{j \cdot 2^\mrgline} \leq \mc_d < 2^{(j+1) 2^\mrgline}}
\end{equation}
\end{proposition}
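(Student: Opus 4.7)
The plan is to prove Proposition~\ref{prop:Nlj} by induction on the level $\mrgline$, and then separately verify the disjointness and coverage properties required for $\set{\piset{\mrgline}{j}}_j$ to constitute a partition of $\SetSp$.

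For the base case $\mrgline = 0$, Eq.~\ref{eq:Nlj} reduces to $\piset{0}{j} = \set{d \in \SetSp \mid 2^{j} \leq \mc_d < 2^{j+1}}$, which is exactly the definition given in line~\ref{alg:pgm:partition} of Algorithm~\ref{alg:pgm}. For the inductive step, assume the claim holds at level $\mrgline - 1$. Using the recursive definition $\piset{\mrgline}{j} = \piset{\mrgline-1}{2j} \cup \piset{\mrgline-1}{2j+1}$ together with the induction hypothesis,
\begin{align*}
\piset{\mrgline}{j}
&= \set{d \in \SetSp \mid 2^{2j \cdot 2^{\mrgline-1}} \leq \mc_d < 2^{(2j+1) \cdot 2^{\mrgline-1}}} \\
&\quad \cup \set{d \in \SetSp \mid 2^{(2j+1) \cdot 2^{\mrgline-1}} \leq \mc_d < 2^{(2j+2) \cdot 2^{\mrgline-1}}}.
\end{align*}
Since the two half-open intervals are contiguous, their union equals $\set{d \in \SetSp \mid 2^{2j \cdot 2^{\mrgline-1}} \leq \mc_d < 2^{(2j+2) \cdot 2^{\mrgline-1}}}$, which simplifies to $\set{d \in \SetSp \mid 2^{j \cdot 2^{\mrgline}} \leq \mc_d < 2^{(j+1) \cdot 2^{\mrgline}}}$, establishing Eq.~\ref{eq:Nlj}.

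It remains to verify the partition properties. For disjointness, fix any two distinct indices $j_1 < j_2$ at level $\mrgline$; then $(j_1 + 1) \cdot 2^{\mrgline} \leq j_2 \cdot 2^{\mrgline}$, so the intervals $[2^{j_1 \cdot 2^{\mrgline}}, 2^{(j_1+1) \cdot 2^{\mrgline}})$ and $[2^{j_2 \cdot 2^{\mrgline}}, 2^{(j_2+1) \cdot 2^{\mrgline}})$ are disjoint, which by Eq.~\ref{eq:Nlj} implies $\piset{\mrgline}{j_1} \cap \piset{\mrgline}{j_2} = \emptyset$. For coverage, the ranges corresponding to $j = 0, 1, \ldots, \frac{\logbeta}{2^{\mrgline}} - 1$ tile the interval $[2^{0}, 2^{\logbeta}) = [1, \mbeta)$. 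Because we assumed $\min_j c_j = 1$ in the system model (Section~\ref{sec:model}), and datastores with access cost at least $\mbeta$ are never worth accessing (paying the miss penalty outright is cheaper), every $d \in \SetSp$ satisfies $1 \leq \mc_d < \mbeta$, and hence belongs to exactly one $\piset{\mrgline}{j}$.

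The main technical subtlety I expect is to justify the coverage of the full range $[1, \mbeta)$: the base-level partition in line~\ref{alg:pgm:partition} is defined only for $j = 0, \dots, \logbeta - 1$, so implicitly the proposition relies on all access costs in $\SetSp$ falling strictly below $\mbeta$. This assumption is natural given that the miss penalty $\mbeta$ serves as the cost ceiling beyond which accessing an individual datastore is dominated by incurring a miss; if not already imposed, it can be added as a preprocessing step without affecting optimality.
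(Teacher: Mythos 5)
Your proof is correct and follows essentially the same route as the paper's: induction on the level $\mrgline$, with the base case read off from line~\ref{alg:pgm:partition} and the inductive step obtained by taking the union of the two contiguous half-open cost intervals given by the hypothesis at level $\mrgline-1$. The only difference is that you additionally spell out the disjointness and coverage of the intervals (correctly noting the implicit reliance on $1 \leq \mc_d < \mbeta$), which the paper leaves implicit as an immediate consequence of Eq.~\ref{eq:Nlj}.
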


\begin{proof}
We prove the claim by induction on $\mrgline$.
Assigning $\mrgline=0$ in Eq.~\ref{eq:Nlj} implies the definition
$\piset{0}{j} = \set{d \in \SetSp | 2^{j} \leq \mc_d < 2^{j+1}}$.
Furthermore, assuming that $\piset{\mrgline}{j}$ satisfies Eq.~\ref{eq:Nlj} up to level $\mrgline$,
\begin{equation}
\notag
\begin{split}
& \piset{\mrgline+1}{j} =
  \piset{\mrgline}{2j} \cup \piset{\mrgline}{2j+1} = \\
& \set {d \in \SetSp | 2^{2j \cdot 2^\mrgline} \leq \mc_d < 2^{(2j+1) 2^\mrgline}}  \cup  \\
& \set {d \in \SetSp | 2^{(2j+1) \cdot 2^\mrgline} \leq \mc_d < 2^{(2j+2) 2^\mrgline}}  =  \\
& \set {d \in \SetSp | 2^{j \cdot 2^{\mrgline+1}} \leq \mc_d < 2^{(j+1) 2^{\mrgline+1}} }
\end{split}
\end{equation}
thus showing that level $\mrgline+1$ satisfies Eq.~\ref{eq:Nlj} as well.
\end{proof}

The following corollary shows that the special case of level $\mrgline = \log (\logbeta)$ the partition $\set{\piset{\mrgline}{j}}_j$ is identical to the datastores in $\SetSp$. As a result, an optimal solution for the sub-problem in level $\log (\logbeta)$ is actually an optimal full solution for the \cacheprob\  problem.

\begin{corollary}
$$
\piset{\log \logbeta}{0} = \set{d \in \SetSp | 2^0 \leq \mc_d < 2^{2^{\log \logbeta}}} = \SetSp,
$$
and
\begin{equation}\label{eq:Oset_final_lvl}
\OOset{\log \logbeta}{0} = \Oset \cap \SetSp = \Oset
\end{equation}
\end{corollary}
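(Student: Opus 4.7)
The plan is to derive this corollary as a direct consequence of Proposition~\ref{prop:Nlj}, specialized to the top level of the merge tree. First I would substitute $\mrgline = \log \logbeta$ and $j = 0$ into Eq.~\ref{eq:Nlj}, obtaining
\[
\piset{\log \logbeta}{0}
= \set{d \in \SetSp \mid 2^{0 \cdot 2^{\log \logbeta}} \leq \mc_d < 2^{1 \cdot 2^{\log \logbeta}}}
= \set{d \in \SetSp \mid 1 \leq \mc_d < 2^{\logbeta}},
\]
and then use $2^{\logbeta} = 2^{\log \mbeta} = \mbeta$ to rewrite the right-hand bound.

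Next, I would argue that the range $[1, \mbeta)$ captures all datastores in $\SetSp$, and hence $\piset{\log \logbeta}{0} = \SetSp$. The lower bound follows immediately from the normalization $\min_j c_j = 1$ stated in Section~\ref{sec:model}. For the upper bound, I would observe that any datastore $d$ with $\mc_d \geq \mbeta$ is useless to any reasonable solution: paying $\mc_d$ to access $d$ costs at least as much as the miss penalty $\mbeta$ itself, so \opt\ never selects such a datastore, and we may assume without loss of generality that such datastores are excluded from $\SetSp$ (equivalently, any candidate solution containing $d$ is dominated by the alternative of simply paying the miss penalty $\mbeta$).

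For the second equality, once $\piset{\log \logbeta}{0} = \SetSp$ is established, the definition $\OOset{\mrgline}{j} = \Oset \cap \piset{\mrgline}{j}$ gives $\OOset{\log \logbeta}{0} = \Oset \cap \SetSp$. Finally, since the indicators have no false negatives (Section~\ref{sec:model}), \opt\ can only benefit from accessing datastores with positive indications, so $\Oset \subseteq \SetSp$ and thus $\Oset \cap \SetSp = \Oset$.

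I do not expect significant obstacles here. The only subtlety is the implicit WLOG assumption $\mc_d < \mbeta$ for $d \in \SetSp$, which I would state explicitly (and which is consistent with the earlier definition $M = \min\{\sum_{j \in \SetSp} \mc_j,\, \mbeta\}$ used as an upper bound on any sensible access cost). Everything else is a one-line substitution into Proposition~\ref{prop:Nlj} followed by the definitional unfolding of $\OOset{\log \logbeta}{0}$.
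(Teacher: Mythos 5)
Your proposal is correct and follows the same route the paper intends: the corollary is an immediate substitution of $\mrgline=\log\logbeta$, $j=0$ into Eq.~\ref{eq:Nlj} of Proposition~\ref{prop:Nlj}, followed by the definitional unfolding of $\OOset{\log\logbeta}{0}$ and the fact that the \cacheprob\ problem restricts solutions to subsets of $\SetSp$. Your explicit justification that every $d\in\SetSp$ has $\mc_d\in[1,\mbeta)$ --- the lower bound from the normalization $\min_j \mc_j=1$ and the upper bound from the observation that any solution containing a datastore of cost at least $\mbeta$ is dominated by the empty set --- correctly fills in an assumption the paper leaves implicit in its partition $\piset{0}{0},\dots,\piset{0}{\logbeta-1}$.
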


The following lemma shows that at each level $\mrgline$, \pgmalg\ has a candidate sub-solution with miss ratio, and access cost which are at most $\log \mbeta$, and 1 higher than those of the respective optimal sub-solution.
\begin{lemma}\label{lemma:pgm}
If $\mc_{\Oset} < \frac{\mbeta}{2 \cdot \log \mbeta}$ then for each $\mrgline = 0, \dots, \log \logbeta$ and $j = 0, \dots, \frac{\logbeta}{2^{\mrgline}}-1$,
$\mvec{\mrgline}{j}$ contains a set $X$ s.t.
$\mc_X \leq \mbase^{\mrgline+1} \cdot \mc_{\OOset{\mrgline}{j}}$ and
$\mr_X \leq  \mr_{\OOset{\mrgline}{j}}$.
\end{lemma}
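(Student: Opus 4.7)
I would prove the lemma by induction on the level $\mrgline$. The overall idea is that at every level, \pgmalg{} keeps, for each logarithmic-scale cost bucket, the candidate sub-solution with the smallest miss ratio; so whenever the inductive hypothesis produces a ``good'' candidate at the child level, the merge step must either already contain that candidate or replace it with something whose miss ratio is no worse and whose cost is at most doubled by the bucketing rounding.

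\textbf{Base case} ($\mrgline = 0$). Here $\piset{0}{j}$ is sorted in non-decreasing order of miss ratio and every prefix is generated in line~\ref{alg:pgm:genV0j}. Let $X$ be the prefix of length $|\OOset{0}{j}|$. Since both $X$ and $\OOset{0}{j}$ are subsets of $\piset{0}{j}$ of the same size, and $X$ picks the datastores with the smallest miss ratios, we get $\mr_X \leq \mr_{\OOset{0}{j}}$. Using that every element of $\piset{0}{j}$ has cost in $[2^{j},2^{j+1})$, we have $\mc_X < |X|\,2^{j+1} = 2\,|\OOset{0}{j}|\,2^{j} \leq 2\,\mc_{\OOset{0}{j}}$, matching $2^{\mrgline+1}=2$.

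\textbf{Inductive step.} Assume the claim at level $\mrgline-1$. Applied to the two children, it gives sets $A \in \mvec{\mrgline-1}{2j}$ and $B \in \mvec{\mrgline-1}{2j+1}$ with $\mc_A \leq 2^{\mrgline}\,\mc_{\OOset{\mrgline-1}{2j}}$, $\mc_B \leq 2^{\mrgline}\,\mc_{\OOset{\mrgline-1}{2j+1}}$, and $\mr_A \leq \mr_{\OOset{\mrgline-1}{2j}}$, $\mr_B \leq \mr_{\OOset{\mrgline-1}{2j+1}}$. Because $\OOset{\mrgline}{j} = \OOset{\mrgline-1}{2j} \cup \OOset{\mrgline-1}{2j+1}$ is a disjoint union (the partitions are disjoint by Proposition~\ref{prop:Nlj}), additivity of costs and independence of miss ratios yield
\[
\mc_{A \cup B} \leq 2^{\mrgline}\,\mc_{\OOset{\mrgline}{j}}, \qquad \mr_{A \cup B} \leq \mr_{\OOset{\mrgline}{j}}.
\]
If $\mc_{A \cup B}=0$ then both are empty and $\emptyset \in \mvec{\mrgline}{j}$ by line~\ref{alg:pgm:init_to_emptyset_in_merge} settles the claim. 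Otherwise let $t^{*}$ be the unique integer with $2^{t^{*}-1} \leq \mc_{A \cup B} < 2^{t^{*}}$. The union $A \cup B$ is then one of the candidates considered in line~\ref{alg:pgm:merge_find_argmin} for $t=t^{*}$, so the set $X_{t^{*}} \in \mvec{\mrgline}{j}$ satisfies $\mr_{X_{t^{*}}} \leq \mr_{A \cup B} \leq \mr_{\OOset{\mrgline}{j}}$ and $\mc_{X_{t^{*}}} < 2^{t^{*}} \leq 2\,\mc_{A \cup B} \leq 2^{\mrgline+1}\,\mc_{\OOset{\mrgline}{j}}$, as required.

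\textbf{Where the hypothesis enters.} The main subtlety is ensuring that $t^{*}$ actually lies inside the range $\{1,\dots,\logbeta\}$ iterated in line~\ref{alg:pgm:for_t_merge_begin}; otherwise the approximating union could be ``outside'' all buckets and lost. Using $2^{\mrgline} \leq 2^{\log \logbeta} = \logbeta$ together with $\mc_{\OOset{\mrgline}{j}} \leq \mc_{\Oset}$, we bound $\mc_{A \cup B} \leq \logbeta \cdot \mc_{\Oset}$; the hypothesis $\mc_{\Oset} < \mbeta/(2 \log \mbeta)$ then yields $\mc_{A \cup B} < \mbeta/2 < 2^{\logbeta}$, guaranteeing $t^{*} \leq \logbeta$. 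I expect this bucket-range bookkeeping (and verifying that the disjoint-union identities for cost and miss ratio go through cleanly across all levels) to be the only real obstacle; the rest of the argument is a clean induction that exploits the design of the algorithm.
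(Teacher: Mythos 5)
Your proof is correct and follows essentially the same route as the paper's: induction on the level, a same-cardinality prefix argument for the base case, and in the inductive step the union of the two children's good candidates, whose cost bucket $t^{*}$ is shown to lie in $\set{1,\dots,\logbeta}$ precisely via the hypothesis $\mc_{\Oset} < \mbeta/(2\log\mbeta)$. The only place you are terser than the paper is the empty-union case, where one should note that $A\cup B=\emptyset$ forces $\OOset{\mrgline}{j}=\emptyset$ (since each $\mr_j<1$, the inductive bound $\mr_A\le\mr_{\OOset{\mrgline-1}{2j}}$ with $\mr_A=1$ implies the optimal sub-solution of each child is empty), which is what makes $X=\emptyset$ satisfy $\mr_X\le\mr_{\OOset{\mrgline}{j}}$ — the paper makes the analogous observation in the contrapositive direction when it proves $D^*\neq\emptyset$.
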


\begin{proof}
Note first that for each $\mrgline$ and $j$ s.t. $\OOset{\mrgline}{j} = \emptyset$ the claim holds true since we can take $X=\emptyset$ which by lines~\ref{alg:pgm:genV0j} and~\ref{alg:pgm:init_to_emptyset_in_merge} is always a member of $\mvec{\mrgline}{j}$.
We may therefore focus our attention only on $\mrgline$ and $j$ for which $\OOset{\mrgline}{j} \neq \emptyset$.

We prove the claim by induction over $\mrgline$.
For the base case ($\mrgline=0$) we have to prove that if $\mc_{\Oset} < \frac{\mbeta}{2 \cdot \log \mbeta}$ then
for each $j = 0, 1, \dots \logbeta-1$ there exists a set of data stores
$X \in \mvec{0}{j}$ s.t.
$\mc_X \leq 2 \cdot \mc_{\OOset{0}{j}}$ and
$\mr_X \leq \mr_{\OOset{0}{j}}$.
Denote $k_j = \left\vert\OOset{0}{j}\right\vert$. As the access cost of each item in $\OOset{0}{j}$ is in $\left[2^j, 2^{j+1}\right)$, we have
\begin{equation}\label{eq:O_init_set_cost}
k_j \cdot 2^j \leq \mc_{\OOset{0}{j}}
\end{equation}
Denote by $D_j$ the $k_j$-size prefix of items in $\piset{0}{j}$, when sorted in non-decreasing order of miss ratio. Note that \pgmalg\ inserts $D_j$ to $\mvec{0}{j}$ (line~\ref{alg:pgm:genV0j}). 
The access cost of each item in $D_j$ is within $\left[2^j, 2^{j+1}\right)$, and hence
\begin{equation}\label{eq:init_prefix_cost}
\mc_{D_j} < k_j \cdot 2^{j+1}.
\end{equation}
Combining Equations~\ref{eq:O_init_set_cost} and~\ref{eq:init_prefix_cost}, we obtain $\mc_{D_j} \leq 2 \cdot \mc_{\OOset{0}{j}}$. 
Furthermore, as $D_j$ is a prefix of $\piset{0}{j}$ when sorted in a non-decreasing order of miss ratio and $|D_j| = k_j = |\OOset{0}{j}|$, we have $\mr_{D_j} \leq \mr_{\OOset{0}{j}}$, thus completing the proof of the induction's base.

For the induction step, we assume that the claim holds for level $\mrgline$, and prove it for level $\mrgline+1$.
Assume $\mc_{\Oset} < \frac{\mbeta}{2 \log \mbeta}$.
We have to show that there exists a set $X \in \mvec{\mrgline+1}{j}$ that satisfies
$\mc_X \leq \mbase^{\mrgline+2} \cdot \mc_{\OOset{\mrgline+1}{j}}$ and
$\mr_X \leq  \mr_{\OOset{\mrgline+1}{j}}$.

By the induction hypothesis, there exist sets $A^* \in \mvec{\mrgline}{2j}$ and $B^* \in \mvec{\mrgline}{2j+1}$ s.t.
$\mc_{A^*} \leq \mbase^{\mrgline+1} \cdot \mc_{\OOset{\mrgline}{2j}}$ and
$\mc_{B^*} \leq \mbase^{\mrgline+1} \cdot \mc_{\OOset{\mrgline}{2j+1}}$.
Consider the set $D^* = A^* \cup B^*$. We first show that $D^* \neq \emptyset$. 
Recall that $\OOset{\mrgline+1}{j} = \OOset{\mrgline}{2j} \cup \OOset{\mrgline}{2j+1}$.
Since it suffices to focus on the case where $\OOset{\mrgline+1}{j} \neq \emptyset$, we have either 
$\OOset{\mrgline}{2j} \neq \emptyset$ or $\OOset{\mrgline}{2j+1} \neq \emptyset$, 
and therefore either 
$\mr_{\OOset{\mrgline}{2j}} < 1$ or $\mr_{\OOset{\mrgline}{2j+1}} < 1$. By the induction hypothesis $\mr_{A^*} \leq \mr_{\OOset{\mrgline}{2j+1}}$ and 
$\mr_{B^*} \leq \mr_{\OOset{\mrgline}{2j}}$. Therefore, either $\mr_{A^*} < 1$ or $\mr_{B^*} < 1$. As a result, either $A^* \neq \emptyset$ or $B^*  \neq \emptyset$, and hence $D^* \neq \emptyset$.

Recall that $D^* = A^* \cup B^*$, and by definition 
$\OOset{\mrgline+1}{j} = \OOset{\mrgline}{2j} \cup \OOset{\mrgline}{2j+1}$.
Therefore, 
\begin{equation}\label{eq:cost_D_pre_step}
\begin{split}
& \mc_{D^*} \leq \mc_{A^*} + \mc_{B^*} \leq
\mbase^{\mrgline+1} \left(\mc_{\OOset{\mrgline}{2j}} + \mc_{\OOset{\mrgline}{2j+1}}\right) = \\
& 
\mbase^{\mrgline+1} \cdot \mc_{\OOset{\mrgline+1}{j}}
\leq
\mbase^{\mrgline+1} \cdot \mc_{\Oset}
< \mbase^{\mrgline+1} \cdot \frac{\mbeta}{2 \cdot \log \mbeta},
\end{split}
\end{equation}
where the second inequality is by the induction hypothesis.

Recalling that $\mrgline \leq \log \logbeta$ and $\logbeta = \log \mbeta$ we also have $\mbase^{\mrgline+1} \leq \mbase \cdot \mbase^{\log (\log \mbeta)} = 2 \cdot \log \mbeta$.
Combining the reasoning above we have $\mc_{D^*} < \mbeta$. 
Therefore (and recalling that $D^* \neq \emptyset$), there exists some $t \in \set{1, 2 \dots, \logbeta}$ s.t.
\begin{equation}\label{eq:cost_D_in_range_step}
2^{t-1} \leq \mc_{D^*} < 2^t.    
\end{equation}
As a result, one of the iterations of the merge loop (lines~\ref{alg:pgm:for_t_merge_begin}-\ref{alg:pgm:for_t_merge_end}) inserts to $\mvec{\mrgline+1}{j}$
a set $X_t$ s.t. 
\begin{equation}\label{eq:cost_X_in_range_step}
2^{t-1} \leq \mc_{X_t} < 2^t.    
\end{equation}
Combining (\ref{eq:cost_D_pre_step}), (\ref{eq:cost_D_in_range_step}) and (\ref{eq:cost_X_in_range_step}), we have 
$\mc_{X_t} <
\mbase \cdot \mc_{D^*}
\leq
\mbase^{\mrgline+2} \cdot \mc_{\OOset{\mrgline+1}{j}}
$.
Furthermore, by lines~\ref{alg:pgm:merge_find_argmin}-\ref{alg:pgm:merge_insert_to_vec}, $X_t$ minimizes the miss ratio; and by 
the induction hypothesis, $\mr_{A^*}
\leq \mr_{\OOset{\mrgline}{2j}}$ and $\mr_{B^*} \leq \mr_{\OOset{\mrgline}{2j+1}}$. We conclude that 
$\mr_{X_t} \leq \mr_{D^*} =
\mr_{A^*} \cdot \mr_{B^*} \leq
\mr_{\OOset{\mrgline}{2j}} \cdot \mr_{\OOset{\mrgline}{2j+1}} =
\mr_{\OOset{\mrgline+1}{j}}
$.
\end{proof}

We are now in a position to upper-bound the approximation ratio and run-time of \pgmalg.
\begin{theorem}
\pgmalg\ is a $2 \cdot \log (\mbeta)$-approximation algorithm, running in time $O\left(\NumSp^2 + \log^3 (\mbeta) \right)$.
\end{theorem}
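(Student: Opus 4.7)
The plan is to leverage the preceding Lemma (which gives a per-level guarantee on the existence of a ``good'' candidate sub-solution) at the root of the merge tree, and then translate the separate bounds on access cost and miss ratio into a single bound on $\phi$. The crucial catch is that the Lemma carries the hypothesis $\mc_{\Oset} < \frac{\mbeta}{2 \log \mbeta}$, which need not hold; hence the approximation proof will split into two cases, and the hard part is showing that the ``bad'' case is automatically absorbed by a trivial fallback candidate.

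\textbf{Approximation ratio.} At the root $\mrgline = \log \logbeta$, $j = 0$, we have $\OOset{\log \logbeta}{0} = \Oset$ by Eq.~\ref{eq:Oset_final_lvl}. I would split:
\begin{itemize}
\item \emph{Case 1:} $\mc_{\Oset} < \frac{\mbeta}{2 \log \mbeta}$. The Lemma applies, producing some $X \in \mvec{\log \logbeta}{0}$ with $\mc_X \leq 2^{\log\logbeta + 1}\cdot \mc_{\Oset} = 2\log\mbeta \cdot \mc_{\Oset}$ and $\mr_X \leq \mr_{\Oset}$. Then
$\phi(X) = \mc_X + \mbeta \mr_X \leq 2\log\mbeta \cdot \mc_{\Oset} + \mbeta \mr_{\Oset} \leq 2\log\mbeta \cdot \phi(\Oset)$.
\item \emph{Case 2:} $\mc_{\Oset} \geq \frac{\mbeta}{2\log\mbeta}$. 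Then $\phi(\Oset) \geq \mc_{\Oset} \geq \frac{\mbeta}{2\log\mbeta}$. Because line~\ref{alg:pgm:init_to_emptyset_in_merge} initializes every $\mvec{\mrgline}{j}$ with $\emptyset$ and $\emptyset$ propagates upward through the merges, $\emptyset \in \mvec{\log\logbeta}{0}$. But $\phi(\emptyset) = \mbeta \leq 2\log\mbeta \cdot \phi(\Oset)$.
\end{itemize}
In either case $\mvec{\log\logbeta}{0}$ contains a candidate of cost at most $2\log\mbeta \cdot \phi(\Oset)$, and since line~\ref{alg:pgm:last_line} returns the $\arg\min$ over $\mvec{\log\logbeta}{0}$, we conclude $\phi(\tilde D) \leq 2\log\mbeta \cdot \phi(\Oset)$.

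\textbf{Running time.} I would account for the three stages separately. Partitioning and sorting (lines~\ref{alg:pgm:partition_for_start}--\ref{alg:pgm:partition_for_end}) costs $O(\NumSp \log \NumSp)$, and generating all prefixes contributes $O(\NumSp)$ candidates in total across all level-$0$ partitions. Writing $n_j = |\piset{0}{j}|$, so that $\sum_j n_j = \NumSp$ and $|\mvec{0}{j}| \leq n_j + 1$, the first merge level inspects all pairs across each adjacent partition pair, giving work
$\sum_{j} (n_{2j}+1)(n_{2j+1}+1) \leq \bigl(\sum_j n_j\bigr)^2 + O(\NumSp + \logbeta) = O(\NumSp^2 + \logbeta)$.
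From level $\mrgline \geq 2$ onward, each $\mvec{\mrgline-1}{j}$ has size at most $\logbeta + 1$ (one entry per cost bucket, plus $\emptyset$), so each merge node does $O(\log^2\mbeta)$ work; with $\logbeta/2^{\mrgline}$ nodes per level, the geometric sum $\sum_{\mrgline \geq 2} (\logbeta/2^{\mrgline}) \cdot \log^2\mbeta = O(\log^3 \mbeta)$ bounds the remaining levels. Adding the three stages yields $O(\NumSp^2 + \log^3\mbeta)$.

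The main obstacle, as noted, is Case~2 in the approximation analysis: one must recognize that the Lemma's hypothesis is not always met, but that in the complementary regime $\phi(\Oset)$ is itself so large that the trivial candidate $\emptyset$ suffices. A secondary subtlety in the run-time analysis is showing that level $\mrgline \geq 1$ candidate sets are bounded by $\logbeta + 1$, which relies on the fact that lines~\ref{alg:pgm:for_t_merge_begin}--\ref{alg:pgm:for_t_merge_end} store at most one representative per cost bucket $[2^{t-1}, 2^t)$.
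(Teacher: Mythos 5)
Your proposal is correct and follows essentially the same route as the paper's own proof: the same two-case split on whether $\mc_{\Oset} < \frac{\mbeta}{2\log\mbeta}$ (with the empty-set candidate absorbing the complementary case), the same application of Lemma~\ref{lemma:pgm} at the root of the merge tree, and the same three-stage accounting of the running time with the $O(\NumSp^2)$ term coming from the first merge level and $O(\log^3\mbeta)$ from the higher levels. Your write-up is in fact slightly more explicit than the paper's in justifying Case~2 and in summing the first-level merge work over all partition pairs, but the argument is the same.
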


\begin{proof}
\emph{Approximation ratio.} Consider an optimal solution $\Oset$.
Recall that in every iteration -- and, in particular, in the last iteration -- \pgmalg\ considers using the empty set.
Hence, $\phi (\tilde{D}) \leq \phi (\emptyset) = \mbeta$. Therefore, if $\mc_{\Oset} \geq \frac{\mbeta}{2 \cdot \log \mbeta}$, the claim is true.

If $\mc_{\Oset} < \frac{\mbeta}{2 \cdot \log \mbeta}$, then by Lemma~\ref{lemma:pgm} there exists a set $X \in \mvec{\log \logbeta}{0}$ s.t.
$\mc_X \leq \mbase^{\log \logbeta+1} \cdot \mc_{\OOset{\log \logbeta}{0}} = 2 \cdot \log \mbeta \cdot \mc_{\OOset{\log \logbeta}{0}}$. 
Using Eq.~\ref{eq:Oset_final_lvl}, we obtain
\begin{equation}\label{eq:pgm:costs_approx}
\mc_X \leq 2 \log \mbeta \cdot \mc_{\Oset}.
\end{equation}
Furthermore, by Lemma~\ref{lemma:pgm}
\begin{equation}\label{eq:pgm:mr_approx}
\mr_X \leq \mr_{\Oset}.
\end{equation}

Combining Equations~\ref{eq:pgm:costs_approx} and~\ref{eq:pgm:mr_approx} and recalling that $\tilde{D}$ minimizes $\phi$ over $\mvec{\log \logbeta}{0}$,
we obtain $\phi(\tilde{D}) \leq \phi(X) \leq 2 \log \mbeta \cdot \phi (\Oset)$.

\emph{Run time.} 
Lines~\ref{alg:pgm:partition_for_start}-\ref{alg:pgm:partition_for_end} require only a single sort of the $\NumSp$ data stores with positive indications, which takes $O(\NumSp \log \NumSp)$ time.

When $\mrgline=1$, the worst-case run time of lines~\ref{alg:pgm:for_line_for_j_begin}-\ref{alg:pgm:for_line_for_j_end} occurs when there exists $j$ s.t. $|\piset{0}{2j}| = O(\NumSp)$ and $|\piset{0}{2j+1}| = O(\NumSp)$, in which case considering all the nodes of the full bipartite graph between $\piset{0}{2j}$ and $\piset{0}{2j+1}$ requires $O(\NumSp^2)$ steps.

For each $\mrgline = 2, 3, \dots, \log (\logbeta)$, \pgmalg\ inserts to $\mvec{\mrgline}{j}$ only at most one node for every $t = 1, 2, \dots, \logbeta$. Therefore when $\mrgline > 1$, each 
iteration of the merge block (lines~\ref{alg:pgm:for_t_merge_begin}-\ref{alg:pgm:for_t_merge_end}) requires merging a full bipartite graph where the number of nodes in each side is at most $O(\logbeta)$. Hence, at each iteration of the merge block \pgmalg\ considers $O(\logbeta^2)$ pairs.
As the merge tree contains $O(\logbeta)$ nodes (recall Fig.~\ref{fig:pgm_run_example:mrg_no_zoom}), \pgmalg\ performs $O(\logbeta)$ such merge operations. Thus, the time required to run lines~\ref{alg:pgm:for_line_begin}-\ref{alg:pgm:for_line_end} when $\mrgline>1$ is~$O(\logbeta^3)$.

Summing the complexity expressions above, the total time complexity is $O(\NumSp^2 + \logbeta^3)$.
\end{proof}

Table~\ref{table:algs_approx_n_run_t} compares the approximation guarantees and the run-times of our algorithms. For \ppapproxalg\ we use a recent $(1+\mbeta)$-approximation for the Knapsack problem, whose running time is near $O \left( \NumSp + \epsilon^{-2.4}\right)$~\cite{FPTAS_Knap}.\footnote{The exact running time of this Knapsack FPTAS is: $O \left( 
\NumSp \log \frac{1}{\epsilon} + \left( \frac{1}{\epsilon} \right)^{2.4} / 2^{\Omega \left( \sqrt{\log (1/\epsilon)}\right)}
\right)
$}
Our algorithms suggest various trade-offs between run-time and approximation guarantees. Furthermore, in the general case our algorithms are not comparable to each other in terms of both run-time and  approximation ratio, and the choice of an algorithm is dependent upon the relations between the parameters $\mbeta, \NumSp$ and ${\mc_j}_j$ in a concrete system. 

\begin{table}
		\label{table:algs_approx_n_run_t}
    \begin{tabular}{l || l | l | l}
        Alg & Approx. & Run time & Based on \\
        \hline\hline
		\rule{0pt}{3ex}\ppapproxalg& $O\left(\beta^{\frac{\epsilon}{1+\epsilon}}\right)$ & $O\left( \left(\NumSp + \epsilon^{-2.4}\right)\cdot M\right)$ & Knapsack fully-polynomial time approximation scheme  \\ 
		\rule{0pt}{3ex}\umb& $ O\left(\sqrt{\beta}\right)$ & $O \left(\NumSp^2 \log \NumSp\right)$ & Knapsack 2-approximation \\
        \rule{0pt}{3ex}\pot& $\max_j \{C_j\}$ & $O \left(\NumSp \log \NumSp \right)$ & Potential function \\
        \rule{0pt}{3ex}\pgmalg & $O (\log (\mbeta))$ & $O\left(\NumSp^2 + \log^3 (\mbeta) \right)$ & Divide \& conquer
    \end{tabular}
		\caption{Approximation guarantees and run-time of proposed algorithms}
\end{table}

\makeatletter
\newcommand*\bigcdot{\mathpalette\bigcdot@{.5}}
\newcommand*\bigcdot@[2]{\mathbin{\vcenter{\hbox{\scalebox{#2}{$\m@th#1\bullet$}}}}}
\makeatother

\section{Simulation Study}\label{sec:sim}

This section uses a real access trace and a real content distribution network topology to provide insights into the performance of various access strategies in versatile settings.  

\subsection{System Topology and Costs}\label{sec:sim_settings_topology_costs}
We use the topology of the OVH~\cite{ovh} content distribution network.
The OVH network~\cite{ovh} includes 19 {\em Points of Presence (PoPs)} in Europe and North America along with the available bandwidth between PoPs.
We interpret each PoP as containing both a data store and a co-located client.
Queries are generated at clients and each such query triggers an access to a subset of the data stores according to the prescribed policy.

We assume that clients use the shortest hop-count path between their location and the data store they access.
Ties are broken by picking the path with maximal bottleneck link bandwidth. The cost for a client located at
node $i$ to access a data store at node $j$ is:
\begin{equation}\label{simulation_cost_function}
c_{i,j} = \Biggl\lceil{1 + \alpha \cdot \dist (i, j) + (1-\alpha) \cdot \frac{T}{\BW (i, j)}\Biggr\rceil},
\end{equation}
where
\begin{inparaenum}[(i)]
	\item $\dist (i,j)$ is the hop-count between node $i$ and node $j$, where $\dist(i,i)=0$,
	\item $\BW (i,j)$ is the maximum bottleneck bandwidth of a minimum length path from node $i$ to node $j$, where $\BW(i,i)=\infty$,
\item $T$ is a design parameter satisfying $T \geq \max_{i,j} \BW(i,j)$, that relates the increased cost of
    having a smaller bandwidth with the increased cost due to having a higher hop-count.
Last,
	\item $\alpha$ is a design parameter that helps balance the effects of hop-count distance and bottleneck bandwidth on the cost.
\end{inparaenum}
 In particular, for $\alpha=1$ the cost is
fully dominated by the hop-count distance and for $\alpha=0$ it is fully dominated by the bottleneck
bandwidth, regularized by the parameter $T$. Unless stated otherwise, throughout our simulations we set
$T=\max_{i\neq j}\BW(i,j)$. Specifically, $T=500$ for the OVH network.

Figure~\ref{fig:PDF_of_costs} presents the histogram of the default access cost used in our evaluation between all pairs of clients and data stores in the OVH network.

\setlength{\belowcaptionskip}{-7pt}

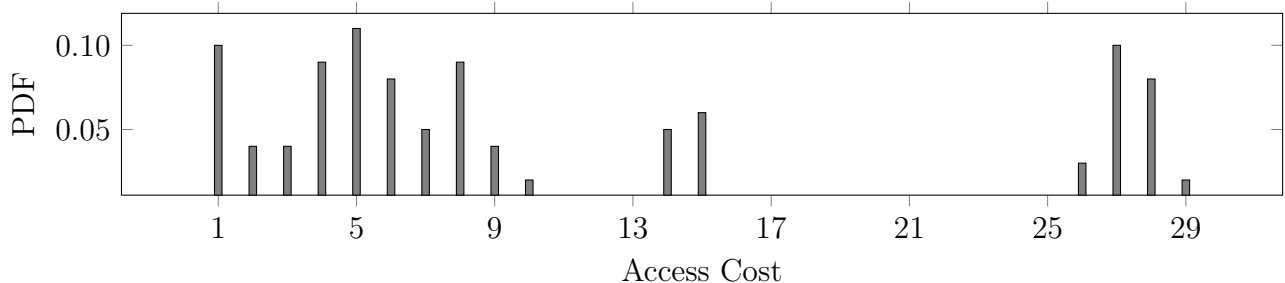
\begin{figure}[h]
	\centering
	\begin{tikzpicture}
	\begin{axis}[
	ybar,
	bar width = 0.1cm,
	width=\columnwidth,
	height=4cm,
	xtick={1,5,9,13,17,21,25,29},
	xlabel near ticks,
    ylabel near ticks,
    xlabel= Access Cost,
	ytick = {0.05, 0.1}, 
	y tick label style={
		/pgf/number format/.cd,
		fixed,
		fixed zerofill,
		precision=2
	},
	ylabel=PDF]
	\addplot[fill=gray] plot coordinates {
		(1, 0.10)
		(2, 0.04)
		(3, 0.04)
		(4, 0.09)
		(5, 0.11)
		(6, 0.08)
		(7, 0.05)
		(8, 0.09)
		(9, 0.04)
		(10, 0.02)
		(14, 0.05)
		(15, 0.06)
		(26, 0.03)
		(27, 0.10)
		(28, 0.08)
		(29, 0.02)
	};
	\end{axis}
	\end{tikzpicture}
	\caption{Histogram of $c_{i,j}$ values for the OVH network, based on Eq.~\ref{simulation_cost_function}, using $\alpha=0.5$ and $T=500$.}
	\label{fig:PDF_of_costs}
	\vspace{\vspacebelowcaption}
\end{figure}

\subsection{Data Store Characteristics}\label{sec:sim_settings_data_store}
Data stores are initially empty, and each can contain a maximum of $S$ data elements.
Once an item is added to a full data store, it evicts an item according to the Least Recently Used (LRU) policy.
The indicators are implemented using Counting Bloom Filters~\cite{CBF}, each consisting of $B(S)$ 8-bit counters and $5$ hash functions, where $B(S)$ is chosen as the number of counters required to obtain a target false positive ratio of $0.02$~\cite{Survey04}.
For example, in most of our simulations, we set $S=1000$, which implies $B(S)=8181$.
We assume that up-to-date indicators are available at all time as can be efficiently realized by compressed Bloom filters~\cite{CompressedBF}, or by only transmitting the changes as in~\cite{CDN_OceanStore}.

Each data store estimates its own misindication ratio by evaluating an exponential moving average over epochs
of $R$ requests made to the data store. Formally, let $m_j(s,t)$ denote the number of misses occurring at
data store $j$ during the requests $s+1,\ldots,t$ made to data store $j$\footnote{Recall that we only
access a data store if it has provided a positive indication.}. For any $t \leq R$ we let the estimated
misindication ratio after handling request $t$ be $\rho_j(t)=\frac{m_j(0,t)}{t}$. For $t > R$, we let
$\rho_j(t)$ be the most recent estimate over epochs of $R$ requests, $\rho_j(\floor{t/R}\cdot R)$, where for
every non-negative integer $k$ this estimate is updated after handling request $(k+1)R$ such that
$\rho_j((k+1)R)=\delta \cdot m_j(kR,(k+1)R)/R + (1-\delta) \cdot \rho_j(kR)$. In our simulations, we take
$\delta=0.1$ and $R=100$, as we found this configuration to yield a stable $\rho$ at each data store and
to work well in practice.

We consider a \emph{system-wide request distribution policy} where an item can only be
placed in $k$ data stores that are chosen by a hash function based on the requests' content. Such a policy is
inspired by ideas such as replication and partitioning to increase the hit ratio~\cite{Kaleidoscope}. We increased $k$ up to 5, which make $26\%$ of the 19 datastores in the system. 

\subsection{Traffic Trace, Metrics, and Simulated Scenarios}\label{sec:sim_settings_traffic_trace}
We used a publicly available Wikipedia trace~\cite{WikiBench} consisting of 357K read requests to Wikipedia pages during a 5 minute period\footnote{The trace includes requests made on Sep. 22, 2007, from 06:12 to 06:17}.
Each request in this trace is assigned to a random client issuing the request, and requests appear according to their order in the trace.
For handling the requests, we consider the following access policies applied by the clients for choosing the set of data stores to access:
\begin{inparaenum}[(i)]
	\item \cpi,
	\item \epi,
	\item \umb,
	\item \pot, and
	\item \pgmalg.
\end{inparaenum}
The evaluation factors the total cost, where all clients are running the same algorithms.
We also considered the benchmark performance provided by using perfect indicators (PI). This benchmark is used to normalize the costs of the various policies considered.
We measure the \emph{total cost} (TC) incurred by each access strategy for serving the entire trace. 
We normalize the TC of each access strategy by the TC of the perfect indicator PI. This normalization is aimed to compare the performance in various settings, while alleviating some of the exogenous effects specific to the scenario being evaluated.

\subsection{Heterogeneous Case (OVH network)}
\begin{table}[t]
	\begin{center}
		\begin{tabular}{|l|l||c|c||c|c||c|c|}
			\hline
			\multirow{2}{*}{$\mbeta$} & \multirow{2}{*}{Policy} & \multicolumn{2}{c||}{{\bf 1 location}} & \multicolumn{2}{c||}{\textbf{3 locations}} & \multicolumn{2}{c|}{\textbf{5 locations}}\\
			\cline{3-8}
			& & AC & TC & AC & TC & AC & TC\\
			\hline
			\hline
			\multirow{5}{*}{$10^2$}
            & \cpi & 0.16 & 1.20 & 0.10 & 1.11 & 0.08 & 1.08 \\
            & \epi & 0.23 & 1.09 & 0.43 & 1.39 & 0.49 & 1.55 \\
            & \umb & 0.19 & 1.10 & 0.16 & 1.10 & 0.13 & 1.09 \\
            & \pot & 0.20 & 1.11 & 0.18 & 1.13 & 0.20 & 1.16 \\
            & \pgmalg & 0.19 & 1.10 & 0.16 & 1.11 & 0.14 & 1.09 \\
			\hline
			\multirow{5}{*}{$10^3$}
            & \cpi & 0.02 & 1.22 & 0.01 & 1.10 & 0.01 & 1.07 \\
            & \epi & 0.03 & 1.01 & 0.05 & 1.05 & 0.07 & 1.08 \\
            & \umb & 0.03 & 1.01 & 0.03 & 1.04 & 0.02 & 1.03 \\
            & \pot & 0.03 & 1.01 & 0.04 & 1.04 & 0.03 & 1.04 \\			
            & \pgmalg & 0.03 & 1.01 & 0.03 & 1.04 & 0.03 & 1.03 \\
            \hline  
 			\multirow{5}{*}{$10^4$}
            & \cpi & 0.00 & 1.22 & 0.00 & 1.10 & 0.00 & 1.07 \\
            & \epi & 0.00 & 1.00 & 0.00 & 1.02 & 0.01 & 1.02 \\
            & \umb & 0.00 & 1.00 & 0.00 & 1.02 & 0.00 & 1.02 \\
            & \pot & 0.00 & 1.00 & 0.00 & 1.02 & 0.00 & 1.02 \\
            & \pgmalg & 0.00 & 1.00 & 0.00 & 1.02 & 0.00 & 1.02 \\
            \hline
		\end{tabular}
		\caption{OVH network simulation. Results present for every scenario and every policy the Access Cost (AC) and Total Cost (TC). The values are normalized by the TC of the Perfect Indicators policy.}
		\label{tab:fix_sim_results}
	\end{center}
\end{table}

In our first experiment, we compare the performance of various access strategies when varying the number of locations per item $k$ and the miss penalty $\mbeta$. For each configuration, we measure the normalized total cost (TC). 
Recall that the total cost (TC) is the sum of the access cost and the miss cost. Hence, for obtaining better insight of the dominant source for the cost of an access \redtext{strategy}, we show for each access strategy also its \emph{access cost} (AC). 
We normalize both the AC and the TC of each access strategy by the TC of the perfect indicator PI. 
The outcome of this evaluation is provided in Table~\ref{tab:fix_sim_results}, where we present the PI normalized
results for various $\mbeta$ and $k$ values. 

The results show that that \cpi\ has the minimal AC in all scenarios, as could be expected by its definition. The AC obtained by \cpi\ decreases when $k$ is increased, because when an item is found in multiple datastores, \cpi\ is more likely to pick from them a datastore with low access cost.
However, \cpi\ is highly sensitive to false positives, which are translated to high TC. This effect is more prominent when $k$ is low, implying that there exist only a few true positive indications.

\epi, on the other hand, is very effective for $k=1$ but becomes less attractive as we increase $k$, due to the fact
it ends up accessing too many data stores, which is captured by increasing AC. 
This effect is mitigated when $\mbeta$ is high, because a high miss penalty implies that one should better access multiple datastores in aim to minimize the probability of a miss, even at the cost of some unnecessary accesses.

Our three proposed algorithm -- \pot, \umb\ and \pgmalg\ -- outperform the two heuristics (\cpi\ and \epi) in all configurations, with the only of exception of $\beta = 100$ and $k=1$, where \cpi\ obtains slightly lower cost. Focusing on our three algorithms, \pot\ does slightly worse than \umb\ and \pgmalg, which can be explained by the higher AC of \pot. Intuitively, \pot\ optimizes for reducing the miss cost, even at the cost of a slightly higher access cost. 
\umb\ and \pgmalg\ are the best strategies in almost all scenarios, but most importantly, they are never  bad strategies. Thus, even when they underperform compared to some other strategy, the difference is marginal. 

\begin{table}[t]
	\begin{center}
		\begin{tabular}{|l||c|c|c|c||c|c|c|c|}
			\hline
			\multirow{2}{*}{Policy} & \multicolumn{4}{c||}{{\bf 1 location, varying $\fpr$}} &  \multicolumn{4}{c|}{\textbf{5 locations, varying $\fpr$}}\\
			\cline{2-9}
			& 0.01 & 0.02 & 0.03 & 0.04 & 0.01 & 0.02 & 0.03 & 0.04\\
			\hline
			\hline
            \cpi & 1.11 & 1.20 & 1.29 & 1.35 & 1.04 & 1.08 & 1.12 & 1.15 \\
            \epi & 1.04 & 1.09 & 1.13 & 1.17 & 1.51 & 1.55 & 1.60 & 1.64 \\
            \umb & 1.06 & 1.10 & 1.14 & 1.18 & 1.05 & 1.09 & 1.12 & 1.14 \\
            \pot & 1.06 & 1.11 & 1.16 & 1.21 & 1.12 & 1.16 & 1.19 & 1.23 \\
            \pgmalg & 1.06 & 1.10 & 1.14 & 1.18 & 1.06 & 1.09 & 1.12 & 1.15 \\
			\hline
		\end{tabular}
		\caption{OVH network simulation with varying false positive (FP) ratios. The miss penalty is set to $\mbeta=100$.}
		\label{tab:sim_fpr}
	\end{center}
\end{table}

Our next experiment explores the effect of the False Positive (FP) ratio on the performance of different access strategies. 
As our previous experiment implied that \cpi, and \epi\ does best when $k=5$, and $k=1$, we focus hereafter on these two values of $k$.
The results are shown in Table~\ref{tab:sim_fpr}. 
The results show that \cpi\ and \epi\ are highly sensitive to the FP ratio. For instance, when $k=1$ and $FP = 0.04$, \cpi\ incurs an excessive cost of $35\%$ above \opt. When $k=5$ and $FP = 0.04$, \cpi\ incurs an excessive cost of $64\%$ above \opt. In contrast, all our proposed algorithms -- \umb, \pot\ and \pgmalg\ -- show only a mild increase in the cost when incrementing FP. In particular, \umb\ and \pgmalg\ obtain again minimal, or close to minimal, costs across the board. 

\begin{figure*}[ht!]
	\centering
	\subfloat[\label{sim:homogenenous:kloc_1}
	1 location
	]{
\begin{tikzpicture}
	\begin{axis}[
		width=0.9\columnwidth,
		height=0.4\columnwidth,
		legend style = {at={(0.5,1.23)},anchor=north,legend columns=-1},
		ylabel=Normalized Cost,
		ylabel near ticks,
		xlabel= Data Store Size,
		xlabel near ticks,
		xtick={200,550,900,1250,1600},
		ymin=0.99,
		ymax=1.25,
		ytick={1,1.05,1.10,1.15,1.20, 1.25},
        xmin = 180,
        xmax =1620,
        every axis plot/.append style={very thick},
		]
	\addplot[color=red,mark=triangle, mark size=\marksize] coordinates {
        (200, 1.172458893088127) 
        (400, 1.1891175910905962) 
        (600,1.201419114155504) 
        (800, 1.2038155224461218) 
        (1000, 1.2154734756402474) 
        (1200, 1.2144604632517249) 
        (1400, 1.2130999386197703) 
        (1600, 1.2182128334139808)
	};	\addlegendentry{\cpi}	
	
\addplot[color=purple,mark=o, mark size=\marksize] coordinates {
        (200, 1.015990103384882)
        (400, 1.014576425448219)
        (600, 1.0134752522872053)
        (800, 1.0123316108716378)
        (1000, 1.0118013670890589)
        (1200, 1.0122193854347152)
        (1400, 1.0100786990987487)
        (1600, 1.0102553005476238)
	}; \addlegendentry{\epi}
	
\addplot[color=cyan,mark=x, mark size=\marksize]    coordinates {
        (200, 1.0159879233659628) (400, 1.0145729272122648) (600, 1.0135052171152816) (800, 1.012379430104598) (1000, 1.0118333179850207) (1200, 1.012392496924498) (1400, 1.0102636162354386) (1600, 1.0103483951422827)}; \addlegendentry{\umb, \pot, \pgmalg}
\end{axis}
\end{tikzpicture}
	}
\hfill
	\subfloat[\label{sim:homogenenous:kloc_5}
	5 locations
	]{
\begin{tikzpicture}
	\begin{axis}[
		width=0.9\columnwidth,
		height=0.4\columnwidth,
		legend style = {at={(0.5,1.23)},anchor=north,legend columns=-1},
		ylabel=Normalized Cost,
		ylabel near ticks,
		xlabel= Data Store Size,
		xlabel near ticks,
		xtick={200,550,900,1250,1600},
		ymin=0.99,
		ymax=1.25,
		ytick={1,1.05,1.10,1.15,1.20, 1.25},
        xmin = 180,
        xmax =1620,
        every axis plot/.append style={very thick},
		]
	\addplot[color=red,mark=triangle, mark size=\marksize] coordinates {
        (200, 1.0571042491358278) (400, 1.0613575857932842) (600, 1.0685655547887776) (800, 1.068415187428904) (1000, 1.0700426105927774) (1200, 1.0693293902841547) (1400, 1.0726655861863812) (1600, 1.0751819133860918)
	};  \addlegendentry{\cpi}	
    \addplot[color=purple,mark=o, mark size=\marksize] coordinates {
        (200, 1.04337064502995) (400, 1.0508111025601496) (600, 1.0579700726622467) (800, 1.0611142341913353) (1000, 1.061730746507961) (1200, 1.0639417949077872) (1400, 1.0661349844349326) (1600, 1.0671175456603568)
	};  \addlegendentry{\epi}

    \addplot[color=cyan,mark=x,mark size=\marksize]    coordinates {
        (200, 1.0252096635675578) (400, 1.0266513608332049) (600, 1.0293537976857638) (800, 1.0289561894306145) (1000, 1.0286853470329471) (1200, 1.028690725243771) (1400, 1.029228787826495) (1600, 1.028819868452436)
    };  \addlegendentry{\umb, \pot, \pgmalg}
\end{axis}
\end{tikzpicture}
	}
	\caption{\label{cacheSize}Homogeneous network with varying data store size. The miss penalty is set to $\mbeta=100$, and the target false positive ratio is $0.02$.}
\end{figure*}
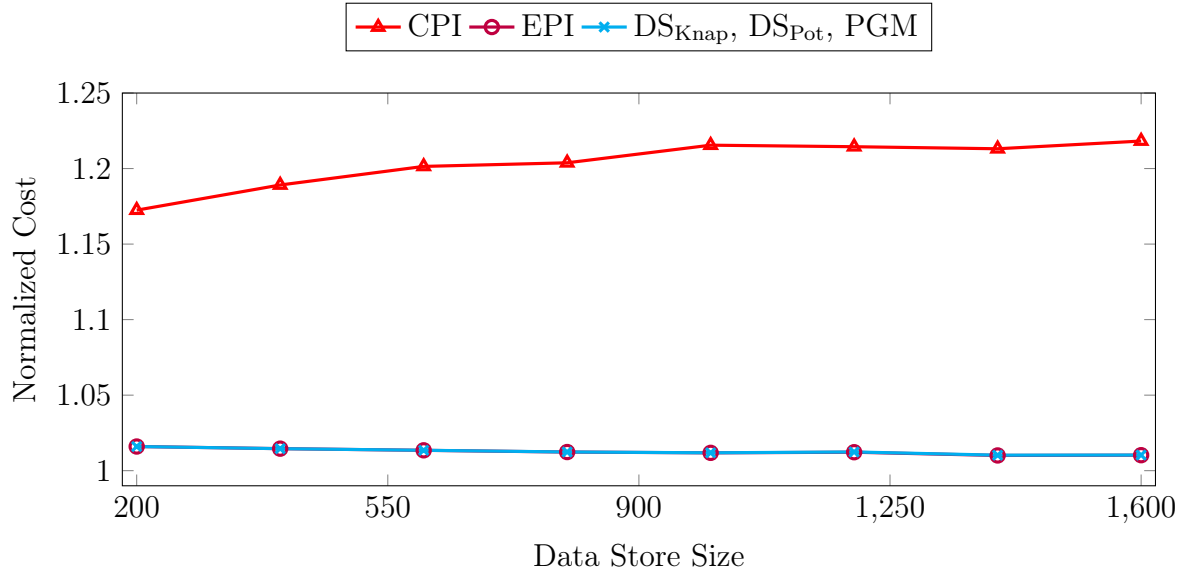
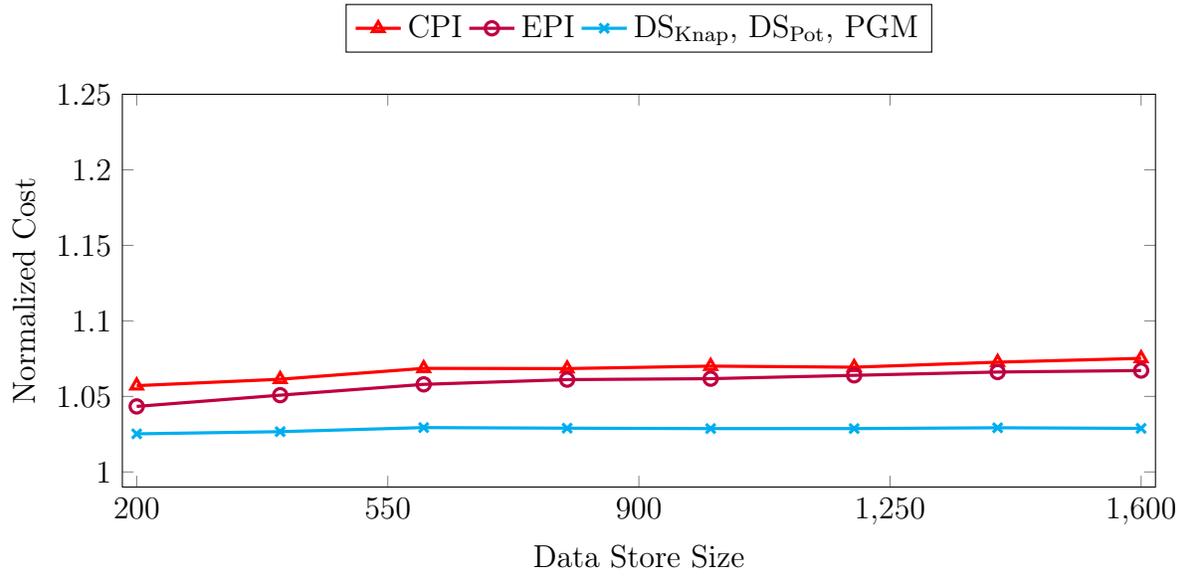

\subsection{Homogeneous Case: Varying Data Store Size}
Our next experiment considers homogeneous settings, where the access costs of all 19 datastores in fixed 1. We aim at studying the effect of the datastore size in these settings. 
Figure~\ref{cacheSize} shows the results for $k=1$ and $k=5$ locations per item, where we vary the size of each datastore from 200 up to 1600. In these homogeneous cost settings, all our algorithms -- namely, \pot, \umb\ and \pgmalg\ -- are equivalent to the scheme which minimizes the expected overall cost, \fpo. Furthermore, their performance is always very close to the one achieved with perfect indicators. 
In contrast to our previous experiments, \cpi\ does not do very well even with $k=5$ locations per item. The reason is that in such homogeneous settings, when there exist multiple positive indications none of them is ``cheapest''. As a result, \cpi\ merely randomly selects a single datastore -- that is, \cpi\ always accesses a datastore which neither minimizes the miss ratio, nor minimizes the access cost. 
The results of the homogeneous case show again that the existing heuristics are too simplistic to fit all system configurations, thus motivating the need for our algorithms.


\section{Discussion}\label{sec:discussion}

Our work closes an important knowledge gap concerning indicator based caching in network systems. Namely, it
answers the fundamental question of providing a stable access strategy that achieves near-optimal results in
a wide variety of scenarios.

Our work starts by showing that the access strategy problem was roughly ignored until now and that the
existing solutions are only attractive for some system parameters. That is, their effectiveness is determined
by uncontrolled variables that may change throughout the system's life, and may not be known in advance. In
contrast, the algorithms suggested in this work provide provable approximation ratios to the optimal solution
and are shown to be near-optimal in a variety of system settings.

As future work, we aim at studying the trade-off between the bandwidth used by indicators and the miss cost incurred by their false indications.

\newpage
\chapter{Virtual Machine Placement in Virtual Network Functions}\label{sec:APSR}

\newcommand{\pap}{ParalAlocProc}
\newcommand{\dr}{\eps} 
\newcommand{\nbins}{n}   
\newcommand{\nfreeslots}{f} 
\newcommand{\nfreebins}{k} 
\newcommand{\nsched}{s}  
\newcommand{\nhappy}{H_{\nsched}} 
\newcommand{\hk}{H^{k}_{\nsched}} 
\newcommand{\hkpp}{H^{k+1}_{\nsched}} 
\newcommand{\phk}{F^{k}_{\nsched}} 
\newcommand{\phkpp}{F^{k+1}_{\nsched}} 
\newcommand{\ehk}{E[H^{k}_{\nsched}]} 
\newcommand{\hkkpp}{E[H^{k+1}_{\nsched}]} 
\newcommand{\dif}{D} 
\newcommand{\npothappy}{F_{\nsched}} 
\newcommand{\PrFF}{\sigma} 
\newcommand{\budget}{B} 
\newcommand{\estkalg}{EstimateK}
\renewcommand{\alg}{MaximizeParallelism} 
\newcommand{\algtopKavg}{\algtop$_{avg}$}
\newcommand{\rand}{Rand}
\newcommand{\host}{\vec{h}} 
\newcommand{\hosts}{\mathbf{H}} 
\newcommand{\request}{\vec{r}}
\newcommand{\requests}{\mathbf{R}}
\newcommand{\flavors}{\mathbf{C}}
\newcommand{\flavor}{\vec{c}}
\newcommand{\dqueries}{d}
\newcommand{\availhosts}{\hosts}
\setlength{\textfloatsep}{10pt plus 1.0pt minus 2.0pt}
\setlength{\dbltextfloatsep}{10pt plus 1.0pt minus 2.0pt}

\section{Problem Overview}
\label{sec:apsr:intro}
The {\em Network Function Virtualization (NFV)} paradigm deploys virtual machines for running network functions such as firewalls, deep packet inspection, load balancing and monitoring. NFV enables online deployment of network services and scaling of such services according to the current workload requirements, without relying of concrete physical  middleboxes~\cite{Middleboxes,EASE}.
These features should presumably improve the overall system performance in various perspectives, including throughput and latency.
Unfortunately, these improvements are not manifested in large clouds, as we see and discuss in the sequel.

To understand why these benefits do not scale to large cloud environments, it is instructive to consider the process of VM deployment in the cloud. Once the user issues a request to allocate a new {\em Virtual Machine (VM)}, a resource management algorithm, commonly referred to as a {\em scheduler}, selects a host on which to accommodate the VM which is then deployed on the chosen host. While the deployment time of optimized VMs or containers (e.g., using Kubernetes~\cite{Kubernetes}) can be performed within tens of milliseconds~\cite{ClickOS}, selecting a host on which to place the VM may require hundreds of milliseconds in large clouds~\cite{OSProblem,KubeProblem,ASC}.
It follows that the potential performance boost of using NFV remains largely unfulfilled in large clouds due to bottlenecks in scheduling deployment requests.

The main reason that deciding on which host to place the VM takes so long is that most current 
resource management algorithms~\cite{Power,Power2,Switching,Yaniv2,FaultTolerant,RazPlacement,SHABEERA2017616,Yao2013} require complete information about the availability of resources on the system's hosts.
In a large cloud, gathering the current state from hundreds and sometimes thousands of hosts translates to high communication overhead, resulting in a performance bottleneck~\cite{OSProblem,ASC, KubeProblem}.

Intuitively, one could address this handicap by running multiple schedulers in parallel. However, such an approach may translate to having such multiple schedulers try and place requests simultaneously on the same host, leading to race scenarios~\cite{Omega, Sparrow, host_subset_size}.
In such cases, not all deployments will be successful, and some of the requests placed on the same host may be rejected, or {\em declined}.
However, a provider is typically required to satisfy a {\em Service Level Agreement (SLA)} which bounds the 
\textcolor{red}{the ratio between the number of requests that are declined and the total number of requests (the {\em decline ratio}~\cite{NFV_RA_survey16}).}

An efficient placement algorithm should therefore strive to increase parallelism, while maintaining a low communication overhead and bounded decline ratio.
However, to the best of our knowledge, no previous work has studied the interplay between parallelism and decline ratio in VM placement.

\subsection {Related Work}
\label{sec:apsr:rltd_work}

This section provides background on the way scheduling works in the OpenStack platform, and discusses related work addressing various aspects of VM placement.

\paragraph*{OpenStack:} OpenStack is a popular and widely used open source cloud management platform~\cite{6234218,Van2016} that manages compute, storage and network resources. 
It is composed of an ensemble of sub-projects, where scheduling is implemented within the Nova project~\cite{Nova}. 
Upon receiving a user request, Nova selects a host for it, and places a VM on the selected host.
The default scheduler is called a \emph{Filter Scheduler}~\cite{filterscheduler}, which goes through the following sequence of stages upon the arrival of each request:
\begin{inparaenum}[(i)]
\item the \emph{State} stage, where the scheduler gathers the state of the available resources in each host, followed by
\item the \emph{Filter} stage, where the scheduler goes over all the hosts reports and filters out the hosts that cannot satisfy the request, and finally
\item the \emph{Weight} stage, where the scheduler selects one of the hosts that can satisfy the request, according to some weight function applied on the hosts' states.
\end{inparaenum} 

\paragraph*{Placement algorithms:} There is a large body of work that deals with placement of user requests in cloud environments~\cite{FaultTolerant,Power,Power2,f4,RazPlacement,VirtualRack,Yao2013,SHABEERA2017616,Ari02whois,Xiao13}.
Most works vary from one another by the nature of the optimized performance metric.
Examples of such metrics include minimizing the number of utilized hosts to save power~\cite{Power2}, minimizing migration overheads~\cite{Power2}, improving
fault tolerance~\cite{FaultTolerant}, minimizing NFV switching overheads~\cite{Switching,Yaniv2}, optimizing host utilization~\cite{RazPlacement,bup}, and studying the impact of network bandwidth on VM placement~\cite{VirtualRack,Yao2013,SHABEERA2017616}.

\paragraph*{Communication overhead:} All the above algorithms implicitly assume full and up-to-date information about the available resources in all hosts, which coincides with the approach of OpenStack's Filter scheduler, which queries all the hosts before considering any specific placement concerns.
However, attaining such a complete state information incurs high communication overhead, resulting in performance bottlenecks~\cite{OSProblem,KubeProblem}. 
The recently introduced \emph{Adaptive Scheduler Cache}~\cite{ASC} aims at decreasing the communication overhead by using a cached system state.
Their method is shown to improve OpenStack's throughput, but they do not provide any guarantees on the system's performance, or on the decline ratio of scheduling decisions. Their proposed approach is essentially orthogonal to those of applying randomness and parallelism (see below), and thus can be deployed alongside with our method. 

\paragraph*{Parallelism:} OpenStack traditionally used a single scheduler, but the community is exploring ways to increase performance, and 
parallelism is suggested as a straightforward technique. 
However, simply running multiple identical independent schedulers may translate to numerous collisions between several schedulers simultaneously trying to place requests on the same host, which results in race conditions~\cite{host_subset_size, Sparrow} and high decline ratio. 
The Omega scheduler~\cite{Omega} mitigates this problem using shared state information and atomic updates. However, in contrast to our work, Omega requires complex synchronization mechanisms and high communication overhead and does not provide provable guarantees on the decline ratio. 

\paragraph*{Randomness:}
A large body of work has considered random approaches to balanced allocations~\cite{P2_azar,parallel_P2,P2}. These works focused on decreasing communication overhead while keeping provable strong guarantees on the {\em maximal load} in the system.
These approaches essentially allow a scheduler to sample the state of but a few of the hosts, picked u.a.r., and place a request on one of the sampled hosts.
Additional works~\cite{Sparrow, Tarcil} proposed practical implementations of this approach in cloud network environments.
However, all these works address a problem that is inherently different from the one studied in our work, which is prevalent in NFV environments. While these works assume an (infinite) buffer for pending requests in each host,  requests for VNF deployment can be either accepted or declined and the goal is to have the VNF operational as soon as possible. Thus, queuing these for later deployment makes little sense. 

In addition,~\cite{P2_azar, parallel_P2,P2,Sparrow, Tarcil} select the ``best'' host to place the request on, among the sampled hosts, using deterministic criteria. Consequently, multiple schedulers sampling the same (best) host are still likely to conflict, making it very hard to provide guarantees on the decline ratio. In contrast, our schedulers select a host u.a.r. among all the available hosts it finds. Such randomness allows for a provable low decline ratio.
It should be noted that for the purpose of using randomness to improve performance, the OpenStack community introduced the parameter \texttt{scheduler\_host\_subset\_size}~\cite{host_subset_size} (denoted $\ell$), which works as follows. 
After ranking the available hosts in the weight stage, the scheduler randomly assigns the request to one of the $\ell$ top ranking hosts. 
However, as the impact of this parameter on the performance and the decline ratio hasn't been analyzed, its value is currently determined using crude estimations and rules of thumb. 
Our work provides insight as to how one should optimize the choice of $\ell$. Specifically, we show that the fully random scheme,
where $\ell$ equals the  number of hosts $\nbins$, minimizes the decline ratio. In Section~\ref{sec:apsr:PlacementStudy}, we show that the common approach of setting $\ell$ to be some small constant still results in poor performance. 

\subsection{System Model}
\label{sec:apsr:model}

We now describe our system model. For ease of reference, the notation used in this chapter is summarized in Table~\ref{tbl:apsr:model}.
We consider a set $\hosts$ of $\nbins$ hosts where each host has some multi-dimensional capacity corresponding to several types of resources, e.g., memory, CPU, or disk space. Formally, we model each $\host \in \hosts$ as a vector whose coordinates correspond to the currently available amount of resources of each type. We refer to this vector as the {\em state} of the host.
We further consider a set $\requests$ of \emph{requests}, each modeled as a vector of demand for each resource. 
We assume each request $\request \in \requests$ has its vector drawn from some finite set of {\em flavors}, $\flavors=\set{\flavor_1,\ldots,\flavor_m}$.
A host $\host$ is considered \emph{available} for request $\request$ if it has enough resources of each type, i.e., if $\host > \request$, coordinate-wise.

We assume time is slotted, such that in every time slot some requests arrive at the system, and are queued, pending assignment to hosts.
We denote by $\nsched$ the number of \emph{parallel schedulers} that may perform scheduling decisions simultaneously in any single time slot.
In each time slot $t$, given a queue consisting of some $q$ requests pending at $t$, each of the first $\nsched$ requests in  
the queue is matched to a distinct scheduler, which should proceed in assigning its matched request to one of the hosts.
Each such scheduler may query some subset of hosts for their state, after which it assigns its pending request to one of the available hosts, out of the set of hosts it has queried.
We note that when $\nsched>1$, multiple schedulers may concurrently assign their pending requests to the same host.

Any host $\host \in \hosts$ resolves concurrent requests being assigned to $\host$ at the same time slot in some arbitrary order. The resolution of request $\request$ being assigned by some scheduler to host $\host$ \emph{fails} if the host is no longer available when it resolves $\request$, and is \emph{successful} otherwise. The host updates its available capacity upon a successful resolution by setting $\host = \host - \request$.  Requests live for some time, and the host regains the resources used by completed requests. If request $\request$ placed on host $\host$ is completed we update the resource state of the host by setting $\host = \host + \request$.
The above model implies that a request fails if either \begin{inparaenum}[(i)]
\item the scheduler does not find an available host, or
\item the chosen host is no longer available once it resolves the request.
\end{inparaenum}

In every time slot $t$, and for every request flavor $\flavor \in \flavors$, we let $\nfreebins^{(t)}_{\flavor}$ denote the number of hosts in $\hosts$ that are available for a request of flavor $\flavor$ at time $t$. We further let $\nfreebins^{(t)}$ denote an {\em estimate} of the number of hosts that may accommodate {\em any} request that may arrive at time $t$. We note that $\nfreebins^{(t)}$ may be a pessimistic estimate (e.g., by setting $\nfreebins^{(t)}=\min_{\flavor} \ \nfreebins^{(t)}_{\flavor}$), or it may incorporate some information about the workload distribution, or otherwise the system state.
We will usually be omitting the superscript of $(t)$, and refer to $\nfreebins_{\flavor}$, and $\nfreebins$, when the time slot in question is clear from the context.

The \emph{decline ratio} is the ratio between the number of failed requests and the total number of requests handled by the system. We will use $\delta$ to denote the expected decline ratio of the system (for some set of requests $\requests$). Since we are handling requests independently, $\delta$ denotes the probability of having a declined request.
We assume the system is subject to a \emph{Service Level Agreement (SLA)} which limits the decline ratio to be at most $\dr$, for some $\dr \in [0,1]$.
We further assume we are given some budget $B$ such that the maximal number of queried hosts in every time slot is at most $B$.
In every time slot $t$, we denote by $\dqueries$ the number of hosts queried by any scheduler with a pending request at $t$.
A \emph{valid configuration} of schedulers determines $\nsched$ and $\dqueries$, such that $\nsched \cdot \dqueries \leq \budget$, and the probability of a failed request is at most $\dr$.

Our goal is to find a valid configuration that maximizes the number of parallel schedulers ($\nsched$). We refer to this problem as the {\em Constrained Maximum Parallelism (CMP)} Problem.

\newpage
\LTcapwidth=0.95\textwidth
\begin{longtable}[t!]{| p{.15\textwidth} | p{.68\textwidth} | p{.06\textwidth} |}
\caption[List of symbols used in Chapter~\ref{sec:APSR}]{List of symbols used in this chapter. The rightmost colon details the section where the symbol is used. A blank entry means that the symbol is used throughout the chapter.}
\label{tbl:apsr:model} \\

\hline \multicolumn{1}{| p{.15\textwidth} |}{\textbf{Symbol}} & \multicolumn{1}{| p{.68\textwidth} |}{\textbf{Meaning}} & \multicolumn{1}{| p{.06\textwidth} |}{\textbf{Section}} \\ \hline 
\endfirsthead

\multicolumn{3}{c}%
{{\bfseries \tablename\ \thetable{} -- continued from previous page}} \\
\hline \multicolumn{1}{|c|}{\textbf{Symbol}} & \multicolumn{1}{c|}{\textbf{Meaning}} & \multicolumn{1}{c|}{\textbf{Section}} \\ \hline 
\endhead

\hline \multicolumn{3}{|r|}{{Continued on next page}} \\ \hline
\endfoot

\hline \hline
\endlastfoot
        \hline
    	\hline
		$\hosts$ & Set of hosts&\tabularnewline
		\hline
		$\nbins$ & Number of hosts (bins) &\tabularnewline
		\hline
		$\host$ & Host in $\hosts$ (resources availability vector)&\tabularnewline
		\hline
		$\requests$ & Set of requests &\tabularnewline
		\hline
		$\request$ & Request in $\requests$ (resources demand vector)&\tabularnewline
		\hline
		$\flavors$ & Set of requests flavors&\tabularnewline
		\hline
		$\flavor$ & Flavor in $\flavors$ of a request &\tabularnewline
		\hline
        $\nsched$ & Number of schedulers (agents)&\tabularnewline
		\hline
        $\delta$ & Actual decline ratio (ratio of failed requests)&\tabularnewline
		\hline
        $\dr$ & Maximum allowed decline ratio by the SLA&\tabularnewline
		\hline
		$\budget$ & Budget for overall number of queries&\tabularnewline 
		\hline
		$\dqueries$ &Number of hosts queried by each scheduler &\tabularnewline
		\hline
        $\nfreebins_{\flavor}$ & Number of available hosts for flavor $\flavor$&\tabularnewline
		\hline
        $\nfreebins$ & Number of available hosts for any request &\tabularnewline
		\hline
        $\npothappy$ & Number of potentially-happy agents&\ref{sec:apsr:APSR}\tabularnewline
		\hline
        $\nhappy$ & Number of happy agents&\ref{sec:apsr:APSR}\tabularnewline
		\hline
        $\PrFF$ & See Eq.~\ref{Eq:def_sigma} 
        &\ref{sec:apsr:APSR}\tabularnewline
		\hline
		$Bin(a,b,c)$ & See Eq.~\ref{Eq:def_bin} 
        &\ref{sec:apsr:APSR}\tabularnewline
		\hline
		$Q_i$ & Set of bins which agent $i$ finds available &\ref{sec:apsr:APSR} \tabularnewline
		\hline
		$q_i$ & Number of bins which agent $i$ finds available: $q_i = \abs{Q_i}$ &\ref{sec:apsr:APSR} \tabularnewline
		\hline
		$\lambda_a$ & Poisson arrival rate &\ref{sec:apsr:Evaluation}\tabularnewline
		\hline
		$\lambda_d$ & Poisson departure rate&\ref{sec:apsr:Evaluation}\tabularnewline
		\hline
\end{longtable}

\subsection{Our Contribution}\label{sec:apsr:our_cont}
We study the problem of virtual machine placement in virtual network functions. We do so by exploring the interplay between throughput, decline ratio, and communication overhead. We focus our attention on large clouds, where maintaining an always-fresh full system’s state in impractical, resulting in a highly uncertain environment.

In Section~\ref{sec:apsr:PlacementStudy} we study the impact of parallelism on the decline ratio of various popular placement algorithms. We show that parallelism may drastically increase the decline ratio, where we attribute this increase to the determinism of most algorithms. Interestingly, we find that randomly placing VMs in suitable hosts allows for a large degree of parallelism without a significant impact on the decline ratio. Our study further shows that the desired level of parallelism depends on the system's utilization. In general, low-utilization environments allow for more schedulers than high-utilization ones. 

In Section~\ref{sec:apsr:APSR} we take advantage of these observations to introduce our proposed algorithm, \algtop, which dynamically adjusts the number of parallel schedulers according to the system's utilization, and incorporates randomness into its decision making. \algtop\ guarantees that the expected decline ratio is always within
a predefined requirement. Furthermore, \algtop\ is inherently optimized to query but a small number of hosts, thus reducing the communication overheads. In Section~\ref{sec:comb_analysis} we formally analyze the performance of \algtop\ and provide guarantees as to its communication overhead, and expected decline ratio in satisfying requests. In Section~\ref{sec:apsr:practical_implementation} we describe a practical implementation of \algtop.

In Section~\ref{sec:apsr:Evaluation} we evaluate the performance of \algtop\ for three real-life datasets and show that it enables a high degree of parallelism (e.g., effectively running 20-100 schedulers) in a variety of realistic scenarios. We further show that \algtop\ reduces the communication overhead by over 85\% compared to state of the art algorithms. 

Finally, we conclude in Section~\ref{sec:apsr:conclusions} with a discussion of our results, and several
interesting open questions.

\section{The Impact of Parallelism on Existing Placement Algorithms}
\label{sec:apsr:PlacementStudy}

We begin by evaluating the effect of parallel schedulers on the decline ratio of existing placement algorithms.

\subsection{Evaluated Algorithms}
We briefly introduce some common placement algorithms and provide a brief description of their operation (see, e.g., ~\cite{Mills11comparingvmplacement} for further details).

The {\em WorstFit (WF)} algorithm, which serves as OpenStack's default placement algorithm~\cite{filterscheduler}. This algorithm places requests on one of the least loaded hosts, in order to maximize the remaining amount of resources on the hosts.
For the multidimensional settings, we implement a pessimistic variant of WF where we consider the load of a host to be the maximum load over all the possible resources.


The \emph{FirstFit (FF)}~\cite{Epstein2003}
algorithm, which assigns a request to the first host that has sufficient resources to accommodate the request (assuming some arbitrary fixed ordering of the hosts). This approach aims at minimizing the number of utilized hosts, thus reducing energy consumption.

The \emph{Adaptive} algorithm~\cite{RazPlacement} combines WF and FF as follows: It begins by distributing the load evenly on all hosts (like WF); once the load passes a threshold, the algorithm switches to an FF regime.
Throughout our evaluation, we used 0.6 as the threshold for the Adaptive algorithm.

The algorithm \emph{DistFromDiag}~\cite{RazPlacement} attempts to balance the resource consumption in the host according to its proportions.
For example, if a host has 100GB disk and 10GB RAM, it aspires for a 10:1 ratio between available disk and RAM. 

We also consider two algorithms that incorporate randomization into WF and FF. These variants, referred to as \emph{WorstFit-Rand (WFR)} and \emph{FirstFit-Rand (FFR)}, respectively, weigh the hosts based on the WF and FF strategies, but instead of selecting the top-ranking host, they randomly select a host from the
$\ell$ top-ranking available hosts (in the spirit of the option available in OpenStack, as described in Section~\ref{sec:apsr:rltd_work}).
In our evaluation of WFR and FFR we set $\ell=5$. 

Finally, we evaluate the \emph{Random} algorithm, which selects an available host uniformly at random among the available hosts (i.e., hosts that pass the filter stage and have sufficient resources to accommodate the request).

\subsection{Datasets}\label{Sec:apsr:Datasets}
We use three datasets that capture requests made in real systems. We evaluate each workload in a cloud environment that has sufficiently many hosts so as to accommodate all the requests (see Section~\ref{sec:apsr:PlacementStudy:experiments} for details of how to choose the number of hosts).

\begin{table}[t]
	\centering
    \footnotesize
    \begin{tabular}{|cc||ccccc|c|}
		\hline
		& & \multicolumn{6}{c|}{$storage$} \\
		& & 0.01 & 0.04 & 0.1 & 0.3 & 0.54 & Total \\ 
		\hline
		\hline
		\parbox[t]{2mm}{\multirow{6}{*}{\rotatebox[origin=c]{90}{$memory$}}}
		& 0.001 & 14 & 22 & 14 & 3 & 13 & 66 \\
		& 0.016 &    7 &     93 &     0    & 2 &    0     & 102\\
		
		& 0.032 &    83 &    165 &    0 &    14    & 0 & 262 \\
		& 0.064 &    1    & 1 &     1 &    0    & 0 &    3 \\
		& 0.19    & 0&    2    &0 &    0&    2&    4\\
		\cline{2-8}
		& Total    & 105    & 283 &    15 &    19 &    15    & 437 \\
		
		\hline
	\end{tabular}
    \caption[Normalized breakdown of request flavors in the NFV dataset]{Normalized breakdown of requests for VM images by memory and storage, obtained from the NFV dataset.}
    \vspace{\vspacebelowcaption} 
    \label{table:proprietary}
\end{table}

\paragraph*{The NFV Dataset} was collected from a proprietary large NFV management and orchestration (MANO) system~\cite{ASC}\footnote{Although it is not publicly available, the authors of~\cite{ASC} cordially agreed to provide us with the dataset.}. 
In this scenario all hosts are identical and the placement requests are for VMs of preset sizes (flavors). Hosts and placement requests are two dimensional tuples of the form $\left<memory, storage\right>$. The host size and requests are normalized such that each host has a capacity of $\left<1,1\right>$ and each VM requires a certain fraction of this capacity.
Table~\ref{table:proprietary} shows the distribution of flavors for this dataset. 

\begin{table}
	\centering
        \begin{tabular}{|cc||ccc|c|}
		\hline
		& & \multicolumn{4}{c|}{$CPU$} \\
		& & 0.25 & 0.5 & 1.0 & Total \\ 
		\hline
		\hline
		\parbox[t]{2mm}{\multirow{6}{*}{\rotatebox[origin=c]{90}{$memory$}}}
		& 0.125 & 0 & 60 & 0 & 60 \\
		& 0.25 & 123 & 3,835 & 0 & 3,958 \\
		& 0.5 & 0 & 6,672 & 3 & 6,675 \\
		& 0.75 & 0 & 992 & 0 & 992 \\
		& 1.0 & 0 & 4 & 788 & 792 \\
		\cline{2-6}
		& Total & 123 & 11,563 & 791 & 12,477 \\
		\hline
	\end{tabular}
    \caption[Normalized breakdown of request flavors in the Google dataset]{Breakdown of the number of placement request sizes by CPU and memory, obtained from the Google dataset.}
    \label{table:google}
    \vspace{\vspacebelowcaption}
\end{table}

\paragraph*{The Google Dataset} was recorded in one of Google's clusters~\cite{clusterdata:Reiss2011}. It holds data from 12,477 virtual machines that are characterized by tuples of $\left<CPU, memory\right>$. 
As summarized by~\cite{liu}, the normalized CPU values vary
between $0.25$, $0.5$, and $1$, while the memory values can be grouped around
five levels:  $0.125$, $0.25$, $0.5$, $0.75$, and $1$. The hosts capacities are either  $\left<1,2\right>$ or $\left<2,1\right>$ in equal proportions~\cite{RazPlacement}.
Table~\ref{table:google} provides the breakdown of flavors for this dataset. 

\begin{table*}[t]
	\centering
	\small
    \resizebox{\textwidth}{!}{
	
	\begin{tabular}{|c||c|c|c|c|c|c|c|c|c||c|c|c|c|c|c|c|c|c|}
	    \cline{2-16}
	    \multicolumn{1}{c||}{} &
	    \multicolumn{9}{c||}{Small} &
	    \multicolumn{6}{c|}{Large} \\
		\hline
		$CPU$ &0.035&0.07&0.083&0.1& 0.142 & 0.167 & 0.2 & 0.333 & 0.354 &    
		0.4 & 0.5 & 0.5 & 0.8 & 0.833 & 1
		\\ 
		$memory$ &0.008&0.016&0.031&0.008&0.031&    0.063      & 0.016 & 0.125 & 0.062 &    
		0.031   &  0.125 &    0.5 &   0.063  &    0.25 &     0.25  \\
		\hline
	\end{tabular}
	}
	\vspace{0.2cm} 
    \caption[Normalized breakdown of request flavors in the Amazon EC2 dataset]{Breakdown of placement request flavors of $\left<CPU,memory\right>$ obtained from the Amazon EC2 dataset. Flavors are sorted by $CPU$.}
    \label{table:amazon}
    \vspace{\vspacebelowcaption}
\end{table*}

\paragraph*{The Amazon Dataset} is based on data from Amazon EC2 hosts and VM flavors~\cite{Mills11comparingvmplacement,RazPlacement}. Table~\ref{table:amazon} depicts the
flavors of the normalized $\left<CPU, memory\right>$ in this dataset, 
where each column represents one possible flavor of requests. We partition requests flavors into two types: {\em small} flavors, which have a CPU requirement below $0.4$, and {\em large} flavors, which consist of all remaining flavors.
We generate a sequence of $1000$ small requests and $100$ large
ones (i.e., a total of $1100$ requests), where for each request of one of these types we pick its specific flavor uniformly at random from the various flavors of the type.
In evaluating the performance for this dataset we consider hosts with capacities of either $\left<1, 2\right>$ or $\left<2,1\right>$ in equal proportions (similarly to the host setup used in the Google dataset).

\begin{figure}[h!]
    \centering
    \includegraphics[width=\columnwidth] {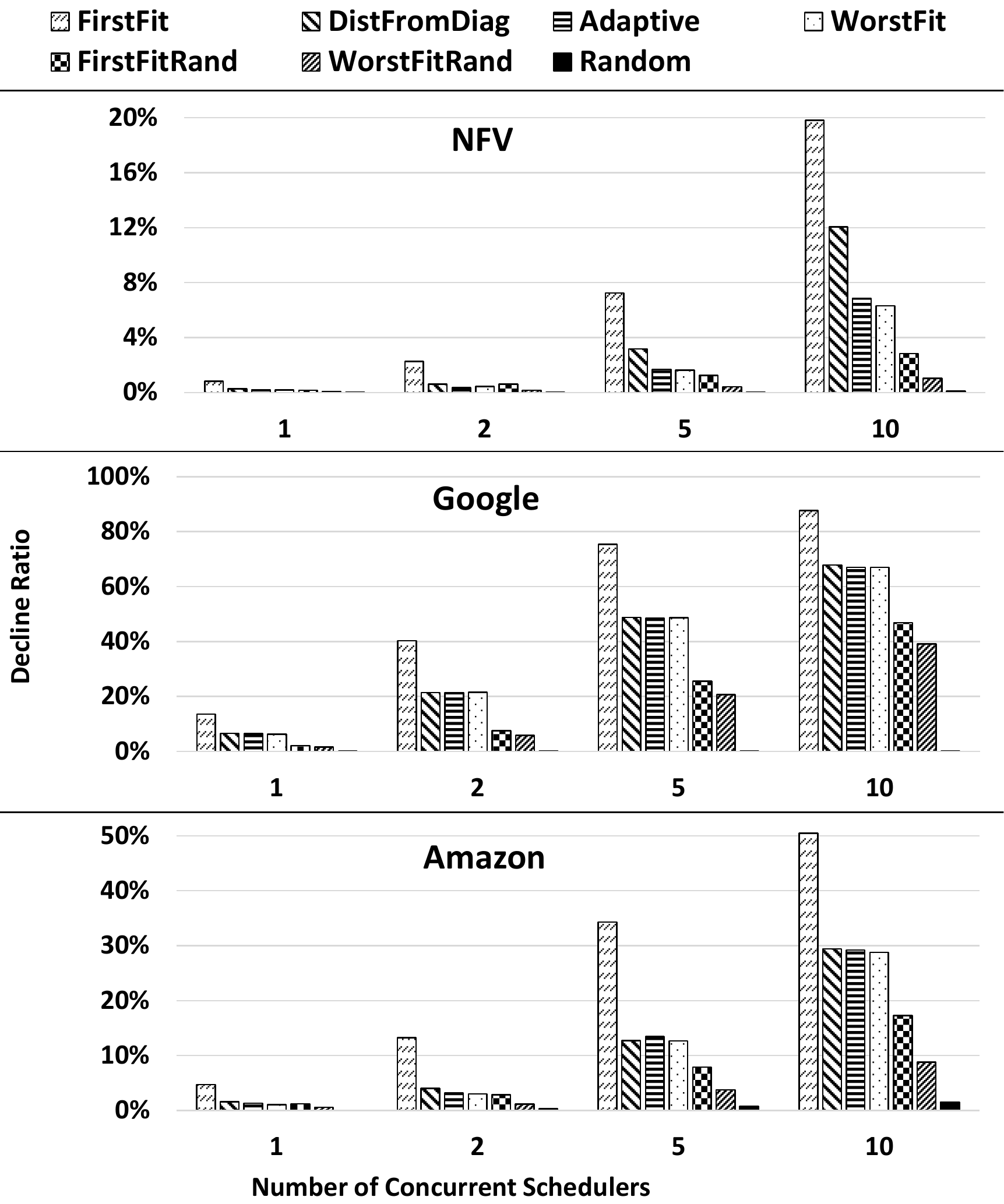}
    \caption[Decline ratios for different placement algorithms and varying number of parallel schedulers]{Decline ratios for different placement algorithms and varying number of parallel schedulers on the NFV, Google and Amazon datasets. Note that the decline ratio (y-axis) ranges corresponding to the various datasets are distinct.} 
    \vspace{\vspacebelowcaption}
    \label{fig:sim_random_rules}
\end{figure}

\subsection{Experiments}\label{sec:apsr:PlacementStudy:experiments}
Our goal in this section is to understand the effect of running multiple parallel schedulers with existing algorithms.

The number of hosts is selected so that it is possible to place all requests at once (by some algorithm). 
Since evaluating the required number of hosts to accommodate all the requests in a given trace is equivalent to the multi-dimensional bin packing problem which is NP-hard~\cite{garey79computers}, we use the approximation suggested in~\cite{RazPlacement}. Briefly, this approximation runs the trace for each algorithm multiple times, each time with a randomly-generated order of requests. Whenever the placement algorithm does not succeed in accommodating a request with the currently available resources, the approximation opens a new host. The approximated value for the required number of hosts is the minimum obtained over all algorithms and orders of requests.

In order to simulate large clouds, we replicated the NFV dataset to have 4730 requests with 279 hosts. The Amazon dataset is evaluated with 126 hosts, and the Google dataset with 5989 hosts. 

Our results are illustrated in Fig.~\ref{fig:sim_random_rules}. Notice that when using a single scheduler there are very few failures in all the policies.
Yet, the decline ratio in Random remains low also for higher levels of parallelism. This result is intuitive as randomly allocating requests to hosts minimizes the probability of having many schedulers select the same host, concurrently.
In contrast, the FirstFit algorithm is the worst, as all the schedulers select the same host even if it is close to being full. In other algorithms like WorstFit, once a host is near full it is less attractive, and thus the schedulers distribute their placement decisions upon a larger number of hosts.  In general, notice that even when running only $10$ schedulers we experience a noticeable increase in the number of failed requests. 

Also notice that OpenStack's solution of introducing small randomization into traditional algorithms improves the behavior, and yet statically setting it at $\ell=5$ is sufficient when we only have two schedulers but insufficient when we have ten schedulers. Still, these results show that the OpenStack community correctly identified the problems with parallelism and introduced an effective workaround. However, the question of understanding the interplay between parallelism and SLA compliance was left open. 

Our work builds upon the insights drawn from the above results, and makes a claim that one should use as much randomness as possible to maximize the parallelism in resource management.
In particular, our goal is to study the scaling laws of parallelism when combined with random VM placement. 

\section{The Adaptive Partial State Random (APSR) Algorithm}
\label{sec:apsr:APSR}

This section presents an analysis our algorithm \emph{Adaptive Partial State Random (\algtop)}. 

Motivated by our observations from Section~\ref{sec:apsr:PlacementStudy}, \algtop\ implements an efficient random policy that dynamically adjusts the number of schedulers ($s$) according to the system's perceived utilization state as captured by the estimate $\nfreebins$ of available hosts. Whenever \algtop\ uses parallel schedulers ($s>1$) it is guaranteed to satisfy the SLA and budget constraints.  

Each \algtop\ scheduler does the following upon receiving a placement request:
\begin{inparaenum}[(i)]
\item queries $d$ hosts (for some value $d$), 
\item filters out hosts that cannot accommodate the request,
\item randomly selects an available host out of the remaining set of hosts, and
\item sends the request to the chosen host.
\end{inparaenum}

\algtop\ relies on a centralized controller called the \emph{\algtop\ Controller} to periodically do the following:
\begin{inparaenum}[(i)]
\item estimate the system's utilization, captured by the estimate $\nfreebins$ of the number of available hosts,
\item determine the number $\nsched$ of parallel schedulers, and
\item determine the number $d$ of hosts each scheduler queries per request. 
\end{inparaenum}
The controller determines the above parameters to ensure the validity of the configuration.

Algorithm~\ref{alg:algtop} illustrates the \algtop\ controller algorithm. 
The procedure GenerateSchedulers($\nsched, d$) adjusts the number of schedulers to $\nsched$ and sets the number of hosts $d$ queried by each scheduler.
The method EstimateK provides an estimate of the number of hosts $\nfreebins$ that can accommodate a request. We note that we do not specify the arguments for this method, since it can be implemented in a variety of ways (see details for several such approaches in~Section~\ref{sec:apsr:practical_implementation}).
Finally, the heart of the controller lays in the procedure \alg\ that takes into account the system state and the SLA constraints, and calculates the number of schedulers $\nsched$ and the number of hosts $d$ to be queried by each scheduler.  

\begin{algorithm}[tbh!]
\caption {\algtop \ Controller ($\nbins, \dr, \budget, T$)} \label{alg:algtop}
\begin{algorithmic}[1]
\State $\nsched \gets 1$, $k \gets n$ \label{alg:algtop:nsched_gets_1}
\State GenerateSchedulers (1, \budget)\label{alg:algtop:Generate_1_sched}
\For{every time slot $t=T,2T,3T,\ldots$}\label{alg:algtop:While_loop_start}
    \State \label{alg:APSR:estimateK} $\nfreebins \gets$ EstimateK($\ldots$)
    \State ($\nsched, \dqueries$) $\gets$ \alg($\nbins, \dr, \budget, \nfreebins$)
    \label{alg:algtop:call_max_paral}
        \State GenerateSchedulers ($\nsched, \dqueries$)
\EndFor
\label{alg:algtop:While_loop_end}
\end{algorithmic}
\end{algorithm}

In what follows we  shed light on the interplay between the number of schedulers and the decline ratio and provide a detailed description of the \alg\ method.
We begin our analysis with a simplified {\em balls-and-bins} model, where we view hosts as {\em bins}, and requests as {\em balls}. In this simplified model each bin has the capacity to store a single ball, and all balls are of the same size.
Each scheduler is viewed as an {\em agent} which may assign balls to bins.
We show sufficient conditions for guaranteeing the SLA requirement in this simplified model, where our sufficient conditions provide a lower bound on the number of agents that may be employed to perform assignments in parallel. We further show that the decline ratio in this simplified model serves as an upper bound on the decline ratio in the original model. These results imply that when \algtop\ utilizes parallelism the decline ratio is at most $\dr$, and the total number of queries performed by all agents is at most $B$. 

\section{Analysis}\label{sec:comb_analysis}

In this section, our goal is to shed light on the interplay between the number of schedulers and the decline ratio.
While studying this interplay, we provide a detailed description of the \alg\ method, and several variants of the EstimateK method, that are used in Algorithm~\ref{alg:algtop}.

We begin our analysis with a simplified {\em balls-and-bins} model, where we view hosts as {\em bins}, and requests as {\em balls}. In this simplified model each bin has the capacity to store a single ball, and all balls are of the same size.
Each scheduler is viewed as an {\em agent} which may assign balls to bins.
We show sufficient conditions for guaranteeing the SLA requirement in this simplified model, where our sufficient conditions provide a lower bound on the number of agents that may be employed to perform assignments in parallel. We further show that the decline ratio in this simplified model serves as an upper bound on the decline ratio in the original model. We do this while ensuring that the total number of queries performed by all agents is at most $B$. These results imply that \algtop\ always maintains a valid configuration.

\subsection{Balls-and-bins Model}\label{SubSec:balls_n_bins_model}
We employ $\nsched$ identical agents which try to place balls in available bins, and they all act in parallel.
Each agent makes $d$ queries of  random bins (with replacements) for their state and possibly finds some of them available.
If the agent does not find any available bins, the ball assignment fails. Otherwise, the agent selects an available bin uniformly at random and places its ball in the selected bin.

Agents are unaware of the decisions made by other agents which may cause multiple agents to select the same available bin. 
In such a case, one of the agents succeeds, and the rest of them fail.
We use the term {\em potentially-happy agent} to refer to an agent that finds an available bin.
Similarly, the term {\em happy agent} refers to an agent that successfully places a ball in an available bin. Finally, we use the term {\em unhappy agent} to refer to an agent that fails to place its ball. 
An agent may become unhappy if she either (i) does not find an available bin, or (ii) finds at least one available bin, but places her ball in a bin which is already occupied by another agent. In the latter case, we say that the assignment failed due to \emph{collision}.

We use the random variables  $\npothappy$ and $\nhappy$ to denote the number of potentially-happy agents and happy agents, respectively.
We further use $\nfreebins$ to denote a lower bound on the number of available bins in some time slot where agents contend for assigning balls into bins.

We view the SLA requirement of having a decline ratio of at most $\dr$ as a lower bound on the probability that an arbitrary agent attempting to assign a ball to some bin is happy. Formally, this requirement translates to ensuring that:  
\begin{equation}\label{Eq:SLA_1}
\frac{E[\nhappy]}{\nsched} \geq 1 - \dr.
\end{equation}
We also require that the total number of bins queried by our agents is no more than a prescribed budget ($B$), which translates to requiring that:
\begin{equation}\label{Eq:SLA_2}
\nsched \cdot d \leq \budget.
\end{equation}

Given $\nbins, \nfreebins, \dr$ and $\budget$, our goal is to find the largest number of agents $\nsched$, and a value of bin queries per agent ($d$) that satisfy both Equation~\ref{Eq:SLA_1}, and Equation~\ref{Eq:SLA_2}. 

To guarantee that Equation~\ref{Eq:SLA_1} is satisfied, we strive to calculate the expected number of happy agents $E[\nhappy]$. 
Observe that  $E[\nhappy]$ can be expressed by conditioning the number of happy agents $\nhappy$ on the number of potentially-happy agents $\npothappy$. I.e., 
\begin{equation}\label{Eq:Ehs_by_Ehs_cond_f}
  E[\nhappy] = 
  \sum_{f=1}^{s} \bigg[\Pr (\npothappy = f) \cdot E[\nhappy | \npothappy = f] \bigg].
\end{equation}

We now turn to evaluate the probability distribution of 
$\npothappy$, and then calculate the conditional expectation $E[\nhappy | \npothappy = f]$. 

To evaluate the distribution of the number of potentially-happy agents  $\npothappy$, observe that an agent fails to find an available bin with probability
$\left (\frac{\nbins-\nfreebins}{\nbins} \right)^d$. Therefore the probability that an agent is potentially-happy is: 
\begin{equation}\label{Eq:def_sigma}
\PrFF = 1 - \left (\frac{\nbins-\nfreebins}{\nbins} \right)^d.     
\end{equation}

One can interpret 
$\npothappy$ as the result of $\nsched$ independent Bernoulli trials with success probability $\PrFF$. Therefore:
\begin{equation}\label{Eq:Fs}
Pr (\npothappy = f) = Bin (f, \nsched, \PrFF),
\end{equation}

where 
\begin{equation}\label{Eq:def_bin}
Bin (f, \nsched, \PrFF) \equiv 
\binom {\nsched}{f}  \PrFF^f (1-\PrFF)^{\nsched-f}.
\end{equation}

For calculating $E[\nhappy | \npothappy = f]$, we examine the process of the potentially-happy agents placing their balls from the point of view of the $\nfreebins$ free bins. For ease of presentation, we associate 
each potentially-happy agents with a sequence number $1, \dots, f$, 
and each available bin with a sequence number $1, \dots, \nfreebins$. 

The following proposition shows that the probability that an arbitrary potentially-happy agent selects an arbitrary available bin is uniform for all potentially-happy agents $1, \dots, f$ and for all available bins $1, \dots, \nfreebins$. 

\begin{proposition}\label{Prop:one_over_k}
If agent $i$ is potentially happy, then the probability that agent $i$ places its ball on available bin $j$ is $\frac{1}{k}$.
\end{proposition}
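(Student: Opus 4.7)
The plan is to prove this by exploiting the symmetry of the sampling process across the $k$ available bins. Agent $i$ makes $d$ i.i.d.\ uniform queries into the set of all $\nbins$ bins (with replacement). Let $Q_i \subseteq \{1,\ldots,k\}$ denote the (random) subset of available bins that agent $i$ hits at least once, and let $q_i = |Q_i|$. The event ``agent $i$ is potentially happy'' is exactly $\{q_i \geq 1\}$, and conditional on $Q_i$, the agent picks its target uniformly from $Q_i$.

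First I would write the probability of interest by conditioning on the realized set $Q_i$:
\begin{equation}
\Pr(\text{agent $i$ selects bin } j \mid q_i \geq 1)
= \sum_{S \subseteq \{1,\ldots,k\},\, j \in S} \Pr(Q_i = S \mid q_i \geq 1) \cdot \frac{1}{|S|}.
\end{equation}
The key observation I would then establish is that the distribution of $Q_i$ is invariant under any permutation $\pi$ of the $k$ available bins. This is immediate because each of the $d$ queries is uniform over the $\nbins$ bins, so the indicator vector $(\mathbf{1}[1 \in Q_i], \ldots, \mathbf{1}[k \in Q_i])$ is exchangeable; equivalently, $\Pr(Q_i = S)$ depends only on $|S|$. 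The conditioning event $\{q_i \geq 1\}$ is itself symmetric in the bin labels, so this exchangeability is preserved under the conditioning.

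Given exchangeability, the sum above has the same value for every choice of $j \in \{1,\ldots,k\}$. Denote this common value by $\alpha$. Since agent $i$, when potentially happy, necessarily places its ball on exactly one of the $k$ available bins, the law of total probability yields
\begin{equation}
1 = \sum_{j=1}^{k} \Pr(\text{agent $i$ selects bin } j \mid q_i \geq 1) = k \alpha,
\end{equation}
so $\alpha = 1/k$, which is exactly the claim.

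The argument is essentially a one-line symmetry statement, so there is no real obstacle; the only care needed is in articulating the exchangeability cleanly and verifying that the conditioning on $\{q_i \geq 1\}$ does not break the symmetry across bins $1,\ldots,k$. If one prefers an explicit computation in place of the symmetry argument, one can compute $\Pr(Q_i = S) = \Pr(\text{all $d$ queries fall in } (\nbins - k) \cup S \text{ and hit every element of } S)$ via inclusion–exclusion; this gives a closed form depending only on $|S|$, reconfirming the symmetry and again leading to $\alpha = 1/k$.
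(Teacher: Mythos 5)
Your proof is correct. Both your argument and the paper's hinge on the same exchangeability observation: the law of $Q_i$ restricted to the $k$ available bins is invariant under permuting those bins, because the $d$ queries are i.i.d.\ uniform over all $n$ bins. Where you differ is in how the symmetry is cashed out. The paper conditions on $q_i = x$, applies linearity of expectation to the membership indicators $B_\ell$ to get $\sum_{\ell=1}^{k} \Pr(\ell \in Q_i \mid q_i = x) = x$ and hence $\Pr(j \in Q_i \mid q_i = x) = x/k$, then multiplies by the uniform selection probability $1/x$ and sums over $x$; the factors of $x$ cancel and what remains is $\tfrac{1}{k}\Pr(q_i \geq 1)$, i.e.\ $1/k$ after conditioning on potential happiness. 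You instead observe that the $k$ events ``agent $i$ selects bin $j$'' partition the event $\{q_i \geq 1\}$, so their conditional probabilities sum to $1$, and exchangeability forces them all to be equal, giving $1/k$ immediately. Your route is shorter and avoids the explicit cancellation; the paper's route produces the intermediate identity $\Pr(j \in Q_i) = \tfrac{1}{k}\sum_x x \Pr(q_i = x)$ along the way, but that identity is not reused elsewhere, so nothing is lost by your shortcut. The one step worth making fully explicit in your write-up is the justification of the partition-of-unity: a potentially happy agent selects its bin from $Q_i$, which is by definition a nonempty subset of the $k$ available bins, so exactly one of the $k$ selection events occurs.
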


\begin{proof}
Let $i$ be an agent -- not necessarily a potentially-happy agent.
Denote by $Q_i$ the set of bins which agent $i$ queries and finds available. 
Let $q_i$ denote the random variable for the number of bins which agent $i$ finds available, namely, $\abs{Q_i} = q_i$. Then 
$\Pr (q_i = x)$ captures the probability that agent $i$ finds $x$ distinct available bins in his overall $d$ samples. 

For each available bin $\ell$, we let  $B_\ell$ denote a binary random variable which indicates whether agent $i$ samples bin $\ell$. Namely, $B_\ell=1$ iff $\ell \in Q_i$. Then we have
\begin{equation}\label{Eq:indicator_rv}
\begin{split}
\sum_{\ell=1}^k \Pr \left(\ell \in Q_i | q_i = x \right) & = 
\sum_{\ell=1}^k \Pr(B_\ell = 1 | q_i = x) \\
& = 
\sum_{\ell=1}^k E\left[B_\ell | q_i = x\right] \\
& = 
E \left[ \sum_{\ell=1}^k B_\ell | q_i = x \right] \\
& = 
E [q_i | q_i=x] = x.
\end{split}
\end{equation}

Since agent $i$ samples the bins i.i.d., we have for each $\ell, \ell'$ that $\Pr (\ell \in Q_i | q_i = x) = \Pr (\ell' \in Q_i | q_i = x)$. 
By Equation~\ref{Eq:indicator_rv} it follows that 
for every $\ell = 1, \dots, \nfreebins$, 
$\Pr (\ell \in Q_i | q_i = x) = \frac{x}{\nfreebins}$.

Hence, the probability that agent $i$ selects available bin $j$ is
\begin{equation}\label{Eq:i_sel_j}
\begin{split}
\Pr (j \in Q_i) 
& = 
\sum_{x=1}^{\nfreebins} \Pr (q_i = x) \cdot
\Pr (j \in Q_i | q_i = x) \\
& =
\frac{1}{\nfreebins}
\sum_{x=1}^{\nfreebins} x \cdot \Pr (q_i = x) 
\end{split}
\end{equation}

If agent $i$ samples available bin $j$, then she selects $j$ w.p. $\frac{1}{x}$. It follows that
\begin{equation}\label{Eq:i_smpls_j}
\begin{split}
\Pr (i \ \textrm{selects} \ j) 
& =
\frac{1}{\nfreebins}
\sum_{x=1}^{\nfreebins} x \cdot \Pr (q_i = x) \cdot 
\frac{1}{x} = 
\frac{1}{\nfreebins}
\sum_{x=1}^{\nfreebins} 
\Pr (q_i = x) 
\end{split}
\end{equation}

Observe that agent $i$ is potentially happy iff she samples at least one available bin, that is, if $q_i > 0$. This happens w.p. 
$\sum_{x=1}^{\nfreebins} \Pr (q_i = x)$. Combining this observation with Equation~\ref{Eq:i_smpls_j}, the probability that agent $i$ samples available bin $j$ given that $i$ is potentially-happy is $\frac{1}{k}$.
\end{proof}

By Proposition~\ref{Prop:one_over_k}, the probability that agent $i$ does \emph{not} place its ball in bin $j$ is $1 - \frac{1}{\nfreebins} = \frac{k-1}{k}$. As the agents are mutually independent, the probability that none of the $f$ potentially-happy agents places its ball in bin $j$ is $\left( \frac{k-1}{k} \right)^f$. The probability that at least one of the $f$ potentially-happy agent tries to place its ball in bin $j$ is $1 - \left( \frac{k-1}{k} \right)^f$. 
From the point of view of bin $j$, this process is equivalent to a Bernoulli trial, which succeeds iff at least one agent places its ball in bin $j$. If the Bernoulli trial succeeds, bin $j$ is exclusively associated with a single happy agent.

Applying the analysis above for each of the $\nfreebins$ free bins, we obtain that $E[\nhappy | \npothappy = f]$ is equivalent to the expected number of successes in $\nfreebins$ independent Bernoulli trials, with probability of success $1 - \left( \frac{k-1}{k} \right)^f$ each. Hence, 

\begin{equation}\label{Eq:EHs_cond_f_correct}
E[\nhappy | \npothappy = f] = \nfreebins \left[
1 - \left( \frac{k-1}{k} \right)^f
\right]
\end{equation}

Combining Equation~\ref{Eq:Ehs_by_Ehs_cond_f} with Equations~\ref{Eq:Fs} and~\ref{Eq:EHs_cond_f_correct}, we obtain the following  expression for 
$E[\nhappy]$
\begin{equation}\label{Eq:EHs}
E[\nhappy]  = 
\nfreebins 
\sum_{f=1}^{\nsched} 
\left[
1 - \left( \frac{k-1}{k} \right)^f
\right]
\cdot
Bin (f, \nsched, \PrFF)
\end{equation}

Combining the SLA requirement (Equation~\ref{Eq:SLA_1}) and Equation~\ref{Eq:EHs}, we show how to guarantee that the decline ratio is at most $\dr$ in the following corollary.
\begin{corollary}\label{cor:balls_n_bins_SLA}
If 
\begin{equation}
\label{eq:cor:balls_n_bins_SLA}
\nfreebins 
\sum_{f=1}^{\nsched} 
\left[
1 - \left( \frac{k-1}{k} \right)^f
\right]
\cdot
Bin (f, \nsched, \PrFF)
\geq \nsched (1 - \dr)
\end{equation}
then the expected decline ratio with
$\nsched$ agents, where each agent queries $\dqueries$ bins, is at most $\dr$.
\end{corollary}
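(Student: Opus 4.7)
\noindent\textbf{Proof Plan for Corollary~\ref{cor:balls_n_bins_SLA}.}

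The plan is essentially to unwind definitions and then invoke the expression for $E[\nhappy]$ derived just above the corollary. First I would recall that, by the definition of the decline ratio in the balls-and-bins model, an agent is happy precisely when it succeeds in placing its ball, so the expected decline ratio equals $1 - E[\nhappy]/\nsched$. Thus the SLA requirement (Eq.~\ref{Eq:SLA_1}) that the expected decline ratio be at most $\dr$ is equivalent to $E[\nhappy] \geq \nsched(1-\dr)$.

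Next I would substitute the closed-form expression for $E[\nhappy]$ established in Equation~\ref{Eq:EHs}, namely
\[
E[\nhappy] = \nfreebins \sum_{f=1}^{\nsched} \left[1 - \left(\frac{\nfreebins-1}{\nfreebins}\right)^{f}\right] \cdot Bin(f,\nsched,\PrFF).
\]
This formula was derived by conditioning on the number of potentially-happy agents (Eq.~\ref{Eq:Ehs_by_Ehs_cond_f}), using the binomial distribution of $\npothappy$ (Eq.~\ref{Eq:Fs}), and applying Proposition~\ref{Prop:one_over_k} to argue that each potentially-happy agent selects each available bin uniformly at random. The hypothesis of the corollary is exactly the statement that the right-hand side of the displayed expression is at least $\nsched(1-\dr)$; combined with the equivalence noted in the previous paragraph, this immediately yields that the expected decline ratio is at most $\dr$.

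The only subtlety I anticipate lies in the fact that $\nfreebins$ was introduced as a \emph{lower bound} on the number of available bins, whereas Equation~\ref{Eq:EHs} is derived as if $\nfreebins$ were the exact count. I would therefore include a brief monotonicity observation: the expected number of happy agents is nondecreasing in the number of truly available bins (since adding a bin can only reduce collisions and the probability of an unsuccessful query). Consequently, using the lower bound $\nfreebins$ in place of the actual availability yields a conservative estimate of $E[\nhappy]$, so satisfying the hypothesis with respect to $\nfreebins$ suffices to meet the SLA in the actual system. This is the main conceptual step; the rest is a one-line algebraic rearrangement, so no substantial obstacle is expected.
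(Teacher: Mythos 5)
Your proposal is correct and matches the paper's own treatment: Corollary~\ref{cor:balls_n_bins_SLA} is stated as an immediate consequence of combining the SLA requirement (Eq.~\ref{Eq:SLA_1}) with the closed-form expression for $E[\nhappy]$ in Eq.~\ref{Eq:EHs}, exactly as you describe. The monotonicity subtlety you flag (that a lower bound on the number of available bins only makes the estimate conservative) is indeed needed, but the paper defers it to a separate corollary following Theorem~\ref{thm:practic_alg_SLA}, where $E\left[\hkpp\right] \geq E\left[\hk\right]$ is proved via the difference-function argument; for the corollary at hand, $\nfreebins$ is treated as the exact count in the balls-and-bins model.
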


Based on Corollary~\ref{cor:balls_n_bins_SLA}, we now describe the details of the \alg\ method, which maximizes the parallelism while satisfying the SLA and budget constraints. The method is detailed in Algorithm~\ref{alg:max_paral}.
After initializing the number of schedulers to $\nsched=1$, the algorithm repeatedly increases the value of $\nsched$ (and adjusts the number of queries performed by each scheduler to satisfy the budget constraint), as long as Equation~\ref{eq:cor:balls_n_bins_SLA} is satisfied.
Specifically, procedure SatisfySLA validates that Equation~\ref{eq:cor:balls_n_bins_SLA} is satisfied for the given configuration.

\begin{algorithm}[th]
\caption {\alg \ ($\nbins, \dr, \budget, \nfreebins$)} \label{alg:max_paral}
\begin{algorithmic}[1]
\State $\nsched \gets 1$\label{alg:max_paral:line_s_eq_1}\Comment{initialization} 
\While {SatisfySLA $\left( \nbins, \dr, \nfreebins, \nsched+1, \bigl\lfloor{\frac{\budget}{\nsched+1}}  \bigr\rfloor \right)$}\label{alg:max_paral:while_s_begin}
    \State $\nsched \gets \nsched + 1$
\EndWhile \label{alg:max_paral:while_s_end}
\State return $\nsched, \big\lfloor{\frac{\budget}{\nsched}} \big\rfloor$ \label{alg:max_paral:while_d_end}
\end{algorithmic}
\end{algorithm}

\subsection{SLA Guarantees with Availability Lower Bounds}

We now show how to apply our results to the original model, assuming that the value $\nfreebins$ is guaranteed to be a {\em lower bound} on the number of hosts that are available for any request. This assumption effectively translates to an upper bound on the system's utilization.
We begin by considering the case where $\nfreebins$ is the precise number of available bins for any request.
The following theorem shows that if we know $\nfreebins$, then \alg\ indeed generates a valid configuration.

\begin{theorem}\label{thm:practic_alg_SLA}
Assume $k$ is the number of available hosts that may accommodate any request flavor.
If \alg($\nbins, \dr, \budget, \nfreebins) = (\nsched, d)$ and $\nsched>1$ then employing $\nsched$ schedulers, each querying $\dqueries$ hosts,
guarantees an expected decline ratio of at most $\dr$.
\itamar{The condition $\nsched>1$ is because  $\nsched=1$ is the default value. We don't use $\nsched=0$ as default, because this will force the upper-level alg' to check the value returned by \alg, to ensure at least 1 scheduler is generated.}
\end{theorem}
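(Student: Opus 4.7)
The plan is to combine two ingredients: (i) the algorithmic invariant that \alg\ only returns a configuration $(\nsched,\dqueries)$ with $\nsched>1$ after it has been verified to satisfy the sufficient condition of Corollary~\ref{cor:balls_n_bins_SLA}, and (ii) a monotonicity/coupling argument showing that the decline ratio in the actual (multi-capacity, multi-flavor) system is upper-bounded by the decline ratio in the simplified balls-and-bins model.

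\textbf{Step 1 (algorithmic invariant).} First I would inspect Algorithm~\ref{alg:max_paral}. The variable $\nsched$ is initialized to $1$ and is incremented only when the procedure \texttt{SatisfySLA}$(\nbins,\dr,\nfreebins,\nsched{+}1,\lfloor \budget/(\nsched{+}1)\rfloor)$ returns true. Since the returned pair is $(\nsched,\lfloor \budget/\nsched\rfloor)$, and the while-loop's final (failed) test cannot roll $\nsched$ back, the returned configuration is exactly the last one for which \texttt{SatisfySLA} evaluated to true. Hence, whenever $\nsched>1$, the returned $(\nsched,\dqueries)$ satisfies the hypothesis of Corollary~\ref{cor:balls_n_bins_SLA}, i.e.\ Equation~\ref{eq:cor:balls_n_bins_SLA} holds. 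By that corollary, the expected decline ratio in the balls-and-bins model with $\nsched$ agents, each querying $\dqueries$ bins out of $\nbins$ where $\nfreebins$ bins are available, is at most $\dr$. Step~1 also implicitly uses $\nsched\dqueries\le \budget$, which is enforced by the $\lfloor \budget/\nsched\rfloor$ assignment.

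\textbf{Step 2 (the dominance argument).} The harder, and I expect main, obstacle is to show that the decline probability in the actual system is no larger than in the simplified model used by the corollary. I would set up a natural coupling between a single scheduling round in the actual system and one round in the balls-and-bins model. For a fixed flavor of the incoming request, there are at least $\nfreebins$ hosts in $\hosts$ that can accommodate it (since by hypothesis $\nfreebins$ is the number of hosts available for \emph{any} flavor; in general the flavor-specific $\nfreebins_{\flavor}$ may only be larger). Couple the $\dqueries$ queries of each actual scheduler with the $\dqueries$ queries of the corresponding agent so that a host that accommodates the current flavor in the actual system is identified with an available bin. Under this coupling, every potentially-happy agent in the balls-and-bins model corresponds to a scheduler that discovers at least one host able to host its request, so the rate of ``no available host'' failures in the actual system is no higher than in the simplified model. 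For the collision type of failure, note that in the actual model a host has multi-dimensional capacity and may successfully absorb several concurrently dispatched requests within the same time slot, whereas in the balls-and-bins model any collision costs all but one agent; thus actual collisions are a subset of simulated collisions. Combining both, the event ``actual scheduler fails'' is contained in the event ``coupled agent is unhappy.''

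\textbf{Step 3 (conclusion).} With the coupling in place, the actual decline ratio per request is bounded above by $1-E[\nhappy]/\nsched$, which is at most $\dr$ by Step~1. Since this holds request-by-request, the expected decline ratio of the configuration $(\nsched,\dqueries)$ returned by \alg\ is at most $\dr$, proving the theorem. The delicate part I would expect to spend the most care on is making the coupling in Step~2 precise when some of the $\dqueries$ queries of a scheduler land on hosts that are available for some flavors but not for the current one; identifying those queries with ``missed'' samples in the simplified model (i.e.\ samples of occupied bins) is what keeps the inequality in the right direction.
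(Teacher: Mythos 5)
Your proposal is correct and follows essentially the same route as the paper: the paper also reduces the real system to the balls-and-bins model via a worst-case ``condensation'' (concentrating availability onto the hosts available for the scarcest flavor and capping each host at one request per slot, arguing each step can only increase the decline ratio) and then invokes Corollary~\ref{cor:balls_n_bins_SLA}. Your coupling in Step~2 is the same dominance argument in different language, and your Step~1 just makes explicit the algorithmic invariant that the paper leaves implicit.
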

\begin{proof}
Let $\hosts_{\flavor}$ denote the set of hosts with enough resources for accommodating a request of flavor $\flavor$. Using our notation, it follows that $\abs{\hosts_{\flavor}} = \nfreebins_{\flavor}$.
Let $\flavor^* = \arg\min_{\flavor}\set{\nfreebins_{\flavor}}$.

Consider the following {\em condensation} process:
\begin{enumerate}
\itemsep-0.2em 
    \item \label{condense:1} Make all the hosts in $\hosts_{\flavor^*}$ available for all flavors.
    \item \label{condense:2} Make the rest of the hosts unavailable for any request.
    \item \label{condense:3} Determine that once a scheduler allocates a request in a host, this host becomes unavailable for any further requests (in this time slot). 
\end{enumerate}   

We claim that condensing the system can only increase its decline ratio from the following reasons:
First, as for each $\flavor \in \flavors$ we have $\nfreebins_{\flavor^*} \leq \nfreebins_{\flavor}$, steps~\ref{condense:1} and~\ref{condense:2} can only decrease the number of hosts available for each flavor. This decrease reduces the expected number of available hosts found by each scheduler.
Second, steps~\ref{condense:1} and~\ref{condense:2} concentrate the available hosts of all flavors to be exactly $\hosts_{\flavor^*}$. This concentration may only increase the probability that multiple schedulers will end up assigning their requests to the same host.
Finally, a host may accommodate multiple parallel requests providing it has enough resources while step~\ref{condense:3} disallows it which implies a potential increase in the decline ratio.
That is, we showed that an algorithm that satisfies the SLA on a condensed system also satisfies it on the original system.

We now note that the condensed system is equivalent to our balls-and-bins model. To see this, observe that once the sets of available hosts for every request become identical (due to steps~\ref{condense:1} and~\ref{condense:2}), the requests themselves are also virtually identical, and thus become equivalent to the identical balls in our balls-and-bins model. Furthermore, as every host can accommodate only a single request (due to step~\ref{condense:3}), the hosts can be modeled as identical bins, where each available bin can accommodate merely a single ball.

By Corollary~\ref{cor:balls_n_bins_SLA}, 
\alg\ satisfies the SLA requirement in the balls-and-bins model, which is equivalent to guaranteeing SLA also in the condensed system. As the decline ratio in the condensed system serves as an upper bound on the decline ratio ($\dr$) the result follows.
\end{proof}

We stress that the proof of Theorem~\ref{thm:practic_alg_SLA} implicitly suggests that all the requests handled in a time slot belonging to the flavor with the minimum amount of available hosts. Furthermore, it suggests that no two requests can be placed in parallel on the same host. Therefore, we expect better decline ratios in practice. 

The following corollary shows that for providing performance guarantees it is sufficient to know only a lower bound on the number of hosts available for every request flavor.

\begin{corollary}
The consequence of Theorem~\ref{thm:practic_alg_SLA} holds whenever $k$ is a lower bound on the number of hosts available for every request flavor.
\end{corollary}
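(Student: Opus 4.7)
My plan is to reduce the lower-bound case to the exact-value case already handled in Theorem~\ref{thm:practic_alg_SLA} via a monotonicity / coupling argument. Let $k$ be a lower bound on the number of hosts available for every request flavor, and let $k'_{\flavor} = \nfreebins_{\flavor} \geq k$ denote the true per-flavor availability at the current time slot. I want to argue that running \alg{} with the parameter $k$ on the real system yields a decline ratio that is no larger than the decline ratio of a hypothetical ``restricted'' system in which exactly $k$ hosts are available per flavor; the latter is bounded by $\dr$ by Theorem~\ref{thm:practic_alg_SLA}.

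First, I would construct the restricted system explicitly. For each flavor $\flavor$, arbitrarily designate $k'_{\flavor} - k$ of the available hosts as ``frozen'' (e.g., by shrinking their advertised resource vector just below what any pending request of flavor $\flavor$ requires), so that exactly $k$ hosts remain available for flavor $\flavor$. The restricted system is then in the regime of Theorem~\ref{thm:practic_alg_SLA} with parameter $k$, so employing the configuration returned by \alg$(\nbins,\dr,\budget,k)$ guarantees an expected decline ratio of at most $\dr$ there.

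Next, I would couple the two executions by using identical random choices: the same $\nsched$ schedulers, the same request arrivals, and, for each scheduler, the same sequence of $\dqueries$ host indices sampled for querying. Under this coupling, any host that is available in the restricted system is also available in the real system (but not vice versa). Hence every scheduler that finds at least one available host in the restricted system also does so in the real one, and any collision-free placement in the restricted system remains collision-free in the real one (since the same host is targeted by the same set of schedulers). The converse is what makes the reduction work: a scheduler can only be ``more happy'' in the real system, because it sees (weakly) more available hosts and the other schedulers are no more likely to pick the same one. Formally, I would argue that the indicator of ``scheduler $i$ is happy'' in the restricted system is pointwise dominated by the corresponding indicator in the real system under this coupling, yielding $E[\nhappy^{\mathrm{real}}] \geq E[\nhappy^{\mathrm{restricted}}] \geq \nsched(1-\dr)$, which is exactly Equation~\ref{Eq:SLA_1}.

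The main obstacle I anticipate is making the pointwise domination airtight when the real system allows more complex collision patterns than the restricted one (for instance, a host that is frozen in the restricted system may attract queries in the real system and conceivably alter which scheduler ``wins'' a contested host). I would handle this by exploiting the symmetric random tie-breaking assumed in Section~\ref{SubSec:balls_n_bins_model}: conditional on the sets of queries, the probability that any particular scheduler wins a contested available host only depends on the number of competitors for that host, which in the real system is no greater than in the restricted one (the restricted system has strictly fewer available hosts, so contention is at least as high). This reproduces, within the coupling, the same monotonicity observed in the condensation argument of Theorem~\ref{thm:practic_alg_SLA} (steps~1--3 there), and the corollary follows immediately.
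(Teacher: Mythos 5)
Your overall strategy---reduce the lower-bound case to the exact-$k$ case by arguing that adding available hosts can only help---is the right skeleton and matches what the paper is after. But the central step, the claimed pointwise domination $\nhappy^{\mathrm{real}} \geq \nhappy^{\mathrm{restricted}}$ under your coupling, is false as stated. The difficulty is exactly the one you flag and then wave away: each potentially-happy scheduler selects \emph{uniformly at random} among the available hosts it found, and these sets differ between the two systems, so ``the same host is targeted by the same set of schedulers'' does not hold. Concretely, suppose two schedulers both select host $b$ in the real system, where $b$ is one of the hosts you froze in the restricted system. In the restricted system each of them must fall back to a uniform choice among its remaining available hosts, and with positive probability they land on two \emph{distinct} hosts; that realization has two happy agents in the restricted system versus one in the real system. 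So the coupling you describe does not give pointwise domination, and the proposed patch (comparing per-host contention via symmetric tie-breaking) does not repair it: what you would actually need is stochastic dominance of the occupancy count in the number of available bins, which is essentially the inequality to be proved and is not established anywhere in your sketch.

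The paper sidesteps coupling entirely. It stays in the condensed balls-and-bins model of Theorem~\ref{thm:practic_alg_SLA} and proves the needed monotonicity \emph{in expectation}: for fixed $s$ and $d$, $E[H^{k}_{s}]$ is non-decreasing in the number of available bins $k$. The proof rewrites, via Abel summation, $E[H^{k}_{s}] = \sum_{f=0}^{s-1} \Pr(F^{k}_{s} > f)\cdot D(k,f)$, where $D(k,f) = E[H^{k}_{s} \mid F^{k}_{s} = f+1] - E[H^{k}_{s} \mid F^{k}_{s} = f]$ evaluates to $\left(\frac{k-1}{k}\right)^{f}$ by Eq.~\ref{Eq:EHs_cond_f_correct}. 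Both factors are non-decreasing in $k$: the number of potentially-happy agents is binomial with success probability increasing in $k$, so $\Pr(F^{k+1}_{s} > f) \geq \Pr(F^{k}_{s} > f)$, and $\left(\frac{k}{k+1}\right)^{f} \geq \left(\frac{k-1}{k}\right)^{f}$. To salvage your probabilistic route you would have to prove stochastic monotonicity of the occupancy count in $k$ (and then invoke a monotone-coupling existence result), which is more work than the paper's direct calculation; as written, your argument does not contain the key inequality.
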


\begin{proof}
We have to show that increasing 
the number of hosts available for every request flavor, 
while keeping the number schedulers $\nsched$ and the sample size $d$ unchanged, can only decrease the decline ratio. 
We do so by checking the effect of increasing 
the number of available bins $\nfreebins$ 
on  our balls-and-bins analysis.
As we now vary 
$\nfreebins$,
we add to the notation of our random variables a superscript indicating its value.
That is, $\phk$ and $\hk$ denote the random variable for the number of potentially-happy and happy agents, respectively,  when  there are $\nfreebins$ available bins. 
Recalling the SLA requirement in Equation~\ref{Eq:SLA_1}, it suffices to show that $E\left[\hkpp \right] \geq E\left[ \hk \right]$.

Using our modified notation, we rewrite Equation~\ref{Eq:Ehs_by_Ehs_cond_f} as
\begin{equation}\label{Eq:Ehs_by_Ehs_cond_f_k}
  E\left[\hk\right] = 
    \sum_{f=1}^{s} \Pr \left(\phk = f \right) \cdot E \left[\hk | \phk = f \right] 
\end{equation}

We now handle separately each of the components in the product appearing in the right hand side of Equation~\ref{Eq:Ehs_by_Ehs_cond_f_k}, namely
\begin{inparaenum}[(i)]
\item the probability distribution of the number potentially-happy agents, and
\item the expected number of happy agents, given that there are $f$ potentially happy agents.
\end{inparaenum} 

Intuitively, the probability of having more than $f$ potentially-happy agents is non-decreasing in the number of free bins $\nfreebins$. Indeed, 
combining  Equations~\ref{Eq:def_sigma},~\ref{Eq:Fs} and~\ref{Eq:def_bin} shows that 
\begin{equation}\label{Eq:Fs_inc_in_k}
\Pr \left(\phkpp > f\right) \geq \Pr \left(\phk > f\right).
\end{equation}

To quantify the impact of the number of potentially-happy agents $f$ on the expected number of happy agents  we
let $\dif(k, f)$ denote the difference function
\begin{equation}\label{Eq:def_d}
\dif (k, f) = E \left[\hk | \phk = f+1\right] - 
E \left[\hk | \phk = f\right].
\end{equation}

$\dif(k, f)$ captures the contribution of adding one potentially-happy agent to the expected number of happy agents. 
As $E [\hk | \phk = 0] = 0$, we have $\dif (k, 0) = E [\hk | \phk = 1]$. We can therefore rewrite Equation~\ref{Eq:Ehs_by_Ehs_cond_f_k} as follows:
\begin{equation}\label{Eq:EHs_by_f_and_dif}
\begin{split}
E \left[\hk \right] 
= & \sum_{f=1}^s \Pr(\phk = f) \cdot E \left[\hk | \phk = f \right] \\
= & \Pr(\phk = 1) \cdot \dif (k, 0) + \\
& \Pr(\phk = 2) \cdot \left[\dif (k, 0) + \dif (k, 1) \right] + \\
& \dots + \\
& \Pr(\phk = s) \cdot \left[\dif (k, 0) + \dif (k, 1) + \dots + \dif (k, s-1) \right]\\
= & \sum_{f=0}^{s-1} \left[\Pr (\phk > f) \cdot \dif (k, f) \right]
\end{split}    
\end{equation}

By combining Equations~\ref{Eq:Fs_inc_in_k} and~\ref{Eq:EHs_by_f_and_dif}, it suffices to show that
\begin{equation}\label{Eq:d_inc_in_k}
\dif(k+1, f) > \dif (k, f)    
\end{equation}

For proving  that Equation~\ref{Eq:d_inc_in_k} is satisfied, we assign Equation~\ref{Eq:EHs_cond_f_correct} in the definition of $\dif(*)$ in Equation~\ref{Eq:def_d}, and obtain:
\begin{equation}\label{Eq:dif_k_correct}
\begin{split}
\dif (k, f) & = 
\nfreebins \left[
\left( \frac{k-1}{k} \right)^{f}
- 
\left( \frac{k-1}{k} \right)^{f+1} 
\right]  \\
& = 
\left( \frac{k-1}{k} \right)^f
\end{split}
\end{equation}

Hence,
\begin{equation}
\begin{split}
\dif (k+1, f) - \dif (k, f) = 
\left( \frac{k}{k+1} \right)^f -
\left( \frac{k-1}{k} \right)^f
\geq 0
\end{split}
\end{equation}
where the last inequality is satisfied for every $\nfreebins > 0$.
\end{proof}

\section{Practical Implementation of \algtop}
\label{sec:apsr:practical_implementation}

We now discuss practical aspects of implementing \algtop.
The main caveat in implementing \algtop\ such that the conditions in Theorem~\ref{thm:practic_alg_SLA} are met revolves around the estimation of the number of available hosts for any request flavor ($k$).

A straightforward option is to compute $\nfreebins$ explicitly, by running a centralized periodic task, which gathers state from all hosts. We note that such a task may be executed by the \algtop\ Controller (in line~\ref{alg:APSR:estimateK}). When the task is performed in every time step (i.e., by setting $T=1$ in \algtop), then the guarantees of Theorem~\ref{thm:practic_alg_SLA} hold.
However, this approach incurs the communication overhead of querying all the hosts.\footnote{Note that all the alternative placement algorithms considered in section~\ref{sec:apsr:PlacementStudy} have {\em each scheduler} query all the hosts in every time slot.}

Therefore we suggest an alternative approach, which does not incur any additional communication overhead. We observe that while the {\em identities} of the available concrete hosts for a
request of some flavor $\flavor$ are dynamic, the total {\em number}
of these hosts is expected to change at a slower rate, and the {\em minimal} number of available hosts (overall request flavors) is
likely to change slower still.
Using this observation, we propose to estimate $\nfreebins$ by relying on the statistics which the schedulers gather during their regular operation. We now describe our proposed algorithm for estimating $\nfreebins$, EstimateK($k$), which is formally defined in Algorithm~\ref{alg:estimateK}.

\begin{algorithm}[ht]
\caption {EstimateK ($k$)} \label{alg:estimateK}
\begin{algorithmic}[1]
    \ForAll{$\flavor \in \flavors$} \Comment{for each flavor}
        \State $\nbins_{\flavor}^{(tot)} \gets \sum_{i=1}^{\nsched} \nbins_{\flavor}^{(i)}$ 
        \label{alg:algtop_n_i_tot}
        \State $\nfreebins_{\flavor}^{(tot)} \gets \sum_{i=1}^{\nsched} \nfreebins_{\flavor}^{(i)}$
        \label{alg:algtop_k_i_tot}
    \EndFor
    \State $\tilde{\nfreebins} \gets
    \nbins \cdot \min_{\flavor \in \flavors}
    \left[ \frac{\nfreebins_{\flavor}^{(tot)}}{\nbins_{\flavor}^{(tot)}} \right]$ \label{alg:algtop:assign_k_n}
    \State return $\alpha \cdot  \tilde{\nfreebins} + (1 - \alpha) \cdot \nfreebins$ \label{alg:algtop:exp_moving_average}
\end{algorithmic}
\end{algorithm}

Our algorithm assumes that each scheduler $i$ maintains counters $\nbins_{\flavor}^{(i)}$ and $\nfreebins_{\flavor}^{(i)}$, which keep track of the overall number of hosts queried, and the total number of available hosts of flavor $\flavor$, respectively,  when handling its requests.
These counters are reset after each call to algorithm EstimateK.
The algorithm uses these counters to obtain an estimation of the {\em overall} number of hosts queried, and the overall number of available hosts for each flavor.
These values are then used to obtain an estimate of the {\em percentage} of hosts that were available for each request flavor.
The normalized minimum of all flavors is chosen as the pessimistic estimate of $k$.
The current estimate is then averaged using exponential averaging with the previous estimate, to produce an updated estimate of $k$, the number of hosts available for all request types.
The intuition underlying this estimation is the fact that schedulers operate {\em independently}, and each scheduler queries a relatively small number of hosts in each time slot.

We note that our proposed algorithm for estimating the value of $k$ does not require any additional querying of hosts. In the following section, we demonstrate the effectiveness of this estimation when incorporated within our \algtop\ Algorithm.

\section{Evaluation of the \algtop\ Algorithm}
\label{sec:apsr:Evaluation}
This section positions \algtop\ with respect to known placement algorithms, and evaluates
the interplay between parallelism, utilization, decline ratio, and throughput.

\subsection{Trace-based Simulation}
We model the arrival of requests using a Poisson process with parameter $\lambda_a$. Unless stated otherwise, we set $\lambda_a$ to 20, and $\dr$ (\algtop's target decline ratio) to 5\%. 
We set \algtop's query budget to be $\budget = \nbins$. That is, the {\em overall} number of queries made by all of our parallel schedulers is the same as the number of hosts queried by {\em a single} OpenStack scheduler. 

We start with infinite lifetime requests as it is a common (though somewhat unrealistic) benchmark for placement algorithms, as it demonstrates the relationship between utilization and parallelism in a clean manner.
We set \algtop's time interval for estimating the state of the cloud to be $T=10$, and also set $\alpha=0.1$ for the EstimateK method. 

We use the workloads described in Section~\ref{Sec:apsr:Datasets}, and simulate large clouds with 30 replicas of the NFV dataset, 7 replicas of the Amazon dataset, and 1 replica of the Google dataset attaining a total of 13110, 7700 and 12477 requests, respectively.
We determine the number of hosts as the number of hosts needed for successfully placing all the requests at once (by some offline algorithm), as described in Section~\ref{sec:apsr:PlacementStudy:experiments}; we use 837 hosts for NFV, 876 hosts for Amazon and 5989 hosts for Google. 
\\~\\
\textbf{Comparing \algtop\ to Known Algorithms.}
We now study the impact of parallelism on the decline ratios in \algtop\ and existing placement algorithms. We artificially cap the maximal number of schedulers used by \algtop\ to $100$, as the request arrival rate in our experiments is unlikely to exceed $100$ in any given time slot.  For the competing algorithms, we  vary the (fixed) number of schedulers.

Table~\ref{table:aspr_quality} summarizes the results. 
The algorithms DistFromDiag and Adaptive are abbreviated to Diag and Adapt.
The average number of schedulers used by \algtop\ is shown below its decline ratios. First notice that  \algtop's decline ratios are always within SLA ($\dr =5\%$), and that \algtop\ uses at least 38 schedulers on average. 
Random and \algtop\ yield the lowest decline ratio. The two policies indeed implement Random placement, but \algtop\ only queries a small subset of the hosts, while each of the schedulers employed by Random queries all of them. This less accurate view of the system state causes \algtop's decline ratio to sometimes be slightly higher than that of Random (although always within the SLA).

\begin{table}[h!]
	\centering
	\renewcommand\arraystretch{1.3}
	\begin{tabular}{|c|c|c|c|c|c|c|c|c|c|}
		\hline
	    Dataset & \textit{s} &  APSR & \rand & FirstFit & WorstFit & Diag & Adapt\\ 
		\hline
		\hline
		\multirow{3}{*} {NFV} & 1 &   & 0.3\% & 0.0\% & 0.3\% & 0.7\% & 0.3\%  \\
		& 5 &  0.4\% &  0.4\% & 11.1\% & 4.0\% & 5.3\% & 2.2\%   \\
		& 10 & $\bar{s} = 38$ &  0.5\% & 23.3\% & 7.9\% & 7.7\% & 3.2\%  \\
		 & 20 & & 0.7\% & 35.7\% & 11.8\% & 11.7\% & 12.8\%\\
		& 50 &  & 0.8\% & 39.0\% & 15.6\% & 16.1\% & 16.0\% \\
		\hline
		\multirow{3}{*} {Google} & 1  &   & 2.3\% & 0.4\% & 8.7\% & 2.2\% & 8.7\%  \\
		& 5  & 3.1\% & 2.4\% & 56.2\% & 42.0\% & 42.7\% & 42.0\% \\
		& 10 & $\bar{s} = 92.2$   & 2.4\% & 77.8\% & 64.2\% & 62.8\% & 64.6\% \\
		& 20 & &  2.4\% & 87.8\% & 79.5\% & 74.4\% & 79.5\%  \\
		& 50 & & 2.4\% & 88.9\% & 81.4\% & 75.2\% & 81.6\% \\
        \hline
		\multirow{3}{*} {Amazon} & 1  & & 0.5\% & 0.2\% & 0.4\% & 1.3\% & 0.2\%  \\
		& 5  & 0.8\% & 0.6\% & 18.2\% & 6.5\% & 7.2\% & 6.3\% \\
		& 10 & $\bar{s} = 49.7$   & 1.0\% & 33.6\% & 20.4\% & 15.5\% & 20.6\%  \\
		& 20 &  & 1.2\% & 49.1\% & 61.1\% & 31.3\% & 61.7\% \\
		& 50 & & 1.4\% & 52.8\% & 64.6\% & 36.6\% & 64.7\% \\
		\hline
	\end{tabular}
    \caption[Decline ratios of \algtop\ and other placement algorithms when varying the number of schedulers]{Decline ratios of \algtop\ and other placement algorithms when varying the (fixed) number of schedulers ($\nsched$). \algtop's\  average number of schedulers ($\bar{\nsched}$) is listed below the decline ratio.}
    \vspace{\vspacebelowcaption}
    \label{table:aspr_quality}
\end{table}

\definecolor{LightCyan}{rgb}{0.88,1,1}
\begin{table}[h!]
	\centering
	\renewcommand\arraystretch{1.3}
	\begin{tabular}{c|c|c|c||c|c|c|}
	\cline{2-7}
		&\multicolumn{3}{c||}{ \algtop} &\multicolumn{3}{c|}{Random} \\ \cline{2-7}
		&\multicolumn{3}{c||}{Target Decline Ratio ($\dr$)} &\multicolumn{3}{c|}{Number of Schedulers}\\
		 \cline{2-7}
		 
		 & 3\% & 5\% & 10\% & 1 &  10 & 100 \\ 
		  \cline{2-7}
	\rowcolor{LightCyan}	 &\multicolumn{6}{c|}{ NFV}\\
		 \cline{2-7}
		\hline	\multicolumn{1}{|c|}{Number of Queries} & 1553K  & 811K & 578K & \multicolumn{3}{c|}{11000K}\\
		\hline  \multicolumn{1}{|c|}{Avg \# of Schedulers ($\bar{s}$)} & 12  & 38 & 79 & 1 & 10 & 100\\
		\hline  \multicolumn{1}{|c|}{Throughput [req./slot]} & 7.2 & 14 & 19.6 & 1  & 10 & 19.8 \\
		\hline  \multicolumn{1}{|c|}{Decline Ratio ($\delta$)} & 0.4\%  & 0.4\% & 0.6\% & 0.3\% & 0.5\%& 0.8\%\\
		\hline
    	\rowcolor{LightCyan}	 &\multicolumn{6}{c|}{ Google}\\
		\hline  \multicolumn{1}{|c|}{Number of Queries} & 3920K  & 3860K & 3823K & \multicolumn{3}{c|}{74724K}\\
		\hline  \multicolumn{1}{|c|}{Avg \# of Schedulers ($\bar{s}$)} & 84.1  & 92.2 & 98.3 & 1 & 10 & 100\\
		\hline  \multicolumn{1}{|c|}{Throughput [req./slot]} & 19.8 & 19.9 & 19.9 & 1  & 10 & 19.9 \\
		\hline  \multicolumn{1}{|c|}{Decline Ratio ($\delta$)} & 3.0\%  & 3.1\% & 2.9\% & 1.9\% & 2.3\%& 2.4\%\\
		\hline
		\rowcolor{LightCyan} &\multicolumn{6}{c|}{ Amazon}\\
		\hline  \multicolumn{1}{|c|}{Number of Queries} & 469k  & 370k & 354k & \multicolumn{3}{c|}{6745.2k}\\
		\hline  \multicolumn{1}{|c|}{Avg \# of Schedulers ($\bar{s}$)} & 24.8  & 49.7  & 80.9 & 1 & 10 & 100\\
		\hline  \multicolumn{1}{|c|}{Throughput [req./slot]} & 15.3 & 19.3 & 19.9 & 1  & 10 & 19.9 \\
		\hline  \multicolumn{1}{|c|}{Decline Ratio ($\delta$)} & 0.7\%  & 0.8\% & 1.0\% & 0.5\% & 1.0\% & 1.4\%\\
		\hline
	\end{tabular}
    \caption[Total number of queries, throughput and decline ratios of \algtop\ versus Random]{Total number of queries, throughput and actual decline ratios of \algtop\ versus Random.}
    \vspace{\vspacebelowcaption}
    \label{table:aspr_price}
\end{table}

\normalsize
Table~\ref{table:aspr_price} compares the throughput, decline ratios and the total number of queries of \algtop\ and Random.
Note that \algtop\ reduces the total number of queries by at least 85\%. Increasing \algtop's target decline ratio increases its parallelism which in turn increases the throughput. This highlights the tension between decline ratio and parallelism. 
The best attainable throughput is 20 as it is the average arrival rate.  Indeed, \algtop\ and Random with fixed 20 schedulers are very close to the maximal throughput. Also recall that unlike Random, \algtop\ may fail due to not finding an available host in the queried hosts, thus its decline ratio is sometimes higher.
\newcommand {\subfloatwidth}  {0.7*\columnwidth}
\newcommand {\subfloatheight} {4 cm}
\begin{figure}[h!]
     \subfloat [\label{fig:algtop_paral_Vs_Util_inifinite_LT_Kmin}\algtop \ (actual decline ratio is 0.4\%).] { 
    \begin{tikzpicture}
        \pgfplotsset{
            scale only axis,
            xmin=0, xmax=1000
        }
        
        \begin{axis}[
            width  = \subfloatwidth,
            height = \subfloatheight,
            axis y line*=left,
            ymin=0, ymax=1,
            xlabel = Time,
            ylabel = Utilization,
            ]
            \addplot[smooth,mark=triangle,mark size = \marksize] coordinates{
                (0, 0.00)
                (20, 0.04)
                (40, 0.07)
                (60, 0.10)
                (80, 0.13)
                (100, 0.16)
                (120, 0.19)
                (140, 0.22)
                (160, 0.25)
                (180, 0.28)
                (200, 0.31)
                (220, 0.34)
                (240, 0.38)
                (260, 0.41)
                (280, 0.44)
                (300, 0.47)
                (320, 0.49)
                (340, 0.53)
                (360, 0.55)
                (380, 0.57)
                (400, 0.60)
                (420, 0.63)
                (440, 0.66)
                (460, 0.68)
                (480, 0.72)
                (500, 0.75)
                (520, 0.77)
                (540, 0.80)
                (560, 0.84)
                (580, 0.86)
                (600, 0.89)
                (620, 0.90)
                (640, 0.91)
                (660, 0.92)
                (680, 0.93)
                (700, 0.94)
                (720, 0.94)
                (740, 0.94)
                (760, 0.95)
                (780, 0.95)
                (800, 0.95)
                (820, 0.95)
                (840, 0.95)
                (860, 0.96)
                (880, 0.96)
                (900, 0.96)
                (920, 0.96)
                }; \label{plot_util1}
        \end{axis}
        
        \begin{axis}[legend columns=1,
            width  = \subfloatwidth,
            height = \subfloatheight,
            axis y line*=right,
            axis x line=none,
            ymin=0, ymax=100,
            ylabel = Number of Schedulers,
            ytick={20, 40, 60, 80, 100},
            legend style     = {at={(0.78,0.65)},anchor=north,legend columns=-1},
            ]
            
        \addlegendimage{/pgfplots/refstyle=plot_util1}\addlegendentry{Utilization}
            \addplot[smooth,mark=*,mark size = \marksize] coordinates{
                (0, 87)
                (20, 87)
                (40, 86)
                (60, 85)
                (80, 84)
                (100, 83)
                (120, 82)
                (140, 80)
                (160, 78)
                (180, 76)
                (200, 74)
                (220, 71)
                (240, 69)
                (260, 66)
                (280, 63)
                (300, 60)
                (320, 57)
                (340, 54)
                (360, 51)
                (380, 48)
                (400, 45)
                (420, 42)
                (440, 39)
                (460, 36)
                (480, 33)
                (500, 29)
                (520, 26)
                (540, 23)
                (560, 20)
                (580, 16)
                (600, 14)
                (620, 11)
                (640, 9)
                (660, 7)
                (680, 6)
                (700, 5)
                (720, 4)
                (740, 3)
                (760, 2)
                (780, 2)
                (800, 2)
                (820, 1)
                (840, 1)
                (860, 1)
                (880, 1)
                (900, 1)
                (920, 1)
            }; \addlegendentry {\# Schedulers}
        \end{axis}
    \end{tikzpicture}
    } 
    
    \subfloat [\label{fig:algtop_paral_Vs_Util_inifinite_LT_Kavg}\algtopKavg \ (actual decline ratio is 0.8\%).] {
    \begin{tikzpicture}
        \pgfplotsset{
            scale only axis,
            xmin=0, xmax=1000
        }
        
        \begin{axis}[
            width  = \subfloatwidth,
            height = \subfloatheight,
            axis y line*=left,
            ymin=0, ymax=1,
            xlabel = Time,
            ylabel = Utilization,
            ]
            \addplot[smooth,mark=triangle,mark size = \marksize] coordinates{
                (0, 0.00)
                (20, 0.04)
                (40, 0.07)
                (60, 0.10)
                (80, 0.13)
                (100, 0.16)
                (120, 0.19)
                (140, 0.22)
                (160, 0.25)
                (180, 0.28)
                (200, 0.31)
                (220, 0.34)
                (240, 0.38)
                (260, 0.41)
                (280, 0.43)
                (300, 0.47)
                (320, 0.49)
                (340, 0.52)
                (360, 0.55)
                (380, 0.57)
                (400, 0.60)
                (420, 0.63)
                (440, 0.66)
                (460, 0.68)
                (480, 0.71)
                (500, 0.74)
                (520, 0.77)
                (540, 0.80)
                (560, 0.83)
                (580, 0.86)
                (600, 0.88)
                (620, 0.90)
                (640, 0.92)
                (660, 0.94)
                }; \label{plot_util2}
        \end{axis}
        
        \begin{axis}[
            width  = \subfloatwidth,
            height = \subfloatheight,
            axis y line*=right,
            axis x line=none,
            ymin=0, ymax=100,
            ylabel = Number of Schedulers,
            ytick={20, 40, 60, 80, 100},
            legend style     = {at={(0.6,0.3)},anchor=north,legend columns=-1},
            ]
            
        \addlegendimage{/pgfplots/refstyle=plot_util2}\addlegendentry{Utilization}
            \addplot[smooth,mark=*,mark size = \marksize] coordinates{
                (0, 87)
                (20, 87)
                (40, 87)
                (60, 86)
                (80, 86)
                (100, 86)
                (120, 86)
                (140, 85)
                (160, 85)
                (180, 84)
                (200, 84)
                (220, 83)
                (240, 83)
                (260, 82)
                (280, 81)
                (300, 80)
                (320, 79)
                (340, 77)
                (360, 76)
                (380, 75)
                (400, 73)
                (420, 72)
                (440, 70)
                (460, 68)
                (480, 66)
                (500, 64)
                (520, 62)
                (540, 59)
                (560, 56)
                (580, 53)
                (600, 50)
                (620, 47)
                (640, 43)
                (660, 40)
            }; \addlegendentry {\# Schedulers}
        \end{axis}
    \end{tikzpicture}
    } 
	\caption[Resource utilization and number of schedulers in \algtop\ (requests have infinite lifetime)]{Cloud resource utilization and the number of schedulers in \algtop\ for the NFV dataset under Poisson arrivals (requests have infinite lifetime).}
	\label{fig:algtop_paral_Vs_Util_infinite_LT}
	\vspace{\vspacebelowcaption}
\end{figure}
\\~\\
\textbf{Under the Hood of \algtop.}
Our next set of experiments studies the interplay between the system's utilization and the level of parallelism offered by \algtop. For these experiments we use solely the NFV dataset. 

Fig.~\ref{fig:algtop_paral_Vs_Util_inifinite_LT_Kmin} depicts the number of schedulers and the system utilization of \algtop. Initially, \algtop\ employs many schedulers as there are many available hosts for any flavor. As the utilization increases and the number of available hosts decreases, \algtop\ gradually reduces the number of schedulers. Intuitively, reducing the number of schedulers serves two goals: First, it allows each scheduler to query more hosts while still complying with the budget constraint. This increases the probability of finding an available host.  Second, having fewer schedulers reduces the collision probability. 

Recall that \algtop\ uses a conservative approach in estimating the number of available hosts ($\nfreebins$). This conservative approach indeed yields a very low decline ratio ($0.4\%$) -- but at the cost of throttling parallelism when utilization ramps up.
We therefore consider a variant of \algtop, which we dub \algtopKavg. As its name suggests, this variant differs from Algorithm~\ref{alg:estimateK} in Line~\ref{alg:algtop:assign_k_n}, where it updates $\nfreebins$ according to the {\em average} number of available hosts taken over all flavors.

Fig.~\ref{fig:algtop_paral_Vs_Util_inifinite_LT_Kavg} shows that \algtopKavg\ runs a significantly higher number of schedulers than \algtop, for any given level of utilization. As a result, \algtopKavg\ finishes allocating all requests much faster than \algtop, implying a higher throughput. Indeed, switching from \algtop\ to \algtopKavg\ doubles the actual decline ratio to $0.8\%$ -- but this value is still way below the target decline ratio of $\dr = 5\%$. 

\input{tikz_sim_finite_LT.tex}

Our next experiment aims at exploring how either \algtop\ and \algtopKavg\ dynamically adjust the number of schedulers when the utilization fluctuates. To generate fluctuations in the utilization, we modeled the request arrival process as a variant of a \emph{Markov Modulated Poisson Process (MMPP)}~\cite{cite_MMPP}. Specifically, the number of requests arriving per slot is a Poisson process with mean $\lambda_a$ throughout the experiment. However, for the first $20\%$ of the requests we use $\lambda_a=20$, to fill up the system; while for the rest of the requests we fix $\lambda_a=5$. Furthermore, in this experiment requests have a finite lifetime. That is, the number of allocated requests leaving per time slot follows a Poisson process with mean $\lambda_d = 4$. Finally, we use here 100 replicas of the NFV dataset 
(instead of 30 used in the rest of this section) so that even when requests leave their hosts, the hosts become utilized again with more arriving requests. These settings are intended to let utilization first build-up, and then stay at some (high) level, with mild fluctuations. 
The results of this experiment are shown in Fig.~\ref{fig:algtop_paral_Vs_Util_finite_LT}. Both \algtop\ (Fig.~\ref{fig:algtop_paral_Vs_Util_finite_LT_Kmin}) and \algtopKavg\
(Fig.~\ref{fig:algtop_paral_Vs_Util_finite_LT_Kavg})
dynamically adapt the number of schedulers  to the utilization. However, \algtopKavg\ employs more schedulers than \algtop. It obtains shorter total run-time but experiences a higher decline ratio ($1.6\%$ for \algtopKavg\ versus $0.01\%$ for \algtop). Note that both algorithms are below the maximum allowed decline ratio ($5\%$).
 
We now investigate the effect of the query budget $\budget$ on the number of schedulers. We use the same settings of the MMPP model as in the previous experiment. However, this time we vary the budget from $20\%$ to $100\%$ of the number of hosts, and measure the level of parallelism, captured by the average number of schedulers which \algtop\ employs along the run. Table~\ref{table:aspr_budget} illustrates the results. Indeed, parallelism is proportional to the given budget. However, the increase in parallelism when increasing the budget is very mild. In particular, decreasing the budget from 
$\nbins$ to $0.6 \cdot \nbins$ does not decrease the average number of schedulers used by \algtop. That is, \algtop\ runs 17 schedulers with less communication overhead than a single OpenStack scheduler. However, running above 17 schedulers increases the probability of collisions. 

\begin{table}[h!]
	\centering
	\begin{tabular}{c|c|c|c|c|c|}
	\cline{2-6}
		&\multicolumn{5}{c|}{ Budget } \\
		 \cline{2-6}
		 & 20\% & 40\% & 60\% & 80\% &  100\% \\ 
		 	\hline
			\multicolumn{1}{|c|}{ $\bar{s}$} & 12 & 16 & 17 & 17 &  17\\

		\hline
	\end{tabular}
    \caption[Average number of schedulers per budget]{Average number of schedulers per budget \textit{B}, given as a percentage of total number of hosts ($\dr = 5\%$).} 
    \vspace{\vspacebelowcaption}
    \label{table:aspr_budget}
\end{table}
    \vspace{\vspacebelowcaption}

\section {Discussion}\label{sec:apsr:conclusions}
In this chapter, we address the problem of fast placement of virtual machines of virtualized network functions in large clouds. We focus our attention in the interplay between throughput, communication overhead and decline ratio. 
We study the performance of existing placement algorithms using three real-world workloads. Our study shows that randomization is a key component in boosting throughput by employing multiple parallel schedulers, while keeping a low decline ratio. Based on this observation we develop the algorithm \algtop, which efficiently implements random placement while minimizing the communication overhead, and dynamically adjusts the degree of parallelism to ensure that decline ratios are kept at their SLA. Using a balls-and-bins combinatorial analysis, we formally prove the correctness of \algtop\ and provide insights into the possibilities and limitations of parallel resource management. We evaluate \algtop\ on three real workloads and demonstrate its capability to provide high degrees of parallelism with small decline ratios, and low communication overheads. 
We show that \algtop\ matches the best attainable throughput of OpenStack's default Filter Scheduler, while reducing the decline ratio from up to $13.6\%$ to $\approx 1\%$, and the communication overheads by $\approx 20x$. 

However, our evaluation shows that OpenStack only gains up to $3x$ speedup from parallelism, whereas \algtop\ easily supports many parallel schedulers. Therefore, it is appealing to carefully benchmark OpenStack, identify its current bottlenecks and unleash its full potential for parallel resource management. In addition, \algtop\ uses a central controller which gathers the state from all the parallel schedulers, and adjust the number of schedulers and the number of hosts queried by each scheduler accordingly. It is of interest to consider a distributed settings, in which each scheduler adapts its level of functionality and the number of hosts it queries dynamically, based on the system's state it collects and on the ratio of declines it experiences.  

\begin{appendices}
\chapter{Running Example of \SAM}\label{sec:buf:run_example}
For better understanding of \SAM\ and exploration of its characteristics, we now provide a running example of it.
Note that we use here the notation of Chapter~\ref{sec:buf}, which is summarized in Table~\ref{tbl:buf:notations}.

Figure~\ref{fig:algorithm-example} exemplifies a running of \SAM\ equipped with a 3-slots buffer. Each packet is represented by a square. If it is a known $(w, v)$-packet, then $(w,v)$ (namely, its work, profit values, resp.) appears within the square representing the packet. If the packet is unknown, however, the (unknown) work and profit values do not appear, and the packet's color is dark gray.

Known packets that belong to the selected class ($\CsK$-packets) are marked in light gray.
The figure assumes that the (randomly-chosen) selected class is the class of packets with work- and profit- values within the range [3, 4]. Recall, that this range refers to the
characteristics of a packet \emph{upon arrival}.
For instance, a $(3, 3)$-packet always belongs to the selected class, although after being processed its residual work decreases, and it becomes a $(2, 3)$-packet, and later a $(1, 3)$-packet, and so on.

\begin{figure}
    \centering
    \subfloat[\label{fig:algorithm-example_part1}] {
        \includegraphics[width=1.0\textwidth]{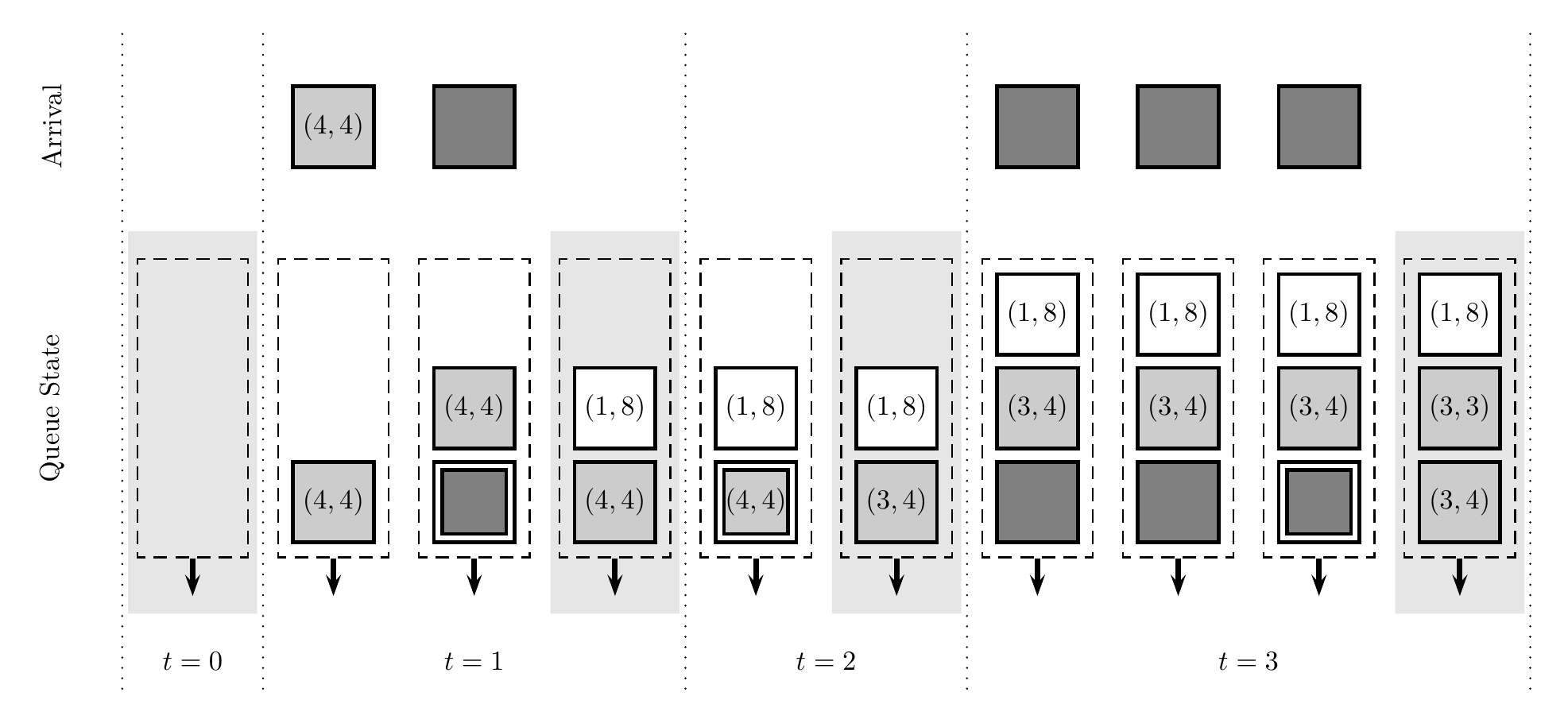}
    } 
    
    \subfloat[\label{fig:algorithm-example_part2}] {
        \includegraphics[width=1.0\textwidth]{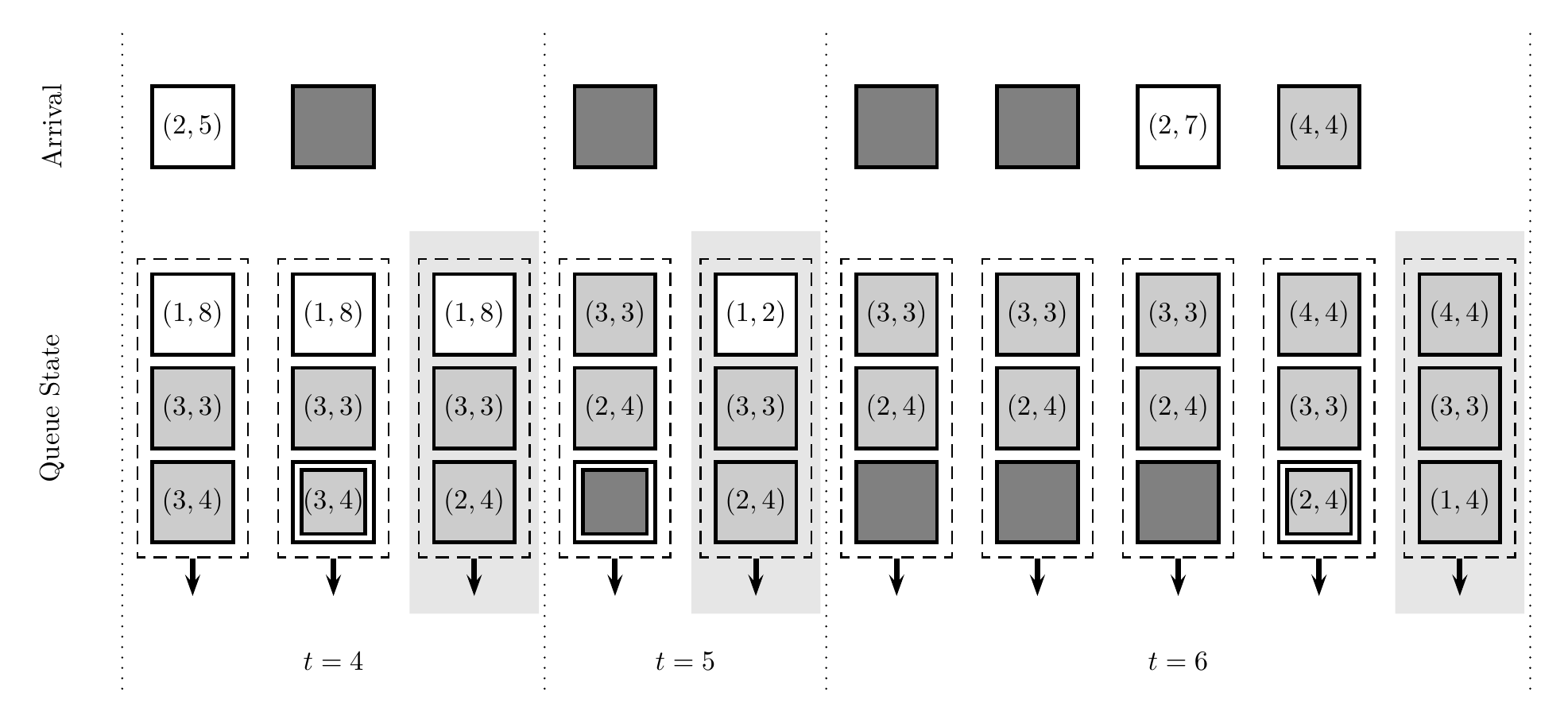}
    } 
     \caption[Running Example of \SAM]{
        Running Example of \SAM, equipped with a 3-slots buffer. Each known packet is labeled $(w(p),v(p))$, where $w$ is the remaining work and $v$ is the profit. $\CsK$-packets are marked by light gray. $U$-packets are colored by dark gray. \newline
        Each cycle begins with a transmission step, in which a fully-processed packet, if such exists, is transmitted. Next comes the arrival step, where arriving packets are handled by the algorithm one by one. For each arriving packet, the buffer below the arrival depicts the state of the buffer after handling the packet's arrival.
        The packet in the queue's head-of-line (HoL) at the end of the arrival step is emphasized by an extra, internal, square. This packet is the one processed in the processing step. The state of the buffer at the end of each cycle is highlighted with a light-gray background.
     }
    \label{fig:algorithm-example}
\end{figure}

Each cycle begins with the transmission step, in which a fully processed packet, if such exists, leaves the queue.
In our example there is no packet transmitted since we focus our attention on handling arrivals and determining priorities which are the core components of our algorithm.
This step is followed by the arrival step,
where arriving packets are handled by the algorithm. Finally,
the cycle ends with a processing step, where the head-of-line (HoL) packet is processed. This packet is emphasized by an extra internal square.
The state of the queue at the end of each cycle is depicted by a light-gray background.
At each cycle, the algorithm tosses a coin, and assigns the cycle as an \emph{admittance cycle} w.p. $r$. In this example, we assume that cycles $1,3,5, 6$ are admittance cycles.
We now turn to explain the scenario depicted in Figure~\ref{fig:algorithm-example} cycle by cycle.
\paragraph{\textbf{$t=0$.}} Begin with an empty buffer.
\paragraph{\textbf{$t=1$.}} A known $(4,4)$-packet arrives. As both its work- and profit- values belong to the ranges [3,4], it is a $\CsK$-packet, and therefore it is retained by the algorithm (recall that $\CsK$-packets are never dropped during the fill phase, as shown in Proposition~\ref{prop:transmits_all_G_K}).

Next, a $U$-packet arrives. As this is an admittance cycle, this $U$-packet is \emph{admitted}, that is, accepted into the buffer, and assigned to the HoL.
Since this is the last packet to arrive in this cycle, and being the HoL-packet, this packet is processed in the processing step. We refer to this packet as being \emph{parsed} in this cycle, as this is the first processing cycle of this packet.

After parsing, the characteristics of
the HoL packet become known: it is now a known $(1,8)$-packet.
Namely, when it arrived, it was a $(2, 8)$-packet which has received one cycle of processing.
By these values, this packet does not belong to the selected class. Therefore, it is pushed down to the buffer's tail. Instead, the $\CsK$-packet, with values $(4,4)$ is assigned to be the HoL packet.
It should be noted that although the parsed $(1,8)$-packet is superior to any $\CsK$-packet currently in the buffer (since it carries a profit value of 8 while requiring just one more cycle of work), \SAM\ still prefers $\CsK$-packets over this packet. 
We note that the improved \SAO\ algorithm would re-assign such a packet to be a $\CsK$-packet by considering the selective class closure.

\paragraph{\textbf{$t=2$.}} No packets arrive. The HoL-packet, $(4,4)$, is processed, and becomes a $(3,4)$-packet.

\paragraph{\textbf{$t=3$.}}
This is an admittance cycle. Therefore, the first arriving $U$-packet is admitted. In particular, this cycle well exemplifies the buffer's ordering: at top-priority is the admitted packet; at a second priority is the $\CsK$-packet, $(3,4)$; the remaining packet in the buffer, $(1,8)$, is of \redtext{the} lowest priority.

When a second $U$-packet arrives, \SAM\ tosses a coin, and replaces the previously-admitted packet with the new arriving $U$-packet w.p. $1/2$. When a third $U$-packet arrives, \SAM\ tosses a coin again, and replaces the previously-admitted packet with the new arriving $U$-packet w.p. $1/3$.

In the processing step, \SAM\ parses the admitted packet, unraveling it as a $(3,3)$-packet. Namely, upon arrival its characteristics were $(4,3)$, ascribing it to the selected class. As there already exists another $\CsK$-packet in the buffer (the $(3,4)$-packet) \SAM\ breaks the tie between the two $\CsK$-packets in its buffer by FIFO order.
We note that the improved \SAO\ algorithm would transition to the flush phase at this point, since it would have been full of $\CsK$-packets.

\paragraph{\textbf{$t=4$.}} First, we have an arriving known $(2,5)$-packet. By its characteristics, it is not a $\CsK$-packet. Therefore, it is assigned the lowest priority. In particular, as the buffer is full, this packet is discarded. Next, a $U$-packet arrives. However, as this is a non-admittance cycle, the $U$-packet is discarded as well. Finally, during the processing step, the HoL packet is processed, decreasing its remaining work to 2.

\paragraph{\textbf{$t=5$.}} We have a single arriving $U$-packet. As it is an admitted cycle, this $U$-packet is admitted, hence, accepted and parsed. In order to make room for this admitted packet, the $(1,8)$-packet in the tail is pushed-out and dropped.
 After parsing, the $U$-packet is uncovered as a $(1, 2)$-packet. Namely, upon arrival it was a $(2,2)$-packet. By these characteristics, this packet does not belong to the selected class, and therefore has the lowest priority, and downgraded to the tail.

\paragraph{\textbf{$t=6$.}} This is an admittance cycle. Therefore the first arriving $U$-packet is admitted, pushing-out from the buffer the $(1,2)$-packet, which was in the tail. When a second $U$-packet arrives, it replaces the previously-admitted packet w.p. $  1/2$. Then, a $(2,7)$-packet arrives. By its characteristics, it is neither an admitted packet (as it is a $K$-packet), nor does it belong to the selected class. As a result, the $(2,7)$-packet is assigned the lowest priority, and is therefore discarded.
The last arrival in this cycle is a known $(4,4)$-packet. By its characteristics, it is a $\CsK$-packet. Since the buffer already contains $B-1=2$ $\CsK$-packets, the $U$-packet at the HoL is dropped, and the newly arriving $CsK$-packet is accepted to the queue (see lines~\ref{alg:SAM:if_is_Gk}-\ref{alg:SAM:accept_Gk} in Algorithm~\ref{alg:SAM}).
The queue therefore becomes \hfull, i.e., the buffer is full with $\CsK$-packets. \SAM\ then switches to the \emph{flush} state, and it will merely process all the packets in its buffer in a run-to-completion manner and transmit all the fully-processed packets, until the buffer is empty again.
\end{appendices}

\bibliographystyle{IEEEtran}
\newpage
\bibliography{Refs}
\end{document}